\newtheorem{theorem}{Theorem}[section]
\newtheorem{lemma}[theorem]{Lemma}
\newtheorem{proposition}[theorem]{Proposition}
\newtheorem{corollary}[theorem]{Corollary}
\newtheorem{definition}{Definition}
\newtheorem{remark}[theorem]{Remark}
\newtheorem{assumption}[theorem]{Hypothesis}
\numberwithin{equation}{section}
\DeclareMathOperator{\slim}{s-lim}
\DeclareMathOperator{\wlim}{w-lim}
\author[J. Faupin]{J{\'e}r{\'e}my Faupin}
\address[J. Faupin]{Institut Elie Cartan de Lorraine \\
Universit{\'e} de Lorraine, 
57045 Metz Cedex 1, France}
\email{jeremy.faupin@univ-lorraine.fr}
\author[J. Fr\"ohlich]{J\"urg Fr\"ohlich}
\address[J. Fr{\"o}hlich]{Institut f{\"u}r Theoretische Physik, ETH H{\"o}nggerberg, CH-8093 Z{\"u}rich, Switzerland}
\email{juerg@phys.ethz.ch}
\begin{document}
\bibliographystyle{abbrv} \title[Dissipative Scattering Theory]{Asymptotic completeness in dissipative scattering theory}

\begin{abstract}
We consider an abstract pseudo-Hamiltonian for the nuclear optical model, given by a dissipative operator of the form $H = H_V - i C^* C$, where $H_V = H_0 + V$ is self-adjoint and $C$ is a bounded operator. We study the wave operators associated to $H$ and $H_0$. We prove that they are asymptotically complete if and only if $H$ does not have spectral singularities on the real axis. For Schr{\"o}dinger operators, the spectral singularities correspond to real resonances.
\end{abstract}

\maketitle

\section{Introduction}

In this paper we study the quantum-mechanical scattering theory for dissipative quantum systems. A typical example is a neutron interacting with a nucleus. When a neutron is targeted onto a complex nucleus, it may, after interacting with it, be elastically scattered off the nucleus or be absorbed by the nucleus, leading to the formation of a compound nucleus. The concept of a compound nucleus was introduced by Bohr \cite{Bo36_01}.

In \cite{FePoWe54_01}, Feshbach, Porter and Weisskopf proposed a model describing the interaction of a neutron with a nucleus, allowing for the description of both elastic scattering and the formation of a compound nucleus. The force exerted by the nucleus on the neutron is modeled by a phenomenological potential of the form $V - i W$, where $V$, $W$ are real-valued and $W \ge 0$. The nucleus is supposed to be localized in space, which corresponds to the assumption that $V$ and $W$ are compactly supported or decay rapidly at infinity. On $L^2( \mathbb{R}^3 )$, the pseudo-Hamiltonian for the neutron is given by
\begin{equation}\label{eq:intro1}
H = - \Delta + V - i W. 
\end{equation}
In the following, a linear operator $H$ is called a pseudo-Hamiltonian if $-iH$ generates a strongly continuous contractive semigroup $\{ e^{ - i t H } \}_{Êt \ge 0 }$. For any initial state $u_0$, with $\| u_0 \| = 1$, the map $t \mapsto \| e^{ - i t H }Êu_0 \|$ is decreasing on $[ 0 , \infty )$, and the quantity
\begin{equation}\label{eq:defpabs}
 p_{ \mathrm{abs}Ê} := 1 - \lim_{ t \to \infty } \big \| e^{ - i t H } u_0 \big \|  
\end{equation}
gives the probability of absorption of the neutron by the nucleus, i.e., the probability of formation of a compound nucleus. The probability that the neutron, initially in the state $u_0$, eventually escapes from the nucleus is given by $p_{ \mathrm{scat}Ê} :=\lim_{ t \to \infty } \| e^{ - i t H } u_0 \|$, and in the case where this probability is strictly positive, one expects that there exists an (unnormalized) scattering state $u_+$ such that $\|Êu_+ \|Ê= p_{ \mathrm{scat}Ê}$ and
\begin{equation}
\lim_{ t \to \infty } \big \| e^{ - i t H } u_0 - e^{ i t \Delta } u_+ \big \| = 0 . \label{eq:intro_ex_scatt}
\end{equation}

This model is referred to as the nuclear optical model, the term optical being used in reference to the phenomenon in optics of refraction and absorption of light waves by a medium. The model is empirical in that the form of the potentials $V$ and $W$ are determined by optimizing the fit to experimental data. Usually, $V$ and $W$ are decomposed into a sum of terms corresponding to the form of the expected interaction potentials in different regions of physical space, and sometimes a spin-orbit interaction term is included. We refer to e.g. \cite{Ho71_01} or \cite{Fe92_01} for a thorough description. A large range of observed scattering data can then be predicted by the model to a high precision.

Since the explicit expression of the pseudo-Hamiltonian rests on experimental scattering data, it is desirable to develop the full scattering theory of a class of models, in order to justify their use from a theoretical point of view. In this paper, we consider an abstract pseudo-Hamiltonian generalizing \eqref{eq:intro1}, of the form
\begin{equation}\label{eq:exprH}
H := H_0 + V -  i C^* C .
\end{equation}
Under natural assumptions, \eqref{eq:exprH} defines a dissipative operator acting on a Hilbert space, generating a strongly continuous semigroup of contractions. Our hypotheses on $H_0$, $V$ and $C$ will be formulated in such a way that they can be verified in the particular case where $H$ is given by \eqref{eq:intro1}.

Mathematical scattering theory for dissipative operators on Hilbert spaces has been considered by many authors. We mention here works, related to ours, by Martin \cite{Martin}, Davies \cite{Davies4,Davies1} and Neidhardt \cite{Ne85_01}, for general abstract results, Mochizuki \cite{Mo68_01} and Simon \cite{Simon2}, for Schr{\"o}dinger operators of the form \eqref{eq:intro1}, and by Kato \cite{Kato1}, Wang and Zhu \cite{WangZhu}, and Falconi, Schubnel and the authors \cite{FaFaFrSc17_01} for ``weak coupling'' results. The existence of the wave operators associated to $H$ and $H_0$ is established under various conditions. But proving their asymptotic completeness is a much more difficult problem which, to our knowledge, is solved only in some particular cases; (see \cite{Kato1,WangZhu,FaFaFrSc17_01} for weak coupling results, and, e.g., Stepin \cite{St04_01}, for some models in one dimension). We will recall the definition of the wave operators and the notion of asymptotic completeness in the next section.

Scattering theory for dissipative operators on Hilbert spaces also has important applications in the scattering theory of Lindblad master equations \cite{Davies2,FaFaFrSc17_01}. If one considers a particle interacting with a dynamical target and takes a trace over the degrees of freedom of the target, it is known that, in the kinetic limit, the reduced effective dynamics of the particle is given by a quantum dynamical semigroup generated by a Lindbladian. Scattering theory for Lindblad master equations provides an alternative approach to studying the phenomenon of capture. For quantum dynamical semigroups, the probability of particle capture is given by the difference between $1$ and the trace of a certain wave operator $\Omega$ applied to the initial state of the particle, \cite{Davies2}. The definition of $\Omega$ and the proof of its existence rest on the scattering properties of a dissipative operator of the form \eqref{eq:exprH}. We will outline the consequences of our results for the scattering theory of Lindblad master equations in Section \ref{sec:Lindblad}.

Summary of main results: Under suitable assumptions on the abstract pseudo-Hamiltonian \eqref{eq:exprH}, we prove that the space of initial states for which the probability of absorption $p_{Ê\mathrm{abs}Ê}$ in \eqref{eq:defpabs} is equal to $1$ coincides with the subspace spanned by the generalized eigenvectors of $H$ corresponding to non-real eigenvalues. For any initial state $u_0$ orthogonal to all the generalized eigenstates of $H$, we show that there exists a scattering state $u_+ \neq 0$ satisfying \eqref{eq:intro_ex_scatt}. Using these results, we prove that asymptotic completeness holds if and only if $H$ does not have ``spectral singularities'' on the real axis. Asymptotic completeness implies that the restriction of $H$ to the orthogonal complement of the subspace spanned by the generalized eigenvectors of the adjoint operator $H^*$ is similar to $H_0$. Our definition of a spectral singularity is related to that of J. Schwartz \cite{Sc60_01} and corresponds to a real resonance in the case of Schr{\"o}dinger operators of the form \eqref{eq:intro1}. 

In the next section we describe the model that we consider and we state our results in precise form.

\section{Hypotheses and statement of the main results}

\subsection{The model}

Let $\mathcal{H}$ be a complex separable Hilbert space. On $\mathcal{H}$, we consider the operator \eqref{eq:exprH}, where $H_0$ is self-adjoint and bounded from below, $V$ is symmetric and $C \in \mathcal{L}( \mathcal{H} )$. Without loss of generality, we suppose that $H_0 \ge 0$. Moreover, we assume that $V$ and $C^* C$ are relatively compact with respect to $H_0$ so that, in particular,
\begin{equation*}
H_V := H_0 + V,
\end{equation*}
is self-adjoint on $\mathcal{H}$, with domain $\mathcal{D}( H_V ) = \mathcal{D}( H_0 )$, and $H$ is a closed maximal dissipative operator with domain $\mathcal{D}( H ) = \mathcal{D}( H_0 )$.

That $H$ is dissipative follows from the observation that 
\begin{equation*}
\mathrm{Im}( \langle u , H u \rangle ) = - \| C u \|^2 \le 0 ,
\end{equation*}
for all $u \in \mathcal{D}( H )$. This implies (see e.g. \cite{EnNa20_01} or \cite{Da07_01}) that the spectrum of $H$ is contained in the lower half-plane, $\{ z \in \mathbb{C} , \mathrm{Im}( z ) \le 0 \}$, and that $ - i H$ is the generator of a strongly continuous one-parameter semigroup of contractions $\{ e^{ - i t H } \}_{ t \ge 0 }$. In fact, since $H$ is a perturbation of the self-adjoint operator $H_V$ by the bounded operator $- i C^* C $, $- i H$ generates a group $\{ e^{ - i t H }Ê\}_{Êt \in \mathbb{R} }$ satisfying 
\begin{equation*}
\big \|Êe^{ - i t H }Ê\big \| \le 1 , \, \, t \ge 0 , \qquad \qquad \big \| e^{ - i t H }Ê\big \| \le e^{ \| C^* C \|Ê| t |Ê} , \, \, t \le 0 ,
\end{equation*}
(see \cite{EnNa20_01} or \cite{Da07_01}).

Let $\sigma( H )$ denote the spectrum of $H$. Because $V$ and $C^*C$ are relatively compact perturbations of $H_0$, the essential spectrum of $H$, denoted by $\sigma_{Ê\mathrm{ess}Ê}( H )$, coincides with the essential spectrum of $H_0$; (see Section \ref{subsec:spectrum} for the definition of the essential spectrum of a closed operator). Moreover $\sigma( H ) \setminus \sigma_{Ê\mathrm{ess} }( H )$ consists of an at most countable number of eigenvalues of finite algebraic multiplicities, that can only accumulate at points of $\sigma_{ \mathrm{ess} }( H )$. See Figure \ref{fig1}.

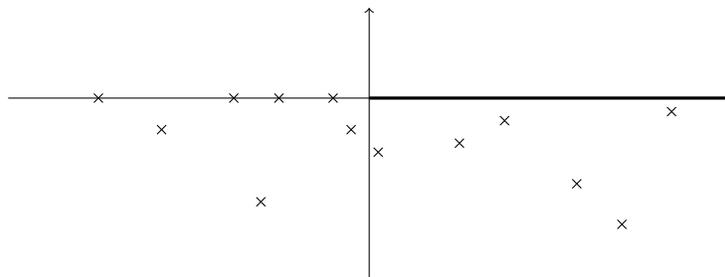
\begin{figure}[H] 
\begin{center}
\begin{tikzpicture}[scale=0.6, every node/.style={scale=0.9}]

   \draw[->](-4,2) -- (12,2);      
      \draw[->](4,-2) -- (4,4);      
  \draw[-, very thick] (4,2)--(11.95,2);
       
    \draw[-] (1.9,1.9)--(2.1,2.1);
    \draw[-] (1.9,2.1)--(2.1,1.9);

    \draw[-] (-1.9,1.9)--(-2.1,2.1);
    \draw[-] (-1.9,2.1)--(-2.1,1.9);

    \draw[-] (0.9,1.9)--(1.1,2.1);
    \draw[-] (0.9,2.1)--(1.1,1.9);

    \draw[-] (3.1,1.9)--(3.3,2.1);
    \draw[-] (3.1,2.1)--(3.3,1.9);

    \draw[-] (5.9,0.9)--(6.1,1.1);
    \draw[-] (5.9,1.1)--(6.1,0.9);

    \draw[-] (6.9,1.4)--(7.1,1.6);
    \draw[-] (6.9,1.6)--(7.1,1.4);

    \draw[-] (10.6,1.6)--(10.8,1.8);
    \draw[-] (10.6,1.8)--(10.8,1.6);

    \draw[-] (9.5,-0.9)--(9.7,-0.7);
    \draw[-] (9.5,-0.7)--(9.7,-0.9);

    \draw[-] (8.5,0)--(8.7,0.2);
    \draw[-] (8.5,0.2)--(8.7,0);

    \draw[-] (4.1,0.9)--(4.3,0.7);
    \draw[-] (4.1,0.7)--(4.3,0.9);

    \draw[-] (3.5,1.2)--(3.7,1.4);
    \draw[-] (3.5,1.4)--(3.7,1.2);

    \draw[-] (-0.5,1.2)--(-0.7,1.4);
    \draw[-] (-0.5,1.4)--(-0.7,1.2);

    \draw[-] (1.5,-0.2)--(1.7,-0.4);
    \draw[-] (1.5,-0.4)--(1.7,-0.2);

        \end{tikzpicture}
\caption{ \footnotesize  \textbf{Form of the spectrum of $H$.} The essential spectrum coincides with that of $H_0$ and is contained in $[ 0 , \infty )$. The eigenvalues on the real axis are negative, of finite algebraic multiplicities and associated to eigenvectors belonging to $ \mathcal{H}_{Ê\mathrm{b} }( H )$ (see \eqref{eq:defHb}). The eigenvalues with strictly negative imaginary parts have finite algebraic multiplicities and are associated to generalized eigenvectors belonging to $\mathcal{H}_{ \mathrm{p} }( H )$ (see \eqref{eq:defHp}). }\label{fig1}
\end{center}
\end{figure}

Before stating our main results, we introduce some notations (Section \ref{subsec:subspaces}) and our main hypotheses (Section \ref{subsec:hypoth}).

\subsection{Spectral subspaces}\label{subsec:subspaces}
The space of bound states, 
\begin{equation}\label{eq:defHb}
\mathcal{H}_{ \mathrm{b} }( H ) := \mathrm{Span} \big \{ u \in \mathcal{D}( H ) , \, \exists \lambda \in \mathbb{R} , \, H u = \lambda u \big \} ,
\end{equation}
is the vector space spanned by the set of eigenvectors of $H$ corresponding to real eigenvalues. Note that $\mathcal{H}_{ \mathrm{b} }( H )$ is usually defined as the closure of $\mathrm{Span} \big \{ u \in \mathcal{D}( H ) , \, \exists \lambda \in \mathbb{R} , \, H u = \lambda u \big \}$ \cite{Davies1}. But it will be observed in Section \ref{subsec:spectrum} that this vector space is actually closed under our assumptions.

For $\lambda \in \sigma( H ) \setminus \sigma_{ \mathrm{ess} }( H )$, we denote by 
\begin{equation}\label{eq:defPilambda}
\Pi_\lambda := \frac{1}{ 2 i \pi } \int_\gamma ( z \mathrm{Id} - H )^{-1} dz ,
\end{equation}
the usual Riesz projection, where $\gamma$ is a circle oriented counterclockwise and centered at $\lambda$, of sufficiently small radius (so that $\lambda$ is the only point of the spectrum of $H$ contained in the interior of $\gamma$). The algebraic multiplicity of $\lambda$ is $\mathrm{dim} \, \mathrm{Ran}( \Pi_\lambda )$. Since $H$ is not self-adjoint, its restriction to $\mathrm{Ran}( \Pi_\lambda )$ may have a nontrivial Jordan form, and $\mathrm{Ran}( \Pi_\lambda )$ is in general spanned by generalized eigenvectors of $H$ associated to $\lambda$, i.e., by vectors $u \in \mathcal{D}( H^k )$ such that $( H - \lambda )^k u = 0$ for some $1 \le k \le \mathrm{dim} \, \mathrm{Ran}( \Pi_\lambda )$. We set
\begin{equation}\label{eq:defHp}
\mathcal{H}_{ \mathrm{p} }( H ) := \mathrm{Span} \, \big \{ u \in \mathrm{Ran}( \Pi_\lambda ) , \, \lambda \in \sigma( H ) , \, \mathrm{Im} \, \lambda < 0  \big \} .
\end{equation}
If $H$ has only a finite number of eigenvalues with strictly negative imaginary parts the vector space $\mathcal{H}_{ \mathrm{p} }( H )$ is closed. Moreover, defining the ``dissipative space'' $\mathcal{H}_{ \mathrm{d} }( H )$ by
\begin{align}\label{eq:defHd}
& \mathcal{H}_{ \mathrm{d} }( H ) := \big \{ u \in \mathcal{H} , \lim_{ t \to \infty } \|Êe^{ - i t H }Êu \| = 0 \big \} ,
\end{align}
we have the obvious inclusion
\begin{equation*}
\mathcal{H}_{Ê\mathrm{p} }( H ) \subseteq \mathcal{H}_{ \mathrm{d} }( H ).
\end{equation*}
It is easy to verify that the vector space $\mathcal{H}_{ \mathrm{d} }( H )$ is closed.

An important role in our analysis will be played by the adjoint operator
\begin{equation}
H^* = H_0 + V + i C^* C = H_V + i C^* C.
\end{equation}
Note that $\lambda \in \sigma ( H ^* )$ if and only if $\bar \lambda \in \sigma ( H )$, and that $i H^*$ generates the contraction semigroup $\{ e^{ i t H^* } \}_{Êt \ge 0 }$. The spaces $\mathcal{H}_{Ê\mathrm{b} }( H^* )$, $\mathcal{H}_{Ê\mathrm{p} }( H^* )$ and $\mathcal{H}_{Ê\mathrm{d} }( H^* )$ are defined in the same was as for $H$, namely
\begin{align}
& \mathcal{H}_{ \mathrm{b} }( H^* ) := \mathrm{Span} \big \{ u \in \mathcal{D}( H ) , \, \exists \lambda \in \mathbb{R} , \, H^* u = \lambda u \big \} , \label{eq:defH*b} \\
& \mathcal{H}_{ \mathrm{p} }( H^* ) := \mathrm{Span} \, \big \{ u \in \mathrm{Ran}( \Pi^*_\lambda ) , \, \lambda \in \sigma( H^* ) , \, \mathrm{Im} \, \lambda > 0  \big \} ,\label{eq:defH^*p} \\
& \mathcal{H}_{ \mathrm{d} }( H^* ) := \big \{ u \in \mathcal{H} , \lim_{ t \to \infty } \|Êe^{ i t H^* }Êu \| = 0 \big \}. \label{eq:defH*d}
\end{align}
In \eqref{eq:defH^*p}, $\Pi^*_\lambda$ stands for the Riesz projection associated to $\lambda$ for $H^*$, i.e.
\begin{equation}
\Pi^*_\lambda := \frac{1}{ 2 i \pi } \int_\gamma ( z \mathrm{Id} - H^* )^{-1} dz , \label{eq:pilambda*}
\end{equation}
where $\gamma$ is any circle oriented counterclockwise and centered at $\lambda$, of sufficiently small radius.

Further properties of the subspaces introduced in this section are discussed in Section \ref{sec:prelim}.

\subsection{Hypotheses}\label{subsec:hypoth}

Our first hypothesis concerns the spectra of the self-adjoint operators $H_0$ and $H_V$.
\begin{assumption}[Spectra of $H_0$ and $H_V$]\label{V-1}
The spectrum of $H_0$ is purely absolutely continuous, the singular continuous spectrum of $H_V$ is empty, $H_V$ has at most finitely many eigenvalues of finite multiplicity, and each eigenvalue of $H_V$ is strictly negative.
\end{assumption}
The assumptions that the number of eigenvalues of $H_V$ is finite and that $H_V$ has no embedded eigenvalues are mostly made for the purpose of simplicity of exposition. It is likely that these assumptions can be relaxed.

It will sometimes be convenient to add the following assumption:
\begin{assumption}[Eigenvalues of $H$]\label{V2}
The number of non-real eigenvalues of $H$ is finite.
\end{assumption}

Our next hypothesis concerns the wave operators for the \emph{self-adjoint} operators $H_V$ and $H_0$.
\begin{assumption}[Wave operators for $H_V$ and $H_0$] \label{V0}
The wave operators 
\begin{equation*}
W_\pm( H_V , H_0 ) := \underset{t\to \pm \infty }{\slim} e^{ i t H_V } e^{ - i t H_0 } \quad \text{ and } \quad W_{Ê\pm } ( H_0 , H_V ) := \underset{t\to \pm \infty }{\slim} e^{ i t H_0 } e^{ - i t H_V } \Pi_{ \mathrm{ac} }( H_V ) 
\end{equation*}
exist and are asymptotically complete, i.e.,
\begin{align*}
& \mathrm{Ran} ( W_\pm( H_V , H_0 ) ) = \mathcal{H}_{ \mathrm{ac} }( H_V ) = \mathcal{H}_{ \mathrm{pp} }( H_V )^\perp , \\
& \mathrm{Ran} ( W_\pm( H_0 , H_V ) ) = \mathcal{H} .
\end{align*}
Here $\mathcal{H}_{ \mathrm{ac} }( H_V )$ and $\mathcal{H}_{ \mathrm{pp} }( H_V )$ denote the absolutely continuous and pure-point spectral subspaces of $H_V$, respectively, and $\Pi_{ \mathrm{ac} }( H_V )$ denotes the orthogonal projection onto $\mathcal{H}_{ \mathrm{ac}Ê}( H_V )$.
\end{assumption}
In our next assumption we require that $C$ be relatively smooth with respect to $H_V$, in the sense of Kato \cite{Kato1}.
\begin{assumption}[Relative smoothness of $C$ with respect to $H_V$]\label{V1}
There exists a constant $\mathrm{c}_V > 0$, such that
\begin{align}\label{eq:ZV}
\int_{ \mathbb{R} } \big \|ÊC e^{ - i t H_V } \Pi_{ \mathrm{ac} }( H_V ) u \big \|^2 dt \le \mathrm{c}_V^2 \| \Pi_{ \mathrm{ac} }( H_V ) u \|^2 ,
\end{align}
for all $u \in \mathcal{H}$.
\end{assumption}
We recall that \eqref{eq:ZV} is equivalent to
\begin{align}
\int_{ \mathbb{R} } \Big ( \big \|ÊC \big ( H_V - ( \lambda + i 0^+ ) \big )^{-1} u \big \|^2 + \big \|ÊC \big ( H_V - ( \lambda - i 0^+ ) \big )^{-1} u \big \|^2 \Big ) d \lambda \le 2 \pi \mathrm{c}_V^2 \| u \|^2 , \label{eq:equivV1}
\end{align}
for all $u \in \mathrm{Ran} ( \Pi_{ \mathrm{ac} }( H_V ) )$. See \cite{Kato1}.

Before stating our last assumption, we introduce the following definition.
\begin{definition}\label{def:spec-sing}
Let $\lambda \in [ 0 , \infty )$. We say that $\lambda$ is a regular spectral point of $H$ if there exists a compact interval $K_\lambda \subset \mathbb{R}$ whose interior contains $\lambda$ and such that the limit
\begin{equation*}
C \big ( H - ( \mu - i 0^+ ) \big )^{-1} C^* := \lim_{ \varepsilon \downarrow 0 } C \big ( H - ( \mu - i \varepsilon ) \big )^{-1} C^* 
\end{equation*}
exists uniformly in $\mu \in K_\lambda$ in the norm topology of $\mathcal{L}( \mathcal{H} )$. If $\lambda$ is not a regular spectral point of $H$, we say that $\lambda$ is a ``\emph{spectral singularity}'' of $H$.
\end{definition}
It should be noted that only the limit of the resolvent as the spectral parameter approaches $[ 0 , \infty )$ from below is considered in the previous definition. Due to the fact that $H$ is dissipative, the limit of the resolvent on $[ 0 , \infty )$ from above is, in a sense that will be made precise in Section \ref{sec:main}, automatically well-defined. It is implicitly assumed in Definition \ref{def:spec-sing} that the resolvents $( H - ( \mu - i \varepsilon ) )^{-1}$ are well-defined for $\varepsilon > 0$ small enough. In particular, $\lambda$ cannot be an accumulation point of eigenvalues of $H$ located in $\lambda - i ( 0 , \infty )$. Our definition of a spectral singularity of $H$ is related to that of \cite{Sc60_01} and to the notion of spectral projections for non-self-adjoint operators \cite{Du58_01}. More details will be given in Section \ref{sec:main}.

In our last assumption we suppose that $H$ has only finitely many spectral singularities and that each spectral singularity is of ``finite order''.
\begin{assumption}[Spectral singularities of $H$]\label{V3}
$H$ has a finite number of spectral singularities $\{ \lambda_1 , \dots , \lambda_n \} \subset [ 0 , \infty )$ and, for each spectral singularity $\lambda_j \in [ 0 , \infty )$, there exist an integer $\nu_j > 0$ and a compact interval $K_{\lambda_j}$, whose interior contains $\lambda_j$, such that the limit
\begin{equation*}
\lim_{ \varepsilon \downarrow 0 }  | \mu - \lambda_j |^{\nu_j} \big \|ÊC \big ( H - ( \mu - i \varepsilon ) \big )^{-1} C^* \big \| 
\end{equation*}
exists uniformly in $\mu \in K_{\lambda_j}$ in the norm topology of $\mathcal{L}( \mathcal{H} )$. Moreover there exists $m > 0$ such that
\begin{equation}\label{eq:sup>m}
\sup_{Ê\mu \ge m , \, \varepsilon > 0 } \big \| C \big ( H - ( \mu - i \varepsilon ) \big )^{-1} C^* \big \| < \infty.
\end{equation}
\end{assumption}

As explained in the introduction, the main application we have in mind concerns Schr{\"o}dinger operators of the form \eqref{eq:intro1}, where $H_0 = - \Delta$ on $L^2( \mathbb{R}^3 )$, and $V : \mathbb{R}^3 \to \mathbb{R}$, $C : \mathbb{R}^3 \to \mathbb{C}$ are potentials. In this case, conditions on $V$ and $C$ that imply Hypotheses \ref{V-1}--\ref{V3} are known; (see Section \ref{sec:schr}). In particular for bounded, compactly supported potentials $V$ and $C$, spectral singularities, in the sense of Definition \ref{def:spec-sing}, correspond to real resonances associated to resonant states with incoming Sommerfeld radiation condition.

\subsection{Main results}\label{subsec:main_res}
Our first result shows that the subspace of ``dissipative states'' $\mathcal{H}_{ \mathrm{d} }( H ) = \{ u \in \mathcal{H} , \| e^{ - i t H }Êu \|Ê\to 0 , t \to \infty \}$ coincides with the subspace $\mathcal{H}_{ \mathrm{p} }( H )$ spanned by the generalized eigenstates of $H$ corresponding to non-real eigenvalues.
\begin{theorem}\label{thm:Hp-Hd}
Suppose that Hypotheses \ref{V-1}--\ref{V3} hold. Then
\begin{equation*}
\mathcal{H}_{ \mathrm{d} }( H ) = \mathcal{H}_{ \mathrm{p} }( H ).
\end{equation*}
\end{theorem}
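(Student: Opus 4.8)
The inclusion $\mathcal{H}_{\mathrm{p}}(H)\subseteq\mathcal{H}_{\mathrm{d}}(H)$ is the elementary one, already observed after \eqref{eq:defHd}: by Hypothesis~\ref{V2} the subspace $\mathcal{H}_{\mathrm{p}}(H)$ is finite dimensional and $H$-invariant, and on a generalized eigenspace attached to $\lambda$ with $\mathrm{Im}\,\lambda<0$ the restriction of $H$ has a finite Jordan form, so that $\|e^{-itH}u\|\le P(t)\,e^{t\,\mathrm{Im}\,\lambda}$ for some polynomial $P$, which tends to $0$. So all the difficulty lies in the reverse inclusion $\mathcal{H}_{\mathrm{d}}(H)\subseteq\mathcal{H}_{\mathrm{p}}(H)$.

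My first step would be to peel off the discrete spectrum of $H$ by a Riesz projection. The real eigenvalues of $H$ satisfy $Cu=0$ (since $\mathrm{Im}\langle u,Hu\rangle=-\|Cu\|^2=0$) and hence coincide with the strictly negative eigenvalues of $H_V$, which are finite in number by Hypothesis~\ref{V-1}; together with the finitely many non-real eigenvalues (Hypothesis~\ref{V2}) this gives a finite-rank Riesz projection $P$ onto the full discrete spectral subspace, commuting with $e^{-itH}$. On $\mathrm{Ran}(P)$ one checks directly that $e^{-itH}u\to0$ forces $u\in\mathcal{H}_{\mathrm{p}}(H)$, because the real-eigenvalue part has non-decaying norm. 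Hence it remains to prove that the \emph{continuous} subspace $\mathrm{Ran}(1-P)$, which is $H$-invariant with spectrum $\sigma_{\mathrm{ess}}(H)\subseteq[0,\infty)$ and no eigenvalues, contains no nonzero dissipative state.

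For this I would exploit the dissipation identity obtained by differentiating $\|e^{-itH}u\|^2$, namely
\[
\lim_{t\to\infty}\big\|e^{-itH}u\big\|^2=\|u\|^2-2\int_0^\infty\big\|Ce^{-isH}u\big\|^2\,ds,
\]
which shows that $C$ is automatically $H$-smooth and that $u\in\mathcal{H}_{\mathrm{d}}(H)$ precisely when this integral saturates to $\tfrac12\|u\|^2$. Combining the resulting $L^2$-in-time bound with the Kato smoothness of $C$ relative to $H_V$ (Hypothesis~\ref{V1}) and the Duhamel formula $e^{itH_V}e^{-itH}u=u-\int_0^t e^{isH_V}C^*Ce^{-isH}u\,ds$, one constructs a dissipative wave operator comparing the $H$- and $H_V$-dynamics, whose range lies in $\mathcal{H}_{\mathrm{ac}}(H_V)$; composing with the complete self-adjoint wave operators $W_\pm(H_0,H_V)$ (Hypothesis~\ref{V0}) attaches to $u$ a free asymptotic state. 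The conclusion I would aim for is that $\lim_{t\to\infty}\|e^{-itH}u\|$ equals the norm of this asymptotic state, so that \emph{complete absorption is equivalent to the vanishing of the asymptotic state}.

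The heart of the proof, and the step I expect to be the main obstacle, is then to show that this asymptotic state is nonzero for every $0\ne u\in\mathrm{Ran}(1-P)$, i.e. that a genuine scattering state of $H$ is never completely absorbed. I would establish this through the stationary representation of the dynamics by boundary values of the resolvent on $[0,\infty)$: the boundary value from above, $C(H-(\mu+i0^+))^{-1}$, is well defined and square-integrable in $\mu$ by dissipativity and Hypothesis~\ref{V1}, whereas the boundary value from below is controlled, away from the finitely many spectral singularities $\lambda_j$ and with the explicit blow-up rate $|\mu-\lambda_j|^{-\nu_j}$ near them, by Hypothesis~\ref{V3}. Expressing both the asymptotic state and the absorbed part through these boundary values and comparing with the non-degenerate scattering density of the self-adjoint pair $(H_V,H_0)$, one sees that the vanishing of the asymptotic state would force the outgoing spectral density of $u$ to vanish identically, whence $u=0$. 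The delicate point is to check that the contributions localized near the spectral singularities produce only integrable, oscillatory corrections that cannot sustain complete absorption of a continuous state, and it is precisely here that the finite order $\nu_j$ supplied by Hypothesis~\ref{V3} is indispensable.
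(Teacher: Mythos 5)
Your first two steps are sound: the inclusion $\mathcal{H}_{\mathrm{p}}(H)\subseteq\mathcal{H}_{\mathrm{d}}(H)$, the reduction via the finite-rank Riesz projection $\Pi_{\mathrm{pp}}$ (which commutes with the semigroup, so a dissipative state splits into a discrete part, handled by Jordan-form and almost-periodicity arguments, and a part in $\mathrm{Ker}(\Pi_{\mathrm{pp}})$), and the identification of complete absorption with vanishing of the free asymptotic state are all correct. Indeed, $\mathrm{Ker}(W_+(H_0,H))=\mathcal{H}_{\mathrm{b}}(H)\oplus\mathcal{H}_{\mathrm{d}}(H)$ is exactly Proposition \ref{prop:existence_W+}, and your Duhamel/Cook construction reproduces its proof. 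But at that point nothing has been gained: for $u\in\mathrm{Ker}(\Pi_{\mathrm{pp}})$ the assertion ``$W_+(H_0,H)u=0\Rightarrow u=0$'' is not a reduction of the theorem, it \emph{is} the theorem, merely re-encoded (one direction of your claimed equivalence is trivial: if $\|e^{-itH}u\|\to0$ then of course the asymptotic state vanishes).

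The genuine gap is in your final paragraph. For a self-adjoint operator, ``vanishing outgoing spectral density implies $u=0$'' follows from the spectral theorem; for the dissipative $H$ considered here no such expansion theorem is available a priori, and the claim that the boundary values $C(H-(\mu\mp i0^+))^{-1}$ separate points of the continuous subspace is precisely the missing completeness statement — asserting it is circular. This is where the paper does its real work, in two steps that your sketch does not supply. First, it converts the dynamical hypothesis $\|e^{-itH}u\|\to0$ into a stationary one: by Proposition \ref{prop:randense_W-}, $\mathcal{H}_{\mathrm{b}}(H)\oplus\mathcal{H}_{\mathrm{d}}(H)=\mathrm{Ran}(W_+(H^*,H_0))^\perp$, and by Theorems \ref{prop:range1} and \ref{prop:proj-ran} this forces $E_H(I)u=0$ for every closed interval $I$ free of spectral singularities, where $E_H(I)$ is the Dunford--Schwartz projection \eqref{eq:spec-proj}. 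Second, it proves the regularized decomposition \eqref{eq:cauchy-3},
\[
R^4\prod_{j=1}^n(R-\mu_j)^{\nu_j}=R^4\prod_{j=1}^n(R-\mu_j)^{\nu_j}\,\Pi_{\mathrm{pp}}(R)+\tilde E_H([0,\infty)),\qquad R=(H-i)^{-1},\quad \mu_j=(\lambda_j-i)^{-1},
\]
by contour integration around the essential spectrum of $R$ (Appendix \ref{app:proj}); here the factors $(R-\mu_j)^{\nu_j}$, with the orders $\nu_j$ from Hypothesis \ref{V3}, neutralize the blow-up at the spectral singularities and $R^4$ controls infinity via \eqref{eq:sup>m}. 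A vector annihilated by all $E_H(I)$ is then annihilated by $\tilde E_H([0,\infty))$, and the regularizing factors are removed by an induction using that the $\lambda_j\ge 0$ are not eigenvalues of $H$ (Lemma \ref{lm:Hb} and Hypothesis \ref{V-1}), giving $u\in\mathrm{Ran}(\Pi_{\mathrm{pp}}(R))$. Your sentence about ``integrable, oscillatory corrections'' near the singularities gestures at this regularization, but without the identity that makes ``spectral density'' meaningful for $H$, and without the mechanism converting dissipation into the stationary condition $E_H(I)u=0$, the proof does not close.
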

The problem of finding conditions implying that $\mathcal{H}_{ \mathrm{d} }( H ) = \mathcal{H}_{ \mathrm{p} }( H )$ is quoted as open in \cite{Davies1}. For small perturbations, such a result follows from similarity of $H$ and $H_0$ (see \cite{Kato1}) implying that $\mathcal{H}_{ \mathrm{d} }( H ) = \mathcal{H}_{ \mathrm{p} }( H ) = \{ 0 \}$; but, to our knowledge, Theorem \ref{thm:Hp-Hd} is new, given our assumptions.

For the nuclear optical model, Theorem \ref{thm:Hp-Hd} implies that, unless the initial state is a linear combination of generalized eigenstates corresponding to non-real eigenvalues of $H$, the probability that the neutron eventually escapes from the nucleus is always strictly positive.

Assuming that $H$ has no spectral singularities, we have the following result.
\begin{theorem}\label{thm:nonblowup}
Suppose that Hypotheses \ref{V-1}--\ref{V3} hold and that $H$ has no spectral singularities in $[ 0 , \infty )$. Then there exist $m_1>0$ and $m_2>0$ such that, for all $u \in \mathcal{H}_{ \mathrm{p} }( H^* )^\perp$,
\begin{equation*}
m_1 \| u \| \le \big \| e^{ - i t H }Êu \big \|Ê\le m_2 \|Êu \|Ê, \quad t \in \mathbb{R}.
\end{equation*}
\end{theorem}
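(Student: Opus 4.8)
The organizing identity for the whole proof is the exact energy balance
$$\|e^{-itH}u\|^2 = \|u\|^2 - 2\int_0^t \|Ce^{-isH}u\|^2\,ds, \qquad t\in\mathbb{R},$$
obtained by differentiating $t\mapsto\|e^{-itH}u\|^2$ and using $\mathrm{Im}\langle Hv,v\rangle = -\|Cv\|^2$ (first for $u\in\mathcal{D}(H)$, then by density). Reading it for $t\ge 0$ gives the upper bound $\|e^{-itH}u\|\le\|u\|$ at once, while reading it for $t\le0$ gives the lower bound $\|e^{-itH}u\|\ge\|u\|$. Thus the theorem reduces to two statements on $\mathcal{H}_{\mathrm p}(H^*)^\perp$: (i) a global smoothness estimate $\int_{\mathbb{R}}\|Ce^{-itH}u\|^2\,dt\le K^2\|u\|^2$, which supplies the missing upper bound for $t\le 0$ through $\|e^{-itH}u\|^2\le \|u\|^2 + 2\int_{-\infty}^0\|Ce^{-isH}u\|^2\,ds$; and (ii) a uniform lower bound $q(u):=\|u\|^2 - 2\int_0^\infty\|Ce^{-isH}u\|^2\,ds\ge m_1^2\|u\|^2$, which supplies the missing lower bound for $t\ge0$, since $\|e^{-itH}u\|^2$ decreases to $q(u)$.

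Step (i) is the analytic heart. I would use the identification $\mathcal{H}_{\mathrm p}(H^*)^\perp = \ker\Pi_{\mathrm p}$, where $\Pi_{\mathrm p}$ is the finite-rank (by Hypothesis \ref{V2}) Riesz projection onto $\mathcal{H}_{\mathrm p}(H)$; on this invariant subspace $H$ has no non-real eigenvalues, so the backward-growing modes are excluded. The plan is to establish the dissipative Kato-smoothness bound $\sup_{z\in\mathbb{C}\setminus\mathbb{R}}\|C(H-z)^{-1}C^*\| < \infty$ after restriction to $\mathcal{H}_{\mathrm p}(H^*)^\perp$, and then to pass to the time-integrated form by the Parseval argument, letting $\mathrm{Im}\,z\to0$. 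The upper half-plane is controlled by dissipativity; the boundary values from below on any compact subinterval of $(0,\infty)$ avoiding the $\lambda_j$ are finite because such points are regular in the sense of Definition \ref{def:spec-sing}, and the hypothesis that there are \emph{no} spectral singularities removes the exceptional set altogether; the region $\mu\ge m$ is handled by \eqref{eq:sup>m}, and the negative real axis carries only the finitely many real eigenvalues, at which $C(H-z)^{-1}C^*$ stays regular because the corresponding eigenvectors lie in $\ker C$. Patching these regions, together with analyticity off the real axis and resolvent decay at infinity, yields the uniform bound and hence (i). The main obstacle is precisely this patching for the non-self-adjoint $H$: turning compact and pointwise control of boundary values into a single uniform bound over both half-planes, near the transition between $[0,\infty)$ and the negative axis, and deploying a quantitative Kato-smoothness statement valid for dissipative generators.

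For step (ii), pointwise positivity is immediate: by Theorem \ref{thm:Hp-Hd}, $q(u)=\lim_{t\to\infty}\|e^{-itH}u\|^2=0$ iff $u\in\mathcal{H}_{\mathrm d}(H)=\mathcal{H}_{\mathrm p}(H)$, whereas $\mathcal{H}_{\mathrm p}(H)\cap\mathcal{H}_{\mathrm p}(H^*)^\perp=\{0\}$ by the biorthogonality $\Pi_\lambda^*=\Pi^*_{\bar\lambda}$ of the Riesz projections; hence $q(u)>0$ for every nonzero $u\in\mathcal{H}_{\mathrm p}(H^*)^\perp$. To upgrade this to the uniform bound, I would realise $q$ through a wave operator: using (i) and a Cook--Duhamel argument based on the smoothness of $C$ relative to both $H$ and $H_V$ (Hypothesis \ref{V1}) together with the completeness of Hypothesis \ref{V0}, the limit $W_+u:=\slim_{t\to\infty}e^{itH_0}e^{-itH}u$ exists on $\mathcal{H}_{\mathrm p}(H^*)^\perp$, satisfies $\|W_+u\|^2=q(u)$, and intertwines, $W_+H=H_0W_+$. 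It then suffices to show that $W_+$ is bounded below on $\mathcal{H}_{\mathrm p}(H^*)^\perp$, equivalently that $0$ is not in the essential spectrum of the bounded self-adjoint operator $Q=\slim_{t\to\infty}e^{itH^*}e^{-itH}$ restricted there. I expect this to follow from constructing the inverse wave operator $\slim_{t\to\infty}e^{itH}e^{-itH_0}$, whose boundedness again rests on the smoothness estimate of step (i) and the absence of spectral singularities, and verifying that it inverts $W_+$ up to the finite-rank bound-state part, so that $W_+$ is an isomorphism onto $\mathcal{H}=\mathcal{H}_{\mathrm{ac}}(H_0)$. This injectivity-with-closed-range step, rather than the pointwise positivity, is the delicate part of (ii).
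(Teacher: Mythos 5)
Your energy-balance reduction is correct, and your step (i) is precisely the analytic core of the paper: the bound $\int_0^\infty \| C e^{isH} u \|^2\, ds \le K^2 \| u \|^2$ on $( \mathcal{H}_{ \mathrm{b} }( H ) \oplus \mathcal{H}_{ \mathrm{p} }( H^* ) )^\perp$ is exactly what the paper establishes in the proof of Theorem \ref{thm:AC}, via the Parseval identity \eqref{eq:parseval_1}, the resolvent identity combined with Hypothesis \ref{V1}, regularity of all spectral points, and \eqref{eq:sup>m} (plus one point your sketch omits: an a priori polynomial growth bound on $\| e^{isH} u \|$, obtained in the paper from \cite[Theorem 2.1]{EiZw07_01}, is needed before Parseval can be applied; and a uniform bound on $C(H-z)^{-1}C^*$ alone is not enough --- the vector-valued $L^2_\lambda$ estimate requires inserting the resolvent identity and using Hypothesis \ref{V1}). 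The genuine gap is in step (ii). The strong limit $\slim_{t\to\infty} e^{itH} e^{-itH_0}$ that you propose as an inverse of $W_+$ does not exist whenever $H$ has a non-real eigenvalue, which is precisely the situation this theorem is designed for. Indeed, if the strong limit existed, the uniform boundedness principle would force $\sup_{t \ge 0} \| e^{itH} e^{-itH_0} \| < \infty$; but $\| e^{itH} e^{-itH_0} \| = \| e^{itH} \|$ (unitarity of $e^{-itH_0}$), and $\| e^{itH} \| \ge e^{ t | \mathrm{Im}\, \lambda_0 | } \to \infty$ for any eigenvalue $\lambda_0$ with $\mathrm{Im}\, \lambda_0 < 0$ (apply $e^{itH}$ to an eigenvector). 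Your caveat ``up to the finite-rank bound-state part'' does not address this: the obstruction is not $\mathcal{H}_{ \mathrm{b} }( H )$, on which the evolution is unitary, but $\mathcal{H}_{ \mathrm{p} }( H )$, on which $e^{itH}$ blows up exponentially as $t \to +\infty$. (A smaller slip of the same kind: $W_+ u = \slim e^{itH_0} e^{-itH} u$ does not exist on $\mathcal{H}_{ \mathrm{b} }( H ) \neq \{0\}$ either, since $e^{itH_0} u_{\mathrm{b}}$ has constant norm but no strong limit; the bound states must be split off first, which is harmless because they evolve unitarily and stay orthogonal to the rest.)

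The construction can be repaired by inserting the projection, i.e., considering $\slim_{t\to\infty} e^{itH} \overline{\Pi}_{ \mathrm{pp} } e^{-itH_0}$ with $\overline{\Pi}_{ \mathrm{pp} } = \mathrm{Id} - \Pi_{ \mathrm{pp} }$, whose existence follows by a Cook argument through $H_V$ using the uniform bound on $e^{itH} \overline{\Pi}_{ \mathrm{pp} }$ from step (i), its mirror estimate for $H^*$, and Hypotheses \ref{V0}--\ref{V1}; the identity $\Omega W_+ u = u$ on $( \mathcal{H}_{ \mathrm{b} }( H ) \oplus \mathcal{H}_{ \mathrm{p} }( H^* ) )^\perp$ then yields your lower bound. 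But the whole detour through $W_+$, $Q$ and essential spectra is unnecessary once step (i) is in hand: by Theorem \ref{prop:range1} (i.e., \eqref{eq:1ststep}), step (i) says exactly that $( \mathcal{H}_{ \mathrm{b} }( H ) \oplus \mathcal{H}_{ \mathrm{p} }( H^* ) )^\perp \subset S( H ) \cap \mathcal{H}_{ \mathrm{ac} }( H ) = \mathrm{Ran}( W_-( H , H_0 ) )$. Since $W_-( H , H_0 )$ is an injective contraction (Proposition \ref{prop:existence_W-}) whose range is then the closed subspace $( \mathcal{H}_{ \mathrm{b} }( H ) \oplus \mathcal{H}_{ \mathrm{p} }( H^* ) )^\perp$, it is bounded below by the open mapping theorem, and the intertwining relation \eqref{eq:inter-1} transfers the two-sided bounds of the unitary group $e^{-itH_0}$ to $e^{-itH}$ on this range, the bound-state component contributing $\| e^{-itH} u_{\mathrm{b}} \| = \| u_{\mathrm{b}} \|$. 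This is exactly the paper's proof (Theorem \ref{thm:nonblowup-2}), which obtains the statement as a short corollary of Theorem \ref{thm:AC} --- the result your step (i) effectively re-proves.
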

The second inequality of the last equation shows that the solution of the Schr{\"o}dinger equation
$$
\left \{
\begin{array}{l}
 i \partial_t u_t = H u_t \\
 u_0 \in \mathcal{H}_{ \mathrm{p} }( H^* )^\perp ,
\end{array}
\right.
$$
cannot blow up, as $t\to - \infty$, and that the norm of $u_t$ is controlled by the norm of the initial state $u_0$. For Schr{\"o}dinger operators with a complex potential, a related result has been established in \cite{Go10_01}.

Our next results concern the scattering theory for $H$ and $H_0$. The wave operator $W_-( H , H_0 )$ is defined by
\begin{equation*}
W_-( H , H_0 ) := \underset{t\to \infty }{\slim}  \, e^{ - i t H } e^{ i t H_0 } .
\end{equation*}
It will be recalled in Section \ref{subsec:waveop} that $W_-( H , H_0 )$ exists under our assumptions. One of our main concerns is to study the vector space $\mathrm{Ran}( W_-( H , H_0 ) )$. This is a central issue of dissipative scattering theory and it is also a crucial input in the scattering theory of Lindblad master equations, as mentioned in the introduction.

Roughly speaking, the following two theorems will show that $W_-( H , H_0 )$ is asymptotically complete if and only if $H$ has no spectral singularities in $[ 0 , \infty )$.
\begin{theorem}\label{thm:AC}
Suppose that Hypotheses \ref{V-1}--\ref{V3} hold and that $H$ has no spectral singularities in $[ 0 , \infty )$. Then $W_-( H , H_0 )$ is asymptotically complete, in the sense that
\begin{equation*}
\mathrm{Ran} ( W_-( H , H_0 ) ) = \big ( \mathcal{H}_{ \mathrm{b} }( H ) \oplus \mathcal{H}_{ \mathrm{p} }( H^* ) \big )^\perp.
\end{equation*}
In particular, the restriction of $H$ to $( \mathcal{H}_{ \mathrm{b} }( H ) \oplus \mathcal{H}_{ \mathrm{p} }( H^* ) )^\perp$ is similar to $H_0$.
\end{theorem}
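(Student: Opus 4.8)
The plan is to prove asymptotic completeness of $W_-(H,H_0)$ by establishing the two inclusions $\mathrm{Ran}(W_-(H,H_0)) \subseteq (\mathcal{H}_{\mathrm{b}}(H) \oplus \mathcal{H}_{\mathrm{p}}(H^*))^\perp$ and its reverse separately, and then to deduce the similarity statement as a consequence. For the easy inclusion, I would argue that any vector in $\mathrm{Ran}(W_-(H,H_0))$ is, by construction, a limit of states $e^{-itH}e^{itH_0}u_0$, and since $H_0$ has purely absolutely continuous spectrum (Hypothesis~\ref{V-1}), such states spread out and lose any overlap with localized structures. More precisely, I would show that $\mathrm{Ran}(W_-(H,H_0))$ is orthogonal both to $\mathcal{H}_{\mathrm{b}}(H)$ and to $\mathcal{H}_{\mathrm{p}}(H^*)$, by testing against eigenvectors of $H$ (real eigenvalues) and generalized eigenvectors of $H^*$ (eigenvalues with $\mathrm{Im}\,\lambda > 0$) and using the RAGE-type decay of $e^{itH_0}u_0$ in these finite-dimensional directions.

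The substantial direction is showing that every $u \in (\mathcal{H}_{\mathrm{b}}(H) \oplus \mathcal{H}_{\mathrm{p}}(H^*))^\perp$ actually lies in $\mathrm{Ran}(W_-(H,H_0))$. Here the strategy is to exhibit, for such $u$, an asymptotic outgoing state: I would study the ``inverse'' wave operator and seek $u_+$ with $e^{-itH}u \approx e^{-itH_0}u_+$ as $t \to \infty$, which amounts to proving existence of $\mathrm{s\text{-}lim}_{t\to\infty} e^{-itH_0}e^{itH}u$ on the relevant subspace. The key analytic input is the absence of spectral singularities, which by Definition~\ref{def:spec-sing} guarantees that $C(H - (\mu - i0^+))^{-1}C^*$ exists and is norm-continuous across all of $[0,\infty)$, together with the uniform bound \eqref{eq:sup>m} at high energy from Hypothesis~\ref{V3}. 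Combined with the Kato-smoothness of $C$ relative to $H_V$ (Hypothesis~\ref{V1}), this should yield that $\int_0^\infty \|Ce^{-itH}u\|^2\,dt < \infty$ for $u$ in the appropriate subspace, which is exactly the Cook-type estimate needed: since $\frac{d}{dt}e^{-itH_0}e^{itH}u = e^{-itH_0}(iC^*C)e^{itH}u$ has integrable norm, the limit exists. Theorem~\ref{thm:nonblowup} provides the crucial two-sided bound $m_1\|u\| \le \|e^{-itH}u\| \le m_2\|u\|$ on $\mathcal{H}_{\mathrm{p}}(H^*)^\perp$, preventing degeneration and ensuring the constructed scattering state is nonzero and the map is bounded below.

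The main obstacle, as I see it, will be controlling the behavior of the resolvent and of $C e^{itH}u$ uniformly near the threshold $\lambda = 0$ and across the continuous spectrum, where embedded structure of $H_V$ and the interplay between $\mathcal{H}_{\mathrm{b}}(H)$, $\mathcal{H}_{\mathrm{p}}(H^*)$, and the absolutely continuous part must be disentangled. The absence of spectral singularities is the hypothesis that removes the worst enemy---a blow-up of the boundary-value resolvent that would destroy the Kato-smoothness estimate for $H$---but transferring smoothness of $C$ from $H_V$ (where it is assumed) to $H$ (where it is needed) requires a resolvent identity argument, expressing $(H - z)^{-1}$ in terms of $(H_V - z)^{-1}$ and the sandwiched resolvent $C(H-z)^{-1}C^*$, and showing the correction terms are controlled. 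Once the two inclusions are in hand, the similarity of $H|_{(\mathcal{H}_{\mathrm{b}}(H)\oplus\mathcal{H}_{\mathrm{p}}(H^*))^\perp}$ to $H_0$ follows by observing that $W_-(H,H_0)$ is a bounded bijection (bounded below by Theorem~\ref{thm:nonblowup}, surjective onto the target by asymptotic completeness) intertwining $H_0$ and $H$, i.e. $e^{-itH}W_-(H,H_0) = W_-(H,H_0)e^{-itH_0}$, so that $W_-(H,H_0)$ implements the similarity on the orthogonal complement.
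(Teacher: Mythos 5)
Your analytic core for the hard inclusion is the same as the paper's: use Parseval to convert $\int_0^\infty e^{-2\varepsilon s}\|Ce^{isH}u\|^2\,ds$ into $\frac{1}{2\pi}\int_{\mathbb{R}}\|C(H-(\lambda-i\varepsilon))^{-1}u\|^2\,d\lambda$, then transfer Kato smoothness from $H_V$ to $H$ by the resolvent identity, with the absence of spectral singularities plus \eqref{eq:sup>m} giving the uniform bound on the sandwiched resolvent $C(H-z)^{-1}C^*$. The easy inclusion via testing against (generalized) eigenvectors also works. But there is a genuine gap in your bridge from the smoothing estimate to membership in $\mathrm{Ran}(W_-(H,H_0))$. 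Your Cook argument is wrong as stated: $\frac{d}{dt}\,e^{-itH_0}e^{itH}u = i\,e^{-itH_0}(V-iC^*C)e^{itH}u$, not $e^{-itH_0}(iC^*C)e^{itH}u$; the potential $V$ does not drop out, and none of the hypotheses give time-integrability for the $V$-term, so the limit $\operatorname{s-lim}_{t\to\infty}e^{-itH_0}e^{itH}u$ cannot be obtained this way. Even for the $C^*C$-term alone, what you can hope to prove is $\int_0^\infty\|Ce^{itH}u\|^2\,dt<\infty$ (note the sign: $\int_0^\infty\|Ce^{-itH}u\|^2\,dt\le\tfrac12\|u\|^2$ holds trivially by \eqref{eq:alpha}, so your stated estimate is vacuous), and an $L^2$-in-time bound does not make the Cook integrand $L^1$ in norm; one needs the bilinear Cauchy--Schwarz pairing against a second Kato-smooth factor. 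The paper resolves both problems at once: it factors $W_-(H,H_0)=W_-(H,H_V)W_-(H_V,H_0)$, letting Hypothesis \ref{V0} absorb $V$, and it invokes the characterization \eqref{eq:1ststep}, i.e. $\mathrm{Ran}(W_-(H,H_0))=S(H)\cap\mathcal{H}_{\mathrm{b}}(H)^\perp$ (Theorem \ref{prop:range1}), whose proof requires the decomposition $u=e^{-itH}\Pi_{\mathrm{pp}}(H_V)e^{itH}u+e^{-itH}\Pi_{\mathrm{ac}}(H_V)e^{itH}u$ and Lemma \ref{lm:LM1}. Your proposal needs this (or an equivalent) reduction; without it the final step fails.

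A second, structural problem is your use of Theorem \ref{thm:nonblowup}: in the paper that theorem is \emph{deduced from} Theorem \ref{thm:AC} (Section \ref{subsec:nonblowup}), so invoking it as an ingredient is circular unless you supply an independent proof, which you do not sketch. It is also unnecessary: injectivity of $W_-(H,H_0)$ is already Proposition \ref{prop:existence_W-}, asymptotic completeness gives that the range is the closed subspace $(\mathcal{H}_{\mathrm{b}}(H)\oplus\mathcal{H}_{\mathrm{p}}(H^*))^\perp$, and then the inverse mapping theorem together with the intertwining relation \eqref{eq:inter-1} yields the similarity of $H$ restricted to that subspace with $H_0$; no lower bound of the form $\|W_-(H,H_0)w\|\ge m\|w\|$ needs to be assumed beforehand.
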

The notion of asymptotic completeness of the wave operators in our context will be discussed in Section \ref{subsec:compl}. It is proven in \cite{FaFaFrSc17_01} that, if Hypotheses \ref{V-1}, \ref{V0} and \ref{V1} hold with $\mathrm{c}_V < 2$, (see Eq. \eqref{eq:ZV}), then 
$
\mathrm{Ran}( W_-( H , H_0 ) ) = ( \mathcal{H}_{ \mathrm{b} }( H ) \oplus \mathcal{H}_{ \mathrm{d} }( H^* ) )^\perp .
$
Theorem \ref{thm:AC} improves this result in that it does not require any smallness condition on the constant $\mathrm{c}_V$, and shows that $\mathrm{Ran} ( W_-( H , H_0 ) )$ in fact coincides with the orthogonal complement of the vector space spanned by the generalized eigenvectors of $H^*$.

Theorem \ref{thm:AC} has further important consequences that we list in the following corollary. The wave operator $W_+( H_0 , H )$ is defined by
\begin{equation*}
W_+( H_0 , H ) := \underset{t\to \infty }{\slim}  \, e^{ i t H_0 } e^{ - i t H } \Pi_{ \mathrm{b} }( H )^\perp,
\end{equation*}
where $\Pi_{ \mathrm{b} }( H )$ denotes the orthogonal projection onto $\mathcal{H}_{ \mathrm{b} }( H )$ and $\Pi_{ \mathrm{b} }( H )^\perp := \mathrm{Id} - \Pi_{ \mathrm{b} }( H )$. The scattering operator $S( H , H_0 )$ is defined by
\begin{equation*}
S( H , H_0 ) := W_+( H_0 , H ) W_-( H , H_0 ).
\end{equation*}
It will be shown in Section \ref{sec:prelim} that $W_+ ( H_0 , H )$ and $S( H , H_0 )$ are well-defined under our assumptions.
\begin{corollary}\label{cor:W+}
Suppose that Hypotheses \ref{V-1}--\ref{V3} hold and that $H$ has no spectral singularities in $[ 0 , \infty )$. Then $W_+( H_0 , H ) : \mathcal{H} \to \mathcal{H}$ is surjective and
\begin{equation}
\mathrm{Ker} ( W_+( H_0 , H ) ) = \mathcal{H}_{ \mathrm{b} }( H ) \oplus \mathcal{H}_{ \mathrm{p} }( H ). \label{eq:Ker_W+}
\end{equation}
Moreover, $S( H, H_0 ): \mathcal{H} \to \mathcal{H}$ is bijective.
\end{corollary}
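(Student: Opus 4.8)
The plan is to deduce everything from Theorem~\ref{thm:AC}, applied both to $H$ and, via the time-reversal symmetry $e^{-itH}\leftrightarrow e^{itH^*}$, to the adjoint $H^*$. Since $iH^*$ generates the contraction semigroup $\{e^{itH^*}\}_{t\ge0}$, the operator
\[
W_+(H^*,H_0) := \slim_{t\to\infty} e^{itH^*}e^{-itH_0}
\]
is, for $H^*$ and $t\to+\infty$, the exact analogue of $W_-(H,H_0)$. As $H_V$ and $C$ are unchanged, the eigenvalues of $H^*$ are the complex conjugates of those of $H$, and, taking adjoints in Definition~\ref{def:spec-sing} (which exchanges the limits from below and from above), the absence of spectral singularities is preserved; hence Hypotheses~\ref{V-1}--\ref{V3} hold for $H^*$ as well. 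Theorem~\ref{thm:AC} then gives that $W_+(H^*,H_0)$ exists, is a similarity onto its range, and
\[
\mathrm{Ran}\big(W_+(H^*,H_0)\big) = \big(\mathcal{H}_{\mathrm{b}}(H^*)\oplus\mathcal{H}_{\mathrm{p}}(H)\big)^\perp = \big(\mathcal{H}_{\mathrm{b}}(H)\oplus\mathcal{H}_{\mathrm{p}}(H)\big)^\perp,
\]
where $\mathcal{H}_{\mathrm{b}}(H^*)=\mathcal{H}_{\mathrm{b}}(H)$ because a real eigenvector $u$ of $H$ obeys $Cu=0$, whence $H^*u=Hu$.

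Next I would record the adjoint identity relating the two sign conventions. Using $(e^{-itH})^* = e^{itH^*}$, for all $u,v\in\mathcal{H}$,
\[
\langle v, W_+(H_0,H)u\rangle = \lim_{t\to\infty}\big\langle e^{itH^*}e^{-itH_0}v,\ \Pi_{\mathrm{b}}(H)^\perp u\big\rangle = \big\langle \Pi_{\mathrm{b}}(H)^\perp W_+(H^*,H_0)v,\ u\big\rangle,
\]
so $W_+(H_0,H)^* = \Pi_{\mathrm{b}}(H)^\perp W_+(H^*,H_0)$; and since $\mathrm{Ran}(W_+(H^*,H_0))\subseteq\mathcal{H}_{\mathrm{b}}(H)^\perp$ the projection acts trivially, giving $W_+(H_0,H)=W_+(H^*,H_0)^*$. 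The two statements about $W_+(H_0,H)$ are now soft consequences. First, $\mathrm{Ker}(W_+(H_0,H)) = \mathrm{Ran}(W_+(H^*,H_0))^\perp = \mathcal{H}_{\mathrm{b}}(H)\oplus\mathcal{H}_{\mathrm{p}}(H)$, which is closed (indeed finite-dimensional under Hypotheses~\ref{V-1} and \ref{V2}) and is an orthogonal sum because $\mathcal{H}_{\mathrm{p}}(H)\perp\mathcal{H}_{\mathrm{b}}(H)$ (a real eigenvector is also an eigenvector of $H^*$, hence orthogonal to the generalized eigenvectors of $H$ for non-real eigenvalues); this is \eqref{eq:Ker_W+}. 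Second, $W_+(H^*,H_0)$ is injective with closed range, hence bounded below, so its adjoint $W_+(H_0,H)$ is surjective.

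For the bijectivity of $S(H,H_0)=W_+(H_0,H)W_-(H,H_0)$, set $N := \mathcal{H}_{\mathrm{b}}(H)\oplus\mathcal{H}_{\mathrm{p}}(H) = \mathrm{Ker}(W_+(H_0,H))$ and $\mathcal{K} := \mathrm{Ran}(W_-(H,H_0)) = (\mathcal{H}_{\mathrm{b}}(H)\oplus\mathcal{H}_{\mathrm{p}}(H^*))^\perp$; by Theorem~\ref{thm:AC} the map $W_-(H,H_0)\colon\mathcal{H}\to\mathcal{K}$ is a bijection. I would then show $\mathcal{H}=N\oplus\mathcal{K}$. Transversality $N\cap\mathcal{K}=\{0\}$ reduces, using $\mathcal{H}_{\mathrm{p}}(H)\perp\mathcal{H}_{\mathrm{b}}(H)$, to the non-degeneracy of the pairing between $\mathcal{H}_{\mathrm{p}}(H)$ and $\mathcal{H}_{\mathrm{p}}(H^*)$, which follows from $\Pi_\lambda^*=\Pi^*_{\bar\lambda}$ together with $\Pi_\lambda\Pi_{\lambda'}=0$ for $\lambda\neq\lambda'$. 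Since conjugate Riesz projections have equal rank, $\dim N = \mathrm{codim}\,\mathcal{K}$, so the finite-dimensional $N$ is a complement of $\mathcal{K}$. Hence $W_+(H_0,H)$, being surjective with kernel $N$ transverse to $\mathcal{K}$, restricts to a bijection $\mathcal{K}\to\mathcal{H}$; composing with the bijection $W_-(H,H_0)$ shows $S(H,H_0)$ is a bounded bijection, hence bicontinuous by the open mapping theorem.

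The substantive input is Theorem~\ref{thm:AC}, which is already in hand, so the corollary is essentially functional-analytic bookkeeping. The step demanding the most care is the transfer of Theorem~\ref{thm:AC} to $H^*$ --- checking, via adjoints in Definition~\ref{def:spec-sing}, that the hypotheses and the absence of spectral singularities survive when the boundary value is taken from above --- together with the clean derivation of $W_+(H_0,H)=W_+(H^*,H_0)^*$, which is presumably the source of the well-definedness of $W_+(H_0,H)$ announced in Section~\ref{sec:prelim}. The only genuinely algebraic ingredient is the non-degeneracy of the duality between the generalized eigenspaces of $H$ and $H^*$, which simultaneously yields $N\cap\mathcal{K}=\{0\}$ and the dimension count $\dim N=\mathrm{codim}\,\mathcal{K}$ needed for the surjectivity of $S$.
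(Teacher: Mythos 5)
Your proposal is correct in substance, but it reaches the corollary by a different route than the paper, and one step needs a more careful justification than ``apply Theorem \ref{thm:AC} to $H^*$''. The paper's proof is pure bookkeeping on top of previously established facts: Proposition \ref{prop:existence_W+} already gives $\mathrm{Ker}(W_+(H_0,H)) = \mathcal{H}_{\mathrm{b}}(H)\oplus\mathcal{H}_{\mathrm{d}}(H)$ together with the adjoint identity $W_+(H_0,H)^* = W_+(H^*,H_0)$ (Eq. \eqref{eq:adj-waves}); Theorem \ref{thm:Hp-Hd} converts $\mathcal{H}_{\mathrm{d}}(H)$ into $\mathcal{H}_{\mathrm{p}}(H)$, which is \eqref{eq:Ker_W+}; bijectivity of $S(H,H_0)$ follows from the abstract equivalence in Proposition \ref{prop:invertscatt} (proved via a lower bound $\|S(H,H_0)u\|\ge m\|u\|$ and $S(H,H_0)^*=S(H^*,H_0)$), since condition (ii) there holds by Theorem \ref{prop:AC} combined with Theorem \ref{thm:Hp-Hd}; and surjectivity of $W_+(H_0,H)$ is read off from $\mathrm{Ran}(S(H,H_0))\subset\mathrm{Ran}(W_+(H_0,H))$. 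You instead derive everything by duality from the asymptotic completeness of $W_+(H^*,H_0)$: you rederive the adjoint identity, get the kernel from $\mathrm{Ker}(A)=\mathrm{Ran}(A^*)^\perp$ plus finite-dimensionality of $\mathcal{H}_{\mathrm{b}}(H)\oplus\mathcal{H}_{\mathrm{p}}(H)$, get surjectivity from the closed-range theorem (adjoint of an operator bounded below is surjective), and prove bijectivity of $S(H,H_0)$ by exhibiting the decomposition $\mathcal{H}=N\oplus\mathcal{K}$ and factoring $S$ into two bijections. All of this is sound; note that your non-degenerate pairing between $\mathcal{H}_{\mathrm{p}}(H)$ and $\mathcal{H}_{\mathrm{p}}(H^*)$ and the resulting decomposition are precisely the paper's standing identity \eqref{eq:rankerpipp}, $\mathcal{H}=\mathrm{Ran}(\Pi_{\mathrm{pp}})\oplus\mathrm{Ker}(\Pi_{\mathrm{pp}})$, so that part could be replaced by a citation. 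Your route is more self-contained and obtains surjectivity more directly; the paper's is shorter because the auxiliary propositions are already in place.

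The one point to repair: Theorem \ref{thm:AC} cannot be invoked for $H^*$ as a black box. The theorem is formulated for dissipative operators $H_0+V-iC^*C$ with $H_0\ge 0$, and $H^*=H_V+iC^*C$ is not of that form; recasting it as $-H^*=(-H_0)+(-V)-iC^*C$ destroys semiboundedness of the free operator and the sign conventions in Hypothesis \ref{V-1} and Definition \ref{def:spec-sing}, so the hypotheses do \emph{not} literally ``hold for $H^*$''. What you need is the mirror statement $\mathrm{Ran}(W_+(H^*,H_0)) = (\mathcal{H}_{\mathrm{b}}(H)\oplus\mathcal{H}_{\mathrm{p}}(H))^\perp$, obtained by rerunning the same proof with $e^{-itH}$ replaced by $e^{itH^*}$ and boundary values from below replaced by boundary values from above. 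This statement is true and is exactly what the paper proves: its Theorem \ref{prop:AC} asserts asymptotic completeness of \emph{both} $W_-(H,H_0)$ and $W_+(H^*,H_0)$ in the sense of Definition \ref{def:compl}, with the second inclusion proved ``analogously'', and the transfer of spectral regularity to $H^*$ by taking adjoints is recorded at the beginning of Section \ref{subsec:spectral-proj}. So your symmetry argument is the right idea, but it must be phrased as invoking (or reproving) the two-sided version of the theorem, not as checking hypotheses for $H^*$.
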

The equivalence between the closedness of $\mathrm{Ran}( W_-( H , H_0 ) )$ and the invertibility of $S( H , H_0 )$ was already observed in \cite{Davies1}. The surjectivity of $W_+( H_0 , H )$ is an obvious consequence of the invertibility of $S( H , H_0 )$. For the nuclear optical model, \eqref{eq:Ker_W+} implies that, for any initial state $u_0$ orthogonal to all the generalized eigenstates of $H$, there exists a scattering state $u_+ \neq 0$ satisfying \eqref{eq:intro_ex_scatt}.

Finally we will show that the assumption that $H$ has no spectral singularities is essentially necessary for asymptotic completeness, in the sense that if there exists a non-regular spectral point of $H$ with sufficiently singular behavior, then $W_-( H , H_0 )$ is not asymptotically complete.
\begin{theorem}\label{thm:nonAC}
Suppose that Hypotheses \ref{V-1}--\ref{V3} hold. Assume that there exist an interval $J \subset [ 0 , \infty )$ and a vector $u \in \mathcal{H}$ such that
\begin{align}
\lim_{ \varepsilon \downarrow 0 } \int_{ÊJÊ}Ê\big \|ÊC ( H - ( \lambda - i \varepsilon ) )^{-1}ÊC^* u \big \|^2 d \lambda = \infty. \label{eq:spectrsingenough}
\end{align}
Then $ W_-( H , H_0 ) $ is \emph{not} asymptotically complete: 
\begin{equation*}
\mathrm{Ran} ( W_-( H , H_0 ) ) \subsetneq \big ( \mathcal{H}_{ \mathrm{b} }( H ) \oplus \mathcal{H}_{ \mathrm{p} }( H^* ) )^\perp.
\end{equation*}
\end{theorem}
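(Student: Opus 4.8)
The plan is to argue by contraposition, via the following necessary condition for membership in the range: any $\phi\in\mathrm{Ran}(W_-(H,H_0))$ has a uniformly bounded \emph{backward} orbit under $e^{isH}$ and, consequently, a square-integrable lower boundary value $\lambda\mapsto C(H-(\lambda-i0^+))^{-1}\phi$. Since \eqref{eq:spectrsingenough} exhibits a vector for which this square-integrability fails, that vector cannot lie in the range; I then push it into $\mathcal{R}:=(\mathcal{H}_{\mathrm{b}}(H)\oplus\mathcal{H}_{\mathrm{p}}(H^*))^\perp$ to contradict asymptotic completeness, using that $\mathrm{Ran}(W_-(H,H_0))\subseteq\mathcal{R}$ always holds under our hypotheses.

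\textbf{Step 1 (the key estimate).} First I would fix $\phi=W_-(H,H_0)g$. The intertwining relation $e^{isH}W_-(H,H_0)=W_-(H,H_0)e^{isH_0}$ together with unitarity of $e^{isH_0}$ gives $\sup_{s\ge0}\|e^{isH}\phi\|\le\|W_-(H,H_0)\|\,\|g\|<\infty$, so range vectors have bounded backward orbits even though $e^{isH}$ is only contractive for $s\le0$. The dissipative energy identity $\tfrac{d}{ds}\|e^{isH}\phi\|^2=2\|Ce^{isH}\phi\|^2$ then yields $\int_0^\infty\|Ce^{isH}\phi\|^2\,ds=\tfrac12(\sup_{s\ge0}\|e^{isH}\phi\|^2-\|\phi\|^2)<\infty$. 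Because the orbit is bounded, for $z=\lambda-i\varepsilon$ ($\varepsilon>0$) one has the Hardy-space representation $C(H-z)^{-1}\phi=-i\int_0^\infty e^{-i\lambda s}e^{-\varepsilon s}\,Ce^{isH}\phi\,ds$, i.e. the Fourier transform of $s\mapsto e^{-\varepsilon s}\mathbf{1}_{s>0}Ce^{isH}\phi$, whence by Plancherel
\begin{equation*}
\sup_{\varepsilon>0}\int_{\mathbb{R}}\big\|C(H-(\lambda-i\varepsilon))^{-1}\phi\big\|^2\,d\lambda\;\le\;2\pi\int_0^\infty\|Ce^{isH}\phi\|^2\,ds\;<\;\infty.
\end{equation*}
Thus every range vector violates \eqref{eq:spectrsingenough} on every bounded interval; applying this with $\phi=C^*u$ and invoking the hypothesis gives $C^*u\notin\mathrm{Ran}(W_-(H,H_0))$.

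\textbf{Step 2 (reduction to $\mathcal{R}$).} To upgrade this to strict inclusion I would produce a vector of $\mathcal{R}$ outside the range. Note $C^*u\perp\mathcal{H}_{\mathrm{b}}(H)$: any real eigenvector $\phi$ of $H$ satisfies $C\phi=0$ (since $\mathrm{Im}\langle\phi,H\phi\rangle=-\|C\phi\|^2=0$), so $\langle C^*u,\phi\rangle=\langle u,C\phi\rangle=0$. Hence $C^*u=v+v_{\mathrm{p}}$ with $v:=P_{\mathcal{R}}C^*u\in\mathcal{R}$ and $v_{\mathrm{p}}\in\mathcal{H}_{\mathrm{p}}(H^*)$. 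Writing $C(H-z)^{-1}C^*u=C(H-z)^{-1}v+C(H-z)^{-1}v_{\mathrm{p}}$, if the finite-dimensional piece satisfies $\sup_{\varepsilon>0}\int_J\|C(H-(\lambda-i\varepsilon))^{-1}v_{\mathrm{p}}\|^2\,d\lambda<\infty$, then by the triangle inequality the divergence \eqref{eq:spectrsingenough} is inherited by $v$; by Step 1 this forces $v\notin\mathrm{Ran}(W_-(H,H_0))$, so $\mathrm{Ran}(W_-(H,H_0))\subsetneq\mathcal{R}$.

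\textbf{Step 3 (the main obstacle: the $\mathcal{H}_{\mathrm{p}}(H^*)$-contribution).} This is the delicate point. For a generalized eigenvector $\psi$ of $H^*$ with $H^*\psi=\mu\psi$, $\mathrm{Im}\,\mu>0$, the relation $H\psi=\mu\psi-2iC^*C\psi$ gives, upon applying $(H-z)^{-1}$ and then $C$,
\begin{equation*}
C(H-z)^{-1}\psi=\frac{1}{\mu-z}\big(\mathrm{Id}+2i\,C(H-z)^{-1}C^*\big)C\psi.
\end{equation*}
The prefactor $(\mu-z)^{-1}$ stays bounded as $z\to\lambda\in J\subset[0,\infty)$ because $\mathrm{Im}\,\mu>0$, so the only possible singular behaviour is that of $C(H-z)^{-1}C^*$ acting on the fixed vectors $C\psi\in C\,\mathcal{H}_{\mathrm{p}}(H^*)$. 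I would show these vectors do not excite the real spectral singularities: in the factorization $C(H-z)^{-1}C^*=(\mathrm{Id}-iB(z))^{-1}B(z)$ with $B(z)=C(H_V-z)^{-1}C^*$ (whose lower boundary values are $L^2$ by Hypothesis~\ref{V1}), the vectors $C\psi$ are biorthogonal to the resonant directions spanning $\ker(\mathrm{Id}-iB(\lambda_0-i0^+))$ at each spectral singularity $\lambda_0$, a consequence of the spectral separation between the complex eigenvalues of $H^*$ and the real resonances — precisely the finite-dimensional information underlying the completeness statement of Theorem~\ref{thm:AC}. With $v_{\mathrm{p}}$ thus controlled, Step 2 closes. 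The robust part of the argument (Steps 1–2) uses only the energy identity, the Hardy-space representation, and the elementary relation $C^*u\perp\mathcal{H}_{\mathrm{b}}(H)$; the genuine work is decoupling the resonant directions from $\mathcal{H}_{\mathrm{p}}(H^*)$ so that orthogonally projecting $C^*u$ onto $\mathcal{R}$ does not annihilate the divergence furnished by the hypothesis.
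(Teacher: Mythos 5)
Your Steps 1 and 2 are sound and coincide with the paper's strategy: the paper also characterizes range vectors via membership in $S(H)$ (Theorem \ref{prop:range1}), uses the Parseval/Laplace-transform identity to convert \eqref{eq:spectrsingenough} into failure of membership in $S(H)$, and then splits off the finite-dimensional eigenvector contribution from $C^*u$. The genuine gap is in Step 3, and it is not merely a missing technical verification: the claim you hope to prove there is generically \emph{false}. Because you decompose $C^*u$ with the \emph{orthogonal} projection onto $(\mathcal{H}_{\mathrm{b}}(H)\oplus\mathcal{H}_{\mathrm{p}}(H^*))^\perp$, the error term involves $C(H-z)^{-1}$ applied to generalized eigenvectors $\psi$ of the \emph{adjoint} $H^*$, and your own identity reduces its control to the behaviour of $C(H-z)^{-1}C^*$ on the vectors $C\psi$. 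These vectors do excite the spectral singularity in general: in the Schr\"odinger setting, if $\phi_{\mathrm{res}}$ is a resonant state at $\lambda_0$ (so $(-\Delta+V-iW)\phi_{\mathrm{res}}=\lambda_0\phi_{\mathrm{res}}$) and $H^*\psi=\mu\psi$ with $\mathrm{Im}\,\mu>0$, a Green's identity (the boundary terms vanish since $\psi$ decays exponentially) gives
\begin{equation*}
\int \phi_{\mathrm{res}}\, W\, \psi \, dx \;=\; \frac{i(\lambda_0-\mu)}{2}\int \phi_{\mathrm{res}}\,\psi\, dx ,
\end{equation*}
which has no reason to vanish. Hence the coefficient of the singular part of $C(H-z)^{-1}C^*C\psi$ at $\lambda_0$ is generically nonzero, $\int_J\|C(H-(\lambda-i\varepsilon))^{-1}v_{\mathrm{p}}\|^2 d\lambda$ generically diverges, and your triangle-inequality step becomes a difference of two divergent quantities whose cancellation you cannot exclude. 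So the "biorthogonality to resonant directions" is not a consequence of the spectral separation between complex eigenvalues and real resonances; it simply fails.

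The paper avoids this entirely by using the \emph{non-orthogonal} Riesz projection instead of $P_{\mathcal{R}}$: it sets $v:=(\mathrm{Id}-\Pi_{\mathrm{pp}})C^*u$, where $\Pi_{\mathrm{pp}}$ is the sum of all Riesz projections of $H$. By \eqref{eq:rankerpipp}, $\mathrm{Ran}(\mathrm{Id}-\Pi_{\mathrm{pp}})=\mathrm{Ker}(\Pi_{\mathrm{pp}})=(\mathcal{H}_{\mathrm{b}}(H)\oplus\mathcal{H}_{\mathrm{p}}(H^*))^\perp$, so $v$ lands in the correct subspace automatically, while the correction term is now $C(H-(\lambda-i\varepsilon))^{-1}\Pi_{\mathrm{pp}}C^*u$, supported on $\mathcal{H}_{\mathrm{b}}(H)\oplus\mathcal{H}_{\mathrm{p}}(H)$ --- an $H$-\emph{invariant} finite-dimensional subspace whose spectrum (the finitely many isolated eigenvalues, all at positive distance from $J\subset[0,\infty)$) keeps the resolvent uniformly bounded there. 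Thus the divergence \eqref{eq:spectrsingenough} passes to $v$ with no cancellation issue. If you replace your orthogonal splitting by this Riesz splitting, your Steps 1 and 2 complete the proof exactly as in the paper; as written, the proposal does not close.
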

As already mentioned before, for Schr{\"o}dinger operators of the form \eqref{eq:intro1}, spectral singularities correspond to real resonances. In other words, in that case, a spectral singularity is a pole in $\mathbb{R}$ of the meromorphic extension of $\lambda \mapsto ( H - \lambda^2 )^{-1}$ from the upper half-plane to $\mathbb{C}$, where $( H - \lambda^2 )^{-1}$ is understood as a map from $L^2_{ \mathrm{c}Ê}( \mathbb{R}^3 ) := \{ u \in L^2( \mathbb{R}^3 ) , u \text{ is compactly supported} \}$ to $L^2_{Ê\mathrm{loc} }( \mathbb{R}^3 ) := \{ u : \mathbb{R}^3 \to \mathbb{C}, u \in L^2( K ) \text{ for all compact set } K \subset \mathbb{R}^3 \}$. In particular condition \eqref{eq:spectrsingenough} is always satisfied for any such singularity; (see Section \ref{sec:schr}). Theorem \ref{thm:AC} and Theorem \ref{thm:nonAC} then imply that $W_- ( H , H_0 )$ is asymptotically complete if and only if $H$ does not have real resonances. We will come back to applications of our results to Schr{\"o}dinger operators in Section \ref{sec:schr}.

\subsection{Ideas of the proof}\label{subsec:ideas}
The first step of the proof of Theorem \ref{thm:Hp-Hd} consists in showing that
\begin{equation}\label{eq:1ststep}
\mathrm{Ran}( W_-( H , H_0 ) ) = S( H ) \cap \mathcal{H}_{ \mathrm{b} }( H )^\perp,
\end{equation}
where
\begin{equation*}
S( H ) := \Big \{ u \in \mathcal{H} , \, \sup_{ t \ge 0 } \big \|Êe^{ i t H }Êu \big \|Ê< \infty \Big \} .
\end{equation*}
It is easy to verify that $\mathrm{Ran}( W_-( H , H_0 ) ) \subset S( H ) \cap \mathcal{H}_{ \mathrm{b} }( H )^\perp$. The converse inclusion will be established thanks to the property proven in \cite{Davies1} that $\mathcal{H}_{ \mathrm{b} }( H )^\perp = \mathcal{H}_{ \mathrm{ac}Ê}( H )$, where $\mathcal{H}_{ \mathrm{ac}Ê}( H )$ is the absolutely continuous spectral subspace of $H$ defined in a suitable way. The definition of $\mathcal{H}_{ \mathrm{ac}Ê}( H )$ will be recalled in Section \ref{subsec:absolute}.

The second step will use in an essential way the notion of a spectral projection for non-self-adjoint operators \cite{Du58_01,Sc60_01}, defined by
\begin{equation}
E_H( I ) := \underset{ \varepsilon \downarrow 0 }{\wlim} \frac{1}{ 2 i \pi } \int_I \big ( ( H - ( \lambda + i \varepsilon ) )^{-1} - ( H - ( \lambda - i \varepsilon ) )^{-1} \big ) d \lambda , \label{eq:defprojEH(I)}
\end{equation}
where $I \subset [ 0 , \infty )$ is a closed interval. We mention that such spectral projections were already used in a stationary approach to scattering theory, for differential operators in \cite{Mo67_01,Mo68_01}, and in an abstract setting in \cite{Go70_01,Go71_01,Hu71_01}. We will recall that $E_H ( I )$ is a well-defined projection if $H$ does not have spectral singularities in $I$. Its adjoint is given by $E_{H^*}( I )$. If $H$ does not have spectral singularities in $I$, we will show that 
\begin{equation*}
\mathrm{Ran} ( E_{H^*} ( I ) ) \subset \mathrm{Ran} ( W_+( H^* , H_0 ) ) ,
\end{equation*}
where $W_+( H^* , H_0 ) = \slim e^{ i t H^* } e^{ - i t H_0 }$, $t \to \infty$. Taking the orthogonal complements, we will deduce that
\begin{align}
\mathcal{H}_{Ê\mathrm{b} }( H ) \oplus \mathcal{H}_{Ê\mathrm{d} }( H ) = \mathrm{Ran} ( W_+( H^* , H_0 ) )^\perp \subset \bigcap_{I \subset [ 0 , \infty ) } \mathrm{Ker}( E_H( I ) ) ,  \label{eq:llaa1}
\end{align}
where the intersection runs over all closed intervals $I \subset [ 0 , \infty )$ such that $I$ does not contain any spectral singularities of $H$. The equality in this statement is not difficult to verify.

In the last step of our proof of Theorem \ref{thm:Hp-Hd}, we will establish that
\begin{equation}
 \bigcap_{I \subset [ 0 , \infty ) } \mathrm{Ker}( E_H( I ) ) \subset \mathcal{H}_{ \mathrm{b} }( H ) \oplus \mathcal{H}_{ \mathrm{p} }( H ) . \label{eq:llaa2}
\end{equation}
The inclusions in \eqref{eq:llaa1} and \eqref{eq:llaa2} will then show that $\mathcal{H}_{ \mathrm{d} }( H ) = \mathcal{H}_{ \mathrm{p} }( H )$. Technically the verification of \eqref{eq:llaa2} is the most involved part of the proof. The main ingredient will be a spectral decomposition formula suitably modified to take into account the spectral singularities $\{ \lambda_j \}_{j=1}^n$ of $H$. It will be of the form
\begin{align}
&\prod_{ j = 1 }^n ( ( H - i )^{-1} - \mu_j )^{\nu_j}Ê= \prod_{ j = 1 }^n ( ( H - i )^{-1} - \mu_j )^{\nu_j} \Pi_{ \mathrm{pp} } \notag  \\
& + \underset{\varepsilon \downarrow 0}{\wlim} \frac{1}{2i\pi} \int_0^\infty  \prod_{ j = 1 }^n \big ( ( \lambda - i )^{-1} - \mu_j \big )^{\nu_j} \big ( \big ( H - ( \lambda + i \varepsilon )Ê\big )^{-1} - \big ( H - ( \lambda - i \varepsilon )Ê\big )^{-1} \big )  d \lambda , \label{eq:spectral_decomp}
\end{align}
where $\mu_j = ( \lambda_j - i )^{-1}$ and $\Pi_{ \mathrm{pp} }$ is the sum of all Riesz projections of $H$ corresponding to isolated eigenvalues. Equation \eqref{eq:spectral_decomp} may be seen as a generalization of the well-known spectral decomposition formula for self-adjoint operators to dissipative operators with finitely many spectral singularities. In particular, if $H$ does not have spectral singularities, \eqref{eq:spectral_decomp} reduces to
\begin{equation}
\mathrm{Id}Ê= \Pi_{ \mathrm{pp} } + \underset{\varepsilon \downarrow 0}{\wlim} \frac{1}{2i\pi} \int_0^\infty \big ( \big ( H - ( \lambda + i \varepsilon )Ê\big )^{-1} - \big ( H - ( \lambda - i \varepsilon )Ê\big )^{-1} \big ) d \lambda.\label{eq:spectral_decomp_2}
\end{equation}

Once Theorem \ref{thm:Hp-Hd} is established, we will proceed to prove Theorems \ref{thm:AC} and \ref{thm:nonAC} as follows: Using Parseval's theorem, we will justify that, for all $\varepsilon > 0$ and for all $u \in ( \mathcal{H}_{ \mathrm{b} }( H ) \oplus \mathcal{H}_{ \mathrm{p} }( H^* ) )^\perp$,
\begin{align}
\int_0^\infty e^{ - s \varepsilon } \big \|ÊC e^{ i s H }Êu \big \|^2 ds &= \frac{1}{ 2 \pi } \int_{Ê\mathbb{R}Ê}Ê\big \|ÊC ( H - ( \lambda - i \varepsilon ) )^{-1}Êu \big \|^2 d \lambda . \label{eq:parseval_intro}
\end{align}
If $H$ does not have spectral singularities, using the resolvent equation and Hypothesis \ref{V1}, we will show that the right side of this equation remains bounded, as $\varepsilon \to 0$. This will prove that any $u \in ( \mathcal{H}_{ \mathrm{b} }( H ) \oplus \mathcal{H}_{ \mathrm{p} }( H^* ) )^\perp$ belongs to $S( H )$ and hence to $\mathrm{Ran}( W_-( H , H_0 ) )$, by \eqref{eq:1ststep}. Thus Theorem \ref{thm:AC} will follow. If $H$ does have a spectral singularity, and if \eqref{eq:spectrsingenough} holds, we will construct a vector $u \in ( \mathcal{H}_{ \mathrm{b} }( H ) \oplus \mathcal{H}_{ \mathrm{p} }( H^* ) )^\perp$ such that the limit of \eqref{eq:parseval_intro}, as $\varepsilon \to 0$, is infinite. Again, by \eqref{eq:1ststep}, this will prove that $u \notin \mathrm{Ran}( W_-( H , H_0 ) )$, and thus establish the statement of Theorem \ref{thm:nonAC}.

Finally, Theorem \ref{thm:nonblowup} will be a consequence of Theorem \ref{thm:AC}.

\subsection{Organization of the paper}

The remainder of the paper is organized as follows. Section \ref{sec:prelim} describes preliminaries, recalling several known results that are used in the sequel. In Section \ref{sec:abs-range}, we prove \eqref{eq:1ststep}. Our main results are established in Section \ref{sec:main}. In Section \ref{sec:schr}, we verify our abstract hypotheses for dissipative Schr{\"o}dinger operators and, in Section \ref{sec:Lindblad}, we sketch consequences of our results for the scattering theory of Lindblad master equations. For the convenience of the reader, the proofs of several auxiliary results are presented in appendices.

\section{Preliminaries}\label{sec:prelim}

In this section, we gather some basic facts concerning the spectral and scattering theories for $H$. Several results described here can be found in the literature, sometimes under slightly different assumptions (see \cite{Davies4,Davies1,Ex85_01,FaFaFrSc17_01,Martin,Simon2}). For completeness, proofs of the main results of this section are recalled or elaborated upon in Appendices \ref{app:spectrum} and \ref{app:existencewave}.

\subsection{The spectrum of $H$}\label{subsec:spectrum}

We begin by recalling a few spectral properties of the operator $H$. The first one concerns the subspaces $\mathcal{H}_{ \mathrm{b} }( H )$ and $\mathcal{H}_{ \mathrm{b} }( H^* )$ defined in \eqref{eq:defHb} and \eqref{eq:defH*b}.
\begin{lemma}\label{lm:Hb}
Suppose that $H = H_0 + V - i C^* C$, with $H_0$ self-adjoint, $V$ symmetric and relatively compact w.r.t. $H_0$, and $C \in \mathcal{L}( \mathcal{H} )$. Then
\begin{equation*}
\mathcal{H}_{ \mathrm{b} }( H ) = \mathcal{H}_{ \mathrm{b} }( H^* ) = \mathrm{Span} \big \{ u \in \mathcal{D}( H ) , \, \exists \lambda \in \mathbb{R} , \, H u = \lambda u \big \} \subset \mathcal{H}_{ \mathrm{pp} }( H_V ) \cap \mathrm{Ker}( C ).
\end{equation*}
\end{lemma}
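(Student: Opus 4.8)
The plan is to reduce everything to the dissipativity identity $\mathrm{Im}( \langle u , H u \rangle ) = - \| C u \|^2$, valid for all $u \in \mathcal{D}( H )$, together with its adjoint analogue $\mathrm{Im}( \langle u , H^* u \rangle ) = + \| C u \|^2$. The crux is that a \emph{real} eigenvalue forces the imaginary part of the diagonal expectation to vanish, which in turn forces $C u = 0$, thereby collapsing $H$ (and likewise $H^*$) onto the self-adjoint operator $H_V$.

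First I would take any $u \in \mathcal{D}( H )$ with $H u = \lambda u$ for some $\lambda \in \mathbb{R}$. Then $\langle u , H u \rangle = \lambda \| u \|^2 \in \mathbb{R}$, so $\mathrm{Im}( \langle u , H u \rangle ) = 0$, and the dissipativity identity yields $\| C u \|^2 = 0$, i.e. $u \in \mathrm{Ker}( C )$. Consequently $C^* C u = 0$, so that $H u = H_V u - i C^* C u = H_V u = \lambda u$; thus $u$ is an eigenvector of the self-adjoint operator $H_V$ with real eigenvalue $\lambda$, giving $u \in \mathcal{H}_{ \mathrm{pp} }( H_V ) \cap \mathrm{Ker}( C )$. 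Since the right-hand side is a linear subspace, passing to the span yields $\mathcal{H}_{ \mathrm{b} }( H ) \subset \mathcal{H}_{ \mathrm{pp} }( H_V ) \cap \mathrm{Ker}( C )$.

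Next I would run the identical argument for $H^*$, using $\mathrm{Im}( \langle u , H^* u \rangle ) = + \| C u \|^2$: a real eigenvector of $H^*$ again satisfies $C u = 0$ and hence $H^* u = H_V u = \lambda u$. This shows that the real eigenvectors of $H$ and of $H^*$ are characterized by the \emph{same} two conditions, namely $H_V u = \lambda u$ with $\lambda \in \mathbb{R}$ and $C u = 0$. Conversely, any $u$ satisfying these two conditions is simultaneously an eigenvector of $H$ and of $H^*$ with eigenvalue $\lambda$, since $C^* C u = 0$ makes the $\pm i C^* C$ term vanish. Therefore the generating sets of $\mathcal{H}_{ \mathrm{b} }( H )$ and $\mathcal{H}_{ \mathrm{b} }( H^* )$ coincide, and all three equalities in the statement follow at once.

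I do not expect a serious obstacle here: the argument is purely algebraic once the dissipativity identity is in hand, and every inner product is well-defined because $u \in \mathcal{D}( H ) = \mathcal{D}( H_0 )$ and $C$ is bounded. The only point deserving a word of care is to confirm that the spanning sets in \eqref{eq:defHb} and \eqref{eq:defH*b} are compared over a common domain; this is immediate from $\mathcal{D}( H ) = \mathcal{D}( H_0 ) = \mathcal{D}( H^* )$, established when $H$ was introduced, so that $H$ and $H^*$ act on the same vectors and the two imaginary-part identities apply to identical $u$.
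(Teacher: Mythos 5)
Your proposal is correct and follows essentially the same route as the paper's proof: take a real eigenvector, observe that the imaginary part of $\langle u , H u \rangle$ must vanish, deduce $C u = 0$, and hence $H u = H_V u = H^* u$, which gives both the inclusion into $\mathcal{H}_{ \mathrm{pp} }( H_V ) \cap \mathrm{Ker}( C )$ and the identification $\mathcal{H}_{ \mathrm{b} }( H ) = \mathcal{H}_{ \mathrm{b} }( H^* )$. Your treatment is if anything slightly more explicit than the paper's, since you spell out both inclusions between $\mathcal{H}_{ \mathrm{b} }( H )$ and $\mathcal{H}_{ \mathrm{b} }( H^* )$ via the common characterization $H_V u = \lambda u$, $C u = 0$, where the paper compresses this into a single ``likewise.''
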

Lemma \ref{lm:Hb} is a consequence of \cite[Theorem 9.1]{Simon2} or \cite[Lemma 1]{Davies1}. We give a short proof in Appendix \ref{app:spectrum}. Lemma \ref{lm:Hb} implies that, if $\mathcal{H}_{ \mathrm{pp} }( H_V )$ is finite dimensional then so is $\mathcal{H}_{ \mathrm{b} }( H )$.

For a closed operator, there are in general several definitions of essential spectrum; (see e.g. \cite{EdEv87_01}). Fortunately, these different definitions coincide in our situation. It is convenient to define the essential spectrum of $H$ as
\begin{equation*}
\sigma_{ \mathrm{ess} }( H ) := \mathbb{C} \setminus \rho_{ \mathrm{ess} }( H ) ,
\end{equation*}
where
\begin{align*}
\rho_{ \mathrm{ess} }( H ) := \big \{ z \in \mathbb{C} , \, &\mathrm{Ran}( H - z \mathrm{Id} ) \text{ is closed}, \\
& \mathrm{dim} \, \mathrm{Ker}( H - z \mathrm{Id} ) < \infty \ \text{ or } \ \mathrm{codim} \, \mathrm{Ran} ( H - z \mathrm{Id} ) < \infty \big \}.
\end{align*}
We then have that (see \cite[Section IV.5.6]{Ka66_01} and \cite[proof of Proposition B.2]{Frank}):
\begin{lemma}\label{lm:essspec}
Suppose that $H = H_0 + V - i C^* C$ with $H_0$ self-adjoint, $V$ symmetric and relatively compact with respect to $H_0$, and $C \in \mathcal{L}( \mathcal{H} )$ relatively compact with respect to $H_0$. Then
\begin{equation*}
\sigma_{\mathrm{ess}}( H ) = \sigma_{ \mathrm{ess} }( H_V ) = \sigma_{ \mathrm{ess} }( H_0 )  .
\end{equation*}
Moreover
\begin{align*}
\sigma( H ) \setminus \sigma_{ \mathrm{ess} }( H ) = \big \{ z \in \mathbb{C} , \, &\mathrm{Ran}( H - z \mathrm{Id} ) \text{ is closed}, \\
& 0 < \mathrm{dim} \, \mathrm{Ker}( H - z \mathrm{Id} )  = \mathrm{codim} \, \mathrm{Ran} ( H - z \mathrm{Id} ) < \infty \big \} ,
\end{align*}
and $\sigma( H ) \setminus \sigma_{ \mathrm{ess} }( H )$ is at most countable and consists of eigenvalues of finite algebraic multiplicities.
\end{lemma}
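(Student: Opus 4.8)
The plan is to regard $H$ as a relatively compact perturbation of the self-adjoint operator $H_0$ and to combine the classical stability of the semi-Fredholm spectrum under such perturbations with an index computation that exploits dissipativity. First I would record that the perturbation $V - i C^* C$ is relatively compact with respect to $H_0$. Indeed, $V ( H_0 - z )^{-1}$ is compact by hypothesis, and since $C ( H_0 - z )^{-1}$ is compact while $C^* \in \mathcal{L}( \mathcal{H} )$, the product $C^* C ( H_0 - z )^{-1} = C^* \big ( C ( H_0 - z )^{-1} \big )$ is compact as well; hence $( V - i C^* C )( H_0 - z )^{-1}$ is compact. The same computation shows that $-i C^* C$ is relatively compact with respect to $H_V$. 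Since relative compactness forces relative bound $0$, the operators $H_V$ and $H$ are closed on $\mathcal{D}( H_0 )$, as already recorded in the model setup.

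Next I would invoke the classical stability result (see \cite[Section IV.5.6]{Ka66_01}): if $T$ is closed and $A$ is $T$-compact, then $T + A$ is closed on $\mathcal{D}( T )$, the set of $z$ for which $T + A - z$ is semi-Fredholm coincides with the corresponding set for $T - z$, and $\mathrm{ind}( T + A - z ) = \mathrm{ind}( T - z )$ on each connected component of that set. Applying this with $T = H_0$ and $A = V - i C^* C$ yields $\sigma_{ \mathrm{ess} }( H ) = \sigma_{ \mathrm{ess} }( H_0 )$, and applying it with $A = V$ (or with $T = H_V$, $A = -i C^* C$) yields $\sigma_{ \mathrm{ess} }( H_V ) = \sigma_{ \mathrm{ess} }( H_0 )$, which establishes the first displayed equality.

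The key additional input, specific to our situation, is the index computation. Since $H_0 \ge 0$, one has $\sigma_{ \mathrm{ess} }( H_0 ) \subset [ 0 , \infty )$, so $\mathbb{C} \setminus \sigma_{ \mathrm{ess} }( H )$ is the complement of a subset of the half-line and is therefore connected. Because $H$ is dissipative, $\sigma( H ) \subset \{ z , \mathrm{Im}( z ) \le 0 \}$, so every $z$ in the open upper half-plane lies in $\rho( H )$; at such a point $H - z$ is boundedly invertible and hence Fredholm of index $0$. By constancy of the index on the connected set $\mathbb{C} \setminus \sigma_{ \mathrm{ess} }( H )$, it follows that $H - z$ is Fredholm of index $0$ for every $z \in \mathbb{C} \setminus \sigma_{ \mathrm{ess} }( H )$; in particular its range is closed and $\dim \mathrm{Ker}( H - z ) = \mathrm{codim}\, \mathrm{Ran}( H - z )$. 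Since $H - z$ is invertible exactly when this common value vanishes, a point $z \notin \sigma_{ \mathrm{ess} }( H )$ belongs to $\sigma( H )$ precisely when $0 < \dim \mathrm{Ker}( H - z ) = \mathrm{codim}\, \mathrm{Ran}( H - z ) < \infty$, which is the stated characterization of $\sigma( H ) \setminus \sigma_{ \mathrm{ess} }( H )$. I expect this index argument to be the main obstacle, as it requires three ingredients to fit together: invariance of the index under the relatively compact perturbation, connectedness of the complement of the essential spectrum (which rests on $\sigma_{ \mathrm{ess} }( H ) \subset [ 0 , \infty )$), and the dissipativity needed to locate a point of invertibility.

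Finally, the countability and the finiteness of the algebraic multiplicities follow from the analytic Fredholm theorem applied on the connected region $\mathbb{C} \setminus \sigma_{ \mathrm{ess} }( H )$. There the family $H - z$ depends analytically on $z$, is Fredholm of index $0$, and is invertible at some point (any $z$ with $\mathrm{Im}( z ) > 0$); hence $z \mapsto ( H - z )^{-1}$ extends to a finitely meromorphic operator-valued function whose poles form a discrete subset of $\mathbb{C} \setminus \sigma_{ \mathrm{ess} }( H )$ with finite-rank Laurent coefficients (see also \cite{Frank}). These poles are exactly the points of $\sigma( H ) \setminus \sigma_{ \mathrm{ess} }( H )$; being discrete in the region, they are at most countable and can accumulate only at $\sigma_{ \mathrm{ess} }( H )$, while the finite rank of the associated Riesz projection $\Pi_\lambda$ gives the finite algebraic multiplicity.
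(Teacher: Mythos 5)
Your proof is correct and follows essentially the same route as the paper, which offers no self-contained argument for this lemma but cites exactly the two ingredients you develop: the stability of the semi-Fredholm domain and of the index under relatively compact perturbations (Section IV.5.6 of Kato) and the index-zero plus analytic-Fredholm argument from the proof of Proposition B.2 in Frank's paper. Note only that your connectedness step implicitly uses the paper's standing assumption that $H_0$ is bounded below (so that $\sigma_{\mathrm{ess}}(H_0)$ is a proper closed subset of $\mathbb{R}$ and its complement in $\mathbb{C}$ is connected), which is part of the setup even though it is not restated in the lemma; alternatively, Kato's theorem already preserves the index pointwise, so the conclusion holds without any connectedness argument.
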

Recall that, for $\lambda \in \sigma( H ) \setminus \sigma_{ \mathrm{ess} }( H )$, $\Pi_\lambda$ denotes the usual Riesz projection for $H$ associated to $\lambda$ (see \eqref{eq:defPilambda}). The range of $ \Pi_\lambda $ is spanned by linear combinations of generalized eigenstates corresponding to $\lambda$. Let
\begin{equation*}
\Pi_{ \mathrm{pp} } := \sum_{ \lambda } \Pi_\lambda ,
\end{equation*}
where the sum runs over all $\lambda \in \sigma( H ) \setminus \sigma_{ \mathrm{ess} }( H )$. Then
\begin{equation}
\mathrm{Ran} ( \Pi_{ \mathrm{pp} } ) = \mathcal{H}_{ \mathrm{b} }( H ) \oplus \mathcal{H}_{ \mathrm{p} }( H ), \quad \mathrm{Ker} ( \Pi_{ \mathrm{pp} } ) = ( \mathcal{H}_{ \mathrm{b} }( H ) \oplus \mathcal{H}_{ \mathrm{p} }( H^* ) )^\perp , \label{eq:rankerpipp}
\end{equation}
where $\mathcal{H}_{ \mathrm{p} }( H )$ and $\mathcal{H}_{ \mathrm{p} }( H^* )$ are the vector spaces spanned by the generalized eigenvectors of $H$ and $H^*$ corresponding to non-real eigenvalues (see \eqref{eq:defHp} and \eqref{eq:defH^*p}). Moreover, it is easy to verify that
\begin{equation*}
\mathcal{H}_{ \mathrm{p} }( H ) \subset \mathcal{H}_{ \mathrm{d} }( H ) \subset \mathcal{H}_{ \mathrm{b} }( H )^\perp, \quad \mathcal{H}_{ \mathrm{p} }( H^* ) \subset \mathcal{H}_{ \mathrm{d} }( H^* ) \subset \mathcal{H}_{ \mathrm{b} }( H^* )^\perp ,
\end{equation*}
where $\mathcal{H}_{ \mathrm{d} }( H )$ and $\mathcal{H}_{ \mathrm{d} }( H^* )$ are defined in \eqref{eq:defHd} and \eqref{eq:defH*d}.

To conclude this section, we remark that the only possible generalized eigenvectors corresponding to a real eigenvalue are eigenvectors in the usual sense, as expressed in the following easy lemma; (see Appendix \ref{app:spectrum} for the proof).
\begin{lemma}\label{lm:real-generalized}
Suppose that $H = H_0 + V - i C^* C$, with $H_0$ self-adjoint, $V$ symmetric and relatively compact with respect to $H_0$, and $C \in \mathcal{L}( \mathcal{H} )$. Let $\lambda \in \mathbb{R}$ be an eigenvalue of $H$ and suppose that $u \in \mathcal{H}$ satisfies $u \in \mathcal{D}( H^k )$, for some $k \in \mathbb{N}$,  $( H - \lambda )^k u = 0$ and $( H - \lambda )^{ k - 1 } u \neq 0$. Then $k=1$, i.e., $u \neq 0$ and $H u = \lambda u$.
\end{lemma}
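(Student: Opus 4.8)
The plan is to argue by contradiction, reducing the existence of a nontrivial Jordan chain to the vanishing of a genuine eigenvector. Suppose $k \geq 2$, and set $v := ( H - \lambda )^{ k - 1 } u$ and $w := ( H - \lambda )^{ k - 2 } u$. Since $u \in \mathcal{D}( H^k ) \subset \mathcal{D}( H^2 )$, we have $w \in \mathcal{D}( H )$, and $v = ( H - \lambda ) w$ with $( H - \lambda ) v = ( H - \lambda )^k u = 0$, while $v \neq 0$ by hypothesis. Thus $v$ is a genuine eigenvector of $H$ for the real eigenvalue $\lambda$, i.e. $v \in \mathcal{H}_{ \mathrm{b} }( H )$.

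The key observation is that a real eigenvector of $H$ is automatically an eigenvector of $H^*$ for the same eigenvalue. Indeed, by Lemma \ref{lm:Hb} we have $v \in \mathcal{H}_{ \mathrm{b} }( H ) \subset \mathrm{Ker}( C )$, so $C v = 0$; consequently
\[
H^* v = H_V v + i C^* C v = H_V v = H v = \lambda v .
\]
(Equivalently, one may invoke the identity $\mathcal{H}_{ \mathrm{b} }( H ) = \mathcal{H}_{ \mathrm{b} }( H^* )$ from Lemma \ref{lm:Hb} directly.)

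With this in hand, I would pair $v$ against the chain relation $v = ( H - \lambda ) w$. Since $\lambda$ is real, $w \in \mathcal{D}( H )$ and $v \in \mathcal{D}( H^* )$, the adjoint identity $\langle v , H w \rangle = \langle H^* v , w \rangle$ applies, giving
\[
\| v \|^2 = \langle v , ( H - \lambda ) w \rangle = \langle ( H^* - \lambda ) v , w \rangle = 0 ,
\]
which forces $v = 0$, contradicting $v \neq 0$. Hence $k = 1$.

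The proof is short, and the only substantive input is Lemma \ref{lm:Hb}, which guarantees that real eigenvectors lie in $\mathrm{Ker}( C )$ and are therefore simultaneously eigenvectors of the self-adjoint operator $H_V$ and of both $H$ and $H^*$. This decoupling is precisely what rules out nontrivial Jordan blocks on the real axis: $H_V$ being self-adjoint has none, and the dissipative perturbation $- i C^* C$ cannot create any at points where $C$ annihilates the eigenvector. I do not anticipate a genuine obstacle; the only point requiring care is the bookkeeping of domains needed to justify the adjoint identity, which is valid here because $w \in \mathcal{D}( H )$ and $v \in \mathcal{D}( H^* )$.
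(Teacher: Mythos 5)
Your proof is correct, but it takes a genuinely different route from the paper's. The paper argues dynamically: since $( H - \lambda )^k u = 0$ with $\lambda$ real, the exponential series truncates,
\[
e^{ - i t H } u = e^{ - i t \lambda } \sum_{j = 0}^{k-1} \frac{ (-it)^j }{ j! } ( H - \lambda )^j u ,
\]
so for $k > 1$ the norm $\| e^{ - i t H } u \|$ grows like $\frac{ t^{k-1} }{ (k-1)! } \| ( H - \lambda )^{k-1} u \|$ as $t \to \infty$, contradicting contractivity of the semigroup $\{ e^{ - i t H } \}_{ t \ge 0 }$. That argument uses nothing beyond the fact that $-iH$ generates a contraction semigroup, so it applies verbatim to an arbitrary maximal dissipative operator. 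Your argument is stationary and exploits the specific structure $H = H_V - i C^* C$ through Lemma \ref{lm:Hb}: the real eigenvector $v = ( H - \lambda )^{k-1} u$ is annihilated by $C$, hence is simultaneously an eigenvector of $H^*$ with the \emph{same} eigenvalue, and pairing $v$ against the chain relation $v = ( H - \lambda ) w$ forces $\| v \|^2 = 0$. Both proofs are short; the paper's buys generality (no use of the decomposition of $H$ or of Lemma \ref{lm:Hb}), while yours isolates the algebraic mechanism that rules out Jordan blocks on the real axis, namely that the dissipative part decouples exactly on real eigenvectors. One minor caveat: your parenthetical remark that one may "invoke $\mathcal{H}_{ \mathrm{b} }( H ) = \mathcal{H}_{ \mathrm{b} }( H^* )$ directly" is looser than your main derivation, since the set equality alone does not guarantee that $v$ is an $H^*$-eigenvector for the same $\lambda$ (a priori it could be a combination of $H^*$-eigenvectors with various real eigenvalues); it is precisely the $\mathrm{Ker}( C )$ computation you give, $H^* v = H_V v = H v = \lambda v$, that settles this, in agreement with the proof of Lemma \ref{lm:Hb} in Appendix A. Your domain bookkeeping is also fine, since $\mathcal{D}( H ) = \mathcal{D}( H^* ) = \mathcal{D}( H_0 )$ under the standing assumptions.
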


\subsection{The wave operators $W_-( H , H_0 )$ and $W_+( H^* , H_0 )$}\label{subsec:waveop}
We recall conditions implying the existence of the wave operators
\begin{equation*}
W_-( H , H_0 ) := \underset{t\to \infty }{\slim}  \, e^{ - i t H } e^{ i t H_0 } , \quad W_+( H^* , H_0 ) := \underset{t\to \infty }{\slim}  \, e^{ i t H^* } e^{ - i t H_0 } .
\end{equation*}
We remark that
\begin{align*}
2 \int_0^t \big \langle u , e^{ i s H^* } C^*ÊC e^{ - i s H } u \rangle ds = - \int_0^t \partial_s \big \langle u , e^{ i s H^* } e^{ - i s H } u \rangle ds = \| u \|^2 - \big \|Êe^{ - i t H }Êu \big \|^2 ,
\end{align*}
and hence
\begin{align}\label{eq:alpha}
\int_0^\infty \big \|ÊC e^{ - i t H } u \big \|^2 dt \le \frac12 \| u \|^2 ,
\end{align}
for all $u \in \mathcal{H}$. The same argument shows that
\begin{align}\label{eq:alpha2}
\int_0^\infty \big \|ÊC e^{ i t H^* } u \big \|^2 dt \le \frac12 \| u \|^2 ,
\end{align}
for all $u \in \mathcal{H}$. Equations \eqref{eq:alpha} and \eqref{eq:alpha2}, combined with Hypotheses \ref{V-1}, \ref{V0} and \ref{V1}, imply the existence of the wave operators $W_-( H , H_0 )$ and $W_+( H^* , H_0 )$. More precisely, recalling that the domains of $H$, $H^*$ and $H_0$ coincide, we have the following result.
\begin{proposition}\label{prop:existence_W-}
Suppose that Hypotheses \ref{V-1}, \ref{V0} and \ref{V1} hold. Then the wave operators $W_-( H , H_0 )$ and $W_+( H^* , H_0 )$ exist and are injective contractions. Moreover, for all $t \in \mathbb{R}$, 
  \begin{align}
 & e^{ - i t H } W_-( H , H_0 ) = W_-( H , H_0 ) e^{ - i t H_0 }, \quad  e^{ - i t H^* } W_+( H^* , H_0 ) = W_+( H^* , H_0 ) e^{ - i t H_0 } . \label{eq:inter-1}
  \end{align}
In particular, $W_-( H , H_0 ) \mathcal{D}( H_0 ) \subset \mathcal{D}( H_0 )$, $W_+( H^* , H_0 ) \mathcal{D}( H_0 ) \subset \mathcal{D}( H_0 )$, and
  \begin{align}
& H W_-( H , H_0) u = W_-( H , H_0 ) H_0 u , \quad H^* W_+( H^* , H_0 ) u = W_+( H^* , H_0 ) H_0 u , \label{eq:inter-2}
  \end{align}
  for all $u \in \mathcal{D}( H_0 )$.
\end{proposition}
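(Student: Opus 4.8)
The plan is to establish existence via a Cook-type argument, using the Kato-smoothness bound \eqref{eq:alpha} (together with \eqref{eq:alpha2} for the adjoint) as the key integrability input, and then to promote the already-known self-adjoint wave operators from Hypothesis \ref{V0} into a genuine convergence statement. First I would write, for $u \in \mathcal{D}(H_0)$ and $s < t$, the identity
\begin{equation*}
e^{-itH}e^{itH_0}u - e^{-isH}e^{isH_0}u = \int_s^t \frac{d}{d\tau}\big( e^{-i\tau H}e^{i\tau H_0} \big)u \, d\tau = \int_s^t e^{-i\tau H}\big( (-i)(H - H_0) \big)e^{i\tau H_0}u \, d\tau.
\end{equation*}
Since $H - H_0 = V - iC^*C$, this splits the increment into a self-adjoint piece driven by $V$ and a dissipative piece driven by $-iC^*C$. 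The second piece is exactly where \eqref{eq:alpha} enters: by Cauchy--Schwarz and the contractivity $\|e^{-i\tau H}\|\le 1$, the $C^*C$ contribution is controlled by $\int \|Ce^{i\tau H_0}u\|\,\|Ce^{-i\tau H^*}(\cdot)\|$-type quantities, whose square-integrability is guaranteed by Hypothesis \ref{V1} applied to $H_0$ (whose spectrum is purely absolutely continuous by Hypothesis \ref{V-1}, so $\Pi_{\mathrm{ac}}(H_0) = \mathrm{Id}$) and by \eqref{eq:alpha}.

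The cleaner route, which I would prefer, is to factor the comparison through $H_V$ rather than attacking $H-H_0$ directly. By Hypothesis \ref{V0} the self-adjoint wave operator $W_-(H_V,H_0) = \slim_{t\to\infty} e^{itH_V}e^{-itH_0}$ exists, so by the chain rule for wave operators it suffices to prove existence of $\slim_{t\to\infty} e^{-itH}e^{itH_V}\Pi_{\mathrm{ac}}(H_V)$ and then compose. For this inner limit the relevant perturbation is $H - H_V = -iC^*C$, a single term, and the Cook estimate reads
\begin{equation*}
\big\| e^{-itH}e^{itH_V}u - e^{-isH}e^{isH_V}u \big\| \le \int_s^t \big\| C^* C e^{itH_V}u \big\| \, d\tau \le \|C\| \int_s^t \big\| C e^{itH_V}\Pi_{\mathrm{ac}}(H_V)u \big\| \, d\tau,
\end{equation*}
whose right-hand side is the tail of a convergent integral by the relative smoothness bound \eqref{eq:ZV} of Hypothesis \ref{V1}; Cauchy--Schwarz on $[s,t]$ turns this into a genuine Cauchy criterion as $s,t\to\infty$. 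The same computation with $H^*$ in place of $H$ and $e^{itH^*}$, using \eqref{eq:alpha2} and \eqref{eq:ZV}, yields $W_+(H^*,H_0)$.

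The contractivity of $W_-(H,H_0)$ is immediate from $\|e^{-itH}\|\le 1$ and unitarity of $e^{itH_0}$ by passing to the limit; injectivity follows because $\|W_-(H,H_0)u\| = \lim_t \|e^{-itH}e^{itH_0}u\|$ and one shows this limit is bounded below — here I would invoke \eqref{eq:alpha} to see that the absorbed part is exactly the integrated $C$-norm, so that $u\ne 0$ with $W_-(H,H_0)u = 0$ would force $\int_0^\infty\|Ce^{-itH}W_-u\|^2\,dt$ to account for the full norm in a way incompatible with the intertwining relation. The intertwining relations \eqref{eq:inter-1} are the standard consequence of the definition: for fixed $r\in\mathbb{R}$,
\begin{equation*}
e^{-irH}W_-(H,H_0)u = e^{-irH}\lim_{t\to\infty} e^{-itH}e^{itH_0}u = \lim_{t\to\infty} e^{-i(t+r)H}e^{i(t+r)H_0}\big(e^{-irH_0}u\big) = W_-(H,H_0)e^{-irH_0}u,
\end{equation*}
using strong continuity and boundedness of $e^{-irH}$ to pass the limit through; the domain invariance and the generator-level relations \eqref{eq:inter-2} then follow by differentiating \eqref{eq:inter-1} at $t=0$ on $\mathcal{D}(H_0)$. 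The main obstacle is not any single estimate but the bookkeeping in the injectivity claim and in ensuring the factorization through $H_V$ is legitimate, i.e. that $\Pi_{\mathrm{ac}}(H_V)$ is correctly inserted so that the smoothness bound \eqref{eq:ZV}, which is stated only on $\mathrm{Ran}(\Pi_{\mathrm{ac}}(H_V))$, actually applies to the vectors produced by $W_-(H_V,H_0)$ — this works precisely because $\mathrm{Ran}(W_-(H_V,H_0)) = \mathcal{H}_{\mathrm{ac}}(H_V)$ by Hypothesis \ref{V0}.
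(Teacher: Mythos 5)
Your overall strategy (factor through $H_V$ by the chain rule and run Cook's method on the single perturbation $-iC^*C$) is the paper's strategy, and your remarks on contractivity, the intertwining relations, and the role of $\Pi_{\mathrm{ac}}(H_V)$ are fine. But the central estimate of your preferred route is wrong as written, and it is exactly where the content of the proof lies. You bound the Cook increment by $\|C\|\int_s^t \|Ce^{i\tau H_V}u\|\,d\tau$ and claim that Cauchy--Schwarz on $[s,t]$ plus the bound \eqref{eq:ZV} yields a Cauchy criterion. It does not: Cauchy--Schwarz gives
\begin{equation*}
\int_s^t \big\| C e^{i\tau H_V}u \big\|\, d\tau \;\le\; (t-s)^{1/2}\Big(\int_s^t \big\| C e^{i\tau H_V}u \big\|^2 d\tau\Big)^{1/2},
\end{equation*}
and the factor $(t-s)^{1/2}$ is unbounded; an $L^2$ function of time need not be $L^1$ (consider $(1+\tau)^{-1}$), so Hypothesis \ref{V1} alone cannot make the tail of the $L^1$ integral small. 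Kato smoothness must be used bilinearly: the paper pairs the integrand against an arbitrary unit vector $v$, writes $\langle v, e^{-i\tau H}C^*Ce^{i\tau H_V}u\rangle = \langle Ce^{i\tau H^*}v,\, Ce^{i\tau H_V}u\rangle$, and applies Cauchy--Schwarz \emph{in} $\tau$, so that the increment is bounded by
\begin{equation*}
\sup_{\|v\|=1}\Big(\int_s^t \big\|Ce^{i\tau H^*}v\big\|^2 d\tau\Big)^{1/2}\Big(\int_s^t \big\|Ce^{i\tau H_V}u\big\|^2 d\tau\Big)^{1/2}
\;\le\; \tfrac{1}{\sqrt 2}\Big(\int_s^t \big\|Ce^{i\tau H_V}u\big\|^2 d\tau\Big)^{1/2},
\end{equation*}
where the first factor is controlled uniformly in $s,t,v$ by the dissipativity estimate \eqref{eq:alpha2} and the second is a tail of a convergent integral by Hypothesis \ref{V1}. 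Thus the smoothness of $C$ with respect to $H^*$, which comes for free from dissipativity, is an indispensable second ingredient; in your preferred route it never appears. (Ironically, your first, abandoned paragraph does have this bilinear structure, but there you invoke ``Hypothesis \ref{V1} applied to $H_0$,'' which is not available --- \eqref{eq:ZV} concerns $H_V$, not $H_0$, and the $V$ part of $H-H_0$ is not controlled by any smoothness hypothesis at all; that is precisely why one must factor through $H_V$.)

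The injectivity step is also not proved. Injectivity is not a lower bound of the form $\|W_-(H,H_0)u\|\ge m\|u\|$ (that is essentially asymptotic completeness, which requires absence of spectral singularities); what the paper actually establishes is the limit $\lim_{t\to\infty}\|W_-(H,H_V)e^{itH_V}u\| = \|u\|$ for $u\in\mathcal{H}_{\mathrm{ac}}(H_V)$, obtained from the identity
\begin{equation*}
W_-(H,H_V)e^{itH_V}u = e^{itH_V}u - \int_t^\infty e^{-i(s-t)H}C^*Ce^{isH_V}u\,ds,
\end{equation*}
whose tail vanishes by the same bilinear estimate; then $W_-(H,H_V)u=0$ forces, via intertwining, $W_-(H,H_V)e^{itH_V}u = e^{itH}W_-(H,H_V)u = 0$ for all $t$, whence $\|u\|=0$. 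Your sketch is circular: if $W_-u=0$ then $\int_0^\infty\|Ce^{-itH}W_-u\|^2\,dt$ is trivially zero, so nothing is ``forced''; moreover it points at the forward semigroup $e^{-itH}$, whereas the evolution relevant to injectivity of $W_-$ is the backward one, $e^{itH}W_- = W_-e^{itH_V}$, $t\ge 0$. Both gaps are repaired by the same missing idea: the duality/bilinear use of the two smoothness bounds \eqref{eq:alpha2} and \eqref{eq:ZV}.
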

The existence of $W_-( H , H_0 )$ and $W_+( H^* , H_0 )$ follows from a standard Cook argument. Contractivity is a consequence of contractivity of $\{ e^{ - i t H }Ê\}_{ t \ge 0 }$, $\{ e^{ i t H^* }Ê\}_{ t \ge 0 }$ and unitarity of $\{ e^{ - i t H_0 }Ê\}_{ t \in \mathbb{R} }$. To render our analysis self-contained, we recall a proof of Proposition \ref{prop:existence_W-} in Appendix \ref{app:existencewave}.

As mentioned in the previous section, one of our main concerns consists in studying the vectors spaces $\mathrm{Ran}( W_-( H , H_0 ) )$ and $\mathrm{Ran}( W_+( H^* , H_0 ) )$. As stated in the following proposition, the closures of the ranges of $W_- (H , H_0)$ and $W_+( H^* , H_0 )$ are known without requiring any assumption beyond Hypotheses \ref{V-1}, \ref{V0} and \ref{V1}.
\begin{proposition}\label{prop:randense_W-}
Suppose that Hypotheses \ref{V-1}, \ref{V0} and \ref{V1} hold. Then 
\begin{equation*}
\overline{Ê\mathrm{Ran}( W_-( H , H_0 ) ) } = \big ( \mathcal{H}_{ \mathrm{b} }( H ) \oplus \mathcal{H}_{ \mathrm{d} }( H^* ) \big )^\perp, \quad \overline{\mathrm{Ran}( W_+( H^* , H_0 ) )} = \big ( \mathcal{H}_{ \mathrm{b} }( H ) \oplus \mathcal{H}_{ \mathrm{d} }( H ) \big )^\perp.
\end{equation*}
\end{proposition}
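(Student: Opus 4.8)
The strategy is to identify the orthogonal complement of each range; I treat the first identity in detail, the second being entirely analogous after exchanging the roles of $H$ and $H^*$, reversing time, and using the a priori bound \eqref{eq:alpha} in place of \eqref{eq:alpha2}. Since $W_-(H,H_0)$ is a bounded contraction, it suffices to show $v\perp\mathrm{Ran}(W_-(H,H_0))$ if and only if $v\in\mathcal{H}_{\mathrm b}(H)\oplus\mathcal{H}_{\mathrm d}(H^*)$. First I would record the pairing identity $\langle W_-(H,H_0)u,v\rangle=\lim_{t\to\infty}\langle u,e^{-itH_0}e^{itH^*}v\rangle$, which follows from $(e^{-itH})^*=e^{itH^*}$ and the unitarity of $e^{itH_0}$; thus $v\perp\mathrm{Ran}(W_-(H,H_0))$ is equivalent to $\wlim_{t\to\infty}e^{-itH_0}e^{itH^*}v=0$. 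The inclusion $\overline{\mathrm{Ran}(W_-(H,H_0))}\subseteq(\mathcal{H}_{\mathrm b}(H)\oplus\mathcal{H}_{\mathrm d}(H^*))^\perp$ is the easy one: if $v\in\mathcal{H}_{\mathrm b}(H)=\mathcal{H}_{\mathrm b}(H^*)$ then $e^{itH^*}v$ is a finite sum of terms $e^{it\lambda}\phi$ with $\lambda\in\mathbb{R}$, and since $H_0$ is purely absolutely continuous $\langle e^{itH_0}u,\phi\rangle\to0$ by the Riemann--Lebesgue lemma; if $v\in\mathcal{H}_{\mathrm d}(H^*)$ then $\|e^{itH^*}v\|\to0$. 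In both cases the weak limit vanishes.

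For the reverse inclusion, assume $\wlim_{t\to\infty}e^{-itH_0}e^{itH^*}v=0$ and aim to prove $v\in\mathcal{H}_{\mathrm b}(H)\oplus\mathcal{H}_{\mathrm d}(H^*)$. The crucial intermediate step is to compare the dissipative flow with the self-adjoint flow $e^{itH_V}$. From the Duhamel identity $e^{-itH_V}e^{itH^*}v=v-\int_0^t e^{-isH_V}C^*Ce^{isH^*}v\,ds$, I would project onto $\mathcal{H}_{\mathrm{ac}}(H_V)$ and pair the increment $\int_{t_1}^{t_2}$ against an arbitrary unit vector $w$; Cauchy--Schwarz bounds it by $(\int_0^\infty\|Ce^{isH_V}\Pi_{\mathrm{ac}}(H_V)w\|^2ds)^{1/2}(\int_{t_1}^\infty\|Ce^{isH^*}v\|^2ds)^{1/2}$, where the first factor is controlled by Hypothesis~\ref{V1} and the second tends to $0$ by \eqref{eq:alpha2}. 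This yields $\Pi_{\mathrm{ac}}(H_V)e^{itH^*}v-e^{itH_V}\psi_v\to0$ strongly for some $\psi_v\in\mathcal{H}_{\mathrm{ac}}(H_V)$. Applying $e^{-itH_0}$ and invoking the asymptotic completeness of $W_\pm(H_V,H_0)$ (Hypothesis~\ref{V0}), the absolutely continuous part converges strongly, $e^{-itH_0}\Pi_{\mathrm{ac}}(H_V)e^{itH^*}v\to\phi_v$ with $\|\phi_v\|=\|\psi_v\|$, while the pure-point part satisfies $e^{-itH_0}\Pi_{\mathrm{pp}}(H_V)e^{itH^*}v\rightharpoonup0$ because $\mathcal{H}_{\mathrm{pp}}(H_V)$ is finite dimensional (Hypothesis~\ref{V-1}) and $H_0$ is purely absolutely continuous. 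Hence $\wlim_{t\to\infty}e^{-itH_0}e^{itH^*}v=\phi_v$, and the assumption forces $\phi_v=0$, so $\psi_v=0$ and $\Pi_{\mathrm{ac}}(H_V)e^{itH^*}v\to0$ strongly.

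It then remains to show that $\Pi_{\mathrm{ac}}(H_V)e^{itH^*}v\to0$ implies $v\in\mathcal{H}_{\mathrm b}(H)\oplus\mathcal{H}_{\mathrm d}(H^*)$, and this residual finite-dimensional analysis is the step I expect to be the main obstacle. The point is that $e^{itH^*}v$ is then asymptotically confined to the finite-dimensional space $\mathrm{Ran}(\Pi_{\mathrm{pp}}(H_V))$, so along a sequence $t_n\to\infty$ one has $e^{it_nH^*}v\to\phi$ with $\|\phi\|=\ell:=\lim_t\|e^{itH^*}v\|$; by continuity $\|e^{isH^*}\phi\|=\ell=\|\phi\|$ for all $s\ge0$, and differentiating $\|e^{isH^*}\phi\|^2$, whose derivative equals $-2\|Ce^{isH^*}\phi\|^2$, gives $Ce^{isH^*}\phi=0$, whence $e^{isH^*}\phi=e^{isH_V}\phi$ and, using Lemma~\ref{lm:Hb}, $\phi\in\mathcal{H}_{\mathrm b}(H^*)=\mathcal{H}_{\mathrm b}(H)$. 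Since the spanning eigenvectors of $\mathcal{H}_{\mathrm b}(H)$ lie in $\mathrm{Ker}(C)$ and are genuine eigenvectors of both $H$ and $H^*$ with real eigenvalues, the orthogonal projection $\Pi_{\mathrm b}$ onto $\mathcal{H}_{\mathrm b}(H)$ commutes with $e^{itH^*}$. Writing $v=\Pi_{\mathrm b}v+(\mathrm{Id}-\Pi_{\mathrm b})v$ thus reduces matters to the case $v\in\mathcal{H}_{\mathrm b}(H)^\perp$, for which the limit vector $\phi$ lies both in $\mathcal{H}_{\mathrm b}(H)$ and in the closed invariant subspace $\mathcal{H}_{\mathrm b}(H)^\perp$; hence $\phi=0$, $\ell=0$, and $v\in\mathcal{H}_{\mathrm d}(H^*)$. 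The two delicate points I anticipate are the merely conditional convergence of the Duhamel integral, which only closes after projecting onto $\mathcal{H}_{\mathrm{ac}}(H_V)$ and exploiting relative smoothness together with \eqref{eq:alpha2}, and the extraction of a genuine bound state from the residual dynamics via the norm-preservation argument.
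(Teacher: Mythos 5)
Your proof is correct, but it takes a genuinely different route from the paper's. The paper disposes of this proposition in two lines: by the adjoint identity \eqref{eq:adj-waves}, $\overline{\mathrm{Ran}(W_+(H^*,H_0))}=\mathrm{Ker}(W_+(H_0,H))^\perp$, and the kernel formula $\mathrm{Ker}(W_+(H_0,H))=\mathcal{H}_{\mathrm{b}}(H)\oplus\mathcal{H}_{\mathrm{d}}(H)$ of Proposition \ref{prop:existence_W+} (symmetrically for $W_-(H,H_0)$ via $W_-(H_0,H^*)$). The real work is thus delegated to Proposition \ref{prop:existence_W+}, which requires constructing the \emph{reverse} wave operator $W_+(H_0,H)$ and leans on Davies' theory of the absolutely continuous subspace for dissipative operators: Lemma \ref{lm:compwn} to handle the $\Pi_{\mathrm{pp}}(H_V)$ part, and the dilation-theoretic identity $\mathcal{H}_{\mathrm{ac}}(H)=\mathcal{H}_{\mathrm{b}}(H)^\perp$ of \eqref{eq:HacHbperp} to compute the kernel. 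You never introduce $W_+(H_0,H)$, $M(H)$ or $\mathcal{H}_{\mathrm{ac}}(H)$: you characterize $\mathrm{Ran}(W_-(H,H_0))^\perp$ directly by the criterion $\underset{t\to\infty}{\wlim}\,e^{-itH_0}e^{itH^*}v=0$ (which uses only adjoints of the approximants, not existence of a limiting operator), run the same Duhamel/Kato-smoothness comparison with $e^{itH_V}$ that the paper uses in its existence proofs, and then replace Davies' machinery by an elementary endgame: Riemann--Lebesgue for the purely absolutely continuous $H_0$ kills the $\Pi_{\mathrm{pp}}(H_V)$ contribution weakly, while finite dimensionality of $\mathcal{H}_{\mathrm{pp}}(H_V)$ (Hypothesis \ref{V-1}), monotonicity of $t\mapsto\|e^{itH^*}v\|$, and the dissipation identity for $\|e^{isH^*}\phi\|^2$ extract a genuine bound state from any nonvanishing residual orbit. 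The trade-off: the paper's route is shorter because $W_+(H_0,H)$ and $\mathcal{H}_{\mathrm{ac}}(H)$ are needed elsewhere anyway (for the scattering operators and for Theorem \ref{prop:range1}), whereas yours is self-contained and avoids importing \eqref{eq:HacHbperp} from the literature; it also yields directly the dynamical description of the orthogonal complement as bound state plus decaying part. Two compressed steps deserve a line each if you write this up: (a) the conclusion $\phi\in\mathcal{H}_{\mathrm{b}}(H^*)$ is not a literal application of Lemma \ref{lm:Hb} (whose inclusion goes the other way); you need that $\phi\in\mathrm{Ran}(\Pi_{\mathrm{pp}}(H_V))$ and that $Ce^{isH_V}\phi\equiv 0$, then group by distinct eigenvalues $E$ of $H_V$ (linear independence of $s\mapsto e^{isE}$) to get $CP_E\phi=0$ and hence $H^*P_E\phi=H_VP_E\phi=EP_E\phi$ for each eigenprojection $P_E$; (b) since $\phi$ need not a priori lie in $\mathcal{D}(H^*)$ (here it does, being a finite combination of eigenvectors of $H_V$, but it is worth saying), the pointwise differentiation is best replaced by the integrated identity $\|e^{itH^*}\phi\|^2=\|\phi\|^2-2\int_0^t\|Ce^{isH^*}\phi\|^2\,ds$, valid for all $\phi\in\mathcal{H}$.
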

See Appendix \ref{app:existencewave} for the proof of Proposition \ref{prop:randense_W-}.

\subsection{The absolutely continuous subspace}\label{subsec:absolute}

Following \cite{Davies4,Davies1}, we define the ``absolutely continuous subspace'' of $H$, $\mathcal{H}_{ \mathrm{ac} }( H )$, as follows: Let
\begin{align}
& M( H ) := \Big \{ u \in \mathcal{H} , \exists \mathrm{c}_u > 0 ,  \forall v \in \mathcal{H} , \int_0^\infty \big |Ê\langle e^{ - i t H }Êu , v \rangle \big |^2 dt \le \mathrm{c}_u \| v \|^2 \Big \} . \label{eq:defM(H)}
\end{align}
Then $\mathcal{H}_{ \mathrm{ac} }( H ) := \overline{ M ( H ) }$ is the closure of $M( H )$ in $\mathcal{H}$. When $H$ is self-adjoint, this definition coincides with the usual one  based on the nature of the spectral measures of $H$. Likewise, we set $\mathcal{H}_{ \mathrm{ac} }( H^* ) := \overline{ M ( H^* ) }$, where
\begin{equation}
M( H^* ) := \Big \{ u \in \mathcal{H} , \exists \mathrm{c}^*_u > 0 ,  \forall v \in \mathcal{H} , \int_0^\infty \big |Ê\langle e^{ i t H^* }Êu , v \rangle \big |^2 dt \le \mathrm{c}^*_u \| v \|^2 \Big \}. \label{eq:defM(H*)} 
\end{equation}

Another characterization of the absolutely continuous subspace of $H$ follows from the theory of unitary dilations of one-parameter contraction semigroups; (see e.g. \cite{NF} for the theory of unitary dilations, and \cite{Ne85_01} for further relations with dissipative scattering theory). There is a unique orthogonal decomposition
\begin{equation*}
\mathcal{H} = \mathcal{H}_{ \mathrm{u} } \oplus \mathcal{H}_{ \mathrm{c.n.u.} } ,
\end{equation*}
where $\mathcal{H}_{ \mathrm{u} }$ and $\mathcal{H}_{ \mathrm{c.n.u.} }$ are Hilbert spaces invariant under the action of $\{ e^{ - i t H } \}_{ t \ge 0 }$, such that $\{ e^{ - i t H } \}_{ t \ge 0 }$ is unitary on $\mathcal{H}_{ \mathrm{u} }$, and completely non-unitary on $\mathcal{H}_{ \mathrm{c.n.u.} }$ (meaning that $\{ e^{ - i t H } \}_{ t \ge 0 }$ is not unitary on any of the nontrivial subspaces of $\mathcal{H}_{ \mathrm{c.n.u.} }$). It is shown in \cite{Davies1} (see also \cite{Ex85_01}) that, for all $t \ge 0$ and $u \in \mathcal{H}_{ \mathrm{u} }$, we have that $e^{ - i t H } u = e^{ - i t H_V } u$, which, assuming Hypotheses \ref{V-1}, \ref{V0} and \ref{V1}, implies that (see \cite[Proof of Theorem 5]{Davies1})
\begin{equation*}
\mathcal{H}_{ \mathrm{ac} }(H) = \mathcal{H}_{ \mathrm{ac} }( H_V ) \oplus \mathcal{H}_{ \mathrm{c.n.u.} } .
\end{equation*}
In turn, this equality yields (see \cite[Theorem 5]{Davies1})
\begin{equation}
\mathcal{H}_{ \mathrm{ac} }(H) = \mathcal{H}_{ \mathrm{b} }( H )^\perp , \label{eq:HacHbperp}
\end{equation}
where $\mathcal{H}_{ \mathrm{b} }( H )$ is defined in \eqref{eq:defHb}. Likewise, $\mathcal{H}_{ \mathrm{ac} }(H^*) = \mathcal{H}_{ \mathrm{b} }( H^* )^\perp$, and therefore 
\begin{equation}
\mathcal{H}_{ \mathrm{ac} }(H) = \mathcal{H}_{ \mathrm{ac} }(H^*) , \label{eq:equal_acspec}
\end{equation}
since $\mathcal{H}_{ \mathrm{b} }( H ) = \mathcal{H}_{ \mathrm{b} }( H^* )$, by Lemma \ref{lm:Hb}.

Assuming that $W_-( H , H_0 )$ exists, and using the intertwining property \eqref{eq:inter-1}, it is not difficult to verify that 
\begin{equation}
\mathrm{Ran}( W_-( H , H_0 ) ) \subset M( H ) \subset \mathcal{H}_{ \mathrm{ac} }( H ), \quad \mathrm{Ran}( W_+( H^* , H_0 ) ) \subset M( H^* ) \subset \mathcal{H}_{ \mathrm{ac} }( H ) , \label{eq:1stincl}
\end{equation}
see the proof of Theorem \ref{prop:range1}, below, for details. We  further discuss these inclusions in Section \ref{sec:abs-range}.

\subsection{The wave operators $W_+( H_0 , H )$, $W_-( H_0 , H^* )$ and the scattering operators}
We consider now the wave operators
\begin{equation*}
W_+( H_0 , H ) := \underset{t\to \infty }{\slim}  \, e^{ i t H } e^{ - i t H } \Pi_{ \mathrm{ac} }( H ) , \quad W_-( H_0 , H^* ) := \underset{t\to \infty }{\slim}  \, e^{ - i t H_0 } e^{ i t H^* } \Pi_{ \mathrm{ac}Ê}( H^* ) ,
\end{equation*}
where $\Pi_{ \mathrm{ac} }( H )$, $\Pi_{ \mathrm{ac} }( H^* )$ are the orthogonal projections onto $\mathcal{H}_{ \mathrm{ac} }( H )$ and $\mathcal{H}_{ \mathrm{ac} }( H^* )$, respectively. We have the following result, whose proof uses the Kato smoothness estimate \eqref{eq:ZV}, an auxiliary technical lemma that will be recalled in Section \ref{sec:abs-range}, Lemma \ref{lm:compwn}, and arguments similar to those in the proof of Proposition \ref{prop:existence_W-}. We refer the reader to Appendix \ref{app:existencewave} for details.
\begin{proposition}\label{prop:existence_W+}
Suppose that Hypotheses \ref{V-1}, \ref{V0} and \ref{V1} hold. Then the wave operators $W_+( H_0 , H )$ and $W_-( H_0 , H^* )$ exist and are contractions. These operators have dense ranges, and their kernels are given by
\begin{equation}\label{eq:ker_W+}
\mathrm{Ker}( W_+( H_0 , H ) ) = \mathcal{H}_{ \mathrm{b} }( H ) \oplus \mathcal{H}_{ \mathrm{d} }( H ) , \quad \mathrm{Ker}( W_-( H_0 , H^* ) ) = \mathcal{H}_{ \mathrm{b} }( H ) \oplus \mathcal{H}_{ \mathrm{d} }( H^* ).
\end{equation} 
Moreover, for all $t \in \mathbb{R}$, 
  \begin{align}
 & e^{ - i t H_0 } W_+( H_0 , H ) = W_+( H_0 , H ) e^{ - i t H }, \quad  e^{ - i t H_0 } W_-( H_0 , H^* ) = W_-( H_0 , H^* ) e^{ - i t H^* } . \label{eq:inter-3}
  \end{align}
In particular, $W_+( H_0 , H ) \mathcal{D}( H_0 ) \subset \mathcal{D}( H_0 )$, $W_-( H_0 , H^* ) \mathcal{D}( H_0 ) \subset \mathcal{D}( H_0 )$, and
  \begin{align}
& H_0 W_+( H_0 , H) u = W_+( H_0 , H ) H u , \quad H_0 W_-( H_0 , H^* ) u = W_-( H_0 , H^* ) H^* u , \label{eq:inter-4}
  \end{align}
  for all $u \in \mathcal{D}( H_0 )$.
\end{proposition}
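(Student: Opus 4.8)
The plan is to establish all assertions for $W_+(H_0,H)$; the operator $W_-(H_0,H^*)$ is then obtained by the identical argument with $H$ replaced by $H^*$ and $e^{-itH}$ by $e^{itH^*}$, using \eqref{eq:alpha2} in place of \eqref{eq:alpha} and the identity $\mathcal{H}_{\mathrm{ac}}(H^*)=\mathcal{H}_{\mathrm{ac}}(H)$ from \eqref{eq:equal_acspec}. Contractivity requires no work: for $t\ge 0$ the operator $e^{itH_0}e^{-itH}\Pi_{\mathrm{ac}}(H)$ has norm at most $1$, since $e^{itH_0}$ is unitary, $\|e^{-itH}\|\le 1$, and $\Pi_{\mathrm{ac}}(H)$ is an orthogonal projection; the strong limit inherits this bound.

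For existence I would argue as in Proposition \ref{prop:existence_W-}, passing through $H_V$. Writing $\psi:=\Pi_{\mathrm{ac}}(H)u$, the Cook identity $\frac{d}{ds}e^{isH_V}e^{-isH}\psi=-e^{isH_V}C^*Ce^{-isH}\psi$ and $(e^{-isH})^*=e^{isH^*}$ give, after testing against $v$ and applying Cauchy--Schwarz,
\[
\Big|\int_{t_1}^{t_2}\big\langle e^{isH_V}C^*Ce^{-isH}\psi,v\big\rangle ds\Big|\le\Big(\int_{t_1}^{t_2}\|Ce^{-isH}\psi\|^2 ds\Big)^{1/2}\Big(\int_{t_1}^{t_2}\|Ce^{-isH_V}v\|^2 ds\Big)^{1/2}.
\]
The first factor has vanishing tail by \eqref{eq:alpha}, and the $\mathcal{H}_{\mathrm{ac}}(H_V)$-part of $v$ is controlled uniformly by $\mathrm{c}_V\|v\|$ through Hypothesis \ref{V1}. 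The delicate contribution is that of the finitely many eigenvectors $\phi_k$ of $H_V$ (Hypothesis \ref{V-1}), for which $\|Ce^{-isH_V}\phi_k\|=\|C\phi_k\|$ is constant in $s$ so that Hypothesis \ref{V1} gives nothing; here I would integrate by parts in $s$, using $C^*C\phi_k=i(H-\mu_k)\phi_k$ (where $H_V\phi_k=\mu_k\phi_k$), to reduce this term to boundary contributions proportional to $F(s):=\langle e^{-isH}\psi,\phi_k\rangle$. The auxiliary Lemma \ref{lm:compwn}, together with \eqref{eq:alpha}, then yields $F\in L^2(\mathbb{R}_+)$ with $F'\in L^2(\mathbb{R}_+)$ and hence $F(s)\to 0$, which both closes the Cauchy estimate and shows that the limit $\slim_{t\to\infty}e^{itH_V}e^{-itH}\Pi_{\mathrm{ac}}(H)$ exists with range in $\mathcal{H}_{\mathrm{ac}}(H_V)$. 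Composing this with the complete self-adjoint wave operator $W_+(H_0,H_V)$ from Hypothesis \ref{V0} (the norm bounds of all factors let the strong limit pass through the product) establishes the existence of $W_+(H_0,H)$.

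The intertwining relation \eqref{eq:inter-3} follows by the time-translation $t\mapsto t-\tau$ in the defining limit, once one checks that $\Pi_{\mathrm{ac}}(H)$ commutes with $e^{-i\tau H}$; this holds because $\mathcal{H}_{\mathrm{ac}}(H)=\mathcal{H}_{\mathrm{b}}(H)^\perp=\mathcal{H}_{\mathrm{b}}(H^*)^\perp$ (by \eqref{eq:HacHbperp} and Lemma \ref{lm:Hb}) is invariant under the group $\{e^{-i\tau H}\}$, since $e^{i\tau H^*}$ preserves $\mathcal{H}_{\mathrm{b}}(H^*)$. Differentiating \eqref{eq:inter-3} at $t=0$ on $\mathcal{D}(H)=\mathcal{D}(H_0)$ and using that $-iH_0$ generates $\{e^{-itH_0}\}$ then gives the domain invariance and \eqref{eq:inter-4}. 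For the kernels, $\Pi_{\mathrm{ac}}(H)$ annihilates $\mathcal{H}_{\mathrm{b}}(H)$, while $\mathcal{H}_{\mathrm{d}}(H)\subset\mathcal{H}_{\mathrm{b}}(H)^\perp=\mathcal{H}_{\mathrm{ac}}(H)$, so for $u\in\mathcal{H}_{\mathrm{d}}(H)$ one has $\|W_+(H_0,H)u\|=\lim_t\|e^{-itH}u\|=0$; conversely, decomposing $u=u_{\mathrm{b}}+u_{\mathrm{ac}}$ along $\mathcal{H}=\mathcal{H}_{\mathrm{b}}(H)\oplus\mathcal{H}_{\mathrm{ac}}(H)$ and noting that $\|W_+(H_0,H)u_{\mathrm{ac}}\|$ equals the limit of the non-increasing norms $\|e^{-itH}u_{\mathrm{ac}}\|$, the vanishing of $W_+(H_0,H)u$ forces $u_{\mathrm{ac}}\in\mathcal{H}_{\mathrm{d}}(H)$, which is \eqref{eq:ker_W+}.

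Density of the range I would deduce from the adjoint: the pairing $\langle e^{itH_0}e^{-itH}\Pi_{\mathrm{ac}}(H)w,u\rangle=\langle\Pi_{\mathrm{ac}}(H)w,e^{itH^*}e^{-itH_0}u\rangle$ together with the existence of $W_+(H^*,H_0)$ (Proposition \ref{prop:existence_W-}) gives $W_+(H_0,H)^*=\Pi_{\mathrm{ac}}(H)W_+(H^*,H_0)=W_+(H^*,H_0)$, the last step because $\mathrm{Ran}(W_+(H^*,H_0))\subset\mathcal{H}_{\mathrm{ac}}(H)$ by \eqref{eq:1stincl}; since $W_+(H^*,H_0)$ is injective, $\mathrm{Ker}\,W_+(H_0,H)^*=\{0\}$ and $\overline{\mathrm{Ran}\,W_+(H_0,H)}=\mathcal{H}$. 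I expect the existence proof to be the only genuinely delicate point---specifically the control of the pure-point part of the test vector, where Hypothesis \ref{V1} is silent and Lemma \ref{lm:compwn} is essential---while the intertwining, kernel, and range statements are then routine.
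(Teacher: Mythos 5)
Your proposal is correct, and its overall architecture is the same as the paper's: pass through $H_V$ via Cook's method (using Hypothesis \ref{V1} for the $H_V$-evolution and \eqref{eq:alpha} for the $H$-evolution), compose with the complete self-adjoint wave operator from Hypothesis \ref{V0}, identify the adjoint $W_+(H_0,H)^*=W_+(H^*,H_0)$ through \eqref{eq:1stincl} to get density of the range, and use $\|W_+(H_0,H)v\|=\lim_t\|e^{-itH}v\|$ on $\mathcal{H}_{\mathrm{ac}}(H)$ for the kernel. The one step you do genuinely differently is the neutralization of the point spectrum of $H_V$ in the existence argument. The paper removes it \emph{before} Cook's method: since $\mathcal{H}_{\mathrm{pp}}(H_V)$ is finite dimensional (Hypothesis \ref{V-1}), Lemma \ref{lm:compwn} gives $\Pi_{\mathrm{pp}}(H_V)e^{-itH}\Pi_{\mathrm{ac}}(H)\to 0$ strongly, so it suffices to treat $\Pi_{\mathrm{ac}}(H_V)e^{itH_V}e^{-itH}\Pi_{\mathrm{ac}}(H)$, whose Cook integrand only involves $Ce^{-isH_V}\Pi_{\mathrm{ac}}(H_V)v$ and is thus controlled by Hypothesis \ref{V1} alone. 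You instead keep the full evolution, split the test vector $v$, and dispose of each eigenvector term by integration by parts; your identity $C^*C\phi_k=i(H-\mu_k)\phi_k$ indeed makes the eigenvector-tested integrand an exact derivative, $e^{is\mu_k}\langle C^*C\phi_k,e^{-isH}\psi\rangle=-\frac{d}{ds}\bigl[e^{is\mu_k}\langle\phi_k,e^{-isH}\psi\rangle\bigr]$, so only boundary terms survive, and these vanish by Lemma \ref{lm:compwn}. Both routes rest on the same two pillars (Lemma \ref{lm:compwn} and the finiteness assumption in Hypothesis \ref{V-1}); the paper's projection trick is shorter because it kills the problematic contributions at the level of vectors rather than inside the integral, while yours has the minor virtue of making explicit \emph{why} the eigenvector terms are harmless (they are total derivatives).

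One small repair is needed in your justification of $F(s)\to 0$: Lemma \ref{lm:compwn} and \eqref{eq:alpha} do \emph{not} yield $F\in L^2(\mathbb{R}_+)$, since square integrability of $s\mapsto\langle e^{-isH}\psi,\phi_k\rangle$ requires $\psi\in M(H)$, whereas you only know $\psi=\Pi_{\mathrm{ac}}(H)u\in\mathcal{H}_{\mathrm{ac}}(H)=\overline{M(H)}$. The gap is harmless: Lemma \ref{lm:compwn} applied with $T=H$, $u=\psi$, $v=\phi_k$ gives $F(s)\to 0$ directly, with no $L^2$ detour, and this is all your boundary terms need (the compact-operator half of the same lemma, with $K=\Pi_{\mathrm{pp}}(H_V)$, is what places the limit in $\mathcal{H}_{\mathrm{ac}}(H_V)$, as you assert). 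Alternatively, run your $L^2$ argument for $\psi\in M(H)$ and extend by density, using that $e^{itH_V}e^{-itH}$ is uniformly bounded for $t\ge 0$.
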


We mention that the existence of $W_+( H_0 , H )$ and $W_-( H_0 , H^* )$ can be proven under various different conditions, using the Kato-Birman theory of trace-class perturbations (see \cite{Davies4}), the Enss' method (see \cite{Simon2,Pe89_01,Ka03_01}), or, as in our situation, Kato's theory of smooth perturbations (see \cite{Kato1} for weak coupling results and \cite{Mo76_01,Ka02_01} for assumptions comparable to ours, in the particular case where $V = 0$).

In dissipative scattering theory, the scattering operators are defined by
\begin{equation}
S( H , H_0 ) := W_+( H_0 , H ) W_-( H , H_0 ) , \quad S( H^* , H_0 ) = W_- ( H_0 , H^* ) W_+ ( H^* , H_0 ). \label{eq:defscattop}
\end{equation}
In view of Proposition \ref{prop:existence_W-}, Proposition \ref{prop:existence_W+} and \eqref{eq:1stincl}, the scattering operators can be shown to exist.
\begin{proposition}\label{prop:existencescatt}
Suppose that Hypotheses \ref{V-1}, \ref{V0} and \ref{V1} hold. Then the scattering operators $S( H , H_0 )$ and $S( H^* , H_0 )$ exist and are contractions. Moreover,
\begin{equation*}
S( H , H_0 )^* = S( H^* , H_0 ).
\end{equation*}
\end{proposition}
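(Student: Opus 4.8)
The plan is to dispose of existence and contractivity immediately and then concentrate all the work on the adjoint identity. By Propositions \ref{prop:existence_W-} and \ref{prop:existence_W+}, each of the four wave operators $W_-( H , H_0 )$, $W_+( H^* , H_0 )$, $W_+( H_0 , H )$ and $W_-( H_0 , H^* )$ exists and is a contraction. Since $S( H , H_0 )$ and $S( H^* , H_0 )$ are, by \eqref{eq:defscattop}, compositions of two such contractions, they exist and satisfy $\| S \| \le 1$; here \eqref{eq:1stincl} is what guarantees $\mathrm{Ran}( W_-( H , H_0 ) ) \subset \mathcal{H}_{ \mathrm{ac} }( H )$, so that $\Pi_{ \mathrm{ac} }( H ) W_-( H , H_0 ) = W_-( H , H_0 )$ and the product is consistent with the physical scattering operator. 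This is routine and I would dispatch it in one or two sentences.

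The substance is the relation $S( H , H_0 )^* = S( H^* , H_0 )$, which I would reduce to the two operator-level adjoint identities
\begin{equation*}
W_-( H , H_0 )^* = W_-( H_0 , H^* ) , \qquad W_+( H_0 , H )^* = W_+( H^* , H_0 ) .
\end{equation*}
Granting these, the claim follows from $(AB)^* = B^* A^*$:
\begin{equation*}
S( H , H_0 )^* = \big( W_+( H_0 , H ) W_-( H , H_0 ) \big)^* = W_-( H , H_0 )^* \, W_+( H_0 , H )^* = W_-( H_0 , H^* ) \, W_+( H^* , H_0 ) = S( H^* , H_0 ) .
\end{equation*}

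To prove the first identity I would start from the definition: for all $u , v \in \mathcal{H}$,
\begin{equation*}
\langle W_-( H , H_0 ) u , v \rangle = \lim_{ t \to \infty } \langle e^{ - i t H } e^{ i t H_0 } u , v \rangle = \lim_{ t \to \infty } \langle u , e^{ - i t H_0 } e^{ i t H^* } v \rangle ,
\end{equation*}
so that $W_-( H , H_0 )^* v = \wlim_{ t \to \infty } e^{ - i t H_0 } e^{ i t H^* } v$. Decomposing $v = \Pi_{ \mathrm{ac} }( H^* ) v + ( \mathrm{Id} - \Pi_{ \mathrm{ac} }( H^* ) ) v$, I would note that on the absolutely continuous part the \emph{strong} limit $W_-( H_0 , H^* ) \Pi_{ \mathrm{ac} }( H^* ) v$ exists by Proposition \ref{prop:existence_W+}, hence coincides with the weak limit; while the complementary part lies in $\mathcal{H}_{ \mathrm{ac} }( H^* )^\perp = \mathcal{H}_{ \mathrm{b} }( H^* ) = \mathcal{H}_{ \mathrm{b} }( H )$ (using $\mathcal{H}_{ \mathrm{ac} }( H^* ) = \mathcal{H}_{ \mathrm{b} }( H^* )^\perp$ from the discussion around \eqref{eq:HacHbperp} and Lemma \ref{lm:Hb}). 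For $w$ in this subspace one has $H^* w = \lambda w$ with $\lambda \in \mathbb{R}$ (the eigenvectors are shared and lie in $\mathrm{Ker}( C )$), so $e^{ - i t H_0 } e^{ i t H^* } w = e^{ i \lambda t } e^{ - i t H_0 } w$, which converges weakly to $0$ because $H_0$ is purely absolutely continuous (Hypothesis \ref{V-1}) and hence $e^{ - i t H_0 } \to 0$ weakly by the Riemann--Lebesgue lemma. Combining the two pieces gives $W_-( H , H_0 )^* v = W_-( H_0 , H^* ) \Pi_{ \mathrm{ac} }( H^* ) v = W_-( H_0 , H^* ) v$. The second identity is proved verbatim with the roles interchanged, now using $e^{ i t H_0 } \to 0$ weakly on the bound-state subspace of $H$.

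The only genuine obstacle is the well-known failure of adjoints to commute with strong limits: one cannot simply write $(\slim A_t)^* = \slim A_t^*$. The remedy, and the technical core of the argument, is to pass instead to the weak limit of the adjoints and then to recognize this weak limit as the strong limit defining $W_-( H_0 , H^* )$ (respectively $W_+( H_0 , H )$); this is precisely where the projection $\Pi_{ \mathrm{ac} }$ and the Riemann--Lebesgue decay of $e^{ \mp i t H_0 }$ on the bound-state subspace are needed to absorb the singular part. No machinery beyond these observations and the already-established existence results of Propositions \ref{prop:existence_W-} and \ref{prop:existence_W+} is required.
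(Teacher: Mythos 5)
Your proof is correct, and its top layer coincides with the paper's: existence and contractivity come from Propositions \ref{prop:existence_W-} and \ref{prop:existence_W+}, and the adjoint relation is reduced via $(AB)^*=B^*A^*$ to the two wave-operator identities $W_-(H,H_0)^* = W_-(H_0,H^*)$ and $W_+(H_0,H)^* = W_+(H^*,H_0)$, which is exactly what the paper means by ``follows directly from the definitions.'' Where you genuinely diverge is in how those identities are proved. The paper obtains $W_+(H_0,H)^* = W_+(H^*,H_0)$ (its \eqref{eq:adj-waves}, inside the proof of Proposition \ref{prop:existence_W+}) by a shorter mechanism: since $\mathrm{Ran}(W_+(H^*,H_0)) \subset \mathcal{H}_{\mathrm{ac}}(H^*) = \mathcal{H}_{\mathrm{ac}}(H)$ by \eqref{eq:1stincl} and \eqref{eq:equal_acspec}, one has $\Pi_{\mathrm{ac}}(H)\,W_+(H^*,H_0) = W_+(H^*,H_0)$, and then in $\langle e^{itH_0}e^{-itH}\Pi_{\mathrm{ac}}(H)u , v\rangle = \langle u , \Pi_{\mathrm{ac}}(H) e^{itH^*}e^{-itH_0} v\rangle$ the right side converges because $e^{itH^*}e^{-itH_0}v$ converges \emph{strongly}, after which the projection is absorbed by the range inclusion -- no weak-limit analysis or Riemann--Lebesgue decay is needed. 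Your route instead identifies the adjoint as the weak limit of the \emph{unprojected} product, splits the vector into its $\mathcal{H}_{\mathrm{ac}}$ and $\mathcal{H}_{\mathrm{b}}$ components, and kills the bound part using pure absolute continuity of $H_0$. This is valid and self-contained, but two small points should be made explicit: (i) the identification $\mathcal{H}_{\mathrm{ac}}(H^*)^\perp = \mathcal{H}_{\mathrm{b}}(H^*)$ requires $\mathcal{H}_{\mathrm{b}}(H^*)$ to be closed, which holds here because it is finite-dimensional (Lemma \ref{lm:Hb} together with Hypothesis \ref{V-1}); (ii) a general element of $\mathcal{H}_{\mathrm{b}}(H^*)$ is a finite linear combination of eigenvectors rather than a single eigenvector, so the Riemann--Lebesgue step should be run on each eigenvector and extended by (finite) linearity. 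What your approach buys is independence from the range-inclusion bookkeeping: it needs only $\mathcal{H}_{\mathrm{ac}}(H) = \mathcal{H}_{\mathrm{b}}(H)^\perp$ and the structure of the bound states, whereas the paper's argument is shorter precisely because \eqref{eq:1stincl} and \eqref{eq:equal_acspec} were already in place.
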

The following result gives a necessary and sufficient condition for the invertibility of the scattering operators. It is a consequence of \cite[Theorem 7]{Davies1} and Proposition \ref{prop:existence_W+}. We remark that the proof of \cite[Theorem 7]{Davies1} uses the assumption that there exists a conjugation operator commuting with $H_0$, $V$ and $C^*C$. We use a slightly differently reasoning process. Details of the proof can be found in Appendix \ref{app:existencewave}.
\begin{proposition}\label{prop:invertscatt}
Suppose that Hypotheses \ref{V-1}, \ref{V0} and \ref{V1} hold. Then the following conditions are equivalent:
\begin{itemize}
\item[(i)] The scattering operators $S( H , H_0 )$ and $S( H^* , H_0 )$ are bijective on $\mathcal{H}$.
\item[(ii)] The range of the wave operators $W_-( H , H_0 )$ and $W_+( H^* , H_0 )$ are given by
\begin{align*}
&Ê\mathrm{Ran}( W_-( H , H_0 ) ) = \big ( \mathcal{H}_{ \mathrm{b} }( H ) \oplus \mathcal{H}_{ \mathrm{d} }( H^* ) \big )^\perp, \\
& \mathrm{Ran}( W_+( H^* , H_0 ) ) = \big ( \mathcal{H}_{ \mathrm{b} }( H ) \oplus \mathcal{H}_{ \mathrm{d} }( H ) \big )^\perp.
\end{align*}
\end{itemize}
\end{proposition}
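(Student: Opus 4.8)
The plan is to deduce the equivalence from the adjoint structure of the four wave operators together with Proposition \ref{prop:existence_W+}; this is the route of \cite[Theorem 7]{Davies1}, and the only point requiring a genuinely new argument is the reverse implication, where the conjugation operator used by Davies must be dispensed with. First I would record the adjoint relations. A direct computation with the weak limits, using $(e^{itH^*})^*=e^{-itH}$ and the unitarity of $e^{-itH_0}$, gives
\begin{equation*}
W_+( H_0 , H )= W_+( H^* , H_0 )^*, \qquad W_-( H_0 , H^* ) = W_-( H , H_0 )^*,
\end{equation*}
where the projections $\Pi_{\mathrm{ac}}(H)$, $\Pi_{\mathrm{ac}}(H^*)$ in the definitions are absorbed using $\mathcal{H}_{\mathrm{ac}}(H)=\mathcal{H}_{\mathrm{b}}(H)^\perp$ together with the fact that $e^{itH_0}e^{-itH}v\to 0$ weakly for $v\in\mathcal{H}_{\mathrm{b}}(H)$ (absolute continuity of $H_0$). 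In particular $S(H^*,H_0)=S(H,H_0)^*$, recovering Proposition \ref{prop:existencescatt}. Since a bounded operator on a Hilbert space is boundedly invertible if and only if its adjoint is, condition (i) is equivalent to the single statement that $S(H,H_0)$ is bijective, and it suffices to analyse this operator.

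Combining the adjoint relations with the kernels supplied by Proposition \ref{prop:existence_W+}, namely $\ker W_+(H_0,H)=\mathcal{H}_{\mathrm{b}}(H)\oplus\mathcal{H}_{\mathrm{d}}(H)$ and $\ker W_-(H_0,H^*)=\mathcal{H}_{\mathrm{b}}(H)\oplus\mathcal{H}_{\mathrm{d}}(H^*)$, and with $\ker T^*=\mathrm{Ran}(T)^\perp$, I recover $\overline{\mathrm{Ran}(W_-(H,H_0))}=(\mathcal{H}_{\mathrm{b}}(H)\oplus\mathcal{H}_{\mathrm{d}}(H^*))^\perp=:N_1$ and $\overline{\mathrm{Ran}(W_+(H^*,H_0))}=(\mathcal{H}_{\mathrm{b}}(H)\oplus\mathcal{H}_{\mathrm{d}}(H))^\perp=:N_2$, in agreement with Proposition \ref{prop:randense_W-}. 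Thus condition (ii) is precisely the statement that both of these ranges are closed. For the implication (i)$\Rightarrow$(ii) I would then argue by surjectivity: writing $S(H,H_0)=W_+(H_0,H)W_-(H,H_0)$, surjectivity of $S(H,H_0)$ forces $W_+(H_0,H)$ to be surjective, hence $\mathrm{Ran}(W_+(H_0,H))=\mathcal{H}$ is closed, so by the closed-range theorem $\mathrm{Ran}(W_+(H^*,H_0))$ is closed and equals $N_2$. Applying the same reasoning to $S(H^*,H_0)=W_-(H_0,H^*)W_+(H^*,H_0)$ yields $\mathrm{Ran}(W_-(H,H_0))=N_1$.

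The reverse implication (ii)$\Rightarrow$(i) is the substantive part. Assuming the ranges closed, the open mapping theorem makes $W_-(H,H_0):\mathcal{H}\to N_1$ and $W_+(H^*,H_0):\mathcal{H}\to N_2$ bounded bijections (injectivity coming from Proposition \ref{prop:existence_W-}), so their adjoints $W_-(H_0,H^*)$ and $W_+(H_0,H)$ are surjective with kernels $N_1^\perp$ and $N_2^\perp$. Feeding this into $S(H,H_0)=W_+(H_0,H)W_-(H,H_0)$, one checks that $\ker S(H,H_0)=\{0\}$ is equivalent to $N_1\cap N_2^\perp=\{0\}$ and that $\mathrm{Ran}(S(H,H_0))=\mathcal{H}$ is equivalent to $N_1+N_2^\perp=\mathcal{H}$; in other words bijectivity of $S(H,H_0)$ is equivalent to the transversality relation $\mathcal{H}=N_1\oplus N_2^\perp$ (algebraic direct sum of closed subspaces), which after cancelling $\mathcal{H}_{\mathrm{b}}(H)$ reads $\mathcal{H}=\mathcal{H}_{\mathrm{d}}(H^*)^\perp+\mathcal{H}_{\mathrm{d}}(H)$ with $\mathcal{H}_{\mathrm{d}}(H)\cap\mathcal{H}_{\mathrm{d}}(H^*)^\perp=\{0\}$.

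The main obstacle is exactly to establish this transversality of the dissipative subspaces of $H$ and $H^*$ from the closedness of the ranges. This is the content of \cite[Theorem 7]{Davies1}, where it is obtained from a conjugation $J$ satisfying $JHJ=H^*$, so that $J\mathcal{H}_{\mathrm{d}}(H)=\mathcal{H}_{\mathrm{d}}(H^*)$ and the two subspaces sit automatically in dual position. To avoid this hypothesis I would instead exploit the injectivity of $W_-(H,H_0)$ and $W_+(H^*,H_0)$, the density of the ranges of their adjoints, and the biorthogonality of the Riesz projections $\Pi_\lambda$ and $\Pi^*_{\bar\lambda}$ encoded in \eqref{eq:rankerpipp}, in order to control the angle between $N_1$ and $N_2^\perp$ and thereby deduce the direct-sum decomposition. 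The remaining closed-range bookkeeping is then routine and would be relegated to Appendix \ref{app:existencewave}.
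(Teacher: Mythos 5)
Your opening paragraph (the adjoint relations) and your implication (i)$\Rightarrow$(ii) are essentially correct: surjectivity of $S(H,H_0)$ forces surjectivity of $W_+(H_0,H)$, and the closed-range theorem applied to $W_+(H_0,H)^*=W_+(H^*,H_0)$, combined with Proposition \ref{prop:randense_W-}, identifies the ranges. This is a legitimate variant of the paper's argument, which instead uses contractivity of $W_+(H_0,H)$ to get the lower bound $\|W_-(H,H_0)u\|\ge\|S(H,H_0)u\|\ge m\|u\|$ directly. The genuine gap is in (ii)$\Rightarrow$(i), which you yourself identify as the substantive part. You correctly reduce bijectivity of $S(H,H_0)$ to the transversality $\mathcal{H}=N_1\oplus N_2^\perp$, but you never prove it: your last paragraph is a plan, not an argument, and the tools you name are inadequate. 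Under Hypotheses \ref{V-1}, \ref{V0} and \ref{V1} alone, $\mathcal{H}_{\mathrm{d}}(H)$ and $\mathcal{H}_{\mathrm{d}}(H^*)$ need not be spanned by generalized eigenvectors (that is Theorem \ref{thm:Hp-Hd}, which requires Hypotheses \ref{V2}--\ref{V3} and is proved \emph{after}, and partly by means of, the present proposition), so the biorthogonality of Riesz projections encoded in \eqref{eq:rankerpipp} gives no control over these subspaces; and injectivity plus dense range are qualitative facts that cannot bound the angle between two infinite-dimensional closed subspaces, which is exactly what the closedness of $N_1+N_2^\perp$ demands.

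The idea you are missing, and which is how the paper closes this direction without any conjugation operator, is to transfer the lower bound from $W_-(H,H_0)$ to $S(H,H_0)$ through the intertwining property, bypassing the transversality question altogether. Since $\mathrm{Ran}(W_-(H,H_0))\subset\mathcal{H}_{\mathrm{ac}}(H)$ by \eqref{eq:1stincl} and $e^{-itH}W_-(H,H_0)=W_-(H,H_0)e^{-itH_0}$ by \eqref{eq:inter-1}, one has
\begin{equation*}
S(H,H_0)u=\lim_{t\to\infty}e^{itH_0}e^{-itH}W_-(H,H_0)u=\lim_{t\to\infty}e^{itH_0}W_-(H,H_0)e^{-itH_0}u,
\end{equation*}
whence, by unitarity of $e^{itH_0}$ and the bound $\|W_-(H,H_0)w\|\ge m\|w\|$ (injectivity plus closed range, available under (ii)),
\begin{equation*}
\|S(H,H_0)u\|=\lim_{t\to\infty}\big\|W_-(H,H_0)e^{-itH_0}u\big\|\ge m\|u\|.
\end{equation*}
Thus $S(H,H_0)$ is injective with closed range, and the same argument gives the same for $S(H^*,H_0)$. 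Since $S(H,H_0)^*=S(H^*,H_0)$ by Proposition \ref{prop:existencescatt}, we get $\mathrm{Ran}(S(H,H_0))^\perp=\mathrm{Ker}(S(H^*,H_0))=\{0\}$ and vice versa, so both scattering operators are bijective. Your transversality relation then holds a posteriori as a consequence of (i), rather than serving as an input that must be established by hand.
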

Of course, since 
\begin{equation*}
\mathrm{Ran}( S ( H , H_0 ) ) \subset \mathrm{Ran}( W_+ (H_0 , H) ), \quad \mathrm{Ran}( S ( H^* , H_0 ) ) \subset \mathrm{Ran}( W_- (H_0 , H^*) ),
\end{equation*}
we see that if the equivalent conditions of the previous proposition are satisfied, then the wave operators $W_+( H_0 , H )$ and $W_- ( H_0 , H^* )$ are surjective.

Once it is known that the scattering operators exist, one can define the scattering matrices in the usual way (see e.g. \cite{Ya92_01}). Indeed, considering for instance $S( H , H_0 )$, we observe that, by the intertwining properties \eqref{eq:inter-2} and \eqref{eq:inter-4}, the operator $S( H , H_0 )$ commutes with $H_0$, and there is therefore a direct integral decomposition of the form
\begin{equation*}
\mathcal{H} = \int^\oplus_{Ê\sigma( H_0 ) }Ê\mathcal{H}( \lambda ) dÊ\lambda ,
\end{equation*}
with the property that
\begin{equation*}
S( H , H_0 ) = \int^\oplus_{ \sigma( H_0 ) } S( \lambda ) d \lambda.
\end{equation*}
In other words, $S( H , H_0 )$ acts as multiplication by the operator-valued function $ \lambda \mapsto S( \lambda )$. The equalities in the two equations above are interpreted as unitary equivalence. The operator $S( \lambda ) : \mathcal{H}( \lambda ) \to \mathcal{H}( \lambda )$ (for a.e. $\lambda \in \sigma( H_0 )$) is called the scattering matrix. We emphasize that, in contrast to the unitary case, $S( \lambda )$ may not be invertible at some  points $\lambda_0 \in \sigma ( H_0 )$.

\subsection{Asymptotic completeness of the wave operators}\label{subsec:compl}
To conclude this preliminary section, we discuss the notion of asymptotic completeness of the wave operators. 
\begin{definition}\label{def:compl}
The wave operators $W_-( H , H_0 )$ and $W_+( H^* , H_0 )$ are said to be asymptotically complete if 
\begin{equation*}
Ê\mathrm{Ran}( W_-( H , H_0 ) ) = \big ( \mathcal{H}_{ \mathrm{b} }( H ) \oplus \mathcal{H}_{ \mathrm{p} }( H^* ) \big )^\perp, \quad \mathrm{Ran}( W_+( H^* , H_0 ) ) = \big ( \mathcal{H}_{ \mathrm{b} }( H ) \oplus \mathcal{H}_{ \mathrm{p} }( H ) \big )^\perp .
\end{equation*}
\end{definition}
Slightly different definitions of asymptotic completeness appear in the literature (see \cite{Martin,Davies4}) but, in many examples, all definitions coincide.  In view of the usual definition of asymptotic completeness in unitary scattering theory (see e.g. \cite{RS-III}), Definition \ref{def:compl} is natural in our situation.

Assuming that Hypotheses \ref{V-1}, \ref{V0} and \ref{V1} hold, Proposition \ref{prop:existence_W-} shows that, if $W_-( H , H_0 )$ is asymptotically complete, then $W_-( H , H_0 ) : \mathcal{H} \to ( \mathcal{H}_{ \mathrm{b} }( H ) \oplus \mathcal{H}_{ \mathrm{p} }( H^* ) )^\perp$ is bijective. By the intertwining property \eqref{eq:inter-2}, this implies that the restriction of $H$ to $( \mathcal{H}_{ \mathrm{b} }( H ) \oplus \mathcal{H}_{ \mathrm{p} }( H^* ) )^\perp$ is similar to $H_0$.

Proposition \ref{prop:randense_W-} and the fact that $\mathcal{H}_{ \mathrm{p} }( H ) \subset \mathcal{H}_{ \mathrm{d} }( H ) \subset \mathcal{H}_{ \mathrm{b} }( H )^\perp$ imply the following proposition.
\begin{proposition}
Suppose that Hypotheses \ref{V-1}, \ref{V0} and \ref{V1} hold. Then the wave operators $W_-( H , H_0 )$ and $W_+( H^* , H_0 )$ are asymptotically complete if and only if the following two conditions are satisfied:
\begin{align*}
& \mathrm{(a)} \ \ \mathrm{Ran}( W_-( H , H_0 ) ) \ \text{ and }Ê\ \mathrm{Ran}( W_+( H^* , H ) ) \ \text{ are closed.} \hspace{4cm} \\
& \mathrm{(b)} \ \ \mathcal{H}_{ \mathrm{p} }( H ) = \mathcal{H}_{ \mathrm{d}Ê}( H ).
\end{align*}
\end{proposition}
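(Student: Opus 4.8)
The plan is to unfold the definition of asymptotic completeness (Definition~\ref{def:compl}) and compare it, term by term, with the identities for the \emph{closures} of the ranges provided by Proposition~\ref{prop:randense_W-}. Write $\mathcal{H}_{\mathrm{b}}$ for $\mathcal{H}_{\mathrm{b}}(H)=\mathcal{H}_{\mathrm{b}}(H^*)$, which are equal by Lemma~\ref{lm:Hb} and finite-dimensional under Hypothesis~\ref{V-1}. By definition, the pair $W_-(H,H_0)$, $W_+(H^*,H_0)$ is asymptotically complete iff $\mathrm{Ran}(W_-(H,H_0))=(\mathcal{H}_{\mathrm{b}}\oplus\mathcal{H}_{\mathrm{p}}(H^*))^\perp$ and $\mathrm{Ran}(W_+(H^*,H_0))=(\mathcal{H}_{\mathrm{b}}\oplus\mathcal{H}_{\mathrm{p}}(H))^\perp$, whereas Proposition~\ref{prop:randense_W-} unconditionally gives $\overline{\mathrm{Ran}(W_-(H,H_0))}=(\mathcal{H}_{\mathrm{b}}\oplus\mathcal{H}_{\mathrm{d}}(H^*))^\perp$ and $\overline{\mathrm{Ran}(W_+(H^*,H_0))}=(\mathcal{H}_{\mathrm{b}}\oplus\mathcal{H}_{\mathrm{d}}(H))^\perp$. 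Throughout I would use the inclusions $\mathcal{H}_{\mathrm{p}}\subseteq\mathcal{H}_{\mathrm{d}}\subseteq\mathcal{H}_{\mathrm{b}}^\perp$ (for both $H$ and $H^*$), the fact that $\mathcal{H}_{\mathrm{d}}$ is closed, and the elementary observation that adding the finite-dimensional space $\mathcal{H}_{\mathrm{b}}$ preserves closedness and commutes with taking closures; since every sum that occurs is orthogonal to $\mathcal{H}_{\mathrm{b}}$, these are orthogonal direct sums and the common summand $\mathcal{H}_{\mathrm{b}}$ may be cancelled.

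For the implication $(\Leftarrow)$, I would assume (a) and (b). By (a) the two ranges are closed, hence equal to their closures, so Proposition~\ref{prop:randense_W-} identifies them with the corresponding $\mathcal{H}_{\mathrm{d}}$-complements; condition (b), i.e. $\mathcal{H}_{\mathrm{p}}(H)=\mathcal{H}_{\mathrm{d}}(H)$ together with its $H^*$-counterpart, then converts these into the $\mathcal{H}_{\mathrm{p}}$-complements demanded by Definition~\ref{def:compl}, which is asymptotic completeness. For $(\Rightarrow)$, assuming asymptotic completeness, each range equals an orthogonal complement and is therefore closed, giving (a); moreover, writing $\mathrm{Ran}(W_-(H,H_0))=\overline{\mathrm{Ran}(W_-(H,H_0))}$ and using Proposition~\ref{prop:randense_W-} yields $(\mathcal{H}_{\mathrm{b}}\oplus\mathcal{H}_{\mathrm{p}}(H^*))^\perp=(\mathcal{H}_{\mathrm{b}}\oplus\mathcal{H}_{\mathrm{d}}(H^*))^\perp$, and similarly for $W_+(H^*,H_0)$. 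Passing to orthogonal complements, cancelling the finite-dimensional $\mathcal{H}_{\mathrm{b}}$, and invoking $\mathcal{H}_{\mathrm{p}}\subseteq\mathcal{H}_{\mathrm{d}}$ with $\mathcal{H}_{\mathrm{d}}$ closed forces $\mathcal{H}_{\mathrm{d}}(H)=\overline{\mathcal{H}_{\mathrm{p}}(H)}$ and $\mathcal{H}_{\mathrm{d}}(H^*)=\overline{\mathcal{H}_{\mathrm{p}}(H^*)}$, which is (b).

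The only delicate points, and the real crux of the argument, are two. First, the comparison of orthogonal complements naturally produces the equalities $\mathcal{H}_{\mathrm{d}}=\overline{\mathcal{H}_{\mathrm{p}}}$; to match these with the stated condition $\mathcal{H}_{\mathrm{p}}=\mathcal{H}_{\mathrm{d}}$ one uses that $\mathcal{H}_{\mathrm{d}}$ is closed, so that (b) already entails $\mathcal{H}_{\mathrm{p}}$ closed while, conversely, it is exactly the equality up to closure that controls the complement. Second, the single condition (b) must govern \emph{both} wave operators, which I would justify by noting that Hypotheses~\ref{V-1}, \ref{V0} and \ref{V1} are invariant under the substitution $H\leftrightarrow H^*$ (they involve only $H_0$, $V$ and $C$, and the bounds \eqref{eq:alpha}--\eqref{eq:alpha2} are symmetric), so that $\mathcal{H}_{\mathrm{p}}(H)=\mathcal{H}_{\mathrm{d}}(H)$ and its $H^*$-analogue stand or fall together. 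This symmetry, rather than any hard analysis, is the main obstacle; once it is in place the remaining steps are routine manipulations of finite-codimensional orthogonal complements.
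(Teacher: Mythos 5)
Your overall route---unfolding Definition \ref{def:compl}, matching it against Proposition \ref{prop:randense_W-}, and using $\mathcal{H}_{\mathrm{p}}\subseteq\mathcal{H}_{\mathrm{d}}\subseteq\mathcal{H}_{\mathrm{b}}^\perp$, the closedness of $\mathcal{H}_{\mathrm{d}}$, and the finite dimensionality of $\mathcal{H}_{\mathrm{b}}=\mathcal{H}_{\mathrm{b}}(H)=\mathcal{H}_{\mathrm{b}}(H^*)$---is exactly the argument the paper intends (the paper offers only the one-line remark that the proposition follows from Proposition \ref{prop:randense_W-} and these inclusions), and those manipulations are correct. The problem is your treatment of what you yourself single out as the crux: passing between $H$ and $H^*$. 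The claim that condition (b) and its $H^*$-analogue ``stand or fall together'' because Hypotheses \ref{V-1}, \ref{V0}, \ref{V1} are invariant under $H\leftrightarrow H^*$ is a logical error. Symmetry of the hypotheses means the class of operators under consideration is stable under taking adjoints; it does not transfer an \emph{assumed} property of one particular operator to its adjoint. Condition (b) is not a consequence of the hypotheses (if it were, Theorem \ref{thm:Hp-Hd} would be trivial); it is an extra assumption on the given $H$, and nothing formal forces $\mathcal{H}_{\mathrm{p}}(H^*)=\mathcal{H}_{\mathrm{d}}(H^*)$ once $\mathcal{H}_{\mathrm{p}}(H)=\mathcal{H}_{\mathrm{d}}(H)$ is assumed. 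Indeed, for general contraction semigroups the spaces $\mathcal{H}_{\mathrm{d}}(H)$ and $\mathcal{H}_{\mathrm{d}}(H^*)$ are utterly unrelated (for a unilateral shift $S$ one has $S^{*n}\to 0$ strongly while $S$ is an isometry), whereas $\mathcal{H}_{\mathrm{p}}(H)$ and $\mathcal{H}_{\mathrm{p}}(H^*)$ always have matching dimensions; so any implication of this kind in the present setting would require real work of the type carried out in Theorem \ref{prop:proj-ran2} (and there only under the additional Hypotheses \ref{V2}--\ref{V3}), not a symmetry remark.

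Concretely, the gap sits in your $(\Leftarrow)$ direction: (a) and (b) give $\mathrm{Ran}(W_+(H^*,H_0))=(\mathcal{H}_{\mathrm{b}}\oplus\mathcal{H}_{\mathrm{d}}(H))^\perp=(\mathcal{H}_{\mathrm{b}}\oplus\mathcal{H}_{\mathrm{p}}(H))^\perp$, which is half of asymptotic completeness, but the other half requires $(\mathcal{H}_{\mathrm{b}}\oplus\mathcal{H}_{\mathrm{d}}(H^*))^\perp=(\mathcal{H}_{\mathrm{b}}\oplus\mathcal{H}_{\mathrm{p}}(H^*))^\perp$, i.e.\ $\mathcal{H}_{\mathrm{d}}(H^*)=\overline{\mathcal{H}_{\mathrm{p}}(H^*)}$, which (b) does not provide. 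To be fair, this asymmetry is an imprecision of the statement itself---the paper's one-sentence justification does not address it either---and the clean repair is to read (b) as the pair of equalities $\overline{\mathcal{H}_{\mathrm{p}}(H)}=\mathcal{H}_{\mathrm{d}}(H)$ and $\overline{\mathcal{H}_{\mathrm{p}}(H^*)}=\mathcal{H}_{\mathrm{d}}(H^*)$; the closures also settle your first ``delicate point,'' since under Hypotheses \ref{V-1}, \ref{V0}, \ref{V1} alone $\mathcal{H}_{\mathrm{p}}$ need not be closed, so your $(\Rightarrow)$ direction genuinely yields only the equality up to closure rather than (b) as literally stated. With (b) read in that corrected form, your remaining steps are complete and coincide with the paper's; with (b) read literally, the proposal has a genuine gap that the appeal to symmetry cannot close.
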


\section{Proof of \eqref{eq:1ststep}}\label{sec:abs-range}

In this section we discuss the relation between $\mathrm{Ran}( W_-( H , H_0 ) )$ (assuming the wave operators exists) and the absolutely continuous subspace $\mathcal{H}_{ \mathrm{ac} }( H )$ (see Section \ref{subsec:absolute}). In particular we clarify the inclusion $\mathrm{Ran}( W_-( H , H_0 ) ) \subset \mathcal{H}_{ \mathrm{ac} }( H )$ by establishing that  $\mathrm{Ran}( W_-( H , H_0 ) ) = S( H ) \cap \mathcal{H}_{ \mathrm{ac} }( H )$. We recall that 
\begin{equation}
S( H ) = \Big \{ u \in \mathcal{H} , \, \sup_{ t \ge 0 } \big \|Êe^{ i t H }Êu \big \|Ê< \infty \Big \} . \label{eq:S(H)}
\end{equation}
Similarly,
\begin{equation*}
S( H^* ) := \Big \{ u \in \mathcal{H} , \, \sup_{ t \ge 0 } \big \|Êe^{ - i t H^* }Êu \big \|Ê< \infty \Big \} .
\end{equation*}
We note that the identities
\begin{align*}
\big \|Êe^{ i t H } u \big \|^2 = \| u \|^2 + 2 \int_0^t \big \|ÊC e^{ i s H }Êu \big \|^2 ds , \quad \big \|Êe^{ - i t H^* } u \big \|^2 = \| u \|^2 + 2  \int_0^t \big \|ÊC e^{ - i s H^* }Êu \big \|^2 ds ,
\end{align*}
which are valid for all $t \ge 0$ and all $u \in \mathcal{H}$, imply that $S( H )$ and $S( H^* )$ can equivalently be defined by
\begin{equation*}
S( H ) = \Big \{ u \in \mathcal{H} , \, \int_0^\infty \big \|ÊC e^{ i s H }Êu \big \|^2 dsÊ< \infty \Big \} , \quad S( H^* ) = \Big \{ u \in \mathcal{H} , \, \int_0^\infty \big \|ÊC e^{ - i s H^* }Êu \big \|^2 dsÊ< \infty \Big \}.
\end{equation*}
The spaces $M(H)$ and $M( H^* )$ have been defined in \eqref{eq:defM(H)} and \eqref{eq:defM(H*)} and have the property that $\mathcal{H}_{ \mathrm{ac} }( H ) = \overline{ M( H ) }$ and $\mathcal{H}_{ \mathrm{ac} }( H^* ) = \overline{ M( H^* ) }$. Moreover, $\mathcal{H}_{ \mathrm{ac} }( H ) = \mathcal{H}_{ \mathrm{ac}Ê}( H^* )$ (see \eqref{eq:equal_acspec}). The main result of this section is the following theorem.
\begin{theorem}\label{prop:range1}
Suppose that Hypotheses \ref{V-1}, \ref{V0} and \ref{V1} hold. Then
\begin{align*}
& \mathrm{Ran}( W_-( H , H_0 ) ) = S( H ) \cap \mathcal{H}_{ \mathrm{ac} }( H ) = S( H ) \cap M( H ) , \\
& \mathrm{Ran}( W_+( H^* , H_0 ) ) = S( H^* ) \cap \mathcal{H}_{ \mathrm{ac} }( H ) = S( H^* ) \cap M( H^* ) .
\end{align*}
\end{theorem}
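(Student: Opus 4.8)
The plan is to prove the equality $\mathrm{Ran}(W_-(H,H_0)) = S(H) \cap \mathcal{H}_{\mathrm{ac}}(H)$, then deduce the second equality by symmetry (replacing $H$ by $H^*$ and reversing time), and finally to identify $S(H)\cap\mathcal{H}_{\mathrm{ac}}(H)$ with $S(H)\cap M(H)$. The inclusion $\mathrm{Ran}(W_-(H,H_0)) \subset S(H) \cap \mathcal{H}_{\mathrm{ac}}(H)$ is the easy direction. First I would observe that $\mathrm{Ran}(W_-(H,H_0)) \subset M(H) \subset \mathcal{H}_{\mathrm{ac}}(H)$, which is stated as \eqref{eq:1stincl}. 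For the $S(H)$ part, take $u = W_-(H,H_0) v$ with $v \in \mathcal{H}$. By the intertwining property \eqref{eq:inter-1}, for $t\ge 0$ we have $e^{itH} u = e^{itH} W_-(H,H_0) v$; combining this with the identity $e^{-itH}W_-(H,H_0) = W_-(H,H_0) e^{-itH_0}$ and the fact that $W_-(H,H_0)$ is a contraction while $\{e^{-itH_0}\}$ is unitary, I would control $\|e^{itH}u\|$ uniformly in $t\ge 0$. Concretely, since $\|e^{itH}W_-(H,H_0)v\|$ relates back to $\|e^{itH_0}v\| = \|v\|$ through the intertwining relation, $\sup_{t\ge 0}\|e^{itH}u\| \le \|v\| < \infty$, so $u \in S(H)$.

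For the converse, the crucial input is the characterization $\mathcal{H}_{\mathrm{ac}}(H) = \overline{M(H)}$ together with the equivalent integral definition $S(H) = \{u : \int_0^\infty \|Ce^{isH}u\|^2\,ds < \infty\}$ established just before the theorem. Let $u \in S(H) \cap \mathcal{H}_{\mathrm{ac}}(H)$. The goal is to produce $v$ with $W_-(H,H_0)v = u$, equivalently to show that $e^{itH_0}e^{-itH}u$ converges as $t\to\infty$ to some limit $v$ and that $W_-(H,H_0)v = u$. I would study the vector-valued function $t \mapsto e^{itH_0}e^{-itH}u$ and show it is Cauchy: differentiating gives
\begin{equation*}
\frac{d}{dt}\big(e^{itH_0}e^{-itH}u\big) = e^{itH_0}\,i(H_0 - H)\,e^{-itH}u = e^{itH_0}(-V + iC^*C)e^{-itH}u,
\end{equation*}
so that controlling the limit reduces to the integrability of $\|Ce^{-itH}u\|$ (handled by \eqref{eq:alpha}) and of the $V$-contribution. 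The $V$-term is where Hypotheses \ref{V0} and \ref{V1} enter: one wants to replace $e^{-itH}u$ by $e^{-itH_V}$ applied to a suitable vector, up to an error that is integrable thanks to the relative $H_V$-smoothness of $C$ in \eqref{eq:ZV}. Here the condition $u \in S(H)$ guarantees that $\int_0^\infty\|Ce^{isH}u\|^2\,ds < \infty$, which together with Kato smoothness feeds the Cook-type convergence argument.

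The main obstacle, as I see it, is the converse inclusion, and specifically handling the potential term $V$ in the Cook estimate: unlike $C^*C$, the operator $V$ is only relatively compact and not directly dominated by the smoothing norm, so one cannot naively bound $\int \|Ve^{-itH}u\|\,dt$. The resolution should be to route the argument through the \emph{self-adjoint} pair $(H_V, H_0)$, using Hypothesis \ref{V0} (existence and completeness of $W_\pm(H_V,H_0)$) to convert the dynamics $e^{-itH_V}$ into free dynamics on $\mathcal{H}_{\mathrm{ac}}(H_V)$, and to treat $H$ versus $H_V$ via the bounded dissipative perturbation $-iC^*C$, whose effect is governed precisely by the Kato smoothness bound \eqref{eq:ZV} and by the a priori estimate \eqref{eq:alpha}. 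Once convergence of $e^{itH_0}e^{-itH}u$ to some $v$ is established, the intertwining relations and the definition of $W_-(H,H_0)$ as a strong limit give $W_-(H,H_0)v = u$, completing $S(H)\cap\mathcal{H}_{\mathrm{ac}}(H) \subset \mathrm{Ran}(W_-(H,H_0))$. Finally, the equality $S(H)\cap\mathcal{H}_{\mathrm{ac}}(H) = S(H)\cap M(H)$ follows because on $S(H)$ the defining quadratic-form bound for $M(H)$ can be verified directly from the integrability $\int_0^\infty\|Ce^{isH}u\|^2\,ds<\infty$, so that $S(H)\cap\mathcal{H}_{\mathrm{ac}}(H)$ already lies in $M(H)$; the reverse containment is immediate from $M(H)\subset\mathcal{H}_{\mathrm{ac}}(H)$.
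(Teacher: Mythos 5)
Your proposal for the converse inclusion rests on a time-direction error that is fatal as written. You propose to build the preimage as $v = \lim_{t\to\infty} e^{itH_0}e^{-itH}u$, with Cook's integrand controlled by \eqref{eq:alpha}. But the limit of $e^{itH_0}e^{-itH}u$, when it exists, is $W_+(H_0,H)u$, and $W_-(H,H_0)W_+(H_0,H)u \neq u$ in general: $W_+(H_0,H)W_-(H,H_0)$ is the \emph{scattering operator} $S(H,H_0)$, and since $W_-(H,H_0)$ is injective, your claim would force $S(H,H_0)=\mathrm{Id}$, which is false whenever the interaction is nontrivial. The correct candidate is $v = \lim_{t\to\infty} e^{-itH_0}e^{itH}u$, involving the \emph{expanding} direction $e^{itH}u$: unitarity of $e^{itH_0}$ turns this convergence into $\|e^{itH}u - e^{itH_0}v\|\to 0$, and only then does contractivity of $e^{-itH}$ ($t\ge 0$) yield $\|e^{-itH}e^{itH_0}v - u\| \le \|e^{itH_0}v - e^{itH}u\| \to 0$, i.e.\ $W_-(H,H_0)v=u$; no such one-sided argument exists with your signs, because $e^{itH}$ is not power-bounded for $t \ge 0$. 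The sign error also explains a red flag in your outline: with your orientation the Cook bound uses \eqref{eq:alpha}, which holds for \emph{every} $u\in\mathcal{H}$, so the hypothesis $u\in S(H)$ --- whose entire purpose is to control the expanding evolution via $\int_0^\infty \|Ce^{isH}u\|^2\,ds<\infty$ --- would never enter.

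A second, independent gap: your converse argument never uses $u\in\mathcal{H}_{\mathrm{ac}}(H)$, yet without it the inclusion is false --- any eigenvector of $H$ with a real eigenvalue lies in $S(H)$ (the evolution on it is isometric) but is orthogonal to $\mathrm{Ran}(W_-(H,H_0))$. The paper uses $u\in\mathcal{H}_{\mathrm{ac}}(H)$ precisely at the step your sketch of ``routing through $(H_V,H_0)$'' leaves out: one inserts $\mathrm{Id}=\Pi_{\mathrm{pp}}(H_V)+\Pi_{\mathrm{ac}}(H_V)$ between $e^{-itH}$ and $e^{itH}$ and proves $\|e^{-itH}\Pi_{\mathrm{pp}}(H_V)e^{itH}u\|\to 0$, which requires finite-dimensionality of $\mathcal{H}_{\mathrm{pp}}(H_V)$ (Hypothesis \ref{V-1}), the identities $\mathcal{H}_{\mathrm{b}}(H)=\mathcal{H}_{\mathrm{b}}(H^*)$ and $\mathcal{H}_{\mathrm{ac}}(H)=\mathcal{H}_{\mathrm{b}}(H)^\perp$, and Lemma \ref{lm:LM1}. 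The remaining piece is written as $\big[e^{-itH}e^{itH_V}\big]\Pi_{\mathrm{ac}}(H_V)\big[e^{-itH_V}e^{itH}u\big]$; both brackets converge by Cook-type arguments (the second is exactly where $u\in S(H)$ is spent, paired against Hypothesis \ref{V1} via Cauchy--Schwarz), giving $u = W_-(H,H_V)\Pi_{\mathrm{ac}}(H_V)W_-(H_V,H)u$, and one concludes with the bijectivity of $W_-(H_V,H_0):\mathcal{H}\to\mathcal{H}_{\mathrm{ac}}(H_V)$ (Hypothesis \ref{V0}) and the chain rule --- no limit of $e^{-itH_0}e^{itH}u$ is ever taken, and the obstruction posed by $V$ never arises. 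Finally, your closing claim that $S(H)\cap\mathcal{H}_{\mathrm{ac}}(H)\subset M(H)$ ``can be verified directly'' from $\int_0^\infty\|Ce^{isH}u\|^2\,ds<\infty$ is unsubstantiated; the paper obtains this inclusion only through the chain $S(H)\cap\mathcal{H}_{\mathrm{ac}}(H)\subset \mathrm{Ran}(W_-(H,H_0))\subset M(H)$, i.e.\ after the hard inclusion is already proven.
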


The proof of Theorem \ref{prop:range1} is based on the following two lemmas. The first one is well-known for self-adjoint operators. Its proof has been extended to generators of strongly continuous one-parameter contraction semigroups in \cite[Lemma 5.1]{Davies4}.
\begin{lemma}\label{lm:compwn}
Let $-iT$ be the generator of a strongly continuous one-parameter contraction semigroup in a separable Hilbert space $\mathcal{H}$. Then, for all $u \in \mathcal{H}_{ \mathrm{ac} }( T ) = \overline{M( T )}$ (where $M(T)$ is defined as in \eqref{eq:defM(H)}),
\begin{equation}\label{eq:limitAC}
\lim_{ t \to \infty } \langle e^{ - i t T }Êu , v \rangle = \lim_{ t \to \infty } \big \|ÊK e^{ - i t T }Êu \big \|Ê= 0,
\end{equation}
for all $v \in \mathcal{H}$ and all compact operators $K$ on $\mathcal{H}$.
\end{lemma}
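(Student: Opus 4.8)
The plan is to reduce everything to a single analytic fact about the scalar functions $t \mapsto \langle e^{-itT} u, v \rangle$, and then bootstrap from weak convergence to norm convergence against compact operators.

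First I would prove the weak statement for $u \in M(T)$. Fix $v \in \mathcal{H}$ and set $\phi(t) := \langle e^{-itT} u, v \rangle$. By the definition of $M(T)$ we have $\phi \in L^2([0,\infty))$. The key observation is that $\phi$ is \emph{uniformly} continuous: for $h \ge 0$,
\[
|\phi(t+h) - \phi(t)| = \big| \langle e^{-itT}(e^{-ihT} - \mathrm{Id})u, v\rangle \big| \le \big\|(e^{-ihT} - \mathrm{Id})u\big\| \, \|v\|,
\]
where the bound uses $\|e^{-itT}\| \le 1$ and is independent of $t$, while strong continuity of the semigroup makes the right-hand side tend to $0$ as $h \downarrow 0$. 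A uniformly continuous function on $[0,\infty)$ that lies in $L^2$ must vanish at infinity: otherwise $|\phi|$ would stay bounded below on a sequence of disjoint intervals of fixed positive length, contradicting square-integrability. Hence $\phi(t) \to 0$, which is the first claim for $u \in M(T)$.

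Next I would pass to the closure $\mathcal{H}_{\mathrm{ac}}(T) = \overline{M(T)}$. Because $\|e^{-itT}\| \le 1$ uniformly in $t$, the weak limit is stable under approximation: given $u \in \overline{M(T)}$ and $u' \in M(T)$, the estimate $|\langle e^{-itT} u, v\rangle| \le \|u - u'\|\,\|v\| + |\langle e^{-itT} u', v\rangle|$ together with the previous step yields $\limsup_{t\to\infty} |\langle e^{-itT}u, v\rangle| \le \|u-u'\|\,\|v\|$, and letting $\|u-u'\| \to 0$ gives weak convergence to $0$ for every $u \in \mathcal{H}_{\mathrm{ac}}(T)$. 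The norm statement for compact $K$ then follows by a standard finite-rank approximation. For a rank-one operator $K = \langle z, \cdot\rangle w$ one has $\|K e^{-itT}u\| = |\langle z, e^{-itT}u\rangle|\,\|w\| \to 0$, hence the same holds for finite-rank $K$ by linearity; for general compact $K$ I would choose finite-rank $K_n$ with $\|K - K_n\| \to 0$ and use
\[
\|K e^{-itT} u\| \le \|K - K_n\|\,\|u\| + \|K_n e^{-itT} u\|,
\]
so that $\limsup_{t\to\infty} \|K e^{-itT}u\| \le \|K-K_n\|\,\|u\|$, and letting $n \to \infty$ finishes the argument.

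The only genuinely non-formal point is the elementary real-analysis lemma invoked in the first step (uniform continuity together with $L^2$-integrability forces decay at infinity); everything else is bookkeeping with the contraction bound $\|e^{-itT}\| \le 1$ and with norm approximation of compact operators by finite-rank ones, which is available since $\mathcal{H}$ is a separable Hilbert space. I therefore expect no serious obstacle beyond making the decay lemma precise.
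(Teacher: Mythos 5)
Your proof is correct and takes essentially the same approach as the paper: the paper defers this lemma to \cite[Lemma 5.1]{Davies4}, and the argument there — uniform continuity of $t \mapsto \langle e^{-itT}u , v \rangle$ combined with its square-integrability forcing decay at infinity, then density in $\overline{M(T)}$, then finite-rank approximation of compact operators using the contraction bound — is exactly the one you wrote. Indeed, the paper itself runs this very argument in its proof of Lemma \ref{lm:LM1}, so there is nothing to add beyond the details you already supplied.
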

\begin{lemma}\label{lm:LM1}
Suppose that Hypotheses \ref{V-1}, \ref{V0} and \ref{V1} hold. Let $u \in S(H)$. Then, for all $v \in \mathcal{H}_{ \mathrm{ac} }( H )$,
\begin{equation}
\big \langle v , e^{ i t H } u \big \rangle \to 0 , \quad t \to \infty . \label{eq:a2}
\end{equation}
Likewise, for all $u \in S(H^*)$ and $v \in \mathcal{H}_{ \mathrm{ac} }( H )$,
\begin{equation}
\big \langle v , e^{ - i t H^* } u \big \rangle \to 0 , \quad t \to \infty . \label{eq:a221}
\end{equation}
In particular, if $\mathcal{H}_{ \mathrm{b} }( H ) = \{ 0 \}$, see \eqref{eq:defHb}, then \eqref{eq:a2}--\eqref{eq:a221} hold for all $v \in \mathcal{H}$, and, for all compact operators $K$ on $\mathcal{H}$, we have that
\begin{equation}
\big \| K e^{ i t H } u \big \| \to 0 , \quad \big \| K e^{ - i t H^* } u \big \| \to 0, \quad t \to \infty . \label{eq:a3}
\end{equation}
\end{lemma}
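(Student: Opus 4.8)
The plan is to prove the assertions for $H$; those for $H^*$ follow by the identical argument with $\{e^{-itH}\}$ replaced by $\{e^{itH^*}\}$, using that $\mathcal{H}_{\mathrm{ac}}(H)=\mathcal{H}_{\mathrm{ac}}(H^*)$ and $\mathcal{H}_{\mathrm{b}}(H)=\mathcal{H}_{\mathrm{b}}(H^*)$ (see \eqref{eq:equal_acspec} and Lemma \ref{lm:Hb}). Granting \eqref{eq:a2}, the final claim \eqref{eq:a3} is then immediate: if $\mathcal{H}_{\mathrm{b}}(H)=\{0\}$ then $\mathcal{H}_{\mathrm{ac}}(H)=\mathcal{H}$ by \eqref{eq:HacHbperp}, so \eqref{eq:a2} says $e^{itH}u$ converges weakly to $0$; since $u\in S(H)$ the family $\{e^{itH}u\}_{t\ge0}$ is bounded, and every compact operator maps a bounded weakly null family to a norm null one. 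So the real content is \eqref{eq:a2}. Fix $v\in\mathcal{H}_{\mathrm{ac}}(H)$ and $u\in S(H)$; as $\sup_{t\ge0}\|e^{itH}u\|<\infty$, it suffices to show that \emph{every} weak subsequential limit $w=\wlim_{n} e^{it_nH}u$ (with $t_n\to\infty$) satisfies $w\in\mathcal{H}_{\mathrm{b}}(H)$, for then $\langle v,w\rangle=0$ by \eqref{eq:HacHbperp}, which forces $\langle v,e^{itH}u\rangle\to 0$. I would establish two properties of such a $w$.

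First, that $w\perp\mathcal{H}_{\mathrm{ac}}(H_V)$. For this I use the Duhamel formula
\[
e^{itH}u=e^{itH_V}u+\int_0^t e^{i(t-s)H_V}C^*Ce^{isH}u\,ds,
\]
which, paired with $v'\in\mathcal{H}_{\mathrm{ac}}(H_V)$, gives $\langle v',e^{itH}u\rangle=\langle e^{-itH_V}v',u\rangle+\int_0^t\langle Ce^{-i(t-s)H_V}v',Ce^{isH}u\rangle\,ds$. The first term tends to $0$ since $H_V$ is self-adjoint and $v'\in\mathcal{H}_{\mathrm{ac}}(H_V)$. The second is a convolution $\int_0^t\langle f(t-s),g(s)\rangle\,ds$ of $f(r)=Ce^{-irH_V}v'$, which lies in $L^2(\mathbb{R})$ by Hypothesis \ref{V1} (see \eqref{eq:ZV}), and $g(s)=Ce^{isH}u$, which lies in $L^2(0,\infty)$ because $u\in S(H)$; splitting at $t/2$ and applying Cauchy--Schwarz to each piece shows it tends to $0$. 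Hence $\langle v',e^{itH}u\rangle\to0$ for all $v'\in\mathcal{H}_{\mathrm{ac}}(H_V)$, so $\langle v',w\rangle=0$, i.e. $w\perp\mathcal{H}_{\mathrm{ac}}(H_V)$; since $H_V$ has no singular continuous spectrum (Hypothesis \ref{V-1}), this means $w\in\mathcal{H}_{\mathrm{pp}}(H_V)$.

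Second, that $Ce^{-isH}w=0$ for all $s\ge0$. As $C$ is bounded, $Ce^{-isH}w=\wlim_{n}Ce^{i(t_n-s)H}u$, so by weak lower semicontinuity of the norm and Fatou's lemma, for every $S>0$ one has $\int_0^S\|Ce^{-isH}w\|^2\,ds\le\liminf_n\int_{t_n-S}^{t_n}\|Ce^{irH}u\|^2\,dr$, and the right-hand side vanishes because $\int_0^\infty\|Ce^{irH}u\|^2\,dr<\infty$ (again $u\in S(H)$) and $t_n\to\infty$. Thus $Ce^{-isH}w=0$, and the Duhamel formula for $e^{-isH}$ gives $e^{-isH}w=e^{-isH_V}w$ for all $s\ge0$. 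Combining the two properties: writing $w=\sum_k c_k\phi_k$ in an eigenbasis of $H_V$ (legitimate since $w\in\mathcal{H}_{\mathrm{pp}}(H_V)$), the identity $0=Ce^{-isH_V}w=\sum_k c_k e^{-i\mu_k s}C\phi_k$, valid for all $s$, forces $C\phi_k=0$ whenever $c_k\neq0$; each such $\phi_k$ then satisfies $H\phi_k=H_V\phi_k=\mu_k\phi_k$ with $\mu_k\in\mathbb{R}$, so $w\in\mathcal{H}_{\mathrm{b}}(H)$, which is what was needed.

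The step I expect to be the main obstacle is the first property, namely marrying Kato smoothness (Hypothesis \ref{V1}) with the defining property of $S(H)$ through the convolution estimate to conclude $w\perp\mathcal{H}_{\mathrm{ac}}(H_V)$; the care lies in controlling the Duhamel integral uniformly in $t$. Once both properties of $w$ are available, the passage from ``$w\in\mathcal{H}_{\mathrm{pp}}(H_V)$ with $Ce^{-isH_V}w=0$'' to ``$w$ is a genuine real eigenvector of $H$'' via the eigenfunction expansion is routine.
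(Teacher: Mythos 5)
Your proof is correct, but it follows a genuinely different route from the paper's. The paper argues softly: given $v\in\mathcal{H}_{\mathrm{ac}}(H)=\overline{M(H^*)}$ (see \eqref{eq:equal_acspec}), it approximates $v$ by $v_\varepsilon\in M(H^*)$ and shows that $g(t)=\langle v_\varepsilon,e^{itH}u\rangle$ is uniformly continuous (strong continuity of $e^{-itH^*}$ plus boundedness of the orbit) and square-integrable on $\mathbb{R}$ (directly from the defining estimate \eqref{eq:defM(H*)} of $M(H^*)$ together with $u\in S(H)$), hence vanishes at infinity; there is no Duhamel expansion and no direct appeal to Hypotheses \ref{V-1} or \ref{V1} beyond the cited identity \eqref{eq:HacHbperp}. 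You instead run a RAGE-type compactness argument: every weak limit point $w$ of the bounded orbit $\{e^{itH}u\}_{t\ge 0}$ is shown to lie in $\mathcal{H}_{\mathrm{pp}}(H_V)$ (Duhamel formula, the Kato smoothness bound \eqref{eq:ZV}, and the split-at-$t/2$ convolution estimate), then to satisfy $Ce^{-isH}w=0$ for all $s\ge 0$ (Fatou plus the vanishing tail of $\int_0^\infty\|Ce^{irH}u\|^2\,dr$), hence to be a sum of real eigenvectors of $H$; the orthogonality \eqref{eq:HacHbperp} then forces $\langle v,w\rangle=0$, and the subsequence reduction (valid by weak compactness of bounded sets) finishes the proof. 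Your argument is heavier and leans on Hypotheses \ref{V-1} and \ref{V1} directly, but it also buys more: it identifies the weak $\omega$-limit set of any orbit in $S(H)$ as a subset of $\mathcal{H}_{\mathrm{b}}(H)$, and it gives $\langle v',e^{itH}u\rangle\to 0$ for all $v'\in\mathcal{H}_{\mathrm{ac}}(H_V)$ without ever invoking Davies' space $M(H^*)$. One small repair is needed at the end: when an eigenvalue $\mu$ of $H_V$ is degenerate, linear independence of the exponentials only yields $C\bigl(\sum_{k:\mu_k=\mu}c_k\phi_k\bigr)=0$, not $C\phi_k=0$ for each individual $k$; you should therefore work with the grouped vectors $w_\mu:=\sum_{k:\mu_k=\mu}c_k\phi_k$, each of which lies in $\mathrm{Ker}(C)\cap\mathrm{Ker}(H_V-\mu)$ and is thus a real eigenvector of $H$, so the conclusion $w\in\mathcal{H}_{\mathrm{b}}(H)$ is unaffected.
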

\begin{proof}
We prove \eqref{eq:a2}; the proof of \eqref{eq:a221} is similar. Let $u \in S( H )$,  $v \in \mathcal{H}_{ \mathrm{ac} }( H )$ and $\varepsilon > 0$. Since, by \eqref{eq:equal_acspec}, $\mathcal{H}_{ \mathrm{ac} }( H ) = \mathcal{H}_{ \mathrm{ac}Ê}( H^* ) = \overline{ M( H^* ) }$, there exists $v_\varepsilon \in M( H^* )$ such that $\|Êv - v_\varepsilon \|Ê\le \varepsilon$. Hence, for all $t \ge 0$,
\begin{equation}
\Big | \big \langle v , e^{ i t H } u \big \rangle - \big \langle v_\varepsilon , e^{ i t H } u \big \rangle \Big | \le m_u \varepsilon , \label{eq:a1}
\end{equation}
where $m_u := \sup_{ t \ge 0 } \| e^{ i t H }Êu \|$.

Let $g(t) := \langle v_\varepsilon , e^{ i t H } u \rangle$. We want to show that $g(t) \to 0$, as $t \to \infty$. We claim that the function $g$ is uniformly continuous on $\mathbb{R}$. Indeed, 
\begin{equation*}
\big |Êg( t + t' ) - g( t ) \big |Ê= \big | \big \langle ( e^{ - i t' H^* }Ê- \mathrm{Id}Ê) v_\varepsilon , e^{ i t H }Êu \big \rangle \big |Ê\le m_u \big \|Ê( e^{ - i t' H^* }Ê- \mathrm{Id}Ê) v_\varepsilon \big \|  ,
\end{equation*}
and we have that $\|Ê( e^{ - i t' H^* }Ê- \mathrm{Id}Ê) v_\varepsilon \| \to 0$, as $t' \to 0$. This shows that $g$ is uniformly continuous on $\mathbb{R}$. Next, we claim that $g$ is square integrable on $\mathbb{R}$. Indeed, let $t_0 > 0$. We have that
\begin{align*}
\int_{-\infty}^{t_0} | g(t) |^2 dt &= \int_{-\infty}^{t_0} \big|\big \langle v_\varepsilon , e^{ i t H } u \big \rangle \big |^2 dt \\
& = \int_{-\infty}^{0} \big |\big \langle v_\varepsilon , e^{ i ( t + t_0 ) H } u \big \rangle \big |^2 dt \\
 &= \int_{-\infty}^{0} \big | \big \langle e^{ - i t H^* } v_\varepsilon , e^{ i t_0 H } u \big \rangle \big |^2 dt \le c^*_{ v_\varepsilon } \| e^{ i t_0 H } u \| \le c^*_{ v_\varepsilon } m_u.
\end{align*}
Here we have used that $v_\varepsilon \in M( H^* )$, $c^*_{ v_\varepsilon }$ is defined in \eqref{eq:defM(H*)}, and $m_u := \sup_{ t \ge 0 } \| e^{ i t H }Êu \|$, as above. This shows that $g$ is square integrable on $\mathbb{R}$, and therefore, since $g$ is uniformly continuous, $g(t) \to 0$, as $t \to \infty$. Together with \eqref{eq:a1}, this concludes the proof of \eqref{eq:a2}.

To prove \eqref{eq:a3}, it suffices to observe that, if $\mathcal{H}_{ \mathrm{b} }( H ) = \{ 0 \}$ then $\mathcal{H}_{ \mathrm{ac}Ê}( H ) = \mathcal{H}$ by \eqref{eq:HacHbperp}, and hence \eqref{eq:a2} implies that
\begin{equation*}
\big \| K e^{ i t H } u \big \| \to 0 , \quad t \to \infty ,
\end{equation*}
for any finite-rank operator $K$. The result for compact operators then follows by approximation, using that $\| e^{ i t H } u \|$ is uniformly bounded in $t$, because $u \in S( H )$. The same holds for $K e^{ - i t H^* } u$, instead of $K e^{ i t H } u$.
\end{proof}
We are now ready to prove Theorem \ref{prop:range1}.
\begin{proof}[Proof of Theorem \ref{prop:range1}]
We only prove the statement for $W_-( H , H_0 )$. The proof for $W_+( H^* , H_0 )$ is identical. Since $M(H) \subset \mathcal{H}_{ \mathrm{ac} }( H )$, it suffices to prove that
\begin{equation*}
S( H ) \cap \mathcal{H}_{ \mathrm{ac} }( H ) \subset  \mathrm{Ran}( W_-( H , H_0 ) ) \subset S( H ) \cap M( H ).
\end{equation*}
The inclusion $\mathrm{Ran}( W_-( H , H_0 ) ) \subset S( H ) \cap M( H )$ follows from the intertwining property \eqref{eq:inter-1}. Indeed, let $u = W_-( H , H_0 ) v \in \mathrm{Ran}( W_-( H , H_0 ) )$. Then
\begin{equation*}
\big \|Êe^{Êi t H }ÊW_-( H , H_0 ) v \big \| = \big \|ÊW_-( H , H_0 ) e^{Êi t H_0 } v \big \| \le \| v \| ,
\end{equation*}
for all $t \ge 0$, where we have used that $W_-( H , H_0 )$ is a contraction and $e^{ i t H_0 }$ is unitary. Therefore $u \in S( H )$. Furthermore, for all $w \in \mathcal{H}$,
\begin{align*}
\int_0^\infty \big |Ê\langle e^{ - i t H }Êu , w \rangle \big |^2 dt &= \int_0^\infty \big |Ê\langle e^{ - i t H_0 }Êv , W_-( H , H_0 )^* w \rangle \big |^2 dt \\
& \le \mathrm{c}_v \big \|ÊW_-( H , H_0 )^* w \big \|^2 \le \mathrm{c}_v \| w \|^2 ,
\end{align*}
for some constant $\mathrm{c}_v > 0$, where we have used that $H_0$ is a self-adjoint operator with purely absolutely continuous spectrum, by Hypothesis \ref{V-1}, and that $W_-( H , H_0 )^*$ is a contraction by Proposition \ref{prop:existence_W-}. Hence $u \in M( H )$.

To prove that $S( H ) \cap \mathcal{H}_{ \mathrm{ac}Ê}( H ) \subset \mathrm{Ran}( W_-( H , H_0 ) )$, we consider a vector $u \in S( H ) \cap \mathcal{H}_{ \mathrm{ac}Ê}( H )$. We decompose
\begin{equation}
u = e^{ - i t H }Ê\Pi_{ \mathrm{pp} }( H_V ) e^{ i t H }Êu + e^{ - i t H }Ê\Pi_{ \mathrm{ac} }( H_V ) e^{ i t H }Êu , \label{eq:a7}
\end{equation}
where, we recall, $\Pi_{ \mathrm{pp} }( H_V )$ and $\Pi_{ \mathrm{ac} }( H_V )$ are the orthogonal projections onto the pure point and absolutely continuous spectral subspaces of $H_V$.  We claim that
\begin{equation}
\big \|Êe^{ - i t H }Ê\Pi_{ \mathrm{pp} }( H_V ) e^{ i t H }Êu \big \| \to 0 , \quad t \to \infty. \label{eq:a8}
\end{equation}
Indeed, since $e^{ - i t H }$ is a contraction for $t \ge 0$, we have that 
\begin{equation*}
\big \| e^{ - i t H }Ê\Pi_{ \mathrm{pp} }( H_V ) e^{ i t H }Êu \big \| \le \big \|Ê\Pi_{ \mathrm{pp} }( H_V ) e^{ i t H }Êu \big \|.
\end{equation*}
To prove that $\|Ê\Pi_{ \mathrm{pp} }( H_V ) e^{ i t H }Êu \| \to 0$, as $t \to \infty$, since $\mathrm{Ran} ( \Pi_{ \mathrm{pp} }( H_V ) )$ is finite dimensional according to Hypothesis \ref{V-1},  it suffices to verify that 
\begin{equation*}
\big \langle w , e^{ i t H }Êu \big \rangle \to 0,
\end{equation*}
as $t \to \infty$, for all $w \in \mathrm{Ran} ( \Pi_{ \mathrm{pp} }( H_V ) )$. We decompose $w = w_{\mathrm{b}} + w_{ \mathrm{ac} }$, with $w_{ \mathrm{b} } \in \mathcal{H}_{ \mathrm{b} }( H )$ and $w_{ \mathrm{ac}Ê} \in \mathcal{H}_{ \mathrm{b} }( H )^\perp$. Since $w_{ \mathrm{b} } \in \mathcal{H}_{ \mathrm{b} }( H ) = \mathcal{H}_{ \mathrm{b} }( H^* )$ (see Lemma \ref{lm:Hb}), we can write 
\begin{equation*}
\big \langle w_{ \mathrm{b} } , e^{ i t H }Êu \big \rangle = \sum_j \alpha_j e^{ i t \beta_j }Ê\langle w_j , u \rangle,
\end{equation*}
where the sum is finite, $w_j \in \mathcal{H}_{ \mathrm{b} }( H )$ are eigenvectors of $H^*$ corresponding to real eigenvalues $\beta_j$, and $\alpha_j \in \mathbb{C}$. Therefore 
\begin{equation*}
\big \langle w_{ \mathrm{b} } , e^{ i t H }Êu \big \rangle = 0,
\end{equation*}
since $u \in \mathcal{H}_{ \mathrm{ac} }( H )$ and since $\mathcal{H}_{ \mathrm{ac} }( H ) = \mathcal{H}_{ \mathrm{b} }( H )^\perp$ (see \eqref{eq:HacHbperp}). Moreover, since $w_{ \mathrm{ac}Ê} \in  \mathcal{H}_{ \mathrm{ac} }( H )$ and $u \in S( H )$, Lemma \ref{lm:LM1} shows that 
\begin{equation*}
\big \langle w_{ \mathrm{ac}Ê} , e^{ i t H }Êu \big \rangle \to 0,
\end{equation*}
as $t \to \infty$. Hence \eqref{eq:a8} is proven.

Next, we compute the limit of the second term in the right side of \eqref{eq:a7}, $e^{ - i t H }Ê\Pi_{ \mathrm{ac} }( H_V ) e^{ i t H }Êu$, as $t \to \infty$. To this end, we write
\begin{align*}
e^{ - i t H }Ê\Pi_{ \mathrm{ac} }( H_V ) e^{ i t H }Êu = e^{ - i t H } e^{ i t H_V }Ê\Pi_{ \mathrm{ac} }( H_V ) e^{ - i t H_V }Êe^{ i t H }Êu.
\end{align*}
Applying Cook's argument, using \eqref{eq:alpha} and Hypothesis \ref{V1} (exactly as in the proof of Proposition \ref{prop:existence_W-}; see Appendix \ref{app:existencewave}), it follows that 
\begin{equation*}
W_-( H , H_V ) := \underset{t\to \infty }{\slim} \, e^{ - i t H } e^{ i t H_V }Ê\Pi_{ \mathrm{ac} }( H_V ) 
\end{equation*}
exists. Likewise, since $u \in S( H )$, we have that $\int_0^\infty \big \|ÊC e^{ i s H }Êu \big \|^2 dsÊ< \infty$ and hence, by the same argument,
\begin{equation*}
\lim_{t \to \infty} \Pi_{ \mathrm{ac} }( H_V ) e^{ - i t H_V }Êe^{ i t H }Êu =: \Pi_{ \mathrm{ac} }( H_V ) W_- (H_V , H) u 
\end{equation*}
exists. The last two equations, combined with the fact that $e^{ - i t H } e^{ i t H_V }$ is uniformly bounded in $t \ge 0$ since $e^{-itH}$ is a contraction and $e^{itH_V}$ is unitary, imply that
\begin{align}
\lim_{ t \to + \infty } e^{ - i t H } e^{ i t H_V }Ê\Pi_{ \mathrm{ac} }( H_V ) e^{ - i t H_V }Êe^{ i t H }Êu = W_-( H , H_V ) \Pi_{ \mathrm{ac} }( H_V ) W_- (H_V , H) u  . \label{eq:jzpo1}
\end{align}

Equations \eqref{eq:a7}, \eqref{eq:a8} and \eqref{eq:jzpo1} yield
\begin{align*}
Êu = W_-( H , H_V ) \Pi_{ \mathrm{ac} }( H_V ) W_- (H_V , H) u  . 
\end{align*}
Since $W_-( H_V , H_0 )$ is bijective from $\mathcal{H}$ to $\mathcal{H}_{ \mathrm{ac}Ê}( H_V ) = \mathrm{Ran}( \Pi_{ \mathrm{ac}Ê}( H_V ) )$ by Hypothesis \ref{V0}, there exists $\tilde u \in \mathcal{H}$ such that $\Pi_{ \mathrm{ac} }( H_V ) W_- (H_V , H) u = W_-( H_V , H_0 ) \tilde u$. Applying the ``chain rule'' 
\begin{equation*}
u = W_-( H , H_V ) W_-( H_V , H_0 ) \tilde u = W_-( H , H_0 ) \tilde u,
\end{equation*}
we conclude that  $u \in \mathrm{Ran} ( W_-( H , H_0 ) )$. This completes the proof.
\end{proof}

\section{Proof of the main results}\label{sec:main}

In this section, we prove our main theorems. In Section \ref{subsec:spectral-proj}, we establish various properties of the spectral projections $E_H(I)$ (see \eqref{eq:defprojEH(I)}) and we show that $\mathrm{Ran} ( E_H( I ) ) \subset \mathrm{Ran}( W_-( H , H_0 ) )$. Next, in Section \ref{subsec:dissipative} we prove Theorem \ref{thm:Hp-Hd}, in Section \ref{subsec:asymp} we prove Theorem \ref{thm:AC}, Corollary \ref{cor:W+} and Theorem \ref{thm:nonAC},  and in Section \ref{subsec:nonblowup} we prove Theorem \ref{thm:nonblowup}. In Section \ref{subsec:localwave}, we state and prove some related results concerning local wave operators.

\subsection{Spectral projections}\label{subsec:spectral-proj}
Recall from Definition \ref{def:spec-sing} in Section \ref{subsec:hypoth} that $\lambda \in [ 0 , \infty )$ is said to be a regular spectral point of $H$ if there exists a closed interval $K_\lambda$, whose interior contains $\lambda$, such that the limit
\begin{equation*}
C \big ( H - ( \mu - i 0^+ ) \big )^{-1} C^* := \lim_{ \varepsilon \downarrow 0 } C \big ( H - ( \mu - i \varepsilon ) \big )^{-1} C^* 
\end{equation*}
exists uniformly in $\mu \in K_\lambda$ in the norm topology of $\mathcal{L}( \mathcal{H} )$.
A spectral singularity of $H$ is then a point $\lambda \in [ 0 , \infty )$ which is not a regular spectral point. If $\lambda$ is a regular spectral point of $H$ then, taking adjoints, we see that it is also a regular spectral point of $H^*$ in the sense that
\begin{equation*}
C \big ( H^* - ( \mu + i 0^+ ) \big )^{-1} C^* := \lim_{ \varepsilon \downarrow 0 } C \big ( H^* - ( \mu + i \varepsilon ) \big )^{-1} C^* 
\end{equation*}
exists uniformly in $\mu \in K_\lambda$ in the norm topology of $\mathcal{L}( \mathcal{H} )$.

As already mentioned in Section \ref{subsec:ideas}, if $I \subset [ 0 , \infty )$ is a closed interval that does not contain any spectral singularities, it is possible to define a ``spectral projection'' for $H$ in $I$ by mimicking Stone's formula,
\begin{equation}\label{eq:spec-proj}
E_H( I ) := \underset{ \varepsilon \downarrow 0 }{\wlim} \frac{1}{ 2 i \pi } \int_I \big ( ( H - ( \lambda + i \varepsilon ) )^{-1} - ( H - ( \lambda - i \varepsilon ) )^{-1} \big ) d \lambda.
\end{equation}
Proposition \ref{prop:spectr-sing} below shows that $E_H( I )$ is indeed a well-defined (but generally not an orthogonal) projection under our conditions. Such spectral projections for non-self-adjoint operators have been considered in \cite{Du58_01,Sc60_01}; see also \cite{DuSc71_01} for a textbook presentation of the related theory of spectral operators. In \cite{Sc60_01}, a spectral singularity corresponds to an exceptional point $\lambda_0$  outside of which the ``spectral resolution'' $I \mapsto E_H( I )$ determined by \eqref{eq:spec-proj} is countably additive and uniformly bounded. In our context, this is a weaker requirement than that of Definition \ref{def:spec-sing} in Section \ref{subsec:hypoth}.

The proof of the next proposition invokes known arguments. We provide details in Appendix \ref{app:proj}.
\begin{proposition}\label{prop:spectr-sing}
Suppose that Hypotheses \ref{V-1}, \ref{V0} and \ref{V1} hold. Let $I \subset [ 0 , \infty )$ be a closed interval. If $I$ does not contain any spectral singularities, the weak limit in \eqref{eq:spec-proj} exists in $\mathcal{L}( \mathcal{H} )$ and satisfies
\begin{equation}\label{eq:proj-inters-int}
E_H( I_1 ) E_H( I_2 ) = E_H( I_1 \cap I_2 ) , 
\end{equation}
for any closed intervals $I_1, I_2 \subset I$, with the convention that $E_H( \emptyset ) = 0$. In particular, $E_H( I )$ is a projection. Its adjoint is given by
\begin{equation}\label{eq:spec-proj-adj}
E_H(I)^* = E_{H^*}( I ) := \underset{ \varepsilon \downarrow 0 }{\wlim} \frac{1}{ 2 i \pi } \int_I \big ( ( H^* - ( \lambda + i \varepsilon ) )^{-1} - ( H^* - ( \lambda - i \varepsilon ) )^{-1} \big ) d \lambda.
\end{equation}
Moreover, for all $t \in \mathbb{R}$,
\begin{equation}\label{eq:spec-proj-exp}
e^{ i t H }ÊE_H( I ) = \underset{ \varepsilon \downarrow 0 }{\wlim} \frac{1}{ 2 i \pi } \int_I e^{ i t \lambda } \big ( ( H - ( \lambda + i \varepsilon ) )^{-1} - ( H - ( \lambda - i \varepsilon ) )^{-1} \big ) d \lambda ,
\end{equation}
and the family of operators $\{ e^{ i t H } E_H ( I ) \}_{ t \in \mathbb{R}Ê} \subset \mathcal{L}( \mathcal{H} )$ is uniformly bounded in $t \in \mathbb{R}$.
\end{proposition}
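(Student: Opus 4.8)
The plan is to reduce everything to the self-adjoint operator $H_V$ via the second resolvent identity, confining the non-self-adjoint behaviour to a single sandwiched factor controlled by Hypothesis \ref{V1} and by the regularity of $I$. Writing $G(z) := C(H_V-z)^{-1}C^*$ for $z\notin\mathbb{R}$, the identity $(H-z)^{-1} = (H_V-z)^{-1} + i(H_V-z)^{-1}C^*C(H-z)^{-1}$ and its sandwiched version give $C(H-z)^{-1} = (\mathrm{Id}-iG(z))^{-1}C(H_V-z)^{-1}$, whence
\begin{equation*}
(H-z)^{-1} = (H_V-z)^{-1} + i(H_V-z)^{-1}C^*(\mathrm{Id}-iG(z))^{-1}C(H_V-z)^{-1}.
\end{equation*}
One checks that $(\mathrm{Id}-iG(z))^{-1} = \mathrm{Id}+iC(H-z)^{-1}C^*$, so regularity of a point $\lambda$ in the sense of Definition \ref{def:spec-sing} is exactly the statement that $(\mathrm{Id}-iG(\lambda-i\varepsilon))^{-1}$ converges in $\mathcal{L}(\mathcal{H})$, uniformly on a neighbourhood of $\lambda$, as $\varepsilon\downarrow0$.

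For existence and boundedness I insert this factorisation into \eqref{eq:spec-proj}. The leading terms $(H_V-(\lambda\pm i\varepsilon))^{-1}$ contribute, by Stone's formula for the self-adjoint $H_V$, the spectral projection $E_{H_V}(I)$; since $I\subset[0,\infty)$ contains no eigenvalues of $H_V$ (all strictly negative by Hypothesis \ref{V-1}) there are no boundary atoms. For the correction term I pair with $u,v\in\mathcal{H}$ and move one factor of $C(H_V-z)^{-1}$ onto $v$, producing integrands $\langle C(H_V-\bar z)^{-1}v,(\mathrm{Id}-iG(z))^{-1}C(H_V-z)^{-1}u\rangle$. The two outer factors are bounded in $L^2(I,d\lambda;\mathcal{H})$ uniformly in $\varepsilon$ by Hypothesis \ref{V1} (equivalently \eqref{eq:equivV1}), the negative eigenvalues of $H_V$ ensuring no point-spectral singularity on $I$. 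The middle factor is treated separately on the two sides: on the lower side $z=\lambda-i\varepsilon$ it converges uniformly on $I$ because $I$ contains no spectral singularities (a compactness argument reduces $I$ to finitely many neighbourhoods $K_\lambda$ of Definition \ref{def:spec-sing}), while on the upper side $z=\lambda+i\varepsilon$ the bound is automatic from dissipativity: since $\mathrm{Im}\,\langle w,G(\lambda+i\varepsilon)w\rangle\ge0$, one gets $\|(\mathrm{Id}-iG(\lambda+i\varepsilon))w\|\ge\|w\|$ and hence $\|(\mathrm{Id}-iG(\lambda+i\varepsilon))^{-1}\|\le1$, with a.e.\ convergence following from the boundary values of $G$ supplied by the self-adjoint theory. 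A dominated-convergence argument then yields the weak limit, and Cauchy--Schwarz together with \eqref{eq:equivV1} gives $E_H(I)\in\mathcal{L}(\mathcal{H})$.

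The adjoint formula \eqref{eq:spec-proj-adj} is immediate: weak limits commute with adjoints, $((H-z)^{-1})^*=(H^*-\bar z)^{-1}$, and the conjugation $\overline{\lambda\pm i\varepsilon}=\lambda\mp i\varepsilon$ interchanges the two boundary values, turning the defining integral for $E_H(I)^*$ into that for $E_{H^*}(I)$, which exists by the same argument since regular points of $H$ are regular for $H^*$. For the exponential formula \eqref{eq:spec-proj-exp} and the uniform bound I use the elementary identity $e^{itH}(H-z)^{-1}=e^{itz}(H-z)^{-1}+ie^{itz}\int_0^t e^{is(H-z)}\,ds$, whose second summand is entire in $z$. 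Moving the bounded operator $e^{itH}$ inside the weak limit and subtracting $e^{it\lambda}$ times the integrand, the entire-in-$z$ pieces cancel in the limit $\varepsilon\downarrow0$, while the remaining discrepancy carries a prefactor of order $\sinh(t\varepsilon)$ against resolvents whose $\lambda$-integrals grow at most logarithmically (by the previous paragraph's $L^2$-control), and hence vanishes. This gives \eqref{eq:spec-proj-exp}; and since $|e^{it\lambda}|=1$ leaves the $L^2$ norms unchanged, the bound on $\|e^{itH}E_H(I)\|$ is identical to that on $\|E_H(I)\|$, uniformly in $t\in\mathbb{R}$.

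I expect the multiplicativity \eqref{eq:proj-inters-int} to be the main obstacle, the projection property then following by setting $I_1=I_2$. Here I would compute $E_H(I_1)E_H(I_2)$ from the product of the two integral representations and collapse the product of resolvent differences by the first resolvent identity $(H-z_1)^{-1}(H-z_2)^{-1}=(z_1-z_2)^{-1}\big((H-z_1)^{-1}-(H-z_2)^{-1}\big)$, exactly as in the self-adjoint Stone-formula calculation. The delicate issue is the order and interchange of the two limits $\varepsilon\downarrow0$ and the mixing of boundary values from opposite sides: the factors $(z_1-z_2)^{-1}$ act as approximate identities concentrated on $I_1\cap I_2$, and justifying the passage to the limit requires precisely the uniform boundedness established above together with the $L^2$-control from Hypothesis \ref{V1}, so that Fubini and dominated convergence apply. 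I anticipate that this double-limit bookkeeping, rather than any new analytic input, is the technically heaviest part of the proof.
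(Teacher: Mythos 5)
Your overall framework is sound, and two of your arguments are genuinely different from the paper's: the contraction bound $\|(\mathrm{Id}-iG(z))^{-1}\|\le 1$ for $\mathrm{Im}\,z>0$ (which is indeed the mechanism behind the remark that the boundary values from above are ``automatically'' controlled), and your proof of \eqref{eq:spec-proj-exp} via the identity $e^{itH}(H-z)^{-1}=e^{itz}(H-z)^{-1}+ie^{itz}\int_0^te^{is(H-z)}ds$ with $O(\varepsilon)$ prefactors beaten against logarithmically growing $\lambda$-integrals; the paper instead shows that the $t$-derivative of the approximating family vanishes, using the bound $\int_I\|(H-(\lambda\pm i\varepsilon))^{-1}u\|\,d\lambda\le \mathrm{c}_I\varepsilon^{-1/2}\|u\|$. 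Both routes work. However, your existence argument has a gap on the upper side: you claim that $(\mathrm{Id}-iG(\lambda+i\varepsilon))^{-1}$ converges a.e.\ because ``boundary values of $G$ are supplied by the self-adjoint theory''. Herglotz theory gives a.e.\ \emph{weak} boundary values of $G$, but weak convergence of $A_\varepsilon$ does not give convergence of $A_\varepsilon^{-1}$ (the identity $A_\varepsilon^{-1}-A^{-1}=A_\varepsilon^{-1}(A-A_\varepsilon)A^{-1}$ requires \emph{strong} convergence of $A_\varepsilon$), and Hypothesis \ref{V1} supplies only $L^2$-in-$\lambda$ control, not a pointwise bound on $\|G(\lambda+i\varepsilon)\|$ with which one could upgrade convergence on a dense set. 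The repair is contained in your own factorisation: since $(\mathrm{Id}-iG(z))^{-1}C(H_V-z)^{-1}=C(H-z)^{-1}$, no boundary-value theory for the middle factor is needed at all; after pairing, the relevant factor is $C(H^*-(\lambda-i\varepsilon))^{-1}u$ (or $C(H-(\lambda+i\varepsilon))^{-1}u$), which converges in $L^2(\mathbb{R};\mathcal{H})$ by Parseval, the dissipativity bound \eqref{eq:alpha2}, and monotone convergence --- this is exactly how the paper handles the upper side.

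The more serious gap is \eqref{eq:proj-inters-int}, which is the heart of the proposition and which you only sketch. Your plan (product of the two stationary representations, first resolvent identity, ``approximate identities'') is missing the mechanism that makes the iterated limit tractable. The resolvent identity produces four kernels $(z_1^\pm-z_2^\pm)^{-1}$; individually these are Cauchy kernels, not approximate identities, and the diagonal combinations $(\lambda_1-\lambda_2\pm i(\varepsilon_1-\varepsilon_2))^{-1}$ are genuinely singular in a joint limit. What must be shown is: (i) with $\varepsilon_1>0$ fixed, the terms carrying $(H-z_1^\pm)^{-1}$ disappear as $\varepsilon_2\downarrow 0$ because their scalar kernels are $O(\varepsilon_2)$ off the real axis; (ii) the surviving terms, after the $\lambda_1$-integration, carry the Poisson kernel of $I_1$, and passing to the limit $\varepsilon_1\downarrow0$ against the inner limit requires representing $\phi\mapsto\lim_{\varepsilon_2}\int_{I_2}\phi(\lambda_2)\langle u,[(H-z_2^+)^{-1}-(H-z_2^-)^{-1}]v\rangle\,d\lambda_2$ by a complex measure with no atoms (in particular none at $\partial I_1$, where the Poisson kernel tends to $1/2$, not to the indicator); this needs uniform $L^1$-bounds extracted from the existence proof. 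None of this is in your proposal. The paper circumvents the entire double-limit problem with a trick you should note, because it is the real point of \eqref{eq:spec-proj-exp} and of the uniform boundedness of $\{e^{itH}E_H(I_2)\}_{t\in\mathbb{R}}$: that boundedness legitimizes the Laplace representation \eqref{eq:Laplace} of $(H-(\lambda-i\varepsilon))^{-1}$ on $\mathrm{Ran}(E_H(I_2))$, i.e.\ in the half-plane where a dissipative operator's resolvent has no such representation on all of $\mathcal{H}$; substituting \eqref{eq:spec-proj-exp} and doing the $t$- and $\lambda_1$-integrations first collapses the product directly to $\int_{I_2}P_{\varepsilon_1}(\lambda_2)\,[\cdots]\,d\lambda_2$ with $P_{\varepsilon_1}$ the Poisson kernel of $I_1$, whence the indicator $\mathds{1}_{I_1\cap I_2}$ and \eqref{eq:proj-inters-int}. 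Either adopt that route (everything it needs, you have already proved) or carry out (i)--(ii) explicitly; as written, the projection property is asserted, not proven.
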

In the statement of Proposition \ref{prop:spectr-sing}, we have only listed properties that will be used in the proof of our main results. For further general properties concerning the spectral projections $E_H( I )$, we refer to \cite{Du58_01} and \cite{DuSc71_01}. The main property that will be used is given in the next theorem.
\begin{theorem}\label{prop:proj-ran}
Suppose that Hypotheses \ref{V-1}, \ref{V0} and \ref{V1} hold. Let $I \subset [ 0 , \infty )$ be a closed interval containing no spectral singularities of $H$. Then
\begin{equation*}
\mathrm{Ran} ( E_H( I ) ) \subset \mathrm{Ran}( W_-( H , H_0 ) ), \quad \mathrm{Ran} ( E_{H^*}( I ) ) \subset \mathrm{Ran}( W_-( H^* , H_0 ) ).
\end{equation*}
\end{theorem}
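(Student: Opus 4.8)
The plan is to reduce everything to the characterisation of $\mathrm{Ran}(W_-(H,H_0))$ obtained in Theorem \ref{prop:range1}, namely $\mathrm{Ran}(W_-(H,H_0)) = S(H) \cap \mathcal{H}_{\mathrm{ac}}(H)$. Accordingly, I would fix a closed interval $I \subset [0,\infty)$ containing no spectral singularities, take $u \in \mathrm{Ran}(E_H(I))$ --- so that $E_H(I)u = u$, since $E_H(I)$ is a projection by Proposition \ref{prop:spectr-sing} --- and then establish separately that $u \in S(H)$ and that $u \in \mathcal{H}_{\mathrm{ac}}(H)$. Once both memberships are in hand, Theorem \ref{prop:range1} immediately places $u$ in $\mathrm{Ran}(W_-(H,H_0))$.

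The membership $u \in S(H)$ is essentially free. Proposition \ref{prop:spectr-sing} guarantees that $\{e^{itH}E_H(I)\}_{t\in\mathbb{R}}$ is uniformly bounded in $\mathcal{L}(\mathcal{H})$, whence $\sup_{t\ge 0}\|e^{itH}u\| = \sup_{t\ge 0}\|e^{itH}E_H(I)u\| < \infty$, and so $u \in S(H)$ by \eqref{eq:S(H)}.

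The substantive part is to show $u \in \mathcal{H}_{\mathrm{ac}}(H)$, which by \eqref{eq:HacHbperp} is equivalent to $u \perp \mathcal{H}_{\mathrm{b}}(H)$. Since $\mathcal{H}_{\mathrm{b}}(H)$ is spanned by eigenvectors, it suffices to check that $\langle w, u\rangle = 0$ for every $w$ with $H^* w = \mu w$, $\mu \in \mathbb{R}$. Here Lemma \ref{lm:Hb} supplies both the equality $\mathcal{H}_{\mathrm{b}}(H)=\mathcal{H}_{\mathrm{b}}(H^*)$ and the inclusion $\mathcal{H}_{\mathrm{b}}(H)\subset\mathcal{H}_{\mathrm{pp}}(H_V)$, so that $\mu$ is an eigenvalue of $H_V$ and hence $\mu < 0$ by Hypothesis \ref{V-1}. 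For such $w$ and any non-real $z$ one computes, taking adjoints and using $(H^*-\bar z)^{-1}w = (\mu-\bar z)^{-1}w$,
\[
\langle w, (H-z)^{-1}u\rangle = \langle (H^*-\bar z)^{-1}w, u\rangle = (\mu - z)^{-1}\langle w, u\rangle .
\]
Inserting this into the definition \eqref{eq:spec-proj} of $E_H(I)$ (bringing the bounded functional $\langle w,\cdot\rangle$ inside the integral and the weak limit) and using $u = E_H(I)u$ gives
\[
\langle w, u\rangle = \Big(\lim_{\varepsilon\downarrow 0}\frac{1}{\pi}\int_I \frac{\varepsilon}{(\mu-\lambda)^2+\varepsilon^2}\,d\lambda\Big)\langle w, u\rangle .
\]
Because $\mu < 0$ while $I \subset [0,\infty)$, the integrand stays bounded away from its singularity and this Poisson-type integral tends to $0$; hence $\langle w, u\rangle = 0$, giving $u \in \mathcal{H}_{\mathrm{b}}(H)^\perp = \mathcal{H}_{\mathrm{ac}}(H)$. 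Combining the two memberships with Theorem \ref{prop:range1} yields the first inclusion.

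The second inclusion I would obtain by running the identical argument with $H$ replaced by $H^*$, using $E_{H^*}(I)=E_H(I)^*$ from Proposition \ref{prop:spectr-sing}, the second identity of Theorem \ref{prop:range1}, and the fact that the real eigenvalues of $H^*$ again coincide with the negative eigenvalues of $H_V$. I expect the only genuinely delicate point to be the verification that $E_H(I)$ over an interval $I\subset[0,\infty)$ annihilates the bound states: this is precisely where the strict negativity of the real eigenvalues (Hypothesis \ref{V-1}) together with the identification $\mathcal{H}_{\mathrm{ac}}(H)=\mathcal{H}_{\mathrm{b}}(H)^\perp$ is indispensable, for a bound-state energy lying inside $I$ would survive in $E_H(I)$ and obstruct the inclusion into $\mathcal{H}_{\mathrm{ac}}(H)$. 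The remaining steps are direct consequences of the already-established Proposition \ref{prop:spectr-sing} and Theorem \ref{prop:range1}.
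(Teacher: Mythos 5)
Your proposal is correct and follows essentially the same route as the paper's own proof: reduce via Theorem \ref{prop:range1} to showing $\mathrm{Ran}(E_H(I)) \subset S(H) \cap \mathcal{H}_{\mathrm{ac}}(H)$, obtain membership in $S(H)$ from the uniform boundedness of $\{e^{itH}E_H(I)\}_{t}$ in Proposition \ref{prop:spectr-sing}, and obtain membership in $\mathcal{H}_{\mathrm{ac}}(H)=\mathcal{H}_{\mathrm{b}}(H^*)^\perp$ by pairing against eigenvectors of $H^*$ with (strictly negative) real eigenvalues and noting that the resulting Poisson-type integral over $I\subset[0,\infty)$ vanishes as $\varepsilon\downarrow 0$. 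The paper writes this last computation as a difference of scalar resolvents rather than combining it into the Poisson kernel, but the argument is the same.
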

\begin{proof}
We prove the first inclusion; the proof of the second one is identical. According to Theorem \ref{prop:range1}, it suffices to prove that $\mathrm{Ran}( E_H ( I ) ) \subset S( H ) \cap \mathcal{H}_{ \mathrm{ac} }( H )$, where $S( H )$ is defined in \eqref{eq:S(H)} and $\mathcal{H}_{ \mathrm{ac}Ê}( H )$ is the absolutely continuous spectral subspace of $H$ defined in Section \ref{subsec:absolute}.

Let $u \in \mathrm{Ran}( E_H ( I ) )$. We first show that $u \in \mathcal{H}_{ \mathrm{ac} }( H ) = \mathcal{H}_{ \mathrm{b} }( H )^\perp = \mathcal{H}_{ \mathrm{b} }( H^* )^\perp$ (see \eqref{eq:HacHbperp} and Lemma \ref{lm:Hb}). By Lemma \ref{lm:Hb} and Hypothesis \ref{V-1}, we know that $H^*$ has at most finitely many real eigenvalues, and that all these eigenvalues are strictly negative.  If $v$ is an eigenstate associated to an eigenvalue $\lambda_0 < 0$ of $H^*$, we have that
\begin{align*}
\langle v , E_H( I ) u \rangle & = \lim_{ \varepsilon \downarrow 0 } \frac{1}{ 2 i \pi } \int_I \big \langle v , \big ( ( H - ( \lambda + i \varepsilon ) )^{-1} - ( H - ( \lambda - i \varepsilon ) )^{-1} \big ) u \big \rangle d \lambda \\
&= \lim_{ \varepsilon \downarrow 0 } \frac{1}{ 2 i \pi } \int_I \big \langle v , \big ( ( \lambda_0 - ( \lambda + i \varepsilon ) )^{-1} - ( \lambda_0 - ( \lambda - i \varepsilon ) )^{-1} \big ) u \big \rangle d \lambda = 0 ,
\end{align*}
because $\lambda_0 < 0$ and $I \subset [ 0 , \infty )$. This implies that $u \in \mathcal{H}_{ \mathrm{b} }( H^* )^\perp$.

The fact that $u \in S( H )$ follows from \eqref{eq:spec-proj-exp}.
\end{proof}

\subsection{The ``dissipative space''}\label{subsec:dissipative}

In this section we prove Theorem \ref{thm:Hp-Hd}. We recall that the vector spaces $\mathcal{H}_{ \mathrm{p} }( H )$ and $\mathcal{H}_{ \mathrm{d}Ê}( H )$ are defined by $\mathcal{H}_{ \mathrm{p} }( H ) = \mathrm{Span} \{ u \in \mathrm{Ran}( \Pi_\lambda ) , \lambda \in \sigma( H ) , \mathrm{Im} \, \lambda < 0  \}$, where $\Pi_\lambda$ denotes the Riesz projection \eqref{eq:defPilambda} associated to an eigenvalue $\lambda$, and $\mathcal{H}_{ \mathrm{d} }( H ) = \{ u \in \mathcal{H} ,  \| e^{ - i t H }Êu \|Ê\to 0 , t \to \infty \}$. The corresponding spaces for $H^*$ are defined similarly (see \eqref{eq:defH^*p}--\eqref{eq:defH*d}). Using the results obtained in the preceding section, we prove the following theorem.

\begin{theorem}\label{prop:proj-ran2}
Suppose that Hypotheses \ref{V-1}--\ref{V3} hold. Then
\begin{equation*}
\mathcal{H}_{ \mathrm{d} }( H ) = \mathcal{H}_{ \mathrm{p} }( H ), \quad \mathcal{H}_{ \mathrm{d} }( H^* ) = \mathcal{H}_{ \mathrm{p} }( H^* ).
\end{equation*}
\end{theorem}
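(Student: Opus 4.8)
The plan is to prove $\mathcal{H}_{\mathrm d}(H)=\mathcal{H}_{\mathrm p}(H)$, the identity for $H^*$ following by the symmetric argument (interchanging $H\leftrightarrow H^*$ and the lower and upper half-planes). Since $\mathcal{H}_{\mathrm p}(H)\subseteq\mathcal{H}_{\mathrm d}(H)$ is already known, and since both $\mathcal{H}_{\mathrm b}(H)\oplus\mathcal{H}_{\mathrm p}(H)$ and $\mathcal{H}_{\mathrm b}(H)\oplus\mathcal{H}_{\mathrm d}(H)$ are \emph{orthogonal} direct sums (because $\mathcal{H}_{\mathrm d}(H)\subset\mathcal{H}_{\mathrm b}(H)^\perp$), it suffices to establish the two inclusions \eqref{eq:llaa1} and \eqref{eq:llaa2}, namely
\begin{equation*}
\mathcal{H}_{\mathrm b}(H)\oplus\mathcal{H}_{\mathrm d}(H)\ \subseteq\ \bigcap_{I}\mathrm{Ker}\big(E_H(I)\big)\ \subseteq\ \mathcal{H}_{\mathrm b}(H)\oplus\mathcal{H}_{\mathrm p}(H),
\end{equation*}
where $I$ runs over the closed subintervals of $[0,\infty)$ containing no spectral singularity of $H$. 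Combined with the trivial inclusion $\mathcal{H}_{\mathrm b}(H)\oplus\mathcal{H}_{\mathrm p}(H)\subseteq\mathcal{H}_{\mathrm b}(H)\oplus\mathcal{H}_{\mathrm d}(H)$, these force equality throughout, and cancelling the common orthogonal summand $\mathcal{H}_{\mathrm b}(H)$ yields $\mathcal{H}_{\mathrm d}(H)=\mathcal{H}_{\mathrm p}(H)$.

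The first inclusion is soft. Proposition~\ref{prop:randense_W-} gives $\overline{\mathrm{Ran}(W_+(H^*,H_0))}=(\mathcal{H}_{\mathrm b}(H)\oplus\mathcal{H}_{\mathrm d}(H))^\perp$; taking orthogonal complements, and noting that $\mathcal{H}_{\mathrm b}(H)\oplus\mathcal{H}_{\mathrm d}(H)$ is closed (a finite-dimensional space orthogonal to a closed one), I obtain $\mathcal{H}_{\mathrm b}(H)\oplus\mathcal{H}_{\mathrm d}(H)=\mathrm{Ran}(W_+(H^*,H_0))^\perp$. For each admissible $I$, Theorem~\ref{prop:proj-ran} gives $\mathrm{Ran}(E_{H^*}(I))\subseteq\mathrm{Ran}(W_+(H^*,H_0))$, so, passing to complements and using $\mathrm{Ran}(E_{H^*}(I))^\perp=\mathrm{Ker}(E_{H^*}(I)^*)=\mathrm{Ker}(E_H(I))$ (the adjoint relation \eqref{eq:spec-proj-adj} of Proposition~\ref{prop:spectr-sing}), I get $\mathrm{Ran}(W_+(H^*,H_0))^\perp\subseteq\mathrm{Ker}(E_H(I))$. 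Intersecting over all admissible $I$ proves the first inclusion.

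The second inclusion is the crux. Fix $u\in\bigcap_I\mathrm{Ker}(E_H(I))$ and apply the spectral decomposition formula \eqref{eq:spectral_decomp} to $u$; subtracting the pure-point term and writing $\mu_j=(\lambda_j-i)^{-1}$ and $R(z)=(H-z)^{-1}$, this reads
\begin{equation*}
\prod_{j=1}^n\big((H-i)^{-1}-\mu_j\big)^{\nu_j}(\mathrm{Id}-\Pi_{\mathrm{pp}})u=\underset{\varepsilon\downarrow0}{\wlim}\frac{1}{2i\pi}\int_0^\infty\prod_{j=1}^n\big((\lambda-i)^{-1}-\mu_j\big)^{\nu_j}\big(R(\lambda+i\varepsilon)-R(\lambda-i\varepsilon)\big)u\,d\lambda.
\end{equation*}
I claim the right-hand side vanishes. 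Away from the spectral singularities the integrand is, up to the bounded continuous scalar weight $\prod_j((\lambda-i)^{-1}-\mu_j)^{\nu_j}$, precisely the Stone integrand defining $E_H$; since $E_H(I)u=0$ on every closed $I$ avoiding the $\lambda_j$, approximating the weight by simple functions shows that the contribution of $[0,\infty)\setminus\bigcup_j N_j$ vanishes for every choice of neighborhoods $N_j\ni\lambda_j$. On each $N_j$ the weight vanishes to order $\nu_j$ at $\lambda_j$, which by Hypothesis~\ref{V3} is exactly the order needed to compensate the singular behaviour of the boundary values of $R(\lambda-i\varepsilon)$ there, while the uniform bound \eqref{eq:sup>m} controls the tail at infinity; letting $N_j\downarrow\{\lambda_j\}$ then kills the remaining contribution. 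Hence $\prod_{j=1}^n((H-i)^{-1}-\mu_j)^{\nu_j}(\mathrm{Id}-\Pi_{\mathrm{pp}})u=0$.

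It remains to deduce $(\mathrm{Id}-\Pi_{\mathrm{pp}})u=0$. Set $w:=(\mathrm{Id}-\Pi_{\mathrm{pp}})u\in\mathrm{Ker}(\Pi_{\mathrm{pp}})$ and let $\tilde H$ be the restriction of $H$ to the invariant subspace $\mathrm{Ker}(\Pi_{\mathrm{pp}})$; since $\Pi_{\mathrm{pp}}$ is the sum of all Riesz projections of $H$, the operator $\tilde H$ has no eigenvalues. In particular no $\mu_j$ is an eigenvalue of $(\tilde H-i)^{-1}$, as this would force $\lambda_j$ to be an eigenvalue of $\tilde H$, whereas $\lambda_j\ge0$ is not even an eigenvalue of $H$ by Lemma~\ref{lm:Hb} and Hypothesis~\ref{V-1}. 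Thus each commuting factor $((\tilde H-i)^{-1}-\mu_j)$ is injective, and so is their product; therefore $w=0$. Consequently $u=\Pi_{\mathrm{pp}}u\in\mathrm{Ran}(\Pi_{\mathrm{pp}})=\mathcal{H}_{\mathrm b}(H)\oplus\mathcal{H}_{\mathrm p}(H)$ by \eqref{eq:rankerpipp}, which is the second inclusion. The main obstacle is the vanishing of the weighted Stone integral in the previous paragraph — and, upstream of it, the derivation of \eqref{eq:spectral_decomp} itself — since this is where Hypotheses~\ref{V2} and \ref{V3} (finiteness of the spectrum off the real axis and finite order of each singularity) must be invoked to match the order of vanishing of the polynomial weight against the strength of the resolvent singularities at the $\lambda_j$.
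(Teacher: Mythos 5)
Your proof follows the same two-inclusion strategy as the paper's (it is precisely the roadmap of Section \ref{subsec:ideas}), and two of its three steps are sound: the inclusion $\mathcal{H}_{\mathrm b}(H)\oplus\mathcal{H}_{\mathrm d}(H)\subseteq\bigcap_I\mathrm{Ker}(E_H(I))$ is exactly the paper's argument via Proposition \ref{prop:randense_W-}, Theorem \ref{prop:proj-ran} and the adjoint relation \eqref{eq:spec-proj-adj}, and your closing injectivity argument (restricting to $\mathrm{Ker}(\Pi_{\mathrm{pp}})$ and using that no $\lambda_j\ge 0$ can be an eigenvalue of $H$, by Lemma \ref{lm:Hb} and Hypothesis \ref{V-1}) is a correct, slightly streamlined variant of the paper's induction on the factors $(R-\mu_j)^{\nu_j}$. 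The genuine gap is the middle step: you invoke the spectral decomposition formula \eqref{eq:spectral_decomp} as if it were an available result. It is not: in the paper it appears in Section \ref{subsec:ideas} only as an announcement of what will be proved, and its derivation (in the composed form \eqref{eq:cauchy-3}) is the technical core of the proof of this very theorem --- the Riesz-projection identity \eqref{eq:spec-non-self} for $R=(H-i)^{-1}$ on the contour $\Gamma_\varepsilon$ of \eqref{eq:def-Gammaeps}, the regularization by $\mu^4\prod_j(\mu-\mu_j)^{\nu_j}$, Cauchy's formula \eqref{eq:cauchy}, and the limit \eqref{eq:weak-Gamma} established in Appendix \ref{app:proj}. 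Citing \eqref{eq:spectral_decomp} here is therefore circular.

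The same problem undermines your claim that the weighted Stone integral annihilates $u\in\bigcap_I\mathrm{Ker}(E_H(I))$. You assert that near $\lambda_j$ the weight vanishes ``to exactly the order needed'' by Hypothesis \ref{V3}, but Hypothesis \ref{V3} bounds only the sandwiched resolvent $C(H-(\mu-i\varepsilon))^{-1}C^*$, whereas the Stone integrand contains the full resolvents $(H-(\lambda\pm i\varepsilon))^{-1}$; likewise, at infinity \eqref{eq:sup>m} controls the sandwiched resolvent while the full resolvent is only of size $\mathcal{O}(\varepsilon^{-2})$ there (cf.\ \eqref{eq:sup>m_2}), and your weight $\prod_j((\lambda-i)^{-1}-\mu_j)^{\nu_j}$ does not even decay as $\lambda\to\infty$ --- this is exactly why the paper carries the extra regularizing factor $R^4$, i.e.\ $(\lambda-i)^{-4}$, throughout its argument. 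Converting bounds on $C(H-z)^{-1}C^*$ into the uniform weak bound \eqref{eq:unifbound_tildeE}, which is what licenses your shrinking-neighborhood/dominated-convergence step, requires the iterated resolvent identities \eqref{eq:resolv1}--\eqref{eq:resolv2} and \eqref{eq:twice-resolv} combined with the Kato-smoothness Hypothesis \ref{V1} and the dissipativity estimate \eqref{eq:alpha2}; this is carried out in Appendix \ref{app:proj} and constitutes the bulk of the work. You flag this obstacle yourself in your final sentence, but identifying the hard step is not the same as carrying it out: as written, your proof assumes its main ingredient.
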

\begin{proof}
We prove that $\mathcal{H}_{ \mathrm{d} }( H ) = \mathcal{H}_{ \mathrm{p} }( H )$, the other equality follows in the same way. Applying Proposition \ref{prop:randense_W-} and Theorem \ref{prop:proj-ran}, we deduce that
\begin{align}
\mathcal{H}_{Ê\mathrm{b} }( H ) \oplus \mathcal{H}_{Ê\mathrm{d} }( H ) = \mathrm{Ran} ( W_+( H^* , H_0 ) )^\perp \subset \bigcap_{I \subset [ 0 , \infty ) } \mathrm{Ran}( E_{H^*}( I ) )^\perp , \label{eq:hzp1}
\end{align}
where $E_{H^*}( I )$ is defined in \eqref{eq:spec-proj-adj} and where the intersection runs over all closed intervals $I \subset [ 0 , \infty )$ with the property that $I$ does not contain any spectral singularities of $H$. By Proposition \ref{prop:spectr-sing}, $\mathrm{Ran}( E_{H^*}( I ) )^\perp = \mathrm{Ker}( E_{H}( I ) )$. Let
\begin{equation}
\mathcal{K} := \bigcap_{I \subset [ 0 , \infty )} \mathrm{Ran}( E_{H^*}( I ) )^\perp =  \bigcap_{I \subset [ 0 , \infty )} \mathrm{Ker}( E_{H}( I ) ) , \label{eq:def_K}
\end{equation}
where, again, it is understood implicitly that the intersection runs over all closed intervals $I \subset [ 0 , \infty )$ with the property that $I$ does not contain any spectral singularities of $H$. The set $\mathcal{K}$ is a closed vector space and we claim that 
\begin{equation}
\mathcal{K} \subset \mathcal{H}_{Ê\mathrm{b} }( H ) \oplus \mathcal{H}_{ \mathrm{p} }( H ). \label{eq:incl_K}
\end{equation}
Some of the arguments used to prove this claim can be found in \cite{Sc60_01}. It is convenient to work with a bounded operator instead of the unbounded $H$. Let
\begin{equation*}
R := ( H - i )^{-1}.
\end{equation*}
Obviously, $R \in \mathcal{L}( \mathcal{H} )$ since the spectrum of $H$ is contained in $\{ z \in \mathbb{C} , \mathrm{Im}( z ) \le 0 \}$, and it follows from Proposition \ref{prop:spectr-sing} that $R$ commutes with $E_H( I )$, which yields that $ R \,  \mathcal{K} \subset \mathcal{K}$. We will show that $\mathcal{K}$ is contained in the vector space spanned by the generalized eigenstates of $R$. Since the generalized eigenstates of $R$ coincide with the generalized eigenstates of $H$, this will prove that $\mathcal{K} \subset \mathcal{H}_{Ê\mathrm{b} }( H ) \oplus \mathcal{H}_{ \mathrm{p} }( H )$.

For all $\mu \in \mathbb{C}\setminus \{ 0 \}$ such that $\mu^{-1} + i \notin \sigma( H )$, the resolvent equation implies that
\begin{equation}\label{eq:resolv_R*}
( R - \mu )^{-1} = - \frac{1}{ \mu } \Big ( \mathrm{Id} + \frac{1}{ \mu } \big ( H - ( \frac{1}{ \mu } + i ) \big )^{-1} \Big ).
\end{equation}
Let $\Sigma_{ \mathrm{e}Ê} := \{ \mu \in \mathbb{C} \setminus \{ 0 \} , \mu^{-1}Ê+ i \in [ 0 , \infty ) \} \cup \{Ê0 \} = \{Ê( \lambda - i )^{-1} , \lambda \in [ 0 , \infty ) \} \cup \{ 0 \}$. One verifies that
\begin{equation*}
\Sigma_{ \mathrm{e} } = \mathcal{C} \cap \{ z \in \mathbb{C} , \, \mathrm{Re}(z) \ge 0 \} ,
\end{equation*}
where $\mathcal{C} = \mathcal{C} ( \frac{i}{2} , \frac{1}{2} )$ denotes the circle centered at $i/2$ and of radius $1/2$. Moreover, by \eqref{eq:resolv_R*}, $\Sigma_{ \mathrm{e}Ê}$ is contained in the essential spectrum of $R$. In fact, by the  spectral mapping theorem (see, e.g., Lemma 2 in Section XIII.4 of \cite{ReedSimon4}), we have that
\begin{align*}
\sigma( R ) &= \Sigma_{ \mathrm{e}Ê} \cup \big \{ \mu \in \mathbb{C} \setminus \{ 0 \} , \, \mu^{-1}Ê+ i \text{ is an eigenvalue of }  H \big \} \\
&= \Sigma_{ \mathrm{e}Ê} \cup \big \{ ( \lambda - i )^{-1}Ê, \, \lambda \text{ is an eigenvalue of }  H \big \}.
\end{align*}
We define a contour whose interior contains $\Sigma_{Ê\mathrm{e} }$ but no other points of the spectrum of $R$. Let $-\mathrm{e}_0 < 0$ be the largest real eigenvalue of $H$. For $\varepsilon > 0$ small enough, let 
\begin{equation*}
\Gamma_\varepsilon := \Gamma_{1,\varepsilon} \cup \Gamma_{2,\varepsilon} \cup \Gamma_{3,\varepsilon} \cup \Gamma_{4,\varepsilon}
\end{equation*}
be the curve oriented counterclockwise, defined by
\begin{align}
& \Gamma_{1,\varepsilon} := \Big \{ ( \lambda - i - i \varepsilon )^{-1} , \, - \frac12 \mathrm{e}_0 \le \lambda \le \gamma_1( \varepsilon )  \Big \} , \quad \gamma_1( \varepsilon ) := ( \varepsilon^{-2}Ê- ( 1 + \varepsilon )^2 )^{\frac12} , \notag \\
& \Gamma_{2,\varepsilon} = \Big \{ \varepsilon e^{ i \theta } , \, \theta_1( \varepsilon ) \le \theta \le \theta_3( \varepsilon ) \Big \} , \quad \varepsilon e^{ i \theta_1 (\varepsilon ) } = ( \gamma_1( \varepsilon ) - i - i \varepsilon )^{-1} , \quad \theta_1( \varepsilon )  \in ( 0 , \frac{\pi}{2 } ) , \notag \\
& \phantom{ \Gamma_{2,\varepsilon} = \Big \{ \varepsilon e^{ i \theta } , \, \theta_1( \varepsilon ) \le \theta \le \theta_3( \varepsilon ) \Big \} , \quad } \,  \varepsilon e^{ i \theta_3 (\varepsilon ) } = ( \gamma_3( \varepsilon ) - i + i \varepsilon )^{-1}, \quad \theta_3 ( \varepsilon ) \in ( \frac{ 3 \pi }{ 2 } , 2 \pi ) ,  \notag \\
& \Gamma_{3,\varepsilon} = \Big \{ ( \lambda - i + i \varepsilon )^{-1} , \, - \frac12 \mathrm{e}_0 \le \lambda \le \gamma_3 ( \varepsilon )  \Big \} , \quad \gamma_3( \varepsilon ) := ( \varepsilon^{-2}Ê- ( 1 - \varepsilon )^2 )^{\frac12} , \notag \\
& \Gamma_{4,\varepsilon} = \Big \{ ( - \frac12 \mathrm{e}_0 - i + i x )^{-1}  , \,  - \varepsilon \le x \le \varepsilon \Big \}. \label{eq:def-Gammaeps}
\end{align}
See Figure \ref{fig2}.

\begin{figure}[H] 
\begin{center}
\begin{tikzpicture}[scale=0.6, every node/.style={scale=0.9}]

   \draw[->](-4,2) -- (12,2);      
      \draw[->](4,0) -- (4,8);

    \draw[very thick] (4,2) arc (-90:90:2cm);
    
      \draw (4.2,2.3) arc (-82:100:1.6cm);

      \draw (4.2,1.7) arc (-86:98:2.4cm);

      \draw[-] (3.7,5.46) -- (3.7,6.47);

      \draw (4.2,2.3) arc (60:304:0.35cm);

    \draw[-] (4.2,4.2)--(4.4,4.4);
    \draw[-] (4.2,4.4)--(4.4,4.2);   

	\draw[-] (4.8,4.8)--(5,5);
    \draw[-] (4.8,5)--(5,4.8);   
    
	\draw[-] (3.3,5.8)--(3.5,6);
    \draw[-] (3.3,6)--(3.5,5.8);       
    
	\draw[-] (2.8,5.6)--(3,5.8);
    \draw[-] (2.8,5.8)--(3,5.6);           

	\draw[-] (2,4.6)--(2.2,4.8);
    \draw[-] (2,4.8)--(2.2,4.6);           
    
    	\draw[-] (4.1,5.2)--(4.3,5.4);
    \draw[-] (4.1,5.4)--(4.3,5.2);           
    
	\draw[-] (3.6,4.8)--(3.8,5);
    \draw[-] (3.6,5)--(3.8,4.8);

	\draw[-] (4.2,4.2)--(4.4,4.4);
    \draw[-] (4.2,4.4)--(4.4,4.2);   

	\draw[-] (3,5)--(3.2,5.2);
    \draw[-] (3,5.2)--(3.2,5);           

	\draw[-] (4.5,4.6)--(4.7,4.8);
    \draw[-] (4.5,4.8)--(4.7,4.6);           

	\draw[-] (5,4.4)--(5.2,4.6);
    \draw[-] (5,4.6)--(5.2,4.4);

	\draw[->](6.43,4) -- (6.43,4);
	
	\draw[->](5.58,3.9) -- (5.58,3.89);
	
	\draw[->](3.68,2) -- (3.68,1.99);
	
	\draw[->](3.7,6.1) -- (3.7,6.09);
	
    \draw (7.1,3.4) node[above] {\small$\Gamma_{3,\varepsilon}$};	
    \draw (5.1,3.4) node[above] {\small$\Gamma_{1,\varepsilon}$};	
    \draw (3.2,1.9) node[above] {\small$\Gamma_{2,\varepsilon}$};	
    \draw (3.2,6) node[above] {\small$\Gamma_{4,\varepsilon}$};

        \end{tikzpicture}
\caption{ \footnotesize  \textbf{The contour $\Gamma_{\varepsilon}$.} The thick arc represents the essential spectrum $\Sigma_{\mathrm{e}}$ of $R = ( H - i )^{-1}$ and is contained in the circle $\mathcal{C} ( \frac{i}{2} , \frac{1}{2} )$. The eigenvalues of $R$ are contained in the closed disc of radius $\frac12$ centered at $\frac{i}{2}$. }\label{fig2}
\end{center}
\end{figure}
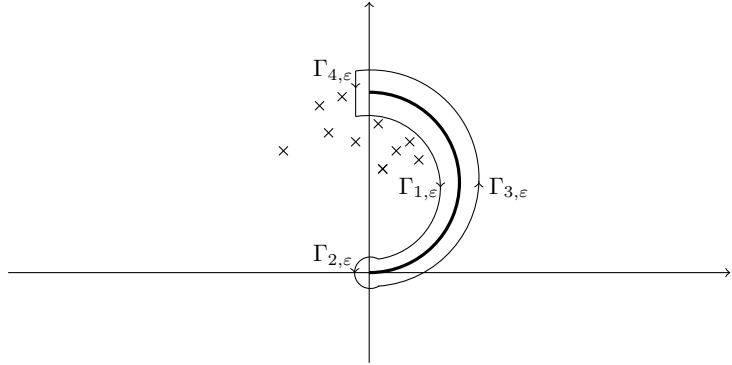

Using Riesz projections, we have that 
\begin{equation}\label{eq:spec-non-self}
\mathrm{Id}Ê= \Pi_{Ê\mathrm{pp} }( R ) + \frac{1}{2i\pi} \oint_{\Gamma_\varepsilon} ( \mu - R  )^{-1} d \mu ,
\end{equation}
for $\varepsilon > 0$ small enough, where $\Pi_{ \mathrm{pp} }( R ) = \sum_\lambda \Pi_\lambda( R )$ denotes the sum of the Riesz projections, defined as in \eqref{eq:defPilambda}, associated to all the isolated eigenvalues of $R$ corresponding to generalized eigenstates. Since the generalized eigenstates of $R$ coincide with those of $H$, we have that
\begin{equation}
\mathrm{Ran} ( \Pi_{ \mathrm{pp} }( R ) ) = \mathcal{H}_{ \mathrm{b} }( H ) \oplus \mathcal{H}_{ \mathrm{p} }( H ). \label{Ran_PIpp}
\end{equation}
Therefore, to prove \eqref{eq:incl_K}, using \eqref{eq:spec-non-self} and \eqref{Ran_PIpp}, it suffices to show that 
\begin{equation}
\lim_{ \varepsilon \downarrow 0 } \oint_{\Gamma_\varepsilon} ( \mu - R  )^{-1}  u \, d \mu = 0 , \label{eq:limit_gammaeps}
\end{equation}
for any $u \in \mathcal{K}$. Due to the presence of ``spectral singularities'' in the essential spectrum $\Sigma_{ \mathrm{e} }$ of $R$, we cannot compute the limit in \eqref{eq:limit_gammaeps} directly. Hence, before taking the limit, we compose both sides of \eqref{eq:spec-non-self} by an operator that regularizes the resolvent $( \mu - R )^{-1}$ in a neighborhood of the spectral singularities.

Let $\{ \lambda_1 , \dots \lambda_n \} \subset [ 0 , \infty )$ denote the set of spectral singularities of $H$, see Hypothesis \ref{V3}, and let $\mu_j = ( \lambda_j - i )^{-1}$, $j = 1 , \dots , n$, be the corresponding ``spectral singularities'' of $R$. Composing \eqref{eq:spec-non-self} by $R^4 \prod_{ j = 1 }^n ( R - \mu_j )^{\nu_j}$ gives
\begin{align*}
R^4 \prod_{ j = 1 }^n ( R - \mu_j )^{\nu_j}Ê=& R^4 \prod_{ j = 1 }^n ( R - \mu_j )^{\nu_j} \Pi_{ \mathrm{pp} }( R ) - \frac{1}{2i\pi} \oint_{\Gamma_\varepsilon} R^4 \prod_{ j = 1 }^n ( R - \mu_j )^{\nu_j} ( R - \mu Ê)^{-1} d \mu .
\end{align*}
By Cauchy's formula, this can be rewritten as
\begin{align}
R^4 \prod_{ j = 1 }^n ( R - \mu_j )^{\nu_j}Ê=& R^4 \prod_{ j = 1 }^n ( R - \mu_j )^{\nu_j} \Pi_{ \mathrm{pp} }( R ) -  \frac{1}{ 2 i \pi } \oint_{\Gamma_\varepsilon} \mu^4 \prod_{ j = 1 }^n ( \mu - \mu_j )^{\nu_j} ( R - \mu Ê)^{-1} d \mu . \label{eq:cauchy}
\end{align}
The product $\mu^4 \prod_{ j = 1 }^n ( \mu - \mu_j )^{\nu_j}$ regularizes the resolvent $( R - \mu )^{-1}$ in neighborhoods of $0$ and of the spectral singularities $\mu_j$. In Appendix \ref{app:proj}, we prove that 
\begin{align}
& \underset{\varepsilon \downarrow 0}{\wlim} \oint_{\Gamma_{\varepsilon}} \mu^4 \prod_{ j = 1 }^n ( \mu - \mu_j )^{\nu_j} ( R - \mu Ê)^{-1}  d \mu \notag \\
& = \underset{\varepsilon \downarrow 0}{\wlim} \int_{ 0 }^{ \infty } ( \lambda - i )^{-4} \prod_{ j = 1 }^n \big ( ( \lambda - i )^{-1} - \mu_j \big )^{\nu_j}  \notag \\
& \qquad \qquad \qquad \big [ \big ( H - ( \lambda - i \varepsilon )Ê\big )^{-1} - \big ( H - ( \lambda + i \varepsilon )Ê\big )^{-1} \big ]  d \lambda , \label{eq:weak-Gamma}
\end{align}
and that the weak limit in the right side of \eqref{eq:weak-Gamma} exists in $\mathcal{L}( \mathcal{H} )$. Moreover, there exists a constant $\mathrm{c} > 0$ such that, for all $v , w \in \mathcal{H}$,
\begin{align}
\lim_{ \varepsilon \downarrow 0 } \int_{ 0 }^{ \infty } & \Big | ( \lambda - i )^{-4} \prod_{ j = 1 }^n \big ( ( \lambda - i )^{-1} - \mu_j \big )^{\nu_j} \notag \\
& \big \langle v , \big [ \big ( H - ( \lambda - i \varepsilon )Ê\big )^{-1} - \big ( H - ( \lambda + i \varepsilon )Ê\big )^{-1} \big ] w \big \rangle \Big |  d \lambda \le \mathrm{c} \| v \| \|w \| , \label{eq:unifbound_tildeE}
\end{align}
see Appendix \ref{app:proj}. As in \eqref{eq:spec-proj}, we set
\begin{align*}
\tilde{E}_{ H }( I ) := \underset{\varepsilon \downarrow 0}{\wlim} \frac{1}{2i\pi} \int_I & ( \lambda - i )^{-4} \prod_{ j = 1 }^n \big ( ( \lambda - i )^{-1} - \mu_j \big )^{\nu_j} \\
& \big [ \big ( H - ( \lambda + i \varepsilon )Ê\big )^{-1} - \big ( H - ( \lambda - i \varepsilon )Ê\big )^{-1} \big ]  d \lambda ,
\end{align*}
for any closed interval $I \subset [ 0 , \infty )$. Taking the weak limit $\varepsilon \to 0^+$ in \eqref{eq:cauchy}, we obtain that
\begin{align}
R^4 \prod_{ j = 1 }^n ( R - \mu_j )^{\nu_j}Ê=& R^4 \prod_{ j = 1 }^n ( R - \mu_j )^{\nu_j} \Pi_{ \mathrm{pp} }( R ) + \tilde{E}_{ H }( [ 0 , \infty ) ) . \label{eq:cauchy-3}
\end{align}

Next, we prove that $\mathcal{K}$ (see \eqref{eq:def_K}) is contained in $\mathrm{Ker}( \tilde{E}_{ H }( [ 0 , \infty ) ) )$. Taking the Laplace transform of the resolvent $(H-i)^{-1}$ (see \eqref{eq:Laplace} in Appendix \ref{app:proj}) it follows from Proposition \ref{prop:spectr-sing} that, if $I$ is a closed interval not containing any spectral singularities of $H$, then
\begin{equation*}
\tilde{E}_{ H }( I ) = ( H - i )^{-4} \prod_{ j = 1 }^n \big ( ( H - i )^{-1} - \mu_j \big )^{\nu_j} E_{H}( I ).
\end{equation*}
Hence $\mathrm{Ker}( E_{ H }( I ) ) \subset \mathrm{Ker}( \tilde{E}_{ H }( I ) )$. For any $u \in \mathcal{K} = \bigcap_I \mathrm{Ker}( E_{ H }( I ) )$, where the intersection runs over all closed intervals $I \subset [ 0 , \infty ) \setminus \{ \lambda_1 , \dots , \lambda_n \}$, and any  $v \in \mathcal{H}$, $\tau > 0$ small enough, we thus have that
\begin{align*}
\big \langle v , \tilde{E}_{ H }( [ 0 , \infty ) ) u \big \rangle = \sum_{j=1}^n \big \langle v , \tilde{E}_{ H }( [ \lambda_j - \tau , \lambda_j + \tau ] ) u \big \rangle .
\end{align*}
Letting $\tau \to 0$, using \eqref{eq:unifbound_tildeE} and Lebesgue's dominated convergence theorem, we deduce that $ \langle v , \tilde{E}_{ H }( [ 0 , \infty ) ) u  \rangle = 0$, which in turn implies that $\tilde{E}_{ H }( [ 0 , \infty ) ) u =0$. It then follows from \eqref{eq:cauchy-3} that, for all $u \in \mathcal{K}$,
\begin{align}
R^4 \prod_{ j = 1 }^n ( R - \mu_j )^{\nu_j}Êu = R^4 \prod_{ j = 1 }^n ( R - \mu_j )^{\nu_j} \Pi_{ \mathrm{pp} }( R ) u. \label{eq:last}
\end{align}

Given Equation \eqref{eq:last}, we can conclude the proof of Theorem \ref{prop:proj-ran2} as follows. Let $u \in \mathcal{K}$. We want to prove that $u \in \mathcal{H}_{ \mathrm{b} }( H ) \oplus \mathcal{H}_{ \mathrm{p} }( H )$. If $\Pi_{ \mathrm{pp} }( R ) u \neq 0$, since $R^4 \prod_{ j = 1 }^n ( R - \mu_j )^{\nu_j}$ is invertible in $\mathrm{Ran}( \Pi_{ \mathrm{pp} }( R ) )$, Equation \eqref{eq:last} shows that $u \in \mathrm{Ran} ( \Pi_{ \mathrm{pp} }( R ) ) = \mathcal{H}_{ \mathrm{b} }( H ) \oplus \mathcal{H}_{ \mathrm{p} }( H )$ (see \eqref{Ran_PIpp}), as claimed. Therefore it remains to show that if $u \neq 0$ then $\Pi_{ \mathrm{pp} }( R ) u \neq 0$. Suppose that $\Pi_{ \mathrm{pp} }( R ) u =0$. Then \eqref{eq:last} implies that $R^4 \prod_{ j = 1 }^n ( R - \mu_j )^{\nu_j}Êu = 0$, which, since $R$ is invertible, yields $\prod_{ j = 1 }^n ( R - \mu_j )^{\nu_j}Êu = 0$. The identity
\begin{equation}
( R - \mu_1 )^{\nu_1} \prod_{ j = 2 }^n ( R - \mu_j )^{\nu_j}Êu = 0 , \label{eq:ksyt1}
\end{equation}
shows that, if $\mu_1 = 0$, then $\prod_{ j = 2 }^n ( R - \mu_j )^{\nu_j}Êu = 0$ (because $R$ is invertible). If $\mu_1 \neq 0$, expanding the product $( R - \mu_1 )^{\nu_1}$, we obtain that
\begin{equation*}
(- \mu_1)^{\nu_1}Ê\prod_{ j = 2 }^n ( R - \mu_j )^{\nu_j}Êu = \sum_{l=0}^{\nu_1-1} \binom{\nu_1}{l} ( - \mu_1 )^l R^{\nu_1-l} \prod_{ j = 2 }^n ( R - \mu_j )^{\nu_j}Êu.
\end{equation*}
This shows that $\prod_{ j = 2 }^n ( R - \mu_j )^{\nu_j}Êu \in \mathcal{D}( H )$. Composing the left side of \eqref{eq:ksyt1} by $H - i$, we arrive at
\begin{equation*}
( \mathrm{Id}Ê- \mu_1 (H - i ) ) ( R - \mu_1 )^{\nu_1-1} \prod_{ j = 2 }^n ( R - \mu_j )^{\nu_j}Êu = 0 ,
\end{equation*}
or equivalently, since $\mu_1^{-1}Ê= \lambda_1 - i$,
\begin{equation*}
( \lambda_1Ê- H ) ( R - \mu_1 )^{\nu_1-1} \prod_{ j = 2 }^n ( R - \mu_j )^{\nu_j}Êu = 0 .
\end{equation*}
Since $\lambda_1 \in [ 0 , \infty )$, we know that $\lambda_1$ is not an eigenvalue of $H$ by Lemma \ref{lm:Hb} and Hypothesis \ref{V-1}. Therefore
\begin{equation}
 ( R - \mu_1 )^{\nu_1-1} \prod_{ j = 2 }^n ( R - \mu_j )^{\nu_j}Êu = 0 . \label{eq:ksyt2}
\end{equation}
We have proven that \eqref{eq:ksyt1} implies \eqref{eq:ksyt2}. Proceeding by induction, we obtain that $u = 0$, which concludes our proof that $\mathcal{K} \subset \mathcal{H}_{ \mathrm{b} }( H ) \oplus \mathcal{H}_{ \mathrm{p} }( H )$.

By \eqref{eq:hzp1}, this implies that $\mathcal{H}_{ \mathrm{b} }( H ) \oplus \mathcal{H}_{ \mathrm{d} }( H ) \subset \mathcal{H}_{ \mathrm{b} }( H ) \oplus \mathcal{H}_{ \mathrm{p} }( H )$. Since $\mathcal{H}_{ \mathrm{b} }( H )$ and $\mathcal{H}_{ \mathrm{d} }( H )$ are orthogonal and since $\mathcal{H}_{Ê\mathrm{p} }( H ) \subset \mathcal{H}_{ \mathrm{d} }( H )$, this finally establishes that $\mathcal{H}_{Ê\mathrm{p} }( H ) = \mathcal{H}_{ \mathrm{d} }( H )$.
\end{proof}
It should be noted that Equation \eqref{eq:cauchy-3} implies the spectral decomposition formula \eqref{eq:spectral_decomp}. To see this, it suffices to compose both sides of \eqref{eq:cauchy-3} by $(H-i)^4=R^{-4}$ and invoke arguments of Appendix \ref{app:proj}. Since \eqref{eq:cauchy-3} suffices for the purpose of proving Theorem \ref{prop:proj-ran2}, we do not elaborate.

\subsection{Asymptotic completeness}\label{subsec:asymp}

We begin this section by proving Theorem \ref{thm:AC}. More precisely, assuming that $H$ does not have spectral singularities in $[0,\infty)$, we prove that the wave operators $W_-( H , H_0 )$ and $W_+( H^* , H_0 )$ are asymptotically complete.
\begin{theorem}\label{prop:AC}
Suppose that Hypotheses \ref{V-1}--\ref{V3} hold and that $H$ has no spectral singularities in $[0,\infty)$. Then the wave operators $W_- (H ,H_0)$ and $W_+( H^* , H_0 )$ are asymptotically complete. In particular, the restriction of $H$ to $( \mathcal{H}_{ \mathrm{b} }( H ) \oplus \mathcal{H}_{ \mathrm{p} }( H^* ) )^\perp$ is similar to $H_0$ and the restriction of $H^*$ to $( \mathcal{H}_{ \mathrm{b} }( H ) \oplus \mathcal{H}_{ \mathrm{p} }( H ) )^\perp$ is similar to $H_0$.
\end{theorem}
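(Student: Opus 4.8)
The plan is to upgrade the already-known description of the \emph{closures} of the ranges to the ranges themselves. By Proposition~\ref{prop:randense_W-} combined with Theorem~\ref{prop:proj-ran2} (which gives $\mathcal{H}_{\mathrm{d}}(H)=\mathcal{H}_{\mathrm{p}}(H)$ and $\mathcal{H}_{\mathrm{d}}(H^*)=\mathcal{H}_{\mathrm{p}}(H^*)$), one has
\[
\overline{\mathrm{Ran}(W_-(H,H_0))}=\big(\mathcal{H}_{\mathrm{b}}(H)\oplus\mathcal{H}_{\mathrm{p}}(H^*)\big)^\perp,\qquad \overline{\mathrm{Ran}(W_+(H^*,H_0))}=\big(\mathcal{H}_{\mathrm{b}}(H)\oplus\mathcal{H}_{\mathrm{p}}(H)\big)^\perp.
\]
In view of Definition~\ref{def:compl}, asymptotic completeness therefore reduces to showing that these two ranges are closed. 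By Theorem~\ref{prop:range1} and \eqref{eq:HacHbperp}, $\mathrm{Ran}(W_-(H,H_0))=S(H)\cap\mathcal{H}_{\mathrm{b}}(H)^\perp$, and since the target subspace is contained in $\mathcal{H}_{\mathrm{b}}(H)^\perp$, it suffices to prove the inclusion $(\mathcal{H}_{\mathrm{b}}(H)\oplus\mathcal{H}_{\mathrm{p}}(H^*))^\perp\subseteq S(H)$ (and, symmetrically, $(\mathcal{H}_{\mathrm{b}}(H)\oplus\mathcal{H}_{\mathrm{p}}(H))^\perp\subseteq S(H^*)$).

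Next I would fix $u\in(\mathcal{H}_{\mathrm{b}}(H)\oplus\mathcal{H}_{\mathrm{p}}(H^*))^\perp=\mathrm{Ker}(\Pi_{\mathrm{pp}})$, see \eqref{eq:rankerpipp}, and use the Parseval identity \eqref{eq:parseval_intro}: for each $\varepsilon>0$,
\[
\int_0^\infty e^{-s\varepsilon}\big\|Ce^{isH}u\big\|^2\,ds=\frac{1}{2\pi}\int_{\mathbb{R}}\big\|C\big(H-(\lambda-i\varepsilon)\big)^{-1}u\big\|^2\,d\lambda .
\]
Letting $\varepsilon\downarrow0$ and invoking monotone convergence, membership of $u$ in $S(H)$ follows once the right-hand side is bounded uniformly for small $\varepsilon$. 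The engine of this estimate is the resolvent identity
\[
C\big(H-z\big)^{-1}u=\Big(\mathrm{Id}+iC\big(H-z\big)^{-1}C^*\Big)C\big(H_V-z\big)^{-1}u,\qquad z=\lambda-i\varepsilon .
\]

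I would then split the $\lambda$-integral at the essential spectrum $\sigma_{\mathrm{ess}}(H)=\sigma_{\mathrm{ess}}(H_0)\subseteq[0,\infty)$. Choosing $a>0$ smaller than the distance from $[0,\infty)$ to the (finitely many, strictly negative) eigenvalues of $H_V$, I treat the region $\lambda\in[-a,\infty)$ with the resolvent identity: there, $\|C(H-z)^{-1}C^*\|$ is bounded uniformly in $\lambda$ and in small $\varepsilon$ because $H$ has no spectral singularities (a compactness argument covers $[0,m]$ by finitely many intervals on which the boundary value exists uniformly, $[-a,0)$ avoids $\sigma(H)$ for small $\varepsilon$, and \eqref{eq:sup>m} handles $\lambda\ge m$), while $\int_{-a}^\infty\|C(H_V-z)^{-1}u\|^2\,d\lambda$ is uniformly bounded since the absolutely continuous part $\Pi_{\mathrm{ac}}(H_V)u$ is controlled by Hypothesis~\ref{V1} in the form \eqref{eq:equivV1}, and the pure-point part stays a fixed positive distance from the poles of $(H_V-z)^{-1}$ (the eigenvalues being $<-a$) with $|\lambda|^{-1}$ decay. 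On the complementary region $\lambda\le -a$ I argue by analyticity: because $u\in\mathrm{Ker}(\Pi_{\mathrm{pp}})$ we have $\Pi_\lambda u=0$ for every isolated eigenvalue, so $z\mapsto(H-z)^{-1}u$ extends holomorphically across $\mathbb{C}\setminus\sigma_{\mathrm{ess}}(H)$ and is uniformly bounded for $\lambda\le -a$ with $O(|\lambda|^{-1})$ decay as $\lambda\to-\infty$, making $\int_{-\infty}^{-a}\|C(H-z)^{-1}u\|^2\,d\lambda$ finite uniformly in $\varepsilon$. Summing the two contributions gives the desired uniform bound, hence $u\in S(H)$ and, by \eqref{eq:1ststep}, $u\in\mathrm{Ran}(W_-(H,H_0))$. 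The statement for $W_+(H^*,H_0)$ is obtained in the same way, using \eqref{eq:alpha2}, the analogue of \eqref{eq:parseval_intro} with $(H^*-(\lambda+i\varepsilon))^{-1}$, and the fact that $H^*$ inherits the absence of spectral singularities.

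Finally, asymptotic completeness makes $W_-(H,H_0)\colon\mathcal{H}\to(\mathcal{H}_{\mathrm{b}}(H)\oplus\mathcal{H}_{\mathrm{p}}(H^*))^\perp$ a bijection, injectivity being part of Proposition~\ref{prop:existence_W-}; the intertwining relation \eqref{eq:inter-2} then shows that the restriction of $H$ to $(\mathcal{H}_{\mathrm{b}}(H)\oplus\mathcal{H}_{\mathrm{p}}(H^*))^\perp$ is similar to $H_0$, and the claim for $H^*$ follows identically from $W_+(H^*,H_0)$. I expect the crux to be the uniform-in-$\varepsilon$ control of the Parseval integral near $[0,\infty)$: one must simultaneously use the boundedness of $C(H-z)^{-1}C^*$ granted by the no-spectral-singularity hypothesis together with \eqref{eq:sup>m}, and the Kato smoothness of $C$ relative to $H_V$, while isolating the strictly negative eigenvalues of $H_V$, whose resolvent would otherwise contribute a divergent $\pi/\varepsilon$. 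That isolation is made possible precisely by the orthogonality $u\perp\mathcal{H}_{\mathrm{b}}(H)$ and the strict negativity of the eigenvalues of $H_V$.
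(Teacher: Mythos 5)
Your strategy coincides with the paper's: reduce, via Proposition~\ref{prop:randense_W-}, Theorem~\ref{prop:proj-ran2}, Theorem~\ref{prop:range1} and \eqref{eq:HacHbperp}, to the inclusion $(\mathcal{H}_{\mathrm{b}}(H)\oplus\mathcal{H}_{\mathrm{p}}(H^*))^\perp\subset S(H)$, then establish it by a Parseval identity together with a uniform-in-$\varepsilon$ bound on $\int_{\mathbb{R}}\|C(H-(\lambda-i\varepsilon))^{-1}u\|^2\,d\lambda$ obtained from the resolvent identity, Hypothesis~\ref{V1}, and the uniform bound on $C(H-z)^{-1}C^*$ coming from the absence of spectral singularities and \eqref{eq:sup>m}. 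Your splitting of the frequency axis at $-a$, with the observation that $z\mapsto(H-z)^{-1}u$ extends holomorphically across the isolated eigenvalues when $u\in\mathrm{Ker}(\Pi_{\mathrm{pp}})$ (see \eqref{eq:rankerpipp}), is a harmless variant of the paper's splitting into a compact interval $J$ around the eigenvalues of $H_V$ and its complement; the paper's statement that the restriction of $H$ to $\mathrm{Ker}(\Pi_{\mathrm{pp}})$ has empty spectrum near $J$ is the same idea in different packaging.

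The genuine gap is at the very first analytic step: you invoke the Parseval identity \eqref{eq:parseval_intro} as if it were available for every $\varepsilon>0$, but that identity is only \emph{announced} in Section~\ref{subsec:ideas} and must itself be proved, and the proof is not routine. A priori one only has $\|e^{isH}\|\le e^{\|C^*C\|s}$ for $s\ge 0$, so for $0<\varepsilon<\|C^*C\|$ the function $s\mapsto e^{-s\varepsilon}Ce^{isH}u$ need not be square-integrable, and one cannot even identify its Fourier--Laplace transform with $C(H-(\lambda-i\varepsilon))^{-1}u$; both sides of \eqref{eq:parseval_1} could a priori be infinite. Your monotone-convergence argument requires the identity precisely for arbitrarily small $\varepsilon$, so this cannot be waved away. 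The paper closes this hole by showing that on the invariant subspace $\mathrm{Ker}(\Pi_{\mathrm{pp}})=(\mathcal{H}_{\mathrm{b}}(H)\oplus\mathcal{H}_{\mathrm{p}}(H^*))^\perp$ the spectrum of $H$ lies in $[0,\infty)$ and the resolvent of the restriction obeys the bound \eqref{eq:bound_resolv_restrict}, of order $|\mathrm{Im}(z)|^{-2}$ for $\mathrm{Im}(z)<0$ (this is where the no-spectral-singularity hypothesis and \eqref{eq:sup>m} are used a second time), and then invoking the Eisner--Zwart theorem \cite{EiZw07_01} on polynomially growing semigroups to conclude $\|e^{itH}v\|\le \mathrm{c}(1+t^3)\|v\|$ for $v$ in that subspace; only then is $s\mapsto e^{-s\varepsilon}\|e^{isH}u\|$ square-integrable for every $\varepsilon>0$ and Parseval applicable. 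Your closing remark that the crux is the uniform control of the right-hand side confirms the oversight: without a sub-exponential growth bound on $e^{isH}$ restricted to this subspace, the identity you start from is not known to hold.
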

\begin{proof}
We want to prove that
\begin{equation*}
Ê\mathrm{Ran}( W_-( H , H_0 ) ) = \big ( \mathcal{H}_{ \mathrm{b} }( H ) \oplus \mathcal{H}_{ \mathrm{p} }( H^* ) \big )^\perp, \quad \mathrm{Ran}( W_+( H^* , H_0 ) ) = \big ( \mathcal{H}_{ \mathrm{b} }( H ) \oplus \mathcal{H}_{ \mathrm{p} }( H ) \big )^\perp.
\end{equation*}
By Theorem \ref{prop:proj-ran2} and Proposition \ref{prop:randense_W-}, we have that
\begin{equation*}
\overline{Ê\mathrm{Ran}( W_-( H , H_0 ) ) } = \big ( \mathcal{H}_{ \mathrm{b} }( H ) \oplus \mathcal{H}_{ \mathrm{p} }( H^* ) \big )^\perp, \quad \overline{\mathrm{Ran}( W_+( H^* , H_0 ) )} = \big ( \mathcal{H}_{ \mathrm{b} }( H ) \oplus \mathcal{H}_{ \mathrm{p} }( H ) \big )^\perp.
\end{equation*}
Therefore we must prove that 
\begin{equation*}
\big ( \mathcal{H}_{ \mathrm{b} }( H ) \oplus \mathcal{H}_{ \mathrm{p} }( H^* ) \big )^\perp \subsetÊ\mathrm{Ran}( W_-( H , H_0 ) ), \quad \big ( \mathcal{H}_{ \mathrm{b} }( H ) \oplus \mathcal{H}_{ \mathrm{p} }( H ) \big )^\perp \subset \mathrm{Ran}( W_+( H^* , H_0 ) ).
\end{equation*}
Since
\begin{equation*}
\mathrm{Ran}( W_-( H , H_0 ) ) = S( H ) \cap \mathcal{H}_{Ê\mathrm{ac}Ê}( H ), \quad \mathrm{Ran}( W_+( H^* , H_0 ) ) = S( H^* ) \cap \mathcal{H}_{Ê\mathrm{ac}Ê}( H )
\end{equation*}
by Theorem \ref{prop:range1}, and since $\mathcal{H}_{ \mathrm{b}Ê}( H )^\perp = \mathcal{H}_{Ê\mathrm{ac}Ê}( H )$ (see \eqref{eq:HacHbperp}), it suffices to verify that
\begin{equation}
\big ( \mathcal{H}_{ \mathrm{b} }( H ) \oplus \mathcal{H}_{ \mathrm{p} }( H^* ) \big )^\perp \subsetÊS( H ), \quad \big ( \mathcal{H}_{ \mathrm{b} }( H ) \oplus \mathcal{H}_{ \mathrm{p} }( H ) \big )^\perp \subset S( H^* ). \label{eq:inclS(H)}
\end{equation}

We prove the first inclusion in \eqref{eq:inclS(H)}. Recall that $S( H ) = \{ u \in \mathcal{H} , \int_0^\infty \| C e^{ i t H }Êu \|^2 dt < \infty \}$. Let $u \in ( \mathcal{H}_{ \mathrm{b} }( H ) \oplus \mathcal{H}_{ \mathrm{p} }( H^* ) )^\perp$. To prove that $u$ belongs to $S( H )$, we verify that
\begin{align}
\int_0^\infty \big \|ÊC e^{ i s ( H + i \varepsilon ) }Êu \big \|^2 ds &= \frac{1}{ 2 \pi } \int_{Ê\mathbb{R}Ê}Ê\big \|ÊC ( H - ( \lambda - i \varepsilon ) )^{-1}Êu \big \|^2 d \lambda , \label{eq:parseval_1}
\end{align}
for all $\varepsilon > 0$, and we show that the right side of \eqref{eq:parseval_1} is uniformly bounded in $\varepsilon > 0$, for $\varepsilon$ small enough. From Lebesgue's monotone convergence theorem we can then conclude that $\int_0^\infty \| C e^{ i t H }Êu \|^2 dt < \infty$.

We begin by justifying \eqref{eq:parseval_1}. Recall that $\Pi_{ \mathrm{pp} }$ denotes the sum of all Riesz projections associated to the isolated eigenvalues of $H$. Since $\mathrm{Ker} ( \Pi_{ \mathrm{pp} } ) = ( \mathcal{H}_{ \mathrm{b} }( H ) \oplus \mathcal{H}_{ \mathrm{p} }( H^* ) )^\perp$ (see \eqref{eq:rankerpipp}), the spectrum of the restriction of $H$ to the subspace $ ( \mathcal{H}_{ \mathrm{b} }( H ) \oplus \mathcal{H}_{ \mathrm{p} }( H^* )  )^\perp$ is contained in $\sigma_{ \mathrm{ess} } ( H ) = [ 0 , \infty )$. Moreover, we deduce from the resolvent equation
\begin{align*}
( H - z )^{-1} &= ( H_V - z )^{-1} + i ( H_V - z )^{-1} C^*C ( H_V - z )^{-1} \\
&\quad  -  ( H_V - z )^{-1} C^*C ( H - z )^{-1} C^*C ( H_V - z )^{-1} ,
\end{align*}
that, for all $z \in \{ z' \in \mathbb{C} , \mathrm{Im}( z' )< 0 \}$, 
\begin{equation}
\Big \|Ê( H - z )^{-1}Ê\big |_{ ( \mathcal{H}_{ \mathrm{b} }( H ) \oplus \mathcal{H}_{ \mathrm{p} }( H^* )  )^\perp } \Big \| \le \mathrm{c}  | \mathrm{Im}(z) |^{-2} , \label{eq:bound_resolv_restrict}
\end{equation}
for some positive constant $\mathrm{c}$. Here we used, in particular, that $H_V$ is self-adjoint and that $\| C ( H - z )^{-1} C^* \|$ is uniformly bounded in $z \in \{ z' \in \mathbb{C} , \mathrm{Re}( z' ) \ge 0 , - \varepsilon_0 < \mathrm{Im}( z' ) < 0 \}$, for some $\varepsilon_0 > 0$, as follows from the assumptions that $H$ has no spectral singularities in $[ 0 , \infty )$ and that \eqref{eq:sup>m} in Hypothesis \ref{V3} holds. The facts that the spectrum of the restriction of $H$ to the subspace $ ( \mathcal{H}_{ \mathrm{b} }( H ) \oplus \mathcal{H}_{ \mathrm{p} }( H^* )  )^\perp$ is contained in $\mathbb{R}$ and that \eqref{eq:bound_resolv_restrict} holds imply that there exists a positive constant $\mathrm{c}$ such that, for all $v \in ( \mathcal{H}_{ \mathrm{b} }( H ) \oplus \mathcal{H}_{ \mathrm{p} }( H^* )  )^\perp$ and $t \ge 0$, $\|Êe^{ i t H }Êv \|Ê\le \mathrm{c}( 1 + t^3 ) \| v \|$ (see \cite[Theorem 2.1]{EiZw07_01}). In particular, for all $\varepsilon > 0$, $s \mapsto \| e^{ - s \varepsilon } e^{ i s H }Êu \|^2$ is integrable on $[ 0 , \infty )$. By Parseval's theorem, this gives \eqref{eq:parseval_1}.

Next, we show that the right side of \eqref{eq:parseval_1} is uniformly bounded in $\varepsilon > 0$ small enough. By Hypothesis \ref{V-1},  $H_V$ has finitely many eigenvalues and all the eigenvalues of $H_V$ are negative. Let $J \subset ( - \infty , 0 )$ be a compact interval whose interior contains all the eigenvalues of $H_V$. We decompose the integral over $\mathbb{R}$ in the right side of \eqref{eq:parseval_1} into a sum of integrals over $J$ and $\mathbb{R} \setminus J$ and estimate each term separately. Since the spectrum of $H|_{  ( \mathcal{H}_{ \mathrm{b} }( H ) \oplus \mathcal{H}_{ \mathrm{p} }( H^* )  )^\perp }$ in a complex neighborhood of $J$ is empty, we have that
\begin{align}
\int_{ÊJÊ}Ê\big \|ÊC ( H - ( \lambda - i \varepsilon ) )^{-1}Êu \big \|^2 d \lambda \le \mathrm{c} \|Êu \|, \label{eq:abg1}
\end{align}
for some positive constant $\mathrm{c}$ independent of $\varepsilon$.

To estimate the integral over $\mathbb{R} \setminus J$, we use the resolvent equation, writing
\begin{align}
&\int_{Ê\mathbb{R} \setminus JÊ}Ê\big \|ÊC ( H - ( \lambda - i \varepsilon ) )^{-1}Êu \big \|^2 d \lambda \notag \\
& \le 2 \int_{Ê\mathbb{R} \setminus JÊ}Ê\big \|ÊC ( H_V - ( \lambda - i \varepsilon ) )^{-1}Êu \big \|^2 d \lambda \notag \\
&\quad + 2 \int_{Ê\mathbb{R} \setminus JÊ}Ê\big \|ÊC ( H - ( \lambda - i \varepsilon ) )^{-1}ÊC^* C ( H_V - ( \lambda - i \varepsilon ) )^{-1} u \big \|^2 d \lambda. \label{eq:abg2}
\end{align}
We decompose $u = u_{ \mathrm{pp}Ê} + u_{ \mathrm{ac}Ê}$ with $u_{Ê\mathrm{pp} } \in \mathcal{H}_{ \mathrm{pp} }( H_V )$ and $u_{Ê\mathrm{ac} } \in \mathcal{H}_{ \mathrm{ac}Ê}( H_V )$. Since $H$ has no spectral singularities in $[ 0 , \infty )$ and  \eqref{eq:sup>m} in Hypothesis \ref{V3} holds, there exists $\mathrm{k} > 0$ such that $\|ÊC ( H - ( \lambda - i \varepsilon ) )^{-1}ÊC^* \| \le \mathrm{k}$, uniformly in $\lambda \in \mathbb{R}$ and $\varepsilon > 0$ small enough. Therefore, since in addition the eigenvalues of $H_V$ are in the interior of $J$, we clearly have that
\begin{align}
& \int_{Ê\mathbb{R} \setminus JÊ}Ê\big \|ÊC ( H_V - ( \lambda - i \varepsilon ) )^{-1}Êu_{Ê\mathrm{pp} } \big \|^2 d \lambda \notag \\
& +  \int_{Ê\mathbb{R} \setminus JÊ}Ê\big \|ÊC ( H - ( \lambda - i \varepsilon ) )^{-1}ÊC^* C ( H_V - ( \lambda - i \varepsilon ) )^{-1} u_{Ê\mathrm{pp} } \big \|^2 d \lambda \le \mathrm{c} \| u_{Ê\mathrm{pp} } \|^2 , \label{eq:abg3}
\end{align}
for some positive constant $\mathrm{c}$ independent of $\varepsilon$. For the contribution given by $u_{Ê\mathrm{ac}Ê}$, we use Hypothesis \ref{V1}, which yields
\begin{align}
& \int_{Ê\mathbb{R} \setminus JÊ}Ê\big \|ÊC ( H_V - ( \lambda - i \varepsilon ) )^{-1}Êu_{Ê\mathrm{ac} } \big \|^2 d \lambda \notag \\
& + \int_{Ê\mathbb{R} \setminus JÊ}Ê\big \|ÊC ( H - ( \lambda - i \varepsilon ) )^{-1}ÊC^* C ( H_V - ( \lambda - i \varepsilon ) )^{-1} u_{Ê\mathrm{ac} } \big \|^2 d \lambda \le ( 1 + \mathrm{k} ) \mathrm{c}^2_V \| u_{Ê\mathrm{ac} } \|^2 , \label{eq:abg4}
\end{align}
where $\mathrm{c}_V$ is the constant in Hypothesis \ref{V1}. Equations \eqref{eq:abg2}, \eqref{eq:abg3} and \eqref{eq:abg4} show that
\begin{align*}
\int_{Ê\mathbb{R} \setminus JÊ}Ê\big \|ÊC ( H - ( \lambda - i \varepsilon ) )^{-1}Êu \big \|^2 d \lambda \le \mathrm{c} \|Êu \|, 
\end{align*}
for some positive constant $\mathrm{c}$ independent of $\varepsilon$. Together with \eqref{eq:abg1}, this gives
\begin{align*}
\int_{Ê\mathbb{R}Ê}Ê\big \|ÊC ( H - ( \lambda - i \varepsilon ) )^{-1}Êu \big \|^2 d \lambda \le \mathrm{c} \|Êu \| .
\end{align*}

Finally, the last equation combined with \eqref{eq:parseval_1} yields that, for all $u \in ( \mathcal{H}_{ \mathrm{b} }( H ) \oplus \mathcal{H}_{ \mathrm{p} }( H^* ) )^\perp$ and $\varepsilon > 0$ small enough, 
\begin{align*}
\int_0^\infty e^{ - \varepsilon s } \big \|ÊC e^{ i s  H }Êu \big \|^2 ds &\le \mathrm{c} \| u \|^2 ,
\end{align*}
for some positive constant $\mathrm{c}$ independent of $\varepsilon$. Lebesgue's monotone convergence theorem then proves that $u \in \mathcal{S}( H )$, which concludes the proof of the first inclusion in \eqref{eq:inclS(H)}. The proof of the second inclusion is analogous.

The facts that the restriction of $H$ to $( \mathcal{H}_{ \mathrm{b} }( H ) \oplus \mathcal{H}_{ \mathrm{p} }( H^* ) )^\perp$ and that the restriction of $H^*$ to $( \mathcal{H}_{ \mathrm{b} }( H ) \oplus \mathcal{H}_{ \mathrm{p} }( H ) )^\perp$ are similar to $H_0$ follow from the asymptotic completeness  of the wave operators (see Section \ref{subsec:compl}).
\end{proof}
We mention that an alternative proof of Theorem \ref{prop:AC} is based on the representation formula \eqref{eq:spectral_decomp_2}. More precisely, if Hypotheses \ref{V-1}--\ref{V3} hold and if $H$ does not have spectral singularities, i.e. $\nu_j = 0$ in \eqref{eq:cauchy-3}, $j = 1 , \dots , n$, then one can deduce from \eqref{eq:cauchy-3} that \eqref{eq:spectral_decomp_2} holds. Using that $(  \mathcal{H}_{ \mathrm{b} }( H ) \oplus \mathcal{H}_{ \mathrm{p} }( H^* ) )^\perp = \mathrm{Ker}( \Pi_{ \mathrm{pp} } )$ together with Equation \eqref{eq:spec-proj-exp} in Proposition \ref{prop:spectr-sing}, it is not difficult to verify that \eqref{eq:spectral_decomp_2} yields $(  \mathcal{H}_{ \mathrm{b} }( H ) \oplus \mathcal{H}_{ \mathrm{p} }( H^* ) )^\perp \subset S( H )$. Asymptotic completeness follows from this inclusion.

Theorem \ref{prop:AC} combined with Propositions  \ref{prop:existence_W+} and \ref{prop:invertscatt} has the following direct consequence, which proves Corollary \ref{cor:W+}.
\begin{corollary}
Suppose that Hypotheses \ref{V-1}--\ref{V3} hold and that $H$ has no spectral singularities in $[0,\infty)$. Then $W_+( H_0 , H ) : \mathcal{H} \to \mathcal{H}$ and $W_-( H_0 , H^* ): \mathcal{H} \to \mathcal{H}$ are surjective and
\begin{equation*}
\mathrm{Ker} ( W_+( H_0 , H ) ) = \mathcal{H}_{ \mathrm{b} }( H ) \oplus \mathcal{H}_{ \mathrm{p} }( H ), \quad \mathrm{Ker} ( W_-( H_0 , H^* ) ) = \mathcal{H}_{ \mathrm{b} }( H ) \oplus \mathcal{H}_{ \mathrm{p} }( H^* ). 
\end{equation*}
Moreover, $S( H, H_0 ): \mathcal{H} \to \mathcal{H}$ and $S( H^* , H_0 )$ are bijective.
\end{corollary}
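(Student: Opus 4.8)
The plan is to assemble the corollary from three ingredients already established: asymptotic completeness of $W_-(H,H_0)$ and $W_+(H^*,H_0)$ (Theorem~\ref{prop:AC}), the identifications $\mathcal{H}_{\mathrm{d}}(H) = \mathcal{H}_{\mathrm{p}}(H)$ and $\mathcal{H}_{\mathrm{d}}(H^*) = \mathcal{H}_{\mathrm{p}}(H^*)$ (Theorem~\ref{prop:proj-ran2}), and the structural results of Propositions~\ref{prop:existence_W+} and~\ref{prop:invertscatt}. No new estimate is required: the entire content is to substitute $\mathcal{H}_{\mathrm{d}} = \mathcal{H}_{\mathrm{p}}$ into formulas recorded earlier and then invoke the equivalence of Proposition~\ref{prop:invertscatt}.

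First I would read off the kernels. Proposition~\ref{prop:existence_W+} gives $\mathrm{Ker}(W_+(H_0,H)) = \mathcal{H}_{\mathrm{b}}(H) \oplus \mathcal{H}_{\mathrm{d}}(H)$ and $\mathrm{Ker}(W_-(H_0,H^*)) = \mathcal{H}_{\mathrm{b}}(H) \oplus \mathcal{H}_{\mathrm{d}}(H^*)$. Substituting the equalities $\mathcal{H}_{\mathrm{d}}(H) = \mathcal{H}_{\mathrm{p}}(H)$ and $\mathcal{H}_{\mathrm{d}}(H^*) = \mathcal{H}_{\mathrm{p}}(H^*)$ from Theorem~\ref{prop:proj-ran2} yields precisely the two kernel identities claimed in the corollary.

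Next, for the bijectivity of the scattering operators, I would check that hypothesis~(ii) of Proposition~\ref{prop:invertscatt} holds here. Theorem~\ref{prop:AC} gives $\mathrm{Ran}(W_-(H,H_0)) = (\mathcal{H}_{\mathrm{b}}(H) \oplus \mathcal{H}_{\mathrm{p}}(H^*))^\perp$ and $\mathrm{Ran}(W_+(H^*,H_0)) = (\mathcal{H}_{\mathrm{b}}(H) \oplus \mathcal{H}_{\mathrm{p}}(H))^\perp$; rewriting $\mathcal{H}_{\mathrm{p}}(H^*) = \mathcal{H}_{\mathrm{d}}(H^*)$ and $\mathcal{H}_{\mathrm{p}}(H) = \mathcal{H}_{\mathrm{d}}(H)$ via Theorem~\ref{prop:proj-ran2} turns these into exactly the range formulas appearing in condition~(ii). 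The equivalence (i)$\Leftrightarrow$(ii) of Proposition~\ref{prop:invertscatt} then forces condition~(i), so that $S(H,H_0)$ and $S(H^*,H_0)$ are bijective on $\mathcal{H}$.

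Finally, the surjectivity of $W_+(H_0,H)$ and $W_-(H_0,H^*)$ follows immediately from the factorizations $S(H,H_0) = W_+(H_0,H) W_-(H,H_0)$ and $S(H^*,H_0) = W_-(H_0,H^*) W_+(H^*,H_0)$ of~\eqref{eq:defscattop}, which give $\mathrm{Ran}(S(H,H_0)) \subset \mathrm{Ran}(W_+(H_0,H))$ and $\mathrm{Ran}(S(H^*,H_0)) \subset \mathrm{Ran}(W_-(H_0,H^*))$: once the scattering operators are surjective, so are these outer wave operators. Since every step is a substitution or a citation, I do not expect any genuine obstacle; the only point demanding care is the $H$ versus $H^*$ bookkeeping, namely matching the $\mathcal{H}_{\mathrm{p}}(H)$ and $\mathcal{H}_{\mathrm{p}}(H^*)$ labels correctly against the quoted kernel and range formulas.
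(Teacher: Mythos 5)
Your proof is correct and follows essentially the same route as the paper, which derives the corollary as a direct consequence of Theorem \ref{prop:AC}, Proposition \ref{prop:existence_W+} and Proposition \ref{prop:invertscatt}, with the substitution $\mathcal{H}_{\mathrm{d}}=\mathcal{H}_{\mathrm{p}}$ from Theorem \ref{prop:proj-ran2} (which you make explicit, the paper leaves it implicit) and the surjectivity read off from the inclusions $\mathrm{Ran}(S(H,H_0))\subset\mathrm{Ran}(W_+(H_0,H))$ and $\mathrm{Ran}(S(H^*,H_0))\subset\mathrm{Ran}(W_-(H_0,H^*))$. The hypothesis bookkeeping also checks out, since Hypotheses \ref{V-1}--\ref{V3} together with the absence of spectral singularities cover all the results you invoke.
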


Next, we prove Theorem \ref{thm:nonAC}. It justifies that the hypothesis used in Theorem \ref{prop:AC} according to which $H$ does not have any spectral singularities is essentially necessary for asymptotic completeness.
\begin{theorem}\label{prop:nonAC}
Suppose that Hypotheses \ref{V-1}--\ref{V3} hold. Assume that there exist an interval $J \subset [ 0 , \infty )$ and $u \in \mathcal{H}$ such that
\begin{align}
\lim_{ \varepsilon \downarrow 0 } \int_{ÊJÊ}Ê\big \|ÊC ( H - ( \lambda - i \varepsilon ) )^{-1}ÊC^* u \big \|^2 d \lambda = \infty. \label{eq:asspropnonAC}
\end{align}
Then $\mathrm{Ran}( W_-( H , H_0 ) )$ is not closed. In particular, $W_- (H ,H_0)$ is \emph{not} asymptotically complete. Likewise, if there exist an interval $J \subset [ 0 , \infty )$ and $u \in \mathcal{H}$ such that
\begin{align}
\lim_{ \varepsilon \downarrow 0 } \int_{ÊJÊ}Ê\big \|ÊC ( H^* - ( \lambda + i \varepsilon ) )^{-1}ÊC^* u \big \|^2 d \lambda = \infty , \label{eq:asspropnonAC_3}
\end{align}
then $\mathrm{Ran}( W_+( H^* , H_0 ) )$ is not closed and $W_+( H^* , H_0 )$ is not asymptotically complete.
\end{theorem}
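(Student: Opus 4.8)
The plan is to prove the contrapositive in disguise: I will show directly that $\mathrm{Ran}(W_-(H,H_0))$ fails to be closed, from which non-completeness is immediate. Indeed, by Proposition \ref{prop:randense_W-} together with Theorem \ref{prop:proj-ran2} (which identifies $\mathcal{H}_{\mathrm{d}}(H^*)=\mathcal{H}_{\mathrm{p}}(H^*)$), the closure of the range is exactly $\mathcal{N}:=(\mathcal{H}_{\mathrm{b}}(H)\oplus\mathcal{H}_{\mathrm{p}}(H^*))^\perp$, which by \eqref{eq:rankerpipp} equals $\mathrm{Ker}(\Pi_{\mathrm{pp}})$. Hence if the range is not closed it is a proper subspace of $\mathcal{N}$, and $W_-(H,H_0)$ is not asymptotically complete in the sense of Definition \ref{def:compl}. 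I would argue by contradiction and \emph{assume} $\mathrm{Ran}(W_-(H,H_0))$ is closed, so that $\mathrm{Ran}(W_-(H,H_0))=\mathcal{N}$. Since $\mathrm{Ran}(W_-(H,H_0))\subset S(H)$ by Theorem \ref{prop:range1}, this forces $\mathcal{N}\subset S(H)$, i.e. $\sup_{t\ge 0}\|e^{itH}w\|<\infty$ for every $w\in\mathcal{N}$.

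The first key step is to upgrade this pointwise bound to a uniform one. The subspace $\mathcal{N}=\mathrm{Ker}(\Pi_{\mathrm{pp}})$ is closed and invariant under $e^{itH}$ (because $\Pi_{\mathrm{pp}}$ commutes with $e^{itH}$), so the operators $\{e^{itH}|_{\mathcal{N}}\}_{t\ge 0}$ form a pointwise bounded family of bounded operators on the Banach space $\mathcal{N}$; by the uniform boundedness principle there is $M<\infty$ with $\|e^{itH}|_{\mathcal{N}}\|\le M$ for all $t\ge 0$. The identity $\|e^{itH}w\|^2=\|w\|^2+2\int_0^t\|Ce^{isH}w\|^2\,ds$ then gives $\int_0^\infty\|Ce^{isH}w\|^2\,ds\le \tfrac12 M^2\|w\|^2$ for all $w\in\mathcal{N}$. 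Consequently $s\mapsto e^{-\varepsilon s}Ce^{isH}w$ lies in $L^2([0,\infty);\mathcal{H})$ for every $\varepsilon\ge 0$, its Fourier--Laplace transform equals $iC(H-(\lambda-i\varepsilon))^{-1}w$, and Parseval's theorem yields, uniformly in $\varepsilon>0$,
\[
\frac{1}{2\pi}\int_{\mathbb{R}}\big\|C(H-(\lambda-i\varepsilon))^{-1}w\big\|^2\,d\lambda=\int_0^\infty e^{-2\varepsilon s}\big\|Ce^{isH}w\big\|^2\,ds\le\tfrac12 M^2\|w\|^2 .
\]

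The second step turns this uniform bound against the hypothesis. I apply it to $w:=(\mathrm{Id}-\Pi_{\mathrm{pp}})C^*u\in\mathcal{N}$, where $u$ is the vector furnished by \eqref{eq:asspropnonAC}. Writing $C^*u=w+\Pi_{\mathrm{pp}}C^*u$ and using $\|a+b\|^2\le 2\|a\|^2+2\|b\|^2$, the integral in \eqref{eq:asspropnonAC} is controlled by $2\int_J\|C(H-(\lambda-i\varepsilon))^{-1}w\|^2\,d\lambda$, which stays bounded by the previous display, plus $2\int_J\|C(H-(\lambda-i\varepsilon))^{-1}\Pi_{\mathrm{pp}}C^*u\|^2\,d\lambda$. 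The latter remains bounded as $\varepsilon\downarrow 0$ because $\Pi_{\mathrm{pp}}C^*u\in\mathrm{Ran}(\Pi_{\mathrm{pp}})=\mathcal{H}_{\mathrm{b}}(H)\oplus\mathcal{H}_{\mathrm{p}}(H)$, on which $H$ has spectrum equal to its eigenvalues, all lying off $[0,\infty)$, so the resolvent $(H-(\lambda-i\varepsilon))^{-1}$ restricted to this invariant subspace stays uniformly bounded for $\lambda\in J$ and $\varepsilon$ small. Thus $\limsup_{\varepsilon\downarrow 0}\int_J\|C(H-(\lambda-i\varepsilon))^{-1}C^*u\|^2\,d\lambda<\infty$, contradicting \eqref{eq:asspropnonAC}. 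Hence $\mathrm{Ran}(W_-(H,H_0))$ is not closed. The statement for $W_+(H^*,H_0)$ follows by the symmetric argument, replacing $S(H)$, $e^{itH}$ and $(H-(\lambda-i\varepsilon))^{-1}$ by $S(H^*)$, $e^{-itH^*}$ and $(H^*-(\lambda+i\varepsilon))^{-1}$.

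The delicate point, and the reason I organize the argument by contradiction rather than as a direct construction of a non-range vector, is the justification of the Parseval identity at small $\varepsilon$: a priori one only has $\|e^{itH}\|\le e^{\|C^*C\|t}$, so neither the convergence of the Fourier--Laplace integral nor the $L^2$-membership of $e^{-\varepsilon s}Ce^{isH}w$ is available for $\varepsilon<\|C^*C\|$. Assuming closedness supplies, via uniform boundedness, precisely the polynomial-free uniform control $\|e^{itH}|_{\mathcal{N}}\|\le M$ that legitimizes the computation \emph{uniformly in} $\varepsilon$. The only other point requiring care is the harmless control of the $\Pi_{\mathrm{pp}}$-correction term over $J$, which uses that $J\subset[0,\infty)$ is kept away from the eigenvalues of $H$ (implicit already in the well-definedness of the resolvents in \eqref{eq:asspropnonAC}, and, for the high-energy part, in \eqref{eq:sup>m} of Hypothesis \ref{V3}).
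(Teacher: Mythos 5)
Your proposal is correct, but it is organized along a genuinely different route than the paper's. The paper argues \emph{directly}: it takes the same vector $v:=(\mathrm{Id}-\Pi_{\mathrm{pp}})C^*u$, justifies the Parseval identity \eqref{eq:parseval_intro} at each \emph{fixed} $\varepsilon>0$ by invoking the polynomial growth bound $\|e^{itH}v\|\le \mathrm{c}(1+t^3)\|v\|$ on $\mathrm{Ker}(\Pi_{\mathrm{pp}})$ (which comes from the Eisner--Zwart theorem together with the finite-order resolvent bounds of Hypothesis \ref{V3}), and then lets $\varepsilon\downarrow 0$ to conclude via monotone convergence that $v\notin S(H)$, hence $v\notin\mathrm{Ran}(W_-(H,H_0))$ by Theorem \ref{prop:range1}. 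You instead assume closedness of the range, identify it with $\mathrm{Ker}(\Pi_{\mathrm{pp}})$ via Proposition \ref{prop:randense_W-} and Theorem \ref{prop:proj-ran2}, and use Banach--Steinhaus on the invariant subspace $\mathcal{N}=\mathrm{Ker}(\Pi_{\mathrm{pp}})$ to manufacture the uniform bound $\sup_{t\ge0}\|e^{itH}|_{\mathcal{N}}\|<\infty$ that legitimizes Parseval uniformly in $\varepsilon$; the blow-up hypothesis \eqref{eq:asspropnonAC} then yields the contradiction. The core computation (the split $C^*u=(\mathrm{Id}-\Pi_{\mathrm{pp}})C^*u+\Pi_{\mathrm{pp}}C^*u$, Parseval, and the harmlessness of the finite-rank correction) is the same in both arguments; what differs is the mechanism validating Parseval. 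Your version is more self-contained (uniform boundedness principle instead of the external semigroup-growth theorem, with Hypothesis \ref{V3} entering only through Theorem \ref{prop:proj-ran2}), while the paper's version is constructive: it exhibits an explicit vector in $(\mathcal{H}_{\mathrm{b}}(H)\oplus\mathcal{H}_{\mathrm{p}}(H^*))^\perp\setminus\mathrm{Ran}(W_-(H,H_0))$ and shows unconditionally that this vector fails to lie in $S(H)$, information your contradiction argument does not produce. One small point to tighten: when $J$ is unbounded, uniform boundedness of $(H-(\lambda-i\varepsilon))^{-1}$ on $\mathrm{Ran}(\Pi_{\mathrm{pp}})$ does not by itself make $\int_J\|C(H-(\lambda-i\varepsilon))^{-1}\Pi_{\mathrm{pp}}C^*u\|^2\,d\lambda$ finite; you should use that on this finite-dimensional invariant subspace, whose spectrum is a finite set off $[0,\infty)$, the resolvent decays like $|\lambda|^{-1}$ as $\lambda\to\infty$, so the integrand is $\mathcal{O}((1+\lambda)^{-2})$ --- the same point is implicit in the paper's \eqref{eq:jskt2}.
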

\begin{proof}
We prove that $\mathrm{Ran} ( W_-( H , H_0 ) )$ is not closed. By Proposition \ref{prop:randense_W-}, it suffices to show that
\begin{align}
\mathrm{Ran}( W_-( H , H_0 ) ) \subsetneq \big ( \mathcal{H}_{ \mathrm{b} }( H ) \oplus \mathcal{H}_{ \mathrm{p} }( H^* ) \big )^\perp . \label{eq:strict_incl}
\end{align}
Recall that $\Pi_{ \mathrm{pp} }$ denote the sum of Riesz projections associated to all the isolated eigenvalues of $H$, and let $\overline{\Pi}_{ \mathrm{pp} } := \mathrm{Id}Ê- \Pi_{ \mathrm{pp} }$. Let $v := \overline{\Pi}_{ \mathrm{pp} } C^* u$. Then $v \in ( \mathcal{H}_{ \mathrm{b} }( H ) \oplus \mathcal{H}_{ \mathrm{p} }( H^* ) )^\perp$ (see \eqref{eq:rankerpipp}), and we claim that $v \notin \mathrm{Ran}( W_-( H , H_0 ) )$. Indeed, using Hypothesis \ref{V3} and applying \cite[Theorem 2.1]{EiZw07_01} as in the proof of Theorem \ref{prop:AC}, one verifies that, for all $\varepsilon > 0$, the map $s \mapsto \| e^{ - s \varepsilon } e^{ i s H }Êv \|^2$ is integrable on $[ 0 , \infty )$. Hence it follows from Parseval's theorem that
\begin{align}
\int_0^\infty \big \|ÊC e^{ i s ( H + i \varepsilon ) }Êv \big \|^2 ds &= \frac{1}{2\pi} \int_{Ê\mathbb{R}Ê}Ê\big \|ÊC ( H - ( \lambda - i \varepsilon ) )^{-1}Êv \big \|^2 d \lambda \notag \\
& \ge \frac{1}{2\pi} \int_{ÊJÊ}Ê\big \|ÊC ( H - ( \lambda - i \varepsilon ) )^{-1}Êv \big \|^2 d \lambda . \label{eq:slfg1}
\end{align}
The definition of $v$ implies that
\begin{align}
C ( H - ( \lambda - i \varepsilon ) )^{-1}Êv = C ( H - ( \lambda - i \varepsilon ) )^{-1}ÊC^* u - C ( H - ( \lambda - i \varepsilon ) )^{-1}Ê\Pi_{ \mathrm{pp} } C^* u. \label{eq:jskt1}
\end{align}
Since $\Pi_{ \mathrm{pp} }$ is a finite sum of Riesz projections corresponding to eigenvalues located in $\mathbb{C} \setminus [ 0 , \infty )$, we deduce that
\begin{align}
\lim_{ \varepsilon \downarrow 0 } \int_{ÊJÊ}Ê\big \|ÊC ( H - ( \lambda - i \varepsilon ) )^{-1}Ê\Pi_{ \mathrm{pp} } C^* u \big \|^2 d \lambda < \infty. \label{eq:jskt2}
\end{align}
Equations \eqref{eq:asspropnonAC}, \eqref{eq:jskt1} and \eqref{eq:jskt2} imply that
\begin{align*}
\lim_{ \varepsilon \downarrow 0 } \int_{ÊJÊ}Ê\big \|ÊC ( H - ( \lambda - i \varepsilon ) )^{-1}Êv \big \|^2 d \lambda = \infty .
\end{align*}
Hence \eqref{eq:slfg1} shows that $\int_0^\infty e^{ - \varepsilon s } \|ÊC e^{ i s  H  }Êv \|^2 ds \to \infty$, as $\varepsilon \to 0$, and therefore, by Lebesgue's monotone convergence theorem, $v \notin S( H ) = \{ w \in \mathcal{H} , \int_0^\infty  \|ÊC e^{ i s  H  }Êw \|^2 ds < \infty \}$. Since $\mathrm{Ran}( W_-( H , H_0 ) ) = S( H ) \cap \mathcal{H}_{ \mathrm{ac} }( H )$ by Theorem \ref{prop:range1}, we have thus proven that \eqref{eq:strict_incl} holds.

We can argue in the same way to show that $\mathrm{Ran}( W_+( H^* , H_0 ) ) \subsetneq ( \mathcal{H}_{ \mathrm{b} }( H ) \oplus \mathcal{H}_{ \mathrm{p} }( H ) )^\perp$ if \eqref{eq:asspropnonAC_3} holds. This concludes the proof.
\end{proof}

\subsection{Proof of Theorem \ref{thm:nonblowup}}\label{subsec:nonblowup}
As a consequence of Theorems \ref{prop:proj-ran2} and \ref{prop:AC}, we establish the following result, which, in particular, proves Theorem \ref{thm:nonblowup}.
\begin{theorem}\label{thm:nonblowup-2}
Suppose that Hypotheses \ref{V-1}--\ref{V3} hold and that $H$ has no spectral singularities in $[0,\infty)$. Then there exist $m_1>0$ and $m_2>0$ such that, for all $u \in \mathcal{H}_{ \mathrm{p} }( H^* )^\perp$,
\begin{equation}
m_1 \| u \| \le \big \| e^{ - i t H }Êu \big \|Ê\le m_2 \|Êu \|Ê, \quad t \in \mathbb{R}. \label{eq:nonblowup_1}
\end{equation}
Likewise, for all $u \in \mathcal{H}_{ \mathrm{p} }( H )^\perp$,
\begin{equation}
m_1 \| u \| \le \big \| e^{ i t H^* }Êu \big \|Ê\le m_2 \|Êu \|Ê, \quad t \in \mathbb{R}. \label{eq:nonblowup_2}
\end{equation}
\end{theorem}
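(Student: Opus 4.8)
The plan is to split $\mathcal{H}_{\mathrm{p}}(H^*)^\perp$ into a part on which $e^{-itH}$ acts isometrically and a part on which $H$ is similar to $H_0$, and to exploit that these two parts are orthogonal and invariant under the evolution. First I would record the orthogonality $\mathcal{H}_{\mathrm{b}}(H)\perp\mathcal{H}_{\mathrm{p}}(H^*)$: if $Hu=\lambda u$ with $\lambda\in\mathbb{R}$ (so also $H^*u=\lambda u$ by Lemma \ref{lm:Hb}) and $(H^*-\mu)^k v=0$ with $\mathrm{Im}\,\mu>0$, then $0=\langle (H-\bar\mu)^k u,v\rangle=(\lambda-\bar\mu)^k\langle u,v\rangle$ forces $\langle u,v\rangle=0$ since $\lambda-\bar\mu\neq0$. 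Hence $\mathcal{H}_{\mathrm{b}}(H)\subset\mathcal{H}_{\mathrm{p}}(H^*)^\perp$, and since $\mathcal{H}_{\mathrm{b}}(H)$ is finite-dimensional I obtain the orthogonal decomposition
\begin{equation*}
\mathcal{H}_{\mathrm{p}}(H^*)^\perp=\mathcal{H}_{\mathrm{b}}(H)\oplus\big(\mathcal{H}_{\mathrm{b}}(H)\oplus\mathcal{H}_{\mathrm{p}}(H^*)\big)^\perp.
\end{equation*}
Writing $u=u_{\mathrm{b}}+u_{\mathrm{ac}}$ with $u_{\mathrm{b}}=\Pi_{\mathrm{b}}(H)u$, one has $\|u\|^2=\|u_{\mathrm{b}}\|^2+\|u_{\mathrm{ac}}\|^2$.

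Next I would show that both summands are invariant under the whole group $\{e^{-itH}\}_{t\in\mathbb{R}}$. For $\mathcal{H}_{\mathrm{b}}(H)$ this is immediate: by Lemma \ref{lm:Hb} and Lemma \ref{lm:real-generalized}, $\mathcal{H}_{\mathrm{b}}(H)$ is spanned by genuine eigenvectors $\phi_j$ of $H$ with real eigenvalues $\lambda_j$ on which $C$ vanishes, so $e^{-itH}u_{\mathrm{b}}=\sum_j c_j e^{-it\lambda_j}\phi_j\in\mathcal{H}_{\mathrm{b}}(H)$ and $\|e^{-itH}u_{\mathrm{b}}\|=\|u_{\mathrm{b}}\|$ for all $t\in\mathbb{R}$. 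For the complement, note that $(\mathcal{H}_{\mathrm{b}}(H)\oplus\mathcal{H}_{\mathrm{p}}(H^*))^\perp=\mathrm{Ker}(\Pi_{\mathrm{pp}})$ by \eqref{eq:rankerpipp}, and $\Pi_{\mathrm{pp}}$ commutes with $H$ and hence with $e^{-itH}$, so $\mathrm{Ker}(\Pi_{\mathrm{pp}})$ is invariant and $e^{-itH}u_{\mathrm{ac}}\in\mathrm{Ker}(\Pi_{\mathrm{pp}})$. Since the two subspaces are orthogonal and each invariant, $e^{-itH}u_{\mathrm{b}}$ and $e^{-itH}u_{\mathrm{ac}}$ remain orthogonal for every $t$, giving
\begin{equation*}
\|e^{-itH}u\|^2=\|u_{\mathrm{b}}\|^2+\|e^{-itH}u_{\mathrm{ac}}\|^2.
\end{equation*}

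It then remains to bound $\|e^{-itH}u_{\mathrm{ac}}\|$. By Theorem \ref{prop:AC}, $\mathrm{Ran}(W_-(H,H_0))=(\mathcal{H}_{\mathrm{b}}(H)\oplus\mathcal{H}_{\mathrm{p}}(H^*))^\perp$, and by Proposition \ref{prop:existence_W-} this operator is an injective contraction; since its range is a closed (hence complete) subspace, the open mapping theorem yields a constant $c_1\in(0,1]$ with $c_1\|w\|\le\|W_-(H,H_0)w\|\le\|w\|$ for all $w\in\mathcal{H}$. Writing $u_{\mathrm{ac}}=W_-(H,H_0)w$ and using the intertwining relation \eqref{eq:inter-1} together with unitarity of $e^{-itH_0}$,
\begin{equation*}
\|e^{-itH}u_{\mathrm{ac}}\|=\big\|W_-(H,H_0)e^{-itH_0}w\big\|,
\end{equation*}
which lies between $c_1\|w\|$ and $\|w\|$; combined with $c_1\|w\|\le\|u_{\mathrm{ac}}\|\le\|w\|$ this gives $c_1\|u_{\mathrm{ac}}\|\le\|e^{-itH}u_{\mathrm{ac}}\|\le c_1^{-1}\|u_{\mathrm{ac}}\|$ uniformly in $t$. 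Substituting into the Pythagorean identity and using $c_1\le1$ yields $c_1\|u\|\le\|e^{-itH}u\|\le c_1^{-1}\|u\|$, so \eqref{eq:nonblowup_1} holds with $m_1=c_1$ and $m_2=c_1^{-1}$.

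The bound \eqref{eq:nonblowup_2} is proved identically, decomposing $\mathcal{H}_{\mathrm{p}}(H)^\perp=\mathcal{H}_{\mathrm{b}}(H)\oplus(\mathcal{H}_{\mathrm{b}}(H)\oplus\mathcal{H}_{\mathrm{p}}(H))^\perp$ (using $\mathcal{H}_{\mathrm{b}}(H^*)=\mathcal{H}_{\mathrm{b}}(H)$ from Lemma \ref{lm:Hb}) and replacing $W_-(H,H_0)$ by $W_+(H^*,H_0)$, whose range is $(\mathcal{H}_{\mathrm{b}}(H)\oplus\mathcal{H}_{\mathrm{p}}(H))^\perp$ by Theorem \ref{prop:AC}. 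The step I expect to require the most care is the orthogonality of $e^{-itH}u_{\mathrm{b}}$ and $e^{-itH}u_{\mathrm{ac}}$: because $e^{-itH}$ is not unitary it need not preserve orthogonality in general, and the argument hinges precisely on the two summands being separately invariant closed subspaces contained in mutually orthogonal spectral subspaces. The role of Theorem \ref{prop:proj-ran2} is to guarantee, via Theorem \ref{prop:AC}, that these spectral subspaces and the closed range of the wave operators are exactly as stated.
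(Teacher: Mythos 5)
Your proof is correct and follows essentially the same route as the paper: the same decomposition $\mathcal{H}_{\mathrm{p}}(H^*)^\perp = \mathcal{H}_{\mathrm{b}}(H) \oplus \mathrm{Ran}(W_-(H,H_0))$ obtained from Theorem \ref{prop:AC}, the isometry of $e^{-itH}$ on $\mathcal{H}_{\mathrm{b}}(H)$, and two-sided bounds on the wave-operator range using injectivity, closedness of the range, the intertwining relation \eqref{eq:inter-1} and contractivity. Your only departure is one of presentation rather than substance: you spell out the step the paper leaves implicit, namely that $e^{-itH}u_{\mathrm{b}}$ and $e^{-itH}u_{\mathrm{ac}}$ remain orthogonal for all $t$ because each summand lies in an invariant subspace ($\mathcal{H}_{\mathrm{b}}(H)$ and $\mathrm{Ker}(\Pi_{\mathrm{pp}})$, respectively), which is precisely what licenses combining the two estimates through the Pythagorean identity.
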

\begin{proof}
We prove \eqref{eq:nonblowup_1}, the proof of \eqref{eq:nonblowup_2} is identical. From Theorem \ref{prop:AC} and the fact that $\mathcal{H}_{ \mathrm{b} }( H )$ and $\mathcal{H}_{ \mathrm{p} }( H^* )$ are orthogonal, we deduce that
\begin{equation*}
\mathcal{H}_{ \mathrm{p} }( H^* )^\perp = \mathrm{Ran}( W_-( H , H_0 ) ) \oplus \mathcal{H}_{ \mathrm{b} }( H ) ,
\end{equation*}
and we recall from \eqref{eq:HacHbperp} and \eqref{eq:1stincl} that $\mathrm{Ran}( W_-( H , H_0 ) )$ and $\mathcal{H}_{ \mathrm{b} }( H )$ are orthogonal.

Let $u = u_{ \mathrm{b} } + u_{ \mathrm{ac}Ê} \in \mathcal{H}_{ \mathrm{p} }( H^* )^\perp$ with $u_{ \mathrm{b} } \in \mathcal{H}_{ \mathrm{b} }( H )$ and $u_{ \mathrm{ac}Ê} \in \mathrm{Ran}( W_-( H , H_0 ) )$. Obviously, $\| e^{ - i t H }Êu_{ \mathrm{b} } \| = \| u_{ \mathrm{b} }\|$ for all $t \in \mathbb{R}$. Let $v \in \mathcal{H}$ be such that $u_{ \mathrm{ac} } = W_-( H , H_0 ) v$. By Proposition \ref{prop:existence_W-} and Theorem \ref{prop:AC}, $W_-( H , H_0 )$ is injective with closed range and therefore there exists $m_1 > 0$ such that $\| W_-( H , H_0 ) w \| \ge m_1 \|w \|$, for all $w \in \mathcal{H}$. Hence, using the intertwining property \eqref{eq:inter-1},
\begin{align*}
\big \| e^{ - i t H } u_{ \mathrm{ac} } \big \| &= \big \| e^{ - i t H }W_-( H , H_0 ) v \big \| \\
&= \big \|W_-( H , H_0 ) e^{ - i t H_0 } v \big \| \\
&\ge m_1 \big \|  e^{ - i t H_0 } v \big \| = m_1 \| v \| \ge m_1 \big \|  W_-( H , H_0 ) v \big \| = m_1 \| u_{ \mathrm{ac} } \| ,
\end{align*}
for all $t \in \mathbb{R}$, where in the second inequality we have used that $W_-( H , H_0 )$ is contractive. Together with the facts that $m_1 \le 1$ (because $W_-( H , H_0 )$ is contractive) and that $\| e^{ - i t H }Êu_{ \mathrm{b} } \| = \| u_{ \mathrm{b} }\|$, this proves the first inequality in \eqref{eq:nonblowup_1}.

To prove the second inequality in \eqref{eq:nonblowup_1}, we write 
\begin{align*}
\big \| e^{ - i t H } u_{ \mathrm{ac} } \big \| &= \big \|W_-( H , H_0 ) e^{ - i t H_0 } v \big \| \\
&\le \big \|  e^{ - i t H_0 } v \big \| = \| v \| \le m_1^{-1} \big \|  W_-( H , H_0 ) v \big \| = m_1^{-1} \| u_{ \mathrm{ac} } \| .
\end{align*}
Combined with the facts that $m_1 \le 1$ and that $\| e^{ - i t H }Êu_{ \mathrm{b} } \| = \| u_{ \mathrm{b} }\|$, this proves the second inequality in \eqref{eq:nonblowup_1} with $m_2 = m_1^{-1}$. Hence the theorem is proven.
\end{proof}

\subsection{Completeness of the local wave operators}\label{subsec:localwave}
We conclude this section by proving a result that completes the picture. We have seen that if $H$ has a spectral singularity (and if \eqref{eq:spectrsingenough} holds) then $W_-( H , H_0 )$ is not asymptotically complete. Nevertheless, in this case, we can establish that local wave operators (on intervals not containing any spectral singularities) are complete in an appropriate sense. 

We define the local wave operators on the interval $I$ by setting
\begin{equation*}
W_-( H , H_0 , I ) := \underset{t\to \infty }{\slim}  \, e^{ - i t H } e^{ i t H_0 } E_{ H_0 }Ê( I ) , \quad W_+( H^* , H_0 , I ) := \underset{t\to \infty }{\slim}  \, e^{ i t H^* } e^{ - i t H_0 }E_{ H_0 }Ê( I ) ,
\end{equation*}
where $E_{H_0}( I )$ denotes the usual spectral projection for the self-adjoint operator $H_0$. By Proposition \ref{prop:existence_W-}, it is clear that $W_-( H , H_0 , I)$ and $W_-( H^* , H_0 , I )$ exist and that their kernels equal $\mathrm{Ran}( E_{ H_0 }( I ) )^\perp$. If $H$ does not have spectral singularities in $I$, we can prove that the local wave operators are complete in the following sense:
\begin{theorem}\label{prop:loc-WO}
Suppose that Hypotheses \ref{V-1}, \ref{V0} and \ref{V1} hold. Let $I \subset [ 0 , \infty )$ be a compact interval containing no spectral singularities of $H$. Then the maps
\begin{align}
& W_-( H , H_0 , I ) : \mathrm{Ran}( E_{ H_0 }( I ) ) \to \mathrm{Ran} ( E_H ( I ) ) , \label{eq:loc-wave1} \\
& W_+( H^* , H_0 , I ) : \mathrm{Ran}( E_{ H_0 }( I ) ) \to \mathrm{Ran} ( E_{H^*} ( I ) ) , \label{eq:loc-wave2}
\end{align}
are bijective. 
\end{theorem}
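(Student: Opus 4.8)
The plan is to deduce the bijectivity of both local wave operators from three facts that are already available: the injectivity of the global wave operators (Proposition \ref{prop:existence_W-}), the range inclusion $\mathrm{Ran}(E_H(I)) \subset \mathrm{Ran}(W_-(H,H_0))$ (Theorem \ref{prop:proj-ran}), and an intertwining identity linking the two spectral projections,
\[
E_H(I)\,W_-(H,H_0) = W_-(H,H_0)\,E_{H_0}(I).
\]
The first reduction is that, since $E_{H_0}(I)$ commutes with $e^{itH_0}$, the defining limit gives $W_-(H,H_0,I) = W_-(H,H_0)\,E_{H_0}(I)$; thus $\mathrm{Ran}(E_{H_0}(I))$ is exactly the subspace on which the local wave operator coincides with $W_-(H,H_0)$.

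The key step is the intertwining identity. From the intertwining relation \eqref{eq:inter-2} I would first derive the resolvent form $(H-z)^{-1}W_-(H,H_0) = W_-(H,H_0)(H_0-z)^{-1}$ for every $z \in \rho(H)\cap\rho(H_0)$; in particular for $z = \lambda\pm i\varepsilon$ with $\lambda \in I$ and $\varepsilon>0$ small, since such points lie in $\rho(H_0)$ (as $H_0\ge 0$) and in $\rho(H)$ (as is implicit in the very definition of $E_H(I)$, $I$ containing no spectral singularities). Writing $A_\varepsilon$ for the integrand $\frac{1}{2i\pi}\int_I\big((H-(\lambda+i\varepsilon))^{-1}-(H-(\lambda-i\varepsilon))^{-1}\big)d\lambda$ and $B_\varepsilon$ for its $H_0$-analogue, inserting the resolvent intertwining into each term gives, for every fixed $\varepsilon>0$, the operator identity $A_\varepsilon W_-(H,H_0) = W_-(H,H_0)B_\varepsilon$. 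Pairing with vectors $u,v$ and moving $W_-(H,H_0)$ onto $v$, I would let $\varepsilon\downarrow 0$: the left side tends to $\langle v, E_H(I)W_-(H,H_0)u\rangle$ by the weak convergence $A_\varepsilon\to E_H(I)$ of Proposition \ref{prop:spectr-sing}, while the right side tends to $\langle W_-(H,H_0)^*v, E_{H_0}(I)u\rangle$ because $B_\varepsilon\to E_{H_0}(I)$ \emph{strongly} by Stone's formula for the self-adjoint $H_0$; here the absolute continuity of $\sigma(H_0)$ (Hypothesis \ref{V-1}) ensures the endpoints of $I$ carry no spectral mass, so the strong limit is exactly $E_{H_0}(I)$. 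This establishes the identity, and in particular $W_-(H,H_0,I) = E_H(I)W_-(H,H_0)$ maps into $\mathrm{Ran}(E_H(I))$, so \eqref{eq:loc-wave1} is well-defined into the stated codomain.

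Bijectivity then follows quickly. For injectivity, on $\mathrm{Ran}(E_{H_0}(I))$ the projection $E_{H_0}(I)$ acts as the identity, so $W_-(H,H_0,I)$ agrees there with $W_-(H,H_0)$, which is injective by Proposition \ref{prop:existence_W-}. For surjectivity, given $w\in\mathrm{Ran}(E_H(I))$, Theorem \ref{prop:proj-ran} furnishes $u$ with $W_-(H,H_0)u = w$; setting $u' := E_{H_0}(I)u \in \mathrm{Ran}(E_{H_0}(I))$ and using the intertwining together with the idempotency $E_H(I)w = w$, one gets $W_-(H,H_0,I)u' = W_-(H,H_0)E_{H_0}(I)u = E_H(I)W_-(H,H_0)u = E_H(I)w = w$. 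The second map \eqref{eq:loc-wave2} is treated identically: $H^*$ replaces $H$, the injective contraction $W_+(H^*,H_0)$ replaces $W_-(H,H_0)$ (with resolvent intertwining again from \eqref{eq:inter-2}), $E_{H^*}(I)$ replaces $E_H(I)$, and the surjectivity input $\mathrm{Ran}(E_{H^*}(I)) \subset \mathrm{Ran}(W_+(H^*,H_0))$ is supplied by Theorem \ref{prop:proj-ran}.

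I expect the only genuinely delicate point to be the passage to the limit in the intertwining step: one must reconcile a merely \emph{weak} limit on the $H$-side (all that Proposition \ref{prop:spectr-sing} provides) with the \emph{strong} Stone limit on the $H_0$-side. This is precisely why I would pair against an arbitrary $v$ and transfer $W_-(H,H_0)$ to $W_-(H,H_0)^*v$ \emph{before} sending $\varepsilon\downarrow 0$, so that the two convergence modes combine consistently; everything else is routine bookkeeping.
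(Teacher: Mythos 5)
Your proof is correct and follows essentially the same route as the paper: injectivity of the restriction comes from Proposition \ref{prop:existence_W-}, while the inclusion $\mathrm{Ran}(W_-(H,H_0,I)) \subset \mathrm{Ran}(E_H(I))$ and the surjectivity both rest on the intertwining identity $E_H(I)\,W_-(H,H_0) = W_-(H,H_0)\,E_{H_0}(I)$ combined with Theorem \ref{prop:proj-ran}. The only difference is one of detail: the paper asserts this projection intertwining in one line as a consequence of \eqref{eq:inter-1} and Proposition \ref{prop:spectr-sing}, whereas you derive it explicitly from the resolvent intertwining and the pairing that reconciles the weak limit on the $H$-side with the strong Stone limit on the $H_0$-side, which is precisely the natural way to fill in that step.
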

\begin{remark}
Proceeding as in the proof of Theorem \ref{prop:range1}, it is not difficult to verify that the inverses of the maps \eqref{eq:loc-wave1}--\eqref{eq:loc-wave2} are given by
\begin{align*}
& W_-( H , H_0 , I )^{-1} = W_-( H_0 , H , I ) := \underset{t\to \infty }{\slim}  \, e^{ - i t H_0 } e^{ i t H } E_{ H }Ê( I ) ,\\
& W_+( H^* , H_0 , I )^{-1} = W_+( H_0 , H^* , I ) := \underset{t\to \infty }{\slim}  \, e^{ i t H_0 } e^{ i t H^* }E_{ H^* }Ê( I ) .
\end{align*}
\end{remark}
\begin{proof}[Proof of Theorem \ref{prop:loc-WO}]
By Proposition \ref{prop:existence_W-}, the restriction of $W_-( H , H_0 , I )$ to $\mathrm{Ran} ( E_{ H_0 }( I ) )$ is injective. Moreover, by the intertwining property \eqref{eq:inter-1}, combined with the fact that $E_H( I )$ is well-defined by Proposition \ref{prop:spectr-sing}, we have that
\begin{equation*}
W_-( H , H_0 , I ) = E_{H}( I ) W_- ( H , H_0 , I ) ,
\end{equation*}
which shows that $\mathrm{Ran}( W_-( H , H_0 , I ) ) \subset \mathrm{Ran}( E_H( I ) )$. To prove the converse inclusion, it suffices to use that $E_H( I )$ is a projection (see Proposition \ref{prop:spectr-sing}) and apply Theorem \ref{prop:proj-ran}. For if $v \in \mathrm{Ran}( E_H( I ) )$, there exists $u \in \mathcal{H}$ such that $v = W_- ( H , H_0 ) u$. Hence 
\begin{equation*}
v = E_H( I ) v = E_H ( I ) W_-( H , H_0 ) u = W_- ( H , H_0 ) E_{ÊH_0 }( I ) u ,
\end{equation*}
where we used again the intertwining property \eqref{eq:inter-1} in the last equality. This shows that 
\begin{equation*}
\mathrm{Ran}( E_H( I ) ) \subset \mathrm{Ran} ( W_- ( H , H_0 , I ) ) ,
\end{equation*}
and concludes the proof.
\end{proof}
In a stationary approach to scattering theory, the existence and completeness of local wave operators has been proven for various non-self-adjoint operators $H$ satisfying appropriate conditions related to ours; see \cite{Mo67_01,Mo68_01,Go70_01,Go71_01,Hu71_01}.

\section{Application to Schr{\"o}dinger operators}\label{sec:schr}

In this section we apply our abstract results to operators of the form \eqref{eq:intro1}. Therefore, throughout the section, we set
\begin{equation}
H = - \Delta + V( x ) - i W( x ) , \qquad H_V = - \Delta + V(x) , \qquad H_0 = - \Delta , \label{eq:Hschr}
\end{equation}
on $L^2( \mathbb{R}^3 )$, with $V,W: \mathbb{R}^3 \to \mathbb{R}$ and $W \ge 0$. Since $W \ge 0$ we can write $W = C^* C$ with $C = \sqrt{W}$ and hence $H$ is of the form \eqref{eq:exprH}. We do not aim at finding optimal conditions on the potentials $V$ and $W$ such that our abstract hypotheses are satisfied. Rather, we verify that our results can be applied under the assumption that $V$ and $W$ are bounded and compactly supported. Extensions of our results to more general potentials are left to future work. Thus, we assume that
\begin{equation}
V, W \in L^\infty_{ \mathrm{c} }( \mathbb{R}^3 ; \mathbb{R} ) := \{ u \in L^\infty( \mathbb{R}^3 ; \mathbb{R} ) , \, u \text{ is compactly supported} \} , \quad W \ge 0 . \label{eq:VW_Linftycomp}
\end{equation}
This implies that $V$ and $W$ are relatively compact with respect to $H_0$. In addition, we require that
\begin{equation}
0 \text{ is neither an eigenvalue nor a resonance of } H_V. \label{eq:0noteignorres}
\end{equation}
We say that $0$ is a resonance of $H_V$ if the equation $H_V u = 0$ has a solution $u \in H^2_{ \mathrm{loc} }( \mathbb{R}^3 ) \setminus L^2 ( \mathbb{R}^3 )$. Equivalently, $0$ is a resonance of $H_V$ if it is a pole of the meromorphic extension of the resolvent of $H_V$ in a sense described more precisely below (see the verification of Hypothesis \ref{V3}).

In what follows, we verify that, assuming \eqref{eq:VW_Linftycomp} and \eqref{eq:0noteignorres}, the operators $H_0$, $H_V$ and $H$ in \eqref{eq:Hschr} satisfy Hypotheses \ref{V-1}--\ref{V3}.

\vspace{0,2cm}

\noindent \textbf{Verification of Hypothesis \ref{V-1}.} By Fourier transformation, it is clear that the spectrum of $H_0 = - \Delta$ is purely absolutely continuous. The facts that, for $V$ bounded and compactly supported, the singular continuous spectrum of $H_V = - \Delta + V(x)$ is empty, that $H_V$ does not have positive eigenvalues, and that $H_V$ has only finitely many non-positive eigenvalues are well-known (see, e.g., Theorems XIII.6, XIII.21 and XIII.57 in \cite{ReedSimon4}). To make sure that Hypothesis \ref{V-1} is satisfied, it therefore suffices to assume, in addition to \eqref{eq:VW_Linftycomp}, that $0$ is not an eigenvalue of $H_V$. This is assumed in \eqref{eq:0noteignorres}.

\vspace{0,2cm}

\noindent \textbf{Verification of Hypothesis \ref{V2}.} Assuming \eqref{eq:VW_Linftycomp}, it is known that $H$ has only finitely many eigenvalues with finite algebraic multiplicities. See, e.g., \cite{FrLaSa16_01} and references therein. In particular, Hypothesis \ref{V2} is satisfied.

\vspace{0,2cm}

\noindent \textbf{Verification of Hypothesis \ref{V0}.} Assuming that $V$ is bounded and compactly supported, the existence and completeness of the wave operators $W_\pm( H_V , H_0 )$ and $W_{Ê\pm } ( H_0 , H_V )$ follow from well-known arguments. See, e.g., \cite{RS-III} or \cite{Ya92_01}.

\vspace{0,2cm}

\noindent \textbf{Verification of Hypothesis \ref{V1}.} To verify Hypothesis \ref{V1}, one can apply a result proven in \cite{BeMa92_01} (see also \cite{CoSa89_01}): Assuming \eqref{eq:0noteignorres} and that $V \in L^\infty_{ \mathrm{c} }( \mathbb{R}^3 ; \mathbb{R} )$, the inequality 
\begin{equation*}
\int_{ \mathbb{R}Ê} \big \| ( 1 + x^2 )^{-\frac{1+\varepsilon}{2}} e^{ - i t H_V } \Pi_{ \mathrm{ac} }( H_V ) u  \big \|^2 dt \le \mathrm{c}_\varepsilon^2Ê\| \Pi_{ \mathrm{ac} }( H_V ) u \|^2 
\end{equation*}
holds for all $\varepsilon > 0$, $u \in L^2( \mathbb{R}^3 )$ and some positive constant $\mathrm{c}_\varepsilon$. Since $C = \sqrt{W}$ is compactly supported, this implies that Hypothesis \ref{V1} is satisfied.

\vspace{0,2cm}

\noindent \textbf{Verification of Hypothesis \ref{V3}.} Hypothesis \ref{V3} is related to the theory of resonances for Schr{\"o}dinger operators. See, e.g., \cite{DyZw17_01}. Since $V$ and $W$ belong to $L^\infty_{ \mathrm{c} }( \mathbb{R}^3 )$, the map 
\begin{equation*}
\{ z \in \mathbb{C} , \mathrm{Im}( z ) > 0 \} \ni z \mapsto ( H - z^2 )^{-1} : L^2( \mathbb{R}^3 ) \to L^2( \mathbb{R}^3 ) 
\end{equation*}
is meromorphic and extends to a meromorphic map
\begin{equation}
\mathbb{C} \ni z \mapsto R(z^2) : L^2_{ \mathrm{c} }( \mathbb{R}^3 ) \to L^2_{ \mathrm{loc} }( \mathbb{R}^3 ) , \label{eq:mero_ext}
\end{equation}
where, we recall, $L^2_{ \mathrm{c}Ê}( \mathbb{R}^3 )= \{ u \in L^2( \mathbb{R}^3 ) , u \text{ is compactly supported} \}$ and $L^2_{Ê\mathrm{loc} }( \mathbb{R}^3 ) = \{ u : \mathbb{R}^3 \to \mathbb{C}, u \in L^2( K ) \text{ for all compact set } K \subset \mathbb{R}^3 \}$. Here, a map $\Omega \ni z \mapsto A(z) \in \mathcal{L}( E ; F )$, where $\Omega \subset \mathbb{C}$ is an open set and $E$, $F$ are Banach spaces, is called meromorphic in $\Omega$ if, for all $z_0 \in \Omega$, there exist a finite number of finite-rank operators $A_1, \dots A_n$, and a holomorphic family of operators $z \mapsto A_0(z)$, such that
\begin{equation*}
A(z) = A_0( z ) + \sum_{j=1}^n \frac{ A_j }{ ( z - z_0 )^j } ,
\end{equation*}
in a complex neighborhood of $z_0$. A map $z \mapsto A(z) : L^2_{ \mathrm{c} }( \mathbb{R}^3 ) \to L^2_{ \mathrm{loc} }( \mathbb{R}^3 )$ is called meromorphic if, for arbitrary bounded, compactly supported potentials $\rho_1$, $\rho_2$, the map $z \mapsto \rho_1 A( z ) \rho_2 : L^2( \mathbb{R}^3 ) \to L^2( \mathbb{R}^3 )$ is meromorphic. Resonances of $H$ are poles of the meromorphic extension \eqref{eq:mero_ext}. Hence, since $C = \sqrt{W}$ is compactly supported, 
\begin{equation}
z \mapsto C ( H - z^2 )^{-1} C^* : L^2( \mathbb{R}^3 ) \to L^2( \mathbb{R}^3 ) , \label{eq:mero_ext2}
\end{equation}
is meromorphic in $\{ z \in \mathbb{C} , \mathrm{Im}( z ) > 0 \}$ and extends to a meromorphic map in $\mathbb{C}$. Each spectral singularity $z_0^2 \in [ 0 , \infty )$ (with $z_0 \ge 0$), in the sense of Definition \ref{def:spec-sing} in Section \ref{subsec:hypoth}, corresponds to a resonance $-z_0 \in ( - \infty , 0 ]$, i.e., a pole of \eqref{eq:mero_ext2}. This implies that, for each spectral singularity $\lambda \in [ 0 , \infty )$, there exist an integer $\nu > 0$ and a compact interval $K_{\lambda}$, whose interior contains $\lambda$, such that the limit
\begin{equation*}
\lim_{ \varepsilon \downarrow 0 }  | \mu - \lambda |^{\nu} \big \|ÊC \big ( H - ( \mu - i \varepsilon ) \big )^{-1} C^* \big \| 
\end{equation*}
exists uniformly in $\mu \in K_{\lambda}$ in the norm topology of $\mathcal{L}( L^2( \mathbb{R}^3 ) )$.

Moreover, it is well-known that
\begin{equation*}
\big \| \rho_1 ( - \Delta - \mu - i \varepsilon )^{-1} \rho_2 \big \| = \mathcal{O} ( \mu^{-\frac12} ) , \quad \mu \to \infty ,
\end{equation*}
uniformly in $\varepsilon > 0$, for arbitrary bounded, compactly supported potentials $\rho_1$, $\rho_2$; see, e.g., \cite{DyZw17_01}. Since $V$ and $W$ are compactly supported, it then easily follows from a Neumann series expansion that there exists $m > 0$ such that
\begin{equation*}
\sup_{Ê\mu \ge m, \, \varepsilon > 0 } \big \| C \big ( H - ( \mu - i \varepsilon ) \big )^{-1} C^* \big \| < \infty.
\end{equation*}
In particular $H$ has no resonances in $( - \infty , - m^{1/2} ]$. Since, in addition, it is known that there are only finitely many resonances in $[ - m^{1/2} , 0 ]$, for any $m > 0$ (see \cite{DyZw17_01}), we can conclude that $H$ has finitely many spectral singularities, and hence Hypothesis \ref{V3} is indeed satisfied.

\vspace{0,2cm}

\noindent \textbf{Verification of \eqref{eq:spectrsingenough}.} Finally, we verify that, if $H$ has a spectral singularity, then \eqref{eq:spectrsingenough} is necessarily satisfied for operators $H$ of the form \eqref{eq:Hschr} satisfying \eqref{eq:VW_Linftycomp}. Indeed, suppose that $z_0^2 \in [ 0 , \infty )$ (with $z_0 \ge 0$) is a spectral singularity of $H$. Then $-z_0$ is a resonance of $H$, in the sense specified above, so that there exists a complex neighborhood $\Omega_{z_0}$ of $z_0$ such that
\begin{equation*}
C ( H - z^2 )^{-1} C^* = A_0( z ) + \sum_{j=1}^n \frac{ A_j }{ ( z + z_0 )^j } ,
\end{equation*}
for all $z \in \Omega_{z_0}$, $\mathrm{Im}(z) > 0$, where $z \mapsto A_0( z )$ is holomorphic in $\Omega_{z_0}$ and $A_j$, $j = 1 , \dots , n $ are finite-rank operators, with $A_n \neq 0$. Picking $z_{\lambda,\varepsilon} \in \Omega_{z_0}$, $\mathrm{Im}( z_{\lambda,\varepsilon} ) > 0$, such that $z_{\lambda,\varepsilon}^2 = \lambda - i \varepsilon$, for $\lambda$ in a real neighborhood $J_{z_0}$ of $z_0^2$ and $\varepsilon > 0$ small enough, we deduce that
\begin{equation*}
C ( H - ( \lambda - i \varepsilon ) )^{-1} C^* = A_0( z_{\lambda,\varepsilon} ) + \sum_{j=1}^n \frac{ A_j }{ ( z_{\lambda,\varepsilon} + z_0 )^j } .
\end{equation*}
For $u \in \mathcal{H}$ such that $A_n u \neq 0$, it is then not difficult to deduce that
\begin{align*}
\lim_{ \varepsilon \downarrow 0 } \int_{ÊJ_{z_0}Ê}Ê\big \|ÊC ( H - ( \lambda - i \varepsilon ) )^{-1}ÊC^* u \big \|^2 d \lambda = \infty ,
\end{align*}
which proves that \eqref{eq:spectrsingenough} holds.

\vspace{0,2cm}

Recall that the subspaces $\mathcal{H}_{ \mathrm{b} }( H )$, $\mathcal{H}_{ \mathrm{p} }( H )$, $\mathcal{H}_{ \mathrm{d} }( H )$ and $\mathcal{H}_{ \mathrm{p} }( H^* )$ are defined in Section \ref{subsec:subspaces}. For dissipative Schr{\"o}dinger operators \eqref{eq:Hschr}, it follows from Lemma \ref{lm:Hb} and the unique continuation principle (see, e.g., \cite[Theorem XIII.63]{ReedSimon4}) that, if $W( x ) > 0$ on some non-trivial open set, then $\mathcal{H}_{ \mathrm{b} }( H ) = \{ 0 \}$, i.e., $H$ does not have real eigenvalues.

Applying the results of Section \ref{subsec:main_res}, we obtain the following result.
\begin{theorem}
Let $H = - \Delta + V( x ) - i W( x )$ on $L^2( \mathbb{R}^3 )$ with $W \ge 0$, $W( x ) > 0$ on some non-trivial open set and $V,W \in L^\infty_{ \mathrm{c} }( \mathbb{R}^3 ; \mathbb{R} )$. Suppose that $0$ is neither an eigenvalue nor a resonance of $H_V = - \Delta + V(x)$. Then
\begin{equation*}
\mathcal{H}_{ \mathrm{p} }( H ) = \mathcal{H}_{ \mathrm{d} }( H ) .
\end{equation*}
Moreover, the wave operator $W_-( H , H_0 ) = \slim_{ t \to \infty } e^{ - i t H }Êe^{ i t H_0 }$, with $H_0 = - \Delta$, is asymptotically complete in the sense that
\begin{equation*}
\mathrm{Ran}( W_-( H , H_0 ) ) = \mathcal{H}_{ \mathrm{p} }( H^* )^\perp 
\end{equation*}
\emph{if and only if} $H$ does not have real resonances. In this case, the restriction of $H$ to $\mathcal{H}_{ \mathrm{p} }( H^* )^\perp $ is similar to $H_0$ and there exist $m_1>0$ and $m_2>0$ such that, for all $u \in \mathcal{H}_{ \mathrm{p} }( H^* )^\perp$,
\begin{equation*}
m_1 \| u \| \le \big \| e^{ - i t H }Êu \big \|Ê\le m_2 \|Êu \|Ê, \quad t \in \mathbb{R}.
\end{equation*}
\end{theorem}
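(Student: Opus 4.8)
The plan is to deduce the entire statement from the abstract results of Section \ref{subsec:main_res}, once the hypotheses have been checked. The verifications carried out earlier in this section show that, under \eqref{eq:VW_Linftycomp} and \eqref{eq:0noteignorres}, the operators $H_0$, $H_V$ and $H$ in \eqref{eq:Hschr} satisfy Hypotheses \ref{V-1}--\ref{V3}. Hence Theorem \ref{thm:Hp-Hd} applies verbatim and gives $\mathcal{H}_{ \mathrm{p} }( H ) = \mathcal{H}_{ \mathrm{d} }( H )$, which is the first assertion. Before turning to the scattering part, I would record the simplification $\mathcal{H}_{ \mathrm{b} }( H ) = \{ 0 \}$: since $W(x) > 0$ on a non-trivial open set, any vector in $\mathcal{H}_{ \mathrm{b} }( H )$ lies in $\mathrm{Ker}( C ) = \mathrm{Ker}( \sqrt{W} )$ and is an eigenfunction of $H_V$ by Lemma \ref{lm:Hb}, so the unique continuation principle forces it to vanish (this is exactly the remark preceding the theorem). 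Consequently $\big ( \mathcal{H}_{ \mathrm{b} }( H ) \oplus \mathcal{H}_{ \mathrm{p} }( H^* ) \big )^\perp = \mathcal{H}_{ \mathrm{p} }( H^* )^\perp$, so the abstract completeness relation of Theorem \ref{thm:AC} collapses to the stated form $\mathrm{Ran}( W_-( H , H_0 ) ) = \mathcal{H}_{ \mathrm{p} }( H^* )^\perp$.

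For the equivalence I would invoke the dictionary established above between spectral singularities in the sense of Definition \ref{def:spec-sing} and real resonances, i.e.\ poles in $( - \infty , 0 ]$ of the meromorphic continuation \eqref{eq:mero_ext2}; thus ``$H$ has no real resonances'' is synonymous with ``$H$ has no spectral singularities in $[ 0 , \infty )$''. In the forward direction, assuming the absence of real resonances, Theorem \ref{thm:AC} yields asymptotic completeness together with the similarity of the restriction of $H$ to $\mathcal{H}_{ \mathrm{p} }( H^* )^\perp$ to $H_0$, while Theorem \ref{thm:nonblowup} supplies the two-sided bound $m_1 \| u \| \le \| e^{ - i t H } u \| \le m_2 \| u \|$ for $u \in \mathcal{H}_{ \mathrm{p} }( H^* )^\perp$. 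For the converse I would argue by contraposition: if $H$ has a real resonance, then, as verified above, condition \eqref{eq:spectrsingenough} holds for a suitable interval $J$ and vector $u$, so Theorem \ref{thm:nonAC} gives the strict inclusion $\mathrm{Ran}( W_-( H , H_0 ) ) \subsetneq \mathcal{H}_{ \mathrm{p} }( H^* )^\perp$, and asymptotic completeness fails.

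The proof is thus essentially a matter of bookkeeping, and I expect no serious analytic obstacle: all the substance resides in the abstract Theorems \ref{thm:Hp-Hd}, \ref{thm:AC}, \ref{thm:nonAC} and \ref{thm:nonblowup} and in the preceding hypothesis verifications. The single point that must not be glossed over---and the reason the equivalence is genuinely two-sided rather than a one-way implication---is the converse direction, which rests on the \emph{automatic} validity of the sufficient singularity condition \eqref{eq:spectrsingenough} for honest resonances of compactly supported Schr\"odinger operators. This in turn relies on the local resonance expansion of \eqref{eq:mero_ext2} near a pole $-z_0$, with non-vanishing leading finite-rank coefficient $A_n$, and on selecting $u$ with $A_n u \neq 0$; I would double-check that this lower bound survives the substitution $z_{\lambda,\varepsilon}^2 = \lambda - i \varepsilon$ and integration over a real neighbourhood of $z_0^2$, which is precisely the computation carried out in the verification of \eqref{eq:spectrsingenough} above.
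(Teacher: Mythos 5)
Your proposal is correct and follows essentially the same route as the paper: the theorem is obtained exactly by combining the hypothesis verifications of this section with Theorems \ref{thm:Hp-Hd}, \ref{thm:AC}, \ref{thm:nonblowup} and \ref{thm:nonAC}, using Lemma \ref{lm:Hb} plus unique continuation to get $\mathcal{H}_{\mathrm{b}}(H)=\{0\}$, and the identification of spectral singularities with real resonances (together with the automatic validity of \eqref{eq:spectrsingenough} at a resonance) for the two-sided equivalence. Nothing essential is missing.
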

It is proven in \cite{Wang2} that $0$ cannot be a resonance of $H$. Moreover, since $H$ is dissipative, it does not have positive resonances. It is likely that, generically, $H$ does not have real resonances, implying that the wave operator $W_-( H , H_0 )$ is generically asymptotically complete. However, for any $z_0>0$, it is not difficult to construct smooth compactly supported potentials $V$ and $W$ such that $-z_0$ is a resonance of $H = - \Delta + V(x) - i W(x)$ (see \cite{Wang1}). Our results clearly underline the importance of real resonances in the scattering theory of dissipative Schr{\"o}dinger operators.

\section{Scattering theory for Lindblad master equations}\label{sec:Lindblad}

In this section we outline some consequences of our results for the scattering theory of Lindblad master equations. We refer the reader to \cite{Davies2} and \cite{FaFaFrSc17_01} for more details and references on this subject.

Let $\mathcal{H}$ be a complex separable Hilbert space. On the space of trace-class operators $\mathcal{J}_1( \mathcal{H} )$, we consider a Lindbladian of the form
\begin{equation*}
\mathcal{L}( \rho ) = H \rho - \rho H^* + i \sum_{j \in \mathbb{N} } W_j \rho W_j^*, \qquad H = H_V - \frac{i}{2} \sum_{j \in \mathbb{N} } W^*_j W_j ,
\end{equation*}
for all $\rho \in \mathcal{J}_1( \mathcal{H} )$ where, as above, $H_V = H_0 + V$ is a self-adjoint operator on $\mathcal{H}$, with $H_0$ self-adjoint and $V$ symmetric and relatively compact with respect to $H_0$. We suppose that, for all $j \in \mathbb{N}$, $W_j \in \mathcal{L}( \mathcal{H} )$, and  $\sum_{ j \in \mathbb{N} } W_j^* W_j \in \mathcal{L}( \mathcal{H} )$. Since $\frac12 \sum_{ j \in \mathbb{N} } W_j^* W_j$ is non-negative, its square root, denoted by $C \in \mathcal{L}( \mathcal{H} )$, satisfies
\begin{equation*}
C^* C := \frac12 \sum_{ j \in \mathbb{N} } W_j^* W_j .
\end{equation*}
Hence, in particular, $H$ is a dissipative operator on $\mathcal{H}$ of the form \eqref{eq:exprH}. We suppose that $C$ is relatively compact with respect to $H_0$.

The domain of the unbounded operator $\mathcal{L}$ is defined by
  \begin{align*}
  \mathcal{D}(\mathcal{L})= \big \{ & \rho \in \mathcal{J}_{1}(\mathcal{H}) , \, \rho ( \mathcal{D}( H_0 ) ) \subset \mathcal{D}( H_0 )  \text{ and } \\
  &H_0 \rho - \rho H_0, \text{ defined on } \mathcal{D}( H_0 ), \text{ extends to an element of } \mathcal{J}_{1}(\mathcal{H}) \big \}.
  \end{align*}
It is known (see \cite[Theorem 5.2]{Daviesbook}) that $\mathcal{L}$ is the generator of a quantum dynamical semigroup $\{ e^{ - i t \mathcal{L} } \}_{ t \ge 0 }$, i.e., a strongly continuous one-parameter semigroup on $\mathcal{J}_1( \mathcal{H} )$ such that, for all $t \ge 0$, $e^{ - i t \mathcal{L} }$ preserves the trace and is a completely positive operator. As mentioned in the introduction, if one considers a quantum particle interacting with a dynamical target, takes the trace over the degrees of freedom of the target and studies the reduced effective evolution of the particle, then, in the kinetic limit, the dynamics of the particle is given by a quantum dynamical semigroup of the form $\{ e^{ - i t \mathcal{L} } \}_{ t \ge 0 }$. The free dynamics of the particle is supposed to be given by the group of isometries $\{ e^{ - i t \mathcal{L}_0 } \}_{ t \in \mathbb{R} }$, with generator
\begin{equation*}
\mathcal{L}_0( \rho ) = H_0 \rho - \rho H_0 ,
\end{equation*}
for all $\rho \in \mathcal{D}( \mathcal{L}_0 ) \subset \mathcal{J}_1( \mathcal{H} )$. The domain, $\mathcal{D}( \mathcal{L}_0 )$, of $\mathcal{L}_0$ coincides with $\mathcal{D}( \mathcal{L} )$.

Let $\Pi_{ \mathrm{pp} }^\perp : \mathcal{H} \to \mathcal{H}$ denote the orthogonal projection onto $(\mathcal{H}_{ \mathrm{b} }( H ) \oplus \mathcal{H}_{ \mathrm{p} }( H ))^\perp$. A modified wave operator $\tilde{\Omega}_+( \mathcal{L}_0 , \mathcal{L} )$ is defined by
\begin{equation*}
\tilde{\Omega}_+( \mathcal{L}_0 , \mathcal{L} ) := \underset{t\to +\infty }{\slim} \: e^{it \mathcal{L}_0}  \big ( \Pi_{ \mathrm{pp} }^\perp e^{-it\mathcal{L}}  ( \cdot ) \Pi_{ \mathrm{pp} }^\perp \big ) .
\end{equation*}
In \cite{Davies2}, the projection onto $(\mathcal{H}_{ \mathrm{b} }( H ) \oplus \mathcal{H}_{ \mathrm{d} }( H ))^\perp$ is considered instead of $\Pi_{ \mathrm{pp} }^\perp$; but we know from Theorem \ref{thm:Hp-Hd} that these two projections coincide under our assumptions. It is proven in \cite[Theorem 4]{Davies2} (see also \cite{FaFaFrSc17_01}) that the asymptotic completeness of the wave operator $W_- (H , H_0)$ implies the existence of $\tilde{\Omega}_+( \mathcal{L}_0 , \mathcal{L} )$.  Therefore, as a consequence of Theorems \ref{thm:Hp-Hd} and  \ref{thm:AC}, we obtain the following result.
\begin{theorem}\label{thm:lindblad}
Suppose that Hypotheses \ref{V-1}--\ref{V3} hold and that $H$ has no spectral singularities in $[0,\infty)$. Then $\tilde{\Omega}_+( \mathcal{L}_0 , \mathcal{L} )$ exists on $\mathcal{J}_1( \mathcal{H} )$.
\end{theorem}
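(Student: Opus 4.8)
The plan is to reduce the statement to results already established in the paper together with the abstract result of Davies quoted immediately before the theorem. First I would observe that, under Hypotheses \ref{V-1}--\ref{V3} and the assumption that $H$ has no spectral singularities in $[0,\infty)$, Theorem \ref{thm:AC} guarantees the asymptotic completeness of $W_-(H,H_0)$, that is, $\mathrm{Ran}(W_-(H,H_0)) = (\mathcal{H}_{\mathrm{b}}(H) \oplus \mathcal{H}_{\mathrm{p}}(H^*))^\perp$. This asymptotic completeness is precisely the input required by the scattering theory for Lindblad master equations.

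Second, I would reconcile the projection $\Pi_{\mathrm{pp}}^\perp$ appearing in the definition of $\tilde{\Omega}_+(\mathcal{L}_0,\mathcal{L})$, which projects onto $(\mathcal{H}_{\mathrm{b}}(H) \oplus \mathcal{H}_{\mathrm{p}}(H))^\perp$, with the projection onto $(\mathcal{H}_{\mathrm{b}}(H) \oplus \mathcal{H}_{\mathrm{d}}(H))^\perp$ that enters the corresponding definition in \cite{Davies2}. By Theorem \ref{thm:Hp-Hd} we have $\mathcal{H}_{\mathrm{d}}(H) = \mathcal{H}_{\mathrm{p}}(H)$, so the two subspaces, and hence the two orthogonal projections, coincide under the present hypotheses. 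Consequently the operator whose existence is to be established coincides exactly with the modified wave operator treated in \cite{Davies2}.

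Finally, I would invoke \cite[Theorem 4]{Davies2} (see also \cite{FaFaFrSc17_01}), according to which the asymptotic completeness of $W_-(H,H_0)$ implies the existence of the strong limit defining $\tilde{\Omega}_+(\mathcal{L}_0,\mathcal{L})$ on $\mathcal{J}_1(\mathcal{H})$. Combining this with the two observations above yields the claim. There is no genuine obstacle in this argument: the substantive content has been deferred to Theorems \ref{thm:Hp-Hd} and \ref{thm:AC} and to the cited abstract result. The only point that requires attention is the identification of the two projections, which is immediate from Theorem \ref{thm:Hp-Hd} and ensures that the object defined here is the same as the one whose existence Davies's theorem provides.
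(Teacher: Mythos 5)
Your proposal is correct and follows exactly the paper's own argument: invoke Theorem \ref{thm:AC} for asymptotic completeness of $W_-(H,H_0)$, use Theorem \ref{thm:Hp-Hd} to identify $\Pi_{\mathrm{pp}}^\perp$ with the projection onto $(\mathcal{H}_{\mathrm{b}}(H)\oplus\mathcal{H}_{\mathrm{d}}(H))^\perp$ appearing in \cite{Davies2}, and then apply \cite[Theorem 4]{Davies2} to conclude existence of $\tilde{\Omega}_+(\mathcal{L}_0,\mathcal{L})$.
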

For all  $\rho \in \mathcal{J}_1( \mathcal{H} )$ with $\rho \ge 0$ and $\mathrm{tr}( \rho ) = 1$, the number $\mathrm{tr}( \tilde{\Omega}_+( \mathcal{L}_0 , \mathcal{L} ) \rho ) \in [ 0 , 1 ]$ is interpreted as the probability that the particle, initially in the state $\rho$, eventually escapes from the target. This quantity is therefore well-defined under our assumptions.

\appendix

\section{Proof of Lemmas \ref{lm:Hb} and \ref{lm:real-generalized}}\label{app:spectrum}

In this section we prove Lemmas \ref{lm:Hb} and \ref{lm:real-generalized} stated in Section \ref{subsec:spectrum}.
\begin{proof}[Proof of Lemma \ref{lm:Hb}]
We prove that 
\begin{equation}
\mathcal{H}_{ \mathrm{b} }( H ) = \mathrm{Span} \big \{ u \in \mathcal{D}( H ) , \, \exists \lambda \in \mathbb{R} , \, H u = \lambda u \big \} \subset \mathcal{H}_{ \mathrm{pp} }( H_V ) \cap \mathrm{Ker}( C ). \label{eq:Span}
\end{equation}
Let $u \in \mathcal{D}( H )$ be such that $H u = \lambda u$ with $\lambda \in \mathbb{R}$. Then
\begin{equation*}
\lambda \| u \|^2 = \langle u , H u \rangle = \langle u , H_V u \rangle - i \| C u \|^2 .
\end{equation*}
Identifying the imaginary parts, this implies that $u=0$ or $u \in \mathrm{Ker}( C )$. If $u \in \mathrm{Ker}( C )$ then $\lambda u = H u = H_V u$ and therefore $u \in \mathcal{H}_{ \mathrm{pp} }( H_V )$. Since $\mathcal{H}_{ \mathrm{pp} }( H_V )$ and $\mathrm{Ker}( C )$ are vector spaces, this establishes \eqref{eq:Span}. 

Likewise, if $\lambda u = H u$ and $u \in \mathrm{Ker}( C )$ then $\lambda u = H^* u$. This yields $\mathcal{H}_{ \mathrm{b} }( H ) = \mathcal{H}_{ \mathrm{b} }( H^* )$.
\end{proof}
\begin{proof}[Proof of Lemma \ref{lm:real-generalized}]
Let $u$ be as in the statement of the lemma. Using that $\lambda \in \mathbb{R}$, we write, for $t \ge 0$,
\begin{align*}
 \|Êe^{ - i t H }Êu \| = \| e^{ - i t ( H - \lambda ) } u \| = \Big \|Ê\sum_{j = 0}^{k-1} \frac{ (-it)^j }{ j! } ( H - \lambda )^j u \Big \|.
\end{align*}
If $k > 1$, this implies that
\begin{align*}
\| e^{ - i t H }Êu \|Ê\ge \frac{Êt^{k-1} }{ (k-1)! } \big \|Ê( H - \lambda )^{k-1} u \big \| - \mathcal{O}( t^{k-2} ) , \quad t \to \infty ,
\end{align*}
which is in contradiction with the fact that $e^{ - i t H }$ is contractive.
\end{proof}

\section{Existence and properties of the wave operators}\label{app:existencewave}

In this section we prove Propositions \ref{prop:existence_W-}--\ref{prop:invertscatt} on the existence and properties of the wave operators. We begin by proving Proposition \ref{prop:existence_W-}.

\begin{proof}[Proof of Proposition \ref{prop:existence_W-}]
We establish the proposition for $W_-( H , H_0 )$, the proof is the same for $W_+( H^* , H_0 )$. Note that most of the arguments employed here are already present in \cite{FaFaFrSc17_01}.

We establish the existence of $W_-( H , H_0 )$. By Hypothesis \ref{V0}, we know that $W_-( H_V , H_0 ) = \slim_{ t \to \infty } e^{ - i t H_V} e^{ i t H_0 }$ exists. Hence, since $e^{ - i t H }Êe^{ i t H_0 }$ is uniformly bounded in $t \ge 0$ (because $e^{-itH}$ is contractive and $e^{ i t H_0}$ is unitary),
\begin{equation*}
e^{ - i t H }Êe^{ i t H_0 } u = e^{ - i t H } e^{ i t H_V } W_-( H_V , H_0 ) u + o(1) , \quad t \to \infty ,
\end{equation*}
for all $u \in \mathcal{H}$. Moreover, $W_-( H_V , H_0 )$ maps $\mathcal{H}$ to $\mathrm{Ran} ( \Pi_{ \mathrm{ac} }( H_V ) )$ by Hypothesis \ref{V0}. Therefore, to prove that $W_-( H , H_0 )$ exists, it suffices to verify that 
\begin{equation}
W_-( H , H_V ) u = \underset{t\to  \infty }{\lim} e^{ - i t H } e^{ i t H_V } u \label{eq:exis_limit_WHHV}
\end{equation}
exists for all $u \in \mathrm{Ran}( \Pi_{ \mathrm{ac} }( H_V ) )$. To this end we use Cook's argument: We write, 
\begin{equation}\label{eq:cook}
e^{ - i t H } e^{ i t H_V } u = u -  \int_0^t e^{Ê- i s H }ÊC^* C e^{ i s H_V }Êu ds.
\end{equation}
For $0 < t_1 < t_2 < \infty$, we have that
\begin{align*}
 \Big \| \int_{t_1}^{t_2} e^{Ê- i s H }ÊC^* C e^{Êi s H_V }Êu ds \Big \| 
&\le \sup_{Êv \in \mathcal{H} , \| v \|Ê= 1 } \int_{t_1}^{t_2} \big | \big \langle C e^{ i s H^* }Êv , C e^{Êi s H_V }Êu \big \rangle \big | ds \\
&\le \sup_{Êv \in \mathcal{H} , \| v \|Ê= 1 } \Big ( \int_{t_1}^{t_2} \big \| C e^{ i s H^* }Êv \|^2 ds \Big )^{\frac12} \Big ( \int_{t_1}^{t_2} \| C e^{ i s H_V }Êu \|^2 ds \Big )^{ \frac12 } \\
& \le \frac12 \Big ( \int_{t_1}^{t_2} \| C e^{ i s H_V }Êu \|^2 ds \Big )^{ \frac12 } ,
\end{align*}
where we used \eqref{eq:alpha2} in the last inequality. Since, for all $u \in \mathrm{Ran}( \Pi_{ \mathrm{ac} }( H_V ) )$,  $s \mapsto \| C e^{ i s H_V }Êu \|^2$ is integrable on $\mathbb{R}$ by Hypothesis \ref{V1}, this implies that $\big ( \int_0^{t_n} e^{Êi s H_0 }ÊC^* C e^{Ê- i s H }Êu \, ds \big )_{ n \in \mathbb{N} }$ is a Cauchy sequence, for any sequence $(t_n)$ with $t_n \to \infty$. Hence the limit in \eqref{eq:exis_limit_WHHV} exists, and therefore $W_-( H , H_0 )$ exists.

Next, we prove that $W_-( H , H_0 )$ is injective. Since 
\begin{equation*}
W_- ( H , H_0 ) = W_- ( H , H_V ) W_-( H_V , H_0 )
\end{equation*}
from the above, and since $W_-( H_V , H_0 )$ is bijective from $\mathcal{H}$ to $\mathrm{Ran}( \Pi_{ \mathrm{ac}Ê}( H_V ) )$ by Hypothesis \ref{V0}, it suffices to prove that $W_- ( H , H_V )$ restricted to $\mathrm{Ran}( \Pi_{ \mathrm{ac}Ê}( H_V ) )$ is injective. We claim that
\begin{equation}\label{eq:limit_injective}
\lim_{t \to \infty} \big \| W_-( H , H_V ) e^{ i t H_V } u \big \| = \| u \| , 
\end{equation}
for all $u \in \mathrm{Ran}( \Pi_{ \mathrm{ac}Ê}( H_V ) )$. Indeed, from \eqref{eq:cook}, we obtain that
\begin{equation*}
W_-( H , H_V ) u = u - \int_0^\infty e^{Ê- i s H }ÊC^* C e^{ i s H_V }Êu ds ,
\end{equation*}
for all $u \in \mathrm{Ran}( \Pi_{ \mathrm{ac}Ê}( H_V ) )$. Applying this equality to $u = e^{ i t H_V } v$, with $v \in \mathrm{Ran}( \Pi_{ \mathrm{ac}Ê}( H_V ) )$, and changing variables, this gives
\begin{align*}
ÊW_-( H , H_V ) e^{ i t H_V } v = Êe^{ i t H_V } v - \int_t^\infty e^{ - i ( s - t ) H } C^* C e^{ i s H_V } v ds .
\end{align*}
Proceeding as above, we have that
\begin{align*}
\Big \|Ê\int_t^\infty e^{ - i ( s - t ) H } C^* C e^{ i s H_V } v ds \Big \|Ê&\le \sup_{ w \in \mathcal{H} , \| w \|Ê= 1Ê} \Big (Ê\int_0^\infty \big \| C e^{ i s H^* } w \big \|^2 ds \Big )^{\frac12} \Big ( \int_t^\infty \big \| C e^{ i s H_V } v \big \|^2 ds \Big )^{\frac12}   \\
&\le \frac12 \Big ( \int_t^\infty \big \| C e^{ i s H_V } v \big \|^2 ds \Big )^{\frac12} \to 0 , \quad t \to \infty ,
\end{align*}
where we used \eqref{eq:alpha2} in the last inequality, and Hypothesis \ref{V1} to justify that the limit vanishes. This proves \eqref{eq:limit_injective}. To conclude that $W_-( H , H_V )$ is injective, let $u \in \mathrm{Ran}( \Pi_{ \mathrm{ac}Ê}( H_V ) )$ be such that $W_-( H , H_V ) u = 0$. Then the usual intertwining property gives
\begin{equation*}
e^{ i t H } W_- ( H , H_V ) u = W_-( H , H_V ) e^{ i t H_V } u = 0 ,
\end{equation*}
for all $t \ge 0$. Letting $t \to \infty$ shows that $u = 0$ by \eqref{eq:limit_injective}.

The fact that $W_-( H , H_0 )$ is a contraction is a direct consequence of the contractivity of $\{ e^{ - i t H }Ê\}_{ t \ge 0 }$ and unitarity of $\{ e^{ - i t H_0 }Ê\}_{ t \in \mathbb{R} }$.

Finally, the intertwining properties \eqref{eq:inter-1}--\eqref{eq:inter-2} follow from standard arguments (see, e.g., \cite{RS-III}).
\end{proof}
Before proving Proposition \ref{prop:randense_W-}, we establish Proposition \ref{prop:existence_W+}.
\begin{proof}[Proof of Proposition \ref{prop:existence_W+}]
We establish the existence of $W_+ ( H_0 , H )$. The proof of the existence of $W_-( H_0 , H^* )$ is identical. Recall that $\Pi_{Ê\mathrm{pp} }( H_V )$ and $\Pi_{ \mathrm{ac} } ( H )$ denote the orthogonal projections onto $\mathcal{H}_{\mathrm{pp}}( H_V )$ and $\mathcal{H}_{ \mathrm{ac}Ê}( H )$, respectively. Since, according to Hypothesis \ref{V-1}, $\Pi_{Ê\mathrm{pp} }( H_V )$ is compact, we know that $\Pi_{ \mathrm{pp} }( H_V ) e^{ - i t H } \Pi_{ \mathrm{ac} } ( H ) \to 0$ strongly, as $t \to \infty$, by Lemma \ref{lm:compwn}. Therefore is suffices to prove the existence of $ \slim e^{ i t H_0 } \Pi_{ \mathrm{ac} }( H_V ) e^{ - i t H } \Pi_{ \mathrm{ac} } ( H ) $, as $t \to \infty$.
We write
\begin{equation}
e^{ i t H_0 } \Pi_{ \mathrm{ac} }( H_V ) e^{ - i t H } \Pi_{ \mathrm{ac} } ( H ) = e^{ i t H_0 } e^{ - i t H_V } \Pi_{ \mathrm{ac} }( H_V ) e^{ i t H_V } e^{ - i t H } \Pi_{ \mathrm{ac} } ( H ). \label{eq:fjdkf2}
\end{equation}
Applying Cook's argument exactly as in the proof of Proposition \ref{prop:existence_W-}, and using Hypothesis \ref{V1}, one verifies that
\begin{equation*}
\underset{t\to \infty }{\slim} \, \Pi_{ \mathrm{ac} }( H_V ) e^{ i t H_V }  e^{ - i t H } \Pi_{ \mathrm{ac} } ( H ) =:  W_+ ( H_V , H ) 
\end{equation*}
exists. Together with \eqref{eq:fjdkf2} and Hypothesis \ref{V0}, this shows that $\slim e^{ i t H_0 } \Pi_{ \mathrm{ac} }( H_V ) e^{ - i t H }$ exists, as $t \to \infty$, and that
\begin{equation*}
W_+( H_0 , H ) = \underset{t\to \infty }{\slim} \, e^{ i t H_0 } \Pi_{ \mathrm{ac} }( H_V ) e^{ - i t H } \Pi_{ \mathrm{ac} } ( H ) = W_+( H_0 , H_V ) W_+( H_V , H ).
\end{equation*}

The facts that $W_+( H_0 , H )$ and $W_-( H_0 , H^* )$ are contractions follow from the contractivity of $\{ e^{ - i t H }Ê\}_{ t \ge 0 }$, $\{ e^{ i t H^* }Ê\}_{ t \ge 0 }$ and unitarity of $\{ e^{ - i t H_0 }Ê\}_{ t \in \mathbb{R} }$.

To verify that the range of $W_+( H_0 , H )$ is dense in $\mathcal{H}$, we observe that, by \eqref{eq:HacHbperp}--\eqref{eq:1stincl},
\begin{equation*}
\mathrm{Ran}( W_+( H^* , H_0 ) ) \subset \mathcal{H}_{ \mathrm{ac} }( H^* ) = \mathcal{H}_{ \mathrm{ac} }( H ) .
\end{equation*}
This yields $W_+( H^* , H_0 ) = \Pi_{ \mathrm{ac} }( H ) W_+( H^* , H_0 )$, from which one easily deduces that
\begin{equation}\label{eq:adj-waves}
W_+( H_0 , H )^* = W_+( H^* , H_0 ).
\end{equation}
Since $W_+( H^* , H_0 )$ is injective by Proposition \ref{prop:existence_W-}, this shows that $\mathrm{Ran}( W_+( H_0 , H ) )$ is dense in $\mathcal{H}$. Likewise, $\mathrm{Ran}( W_-( H_0 , H^* ) )$ is dense in $\mathcal{H}$.

Next, we prove \eqref{eq:ker_W+}. The definitions of $W_+( H_0 , H )$ and $\mathcal{H}_{\mathrm{d}}( H )$ (see \eqref{eq:defHd}) imply that
\begin{equation*}
\mathrm{Ker} ( W_+( H_0 , H ) ) = \mathcal{H}_{ \mathrm{ac} }( H )^\perp \oplus \big ( \mathcal{H}_{ \mathrm{ac} }( H ) \cap \mathcal{H}_{\mathrm{d}}( H ) \big ) .
\end{equation*}
Since $\mathcal{H}_{ \mathrm{ac} }( H )^\perp = \mathcal{H}_{ \mathrm{b} }( H )$ (see \eqref{eq:HacHbperp}), and since it is easy to verify that $\mathcal{H}_{\mathrm{d}}( H ) \subset \mathcal{H}_{ \mathrm{b} }( H )^\perp$, the latter equation gives
\begin{equation*}
\mathrm{Ker} ( W_+( H_0 , H ) ) = \mathcal{H}_{ \mathrm{b} }( H ) \oplus \mathcal{H}_{\mathrm{d}}( H ).
\end{equation*}
The same arguments apply to $W_-( H_0 , H^* )$ instead of $W_+( H_0 , H )$.

Finally, the intertwining properties \eqref{eq:inter-3}--\eqref{eq:inter-4} follow from standard arguments (see, e.g., \cite{RS-III}).
\end{proof}
Now we prove Proposition \ref{prop:randense_W-}.

\begin{proof}[Proof of Proposition \ref{prop:randense_W-}]
To prove Proposition \ref{prop:randense_W-}, it suffices to use \eqref{eq:adj-waves}, which implies that
\begin{equation*}
\overline{\mathrm{Ran} ( W_+( H^* , H_0 ) )} = \mathrm{Ker}Ê( W_+( H_0 , H ) )^\perp = \big ( \mathcal{H}_{ \mathrm{b} }( H ) \oplus \mathcal{H}_{\mathrm{d}}( H ) \big )^\perp ,
\end{equation*}
where the last equality follows from Proposition \ref{prop:existence_W+}. The same arguments apply to $W_-( H , H_0 )$ instead of $W_+( H^* , H_0 )$. 
\end{proof}
Finally we prove Propositions \ref{prop:existencescatt} and \ref{prop:invertscatt}. Recall that the scattering operators $S( H , H_0 )$ and $S( H^* , H_0 )$ are defined in \eqref{eq:defscattop}. 
\begin{proof}[Proof of Proposition \ref{prop:existencescatt}]
Existence and contractivity of $S( H , H_0 )$ and $S( H^* , H_0 )$ are obvious consequences of Propositions \ref{prop:existence_W-} and \ref{prop:existence_W+}. The relation $S( H , H_0 )^* = S( H^* , H_0 )$ follows directly from the definitions involved. 
\end{proof}
\begin{proof}[Proof of Proposition \ref{prop:invertscatt}]
We prove that $(i) \Rightarrow (ii)$. Since $S(H , H_0 )$ is bijective, there exists $m > 0$ such that, for all $u \in \mathcal{H}$, 
\begin{equation*}
\|ÊS( H , H_0 ) u \| = \| W_+( H_0 , H ) W_- ( H , H_0 ) u \| \ge m \| u \|.
\end{equation*}
Since $W_+( H_0 , H )$ is a contraction, this implies that $\| W_-( H , H_0 ) u \|Ê\ge m \| u \|$, and therefore $W_-( H, H_0 )$ has closed range. By Proposition \ref{prop:randense_W-}, this yields
\begin{equation*}
Ê\mathrm{Ran}( W_-( H , H_0 ) ) = \big ( \mathcal{H}_{ \mathrm{b} }( H ) \oplus \mathcal{H}_{ \mathrm{d} }( H^* ) \big )^\perp.
\end{equation*}
In the same way the bijectivity of $S( H_0 , H^* )$ implies that $\mathrm{Ran}( W_+( H^* , H_0 ) ) = \big ( \mathcal{H}_{ \mathrm{b} }( H ) \oplus \mathcal{H}_{ \mathrm{d} }( H ) \big )^\perp$.

Next, we prove that $(ii) \Rightarrow (i)$. Proposition \ref{prop:existence_W-} shows that $W_-( H , H_0 )$ and $W_+( H^* , H_0 )$ are injective with closed ranges, so that there exists $m > 0$ such that, for all $u \in \mathcal{H}$, $\| W_-( H , H_0 ) u \|Ê\ge m \| u \|$ and $\| W_+( H^* , H_0 ) u \|Ê\ge m \| u \|$. Since
\begin{align*}
\| S( H , H_0 ) u \| &= \lim_{Êt \to \infty } \|Êe^{ i t H_0 }ÊW_-( H , H_0 ) e^{ - i t H_0 }Êu \| \\
&\ge \lim_{Êt \to \infty } m \| e^{ - i t H_0 }Êu \| = m \|Êu \| ,
\end{align*}
we deduce that $S( H , H_0 )$ is also injective with closed range. Here we have used the intertwining property \eqref{eq:inter-1} and unitarity of $e^{ - i t H }$. By the same argument, this shows that  $S( H^* , H_0 )$ is also injective with closed range. Since $S( H , H_0 )^* = S( H^* , H_0 )$ according to Proposition \ref{prop:existencescatt}, this proves that $S( H , H_0 )$ and $S( H^* , H_0 )$ are bijective.
\end{proof}

\section{Spectral projections for non self-adjoint operators}\label{app:proj}

In this section we establish the properties of the spectral projection $E_H( I )$ (see \eqref{eq:spec-proj}) stated in Proposition \ref{prop:spectr-sing}. Next, we prove \eqref{eq:weak-Gamma}.

\begin{proof}[Proof of Proposition \ref{prop:spectr-sing}]

Let $I \subset [ 0 , \infty )$ be a closed interval not containing any spectral singularities of $H$. We first prove that the weak limit defining $E_H( I )$ exists in $\mathcal{L}( \mathcal{H} )$. By the uniform boundedness principle, it suffices to prove that
\begin{equation*}
E_H( I ) := \underset{ \varepsilon \downarrow 0 }{\lim} \int_I \big \langle u , \big ( ( H - ( \lambda + i \varepsilon ) )^{-1} - ( H - ( \lambda - i \varepsilon ) )^{-1} \big ) v \big \rangle d \lambda 
\end{equation*}
exists for all $u ,v \in \mathcal{H}$.  Using the resolvent equation, we have that
\begin{align}
( H - ( \lambda \pm i \varepsilon ) )^{-1} &= ( H_V - ( \lambda \pm i \varepsilon ) )^{-1} + i ( H - ( \lambda \pm i \varepsilon ) )^{-1} C^*C ( H_V - ( \lambda \pm i \varepsilon ) )^{-1} \label{eq:resolv1} \\
&= ( H_V - ( \lambda \pm i \varepsilon ) )^{-1} + i ( H_V - ( \lambda \pm i \varepsilon ) )^{-1} C^*C ( H_V - ( \lambda \pm i \varepsilon ) )^{-1} \notag \\
& - ( H_V - ( \lambda \pm i \varepsilon ) )^{-1} C^*C ( H - ( \lambda \pm i \varepsilon ) )^{-1} C^*C ( H_V - ( \lambda \pm i \varepsilon ) )^{-1} . \label{eq:resolv2}
\end{align}
Since $H_V$ is a self-adjoint operator and $I \subset [ 0 , \infty )$ does not contain any eigenvalues of $H_V$ by Hypothesis \ref{V-1}, Stone's formula implies that
\begin{equation*}
\lim_{ \varepsilon \downarrow 0 } Ê\frac{1}{ 2 i \pi } \int_I \big \langle u , \big ( ( H_V - ( \lambda + i \varepsilon ) )^{-1} - ( H_V - ( \lambda - i \varepsilon ) )^{-1} \big ) v \big \rangle d \lambda = \langle u , E_{H_V}( I ) v \rangle ,
\end{equation*}
exists for all $u, v \in \mathcal{H}$, where $E_{H_V}( I )$ denotes the usual spectral projection for $H_V$.

Next, we show that 
\begin{align*}
& \int_I \big \langle u , \big ( ( H - ( \lambda + i \varepsilon ) )^{-1} - ( H_V - ( \lambda + i \varepsilon ) )^{-1} \big ) v \big \rangle d \lambda \\
& =  i \int_I \big \langle C ( H^* - ( \lambda - i \varepsilon ) )^{-1} u , C ( H_V - ( \lambda + i \varepsilon ) )^{-1} v \big \rangle d \lambda
\end{align*}
converges, as $\varepsilon \to 0^+$. Here we have used \eqref{eq:resolv1}.  By  \eqref{eq:alpha2} and Parseval's theorem, we have that
\begin{equation*}
\int_{ \mathbb{R} } \big \|ÊC ( H^* - ( \lambda - i \varepsilon ) )^{-1} u \big \|^2 d \lambda = 2 \pi \int_0^\infty e^{ - 2 \varepsilon t }Ê\big \|ÊC e^{ i t H^* }Êu \big \|^2 \le \pi \|Êu \|^2 ,
\end{equation*}
for all $\varepsilon > 0$ and $u \in \mathcal{H}$. Therefore Lebesgue's monotone convergence theorem implies that $\lambda \mapsto C ( H^* - ( \lambda - i \varepsilon ) )^{-1} u$ converges in $L^2 ( \mathbb{R} ; \mathcal{H} )$, as $\varepsilon \to 0^+$, to a limit denoted by $\lambda \mapsto C ( H^* - ( \lambda - i 0^+ ) )^{-1} u$. Since $H_V$ has only finitely many eigenvalues, all negative, according to Hypothesis \ref{V-1}, and since $I \subset [ 0 , \infty )$, it follows that, for all $v \in \mathcal{H}$, $\lambda \mapsto C ( H_V - ( \lambda + i \varepsilon ) )^{-1} \Pi_{ \mathrm{pp} }( H_V ) v$ converges to $\lambda \mapsto C ( H_V - \lambda )^{-1} \Pi_{ \mathrm{pp} }( H_V ) v$ in $L^2 ( I ; \mathcal{H} )$, as $\varepsilon \to 0^+$. Moreover, using Hypothesis \ref{V1}, we deduce from the same argument as above that $\lambda \mapsto C ( H_V - ( \lambda + i \varepsilon ) )^{-1} \Pi_{Ê\mathrm{ac} }( H_V ) v$ converges to $\lambda \mapsto C ( H_V - ( \lambda + i 0^+ ) )^{-1} \Pi_{Ê\mathrm{ac} }( H_V ) v$ in $L^2( \mathbb{R} ; \mathcal{H} )$, as $\varepsilon \to 0^+$. Hence we have that
\begin{align}
& \lim_{ \varepsilon \downarrow 0 } \int_I \big \langle u , \big ( ( H - ( \lambda + i \varepsilon ) )^{-1} - ( H_V - ( \lambda + i \varepsilon ) )^{-1} \big ) v \big \rangle d \lambda \notag \\
& = i \int_I \big \langle C ( H^* - ( \lambda - i 0^+ ) )^{-1} u , C ( H_V - ( \lambda + i 0^+ ) )^{-1} v \big \rangle d \lambda. \label{eq:aabb1}
\end{align}

It remains to verify that 
\begin{align*}
& \int_I \big \langle u , \big ( ( H - ( \lambda - i \varepsilon ) )^{-1} - ( H_V - ( \lambda - i \varepsilon ) )^{-1} \big ) v \big \rangle d \lambda \\
& = i \int_I \big \langle u , ( H_V - ( \lambda - i \varepsilon ) )^{-1} C^*C ( H_V - ( \lambda - i \varepsilon ) )^{-1} v \big \rangle d \lambda \\
& \quad - \int_I \big \langle u , ( H_V - ( \lambda - i \varepsilon ) )^{-1} C^* C ( H - ( \lambda - i \varepsilon ) )^{-1} C^*C ( H_V - ( \lambda - i \varepsilon ) )^{-1} v \big \rangle d \lambda
\end{align*}
converges, as $\varepsilon \to 0^+$. Here we have used \eqref{eq:resolv2}. The existence of the limit as $\varepsilon$ goes to $0$ follows in the same way as in \eqref{eq:aabb1}, using in addition that 
\begin{align}
\sup_{Ê\lambda \in I }Ê\lim_{ \varepsilon \downarrow 0 } \big \| C ( H - ( \lambda - i \varepsilon ) )^{-1} C^* \big \| < \infty , \label{eq:sup-proj}
\end{align}
since $H$ does not have spectral singularities in $I$. Hence 
\begin{align*}
& \lim_{Ê\varepsilon \downarrow 0 }  \int_I \big \langle u , \big ( ( H - ( \lambda - i \varepsilon ) )^{-1} - ( H_V - ( \lambda - i \varepsilon ) )^{-1} \big ) v \big \rangle d \lambda \\
& = i \int_I \big \langle C ( H_V - ( \lambda + i 0^+ ) )^{-1} u , C ( H_V - ( \lambda - i 0^+ ) )^{-1} v \big \rangle d \lambda \\
& \quad -  \int_I \big \langle C ( H_V - ( \lambda + i 0^+ ) )^{-1} u , C ( H - ( \lambda - i 0^+ ) )^{-1} C^*C ( H_V - ( \lambda - i 0^+ ) )^{-1} v \big \rangle d \lambda .
\end{align*}

Summing up, we have proven that
\begin{equation*}
\lim_{ \varepsilon \downarrow 0 } Ê\frac{1}{ 2 i \pi } \int_I \big \langle u , \big ( ( H - ( \lambda + i \varepsilon ) )^{-1} - ( H - ( \lambda - i \varepsilon ) )^{-1} \big ) v \big \rangle d \lambda , 
\end{equation*}
exists for all $u,v \in \mathcal{H}$. This proves the existence of $E_H( I ) \in \mathcal{L}( \mathcal{H} )$ as the weak limit of the right side of \eqref{eq:spec-proj}. The fact that the weak limit in the right side of \eqref{eq:spec-proj-adj} exists in $\mathcal{L}( \mathcal{H} )$ follows in the same way. The equality $E_H( I )^* = E_{H^*}( I )$ is a direct consequence of the definitions.

Next, we prove \eqref{eq:spec-proj-exp}. Note that the arguments above show that the weak limit in the right side of \eqref{eq:spec-proj-exp} exists in $\mathcal{L}( \mathcal{H} )$. We begin by proving \eqref{eq:spec-proj-exp} for a compact interval $I$. Let
\begin{equation*}
f( t ) := \underset{ \varepsilon \downarrow 0 }{\wlim} \frac{1}{ 2 i \pi } \int_I e^{ i t ( \lambda - H ) } \big ( ( H - ( \lambda + i \varepsilon ) )^{-1} - ( H - ( \lambda - i \varepsilon ) )^{-1} \big ) d \lambda .
\end{equation*}
Obviously, $f(0) = E_H( I )$, and therefore, to prove \eqref{eq:spec-proj-exp}, it suffices to verify that the derivative of $f$ vanishes. We compute
\begin{align}
 \partial_t f(t) &= \underset{ \varepsilon \downarrow 0 }{\wlim} \frac{1}{ 2 \pi } \int_I e^{ i t ( \lambda - H ) } ( \lambda - H ) \big ( ( H - ( \lambda + i \varepsilon ) )^{-1} - ( H - ( \lambda - i \varepsilon ) )^{-1} \big ) d \lambda \notag \\
& = \underset{ \varepsilon \downarrow 0 }{\wlim} \frac{\varepsilon}{ 2 i \pi } \int_I e^{ i t ( \lambda - H ) } \big ( ( H - ( \lambda + i \varepsilon ) )^{-1} +  ( H - ( \lambda - i \varepsilon ) )^{-1} \big ) d \lambda. \label{eq:partialf}
\end{align}
We claim that there exists a positive constant $\mathrm{c}_I$, depending on the interval $I$, such that, for all $u \in \mathcal{H}$,
\begin{equation}\label{eq:mdy1}
\int_I \big \| \big ( H - ( \lambda \pm i \varepsilon ) \big )^{-1} u \big \| d \lambda \le \mathrm{c}_I \varepsilon^{-\frac12} \| u \|.
\end{equation}
Indeed, by the Cauchy-Schwarz inequality,
\begin{align*}
 \int_I \big \| \big ( H - ( \lambda \pm i \varepsilon ) \big )^{-1} u \big \| d \lambda & \le |I|^{\frac12} \Big ( \int_I \big \| \big ( H - ( \lambda \pm i \varepsilon ) \big )^{-1} u \big \|^2 d \lambda \Big )^{\frac12} \\
& = |I|^{\frac12} \Big ( \int_I \big \langle u , \big ( H^* - ( \lambda \mp i \varepsilon ) \big )^{-1} \big ( H - ( \lambda \pm i \varepsilon ) \big )^{-1} u \big \rangle d \lambda \Big )^{\frac12}.
\end{align*}
The resolvent equation gives
\begin{align*}
& \big ( H^* - ( \lambda \mp i \varepsilon ) \big )^{-1} - \big ( H - ( \lambda \pm i \varepsilon ) \big )^{-1} \\
& = - 2i \big ( H^* - ( \lambda \mp i \varepsilon ) \big )^{-1} C^* C \big ( H - ( \lambda \pm i \varepsilon ) \big )^{-1} \mp 2 i \varepsilon \big ( H^* - ( \lambda \mp i \varepsilon ) \big )^{-1} \big ( H - ( \lambda \pm i \varepsilon ) \big )^{-1} ,
\end{align*}
and therefore,
\begin{align*}
& \int_I \big \langle u , \big ( H^* - ( \lambda \mp i \varepsilon ) \big )^{-1} \big ( H - ( \lambda \pm i \varepsilon ) \big )^{-1} u \big \rangle d \lambda \\
&  = \pm \frac{ i }{ 2 \varepsilon } \int_I \big \langle u , \big ( H^* - ( \lambda \mp i \varepsilon ) \big )^{-1}  - \big ( H - ( \lambda \pm i \varepsilon ) \big )^{-1} u \big \rangle d \lambda \mp \frac{1}{ \varepsilon } \int_I \big \| CÊ\big ( H - ( \lambda \pm i \varepsilon ) \big )^{-1} u \big \|^2 d \lambda .
\end{align*}
Using the resolvent equations \eqref{eq:resolv1}--\eqref{eq:resolv2} and the same arguments as above, it is not difficult to verify that 
\begin{equation*}
\int_I \big | \big \langle u , \big ( H^* - ( \lambda \mp i \varepsilon ) \big )^{-1}  - \big ( H - ( \lambda \pm i \varepsilon ) \big )^{-1} u \big \rangle \big |  d \lambda \le \mathrm{c} \| u \|^2 ,
\end{equation*}
for some positive constant $\mathrm{c}$. Likewise, the resolvent equation \eqref{eq:resolv1} together with Hypothesis \ref{V1} and the fact that $H$ does not have spectral singularities in $I$ implies that 
\begin{equation*}
\int_I \big \| CÊ\big ( H - ( \lambda \pm i \varepsilon ) \big )^{-1} u \big \|^2 \le \mathrm{c} \| u \|^2.
\end{equation*}
This proves \eqref{eq:mdy1} and hence, by \eqref{eq:partialf}, that $\partial_t f( t ) = 0$. This establishes \eqref{eq:spec-proj-exp} for any compact interval $I$ not containing any spectral singularities. If $I$ is unbounded, it suffices to approximate $I$ by a sequence of compact intervals $I_n \subset I$ and use that both sides of \eqref{eq:spec-proj-exp}, with $I$ replaced by $I_n$, define operators in $\mathcal{L}( H )$ uniformly bounded in $n \in \mathbb{N}$ as follows from the arguments above.

It remains to prove \eqref{eq:proj-inters-int}. We write
\begin{align*}
E_H( I_1 ) E_H( I_2 ) = \underset{ \varepsilon \downarrow 0 }{\wlim} \frac{1}{ 2 i \pi } \int_{I_1} \big ( ( H - ( \lambda + i \varepsilon ) )^{-1} - ( H - ( \lambda - i \varepsilon ) )^{-1} \big ) E_H( I_2 ) d \lambda.
\end{align*}
Using the Laplace transform
\begin{align}
\big (  H - ( \lambda \pm i \varepsilon ) \big )^{-1} = i \int_0^\infty e^{ - \varepsilon t } e^{\pm i t \lambda } e^{ \mp i t H }Êdt , \label{eq:Laplace}
\end{align}
which holds in the strong sense on $\mathrm{Ran} ( E_H (I_2) )$ (as follows from \eqref{eq:spec-proj-exp}), we obtain from \eqref{eq:spec-proj-exp} together with Fubini's theorem that
\begin{align*}
E_H( I_1 ) E_H( I_2 ) &= \underset{ \varepsilon \downarrow 0 }{\wlim} \frac{1}{ 2 i \pi } \int_{I_2} \mathds{1}_{ I_1 }( \lambda' ) \big ( ( H - ( \lambda' + i \varepsilon ) )^{-1} - ( H - ( \lambda' - i \varepsilon ) )^{-1} \big ) d \lambda' \\
& = \underset{ \varepsilon \downarrow 0 }{\wlim} \frac{1}{ 2 i \pi } \int_{I_1 \cap I_2} \big ( ( H - ( \lambda' + i \varepsilon ) )^{-1} - ( H - ( \lambda' - i \varepsilon ) )^{-1} \big ) d \lambda' = E_H (I_1 \cap I_2 ).
\end{align*}
This concludes the proof.
\end{proof}
Next, we prove \eqref{eq:weak-Gamma}.  Recall that we want to compute the weak limit
\begin{align*}
 \underset{\varepsilon \downarrow 0}{\wlim} \frac{1}{2i\pi} \oint_{\Gamma_{\varepsilon}} \mu^4 \prod_{ j = 1 }^n ( \mu - \mu_j )^{\nu_j} ( R - \mu Ê)^{-1}  d \mu ,
\end{align*}
where $R = ( H - i )^{-1}$, $\mu_j = ( \lambda_j - i )^{-1}$, $\{ \lambda_j \}_{ j = 1 }^n$ are the spectral singularities of $H$, and $\Gamma_{\varepsilon} = \Gamma_{1,\varepsilon} \cup \Gamma_{2,\varepsilon} \cup \Gamma_{3,\varepsilon} \cup \Gamma_{4,\varepsilon}$ is the curve oriented counterclockwise defined in \eqref{eq:def-Gammaeps}.
\begin{proof}[Proof of \eqref{eq:weak-Gamma}]
We decompose
\begin{align*}
 \underset{\varepsilon \downarrow 0}{\wlim} \oint_{\Gamma_{\varepsilon}} \mu^4 \prod_{ j = 1 }^n ( \mu - \mu_j )^{\nu_j} ( R - \mu Ê)^{-1}  d \mu = \sum_{k=1}^4  \underset{\varepsilon \downarrow 0}{\wlim} \oint_{\Gamma_{k,\varepsilon}} \mu^4 \prod_{ j = 1 }^n ( \mu - \mu_j )^{\nu_j} ( R - \mu Ê)^{-1}  d \mu ,
\end{align*}
and compute each limit separately. We will show that the integrals over $\Gamma_{2,\varepsilon}$ and $\Gamma_{4,\varepsilon}$ vanish in the limit $\varepsilon \to 0^+$, while the sum of the integrals over $\Gamma_{1,\varepsilon}$ and $\Gamma_{3,\varepsilon}$ converges to the expected limit. We begin by computing the limit of the integral over $\Gamma_{3,\varepsilon}$, which is easier than that over $\Gamma_{1,\varepsilon}$.

\vspace{0,2cm}

\noindent \textbf{Limit of the integral over $\Gamma_{3,\varepsilon}$.}
The integral over $\Gamma_{3,\varepsilon}$ is given by
\begin{align}
& \int_{\Gamma_{3,\varepsilon}} \mu^4 \prod_{ j = 1 }^n ( \mu - \mu_j )^{\nu_j+1} ( R - \mu Ê)^{-1} d \mu \notag \\
& = \int_{ - \frac12 \mathrm{e}_0 }^{ \gamma_3( \varepsilon ) } ( \lambda - i + i \varepsilon )^{-6} \prod_{ j = 1 }^n \big ( ( \lambda - i + i \varepsilon )^{-1} - \mu_j \big )^{\nu_j} \big ( R - ( \lambda - i + i \varepsilon )^{-1}Ê\big )^{-1} d \lambda \notag\\
& =- \int_{ - \frac12 \mathrm{e}_0 }^{ \gamma_3( \varepsilon ) } ( \lambda - i + i \varepsilon )^{-4} \prod_{ j = 1 }^n \big ( ( \lambda - i + i \varepsilon )^{-1} - \mu_j \big )^{\nu_j} \big ( H - ( \lambda + i \varepsilon ) Ê\big )^{-1}  d \lambda \notag \\
& \quad - \int_{ - \frac12 \mathrm{e}_0 }^{ \gamma_3( \varepsilon ) } ( \lambda - i + i \varepsilon )^{-5} \prod_{ j = 1 }^n \big ( ( \lambda - i + i \varepsilon )^{-1} - \mu_j \big )^{\nu_j}  d \lambda , \label{eq:rjc}
\end{align}
where we used \eqref{eq:resolv_R*} in the second equality. We show that the integral over $[ 0 , \gamma_3( \varepsilon ) ]$ can be replaced by the integral over $[ 0 , \infty )$ up to a term that vanishes, as $\varepsilon \to 0^+$. Using that $ \lambda \mapsto ( \lambda - i + i \varepsilon )^{-1} - \mu_j$ is bounded on $[ - \frac12 \mathrm{e}_0 , \infty )$ and that $| \lambda - i + i \varepsilon |^{-1} \le 2 (\lambda^2 + 1)^{-1/2}$, for $\varepsilon > 0$ small enough, one easily verifies that
\begin{align}
& \int_{ - \frac12 \mathrm{e}_0 }^{\gamma_3( \varepsilon )} ( \lambda - i + i \varepsilon )^{-5} \prod_{ j = 1 }^n \big ( ( \lambda - i + i \varepsilon )^{-1} - \mu_j \big )^{\nu_j}  d \lambda \notag \\
&= \int_{ - \frac12 \mathrm{e}_0 }^\infty ( \lambda - i )^{-5} \prod_{ j = 1 }^n \big ( ( \lambda - i )^{-1} - \mu_j \big )^{\nu_j}  d \lambda + \mathcal{O}( \varepsilon ). \label{eq:ttii1}
\end{align}
To treat the first term in the right side of \eqref{eq:rjc}, we observe that $\| ( H - ( \lambda + i \varepsilon ) )^{-1} \| \le \varepsilon^{-1}$ since $- i H$ generates a contraction semigroup. Using that, for $\lambda \in [ \gamma_3( \varepsilon ) , \infty )$ and $\varepsilon > 0$ small enough, we have that $| ( \lambda - i + i \varepsilon )^{-4} | \le 2 \varepsilon (\lambda^2+1)^{-3/2}$, we deduce that
\begin{align}
& \int_{ - \frac12 \mathrm{e}_0 }^{ \gamma_3( \varepsilon ) } ( \lambda - i + i \varepsilon )^{-4} \prod_{ j = 1 }^n \big ( ( \lambda - i + i \varepsilon )^{-1} - \mu_j \big )^{\nu_j} \big ( H - ( \lambda + i \varepsilon ) Ê\big )^{-1}  d \lambda \notag \\
& = \int_{ - \frac12 \mathrm{e}_0 }^{ \infty } ( \lambda - i + i \varepsilon )^{-4} \prod_{ j = 1 }^n \big ( ( \lambda - i + i \varepsilon )^{-1} - \mu_j \big )^{\nu_j} \big ( H - ( \lambda + i \varepsilon ) Ê\big )^{-1}  d \lambda + \mathcal{O} ( \varepsilon ) , \label{eq:lse1}
\end{align}
where $\mathcal{O} ( \varepsilon )$ stands for a bounded operator whose norm is of order $\mathcal{O}( \varepsilon )$, as $\varepsilon \to 0^+$. Next, we claim that
\begin{equation}
\int_{  - \frac12 \mathrm{e}_0 }^\infty | \lambda - i + i \varepsilon |^{-2} \big | \big \langle u , \big ( H - ( \lambda + i \varepsilon ) \big )^{-1} v \rangle \big | d \lambda \le \mathrm{c} \varepsilon^{-\frac12} \| u \|Ê\|Êv \| , \label{eq:abf1}
\end{equation}
for some positive constant $\mathrm{c}$ and for all $u,v \in \mathcal{H}$. Indeed, by the Cauchy-Schwartz inequality, it suffices to establish that
\begin{equation}
\int_{  - \frac12 \mathrm{e}_0 }^\infty \big \|Ê\big ( H - ( \lambda + i \varepsilon ) \big )^{-1} v \big \|^2 d \lambda \le \mathrm{c}^2 \varepsilon^{-1}Ê\|Êv \|^2. \label{eq:abf2}
\end{equation}
Using the resolvent equation, we compute
\begin{align}
& \big \|Ê\big ( H - ( \lambda + i \varepsilon ) \big )^{-1} v \big \|^2 \notag \\
& = \big \langle v , \big ( H^* - ( \lambda - i \varepsilon ) \big )^{-1}Ê\big ( H - ( \lambda + i \varepsilon ) \big )^{-1} v \big \rangle \notag \\
& = ( 2 i \varepsilon )^{-1} \big\langle v , \big[\big ( H - ( \lambda + i \varepsilon ) \big )^{-1} - \big ( H^* - ( \lambda - i \varepsilon ) \big )^{-1} \big ] v \big \rangle -  \varepsilon^{-1} \big \|ÊC \big ( H^* - ( \lambda - i \varepsilon ) \big )^{-1} v \big \|^2 \notag \\
& = ( 2 i \varepsilon )^{-1} \big\langle v , \big[\big ( H_V - ( \lambda + i \varepsilon ) \big )^{-1} - \big ( H_V - ( \lambda - i \varepsilon ) \big )^{-1} \big ] v \big \rangle \notag \\
&\quad + ( 2 \varepsilon )^{-1} \mathrm{Re} \big ( \big\langle C \big ( H^* - ( \lambda - i \varepsilon ) \big )^{-1} v , C \big ( H_V - ( \lambda - i \varepsilon ) \big )^{-1} v \big \rangle \big )\notag \\
& \quad - \varepsilon^{-1} \big \|ÊC \big ( H^* - ( \lambda - i \varepsilon ) \big )^{-1} v \big \|^2 . \label{eq:abf3}
\end{align}
Since the eigenvalues of $H_V$ belong to $( - \infty , - \mathrm{e}_0 ]$, decomposing $v = v_{ \mathrm{pp} } + v_{ \mathrm{ac}Ê}$ with $v_{ \mathrm{pp}Ê} \in \mathcal{H}_{ \mathrm{pp} }( H_V )$ and $v_{ \mathrm{ac}Ê} \in \mathcal{H}_{ \mathrm{ac}Ê}( H_V )$, we obtain from Hypothesis \ref{V1} that
\begin{equation}
\int_{  - \frac12 \mathrm{e}_0 }^\infty \big \| C \big ( H_V - ( \lambda \pm i \varepsilon ) \big )^{-1} v \big \|^2 d \lambda \le \mathrm{c}^2 \| v \|^2 , \label{eq:ghz1}
\end{equation}
for some positive constant $\mathrm{c}$. Moreover, by \eqref{eq:alpha2} and Parseval's theorem, we also have that
\begin{equation*}
\int_{  - \frac12 \mathrm{e}_0 }^\infty \big \| C \big ( H^* - ( \lambda - i \varepsilon ) \big )^{-1} v \big \|^2 d \lambda \le \pi \| v \|^2.
\end{equation*}
The last two inequalities together with \eqref{eq:abf3} prove \eqref{eq:abf2}, and therefore \eqref{eq:abf1}. Using \eqref{eq:abf1}, we then deduce from \eqref{eq:lse1} that
\begin{align}
& \int_{ - \frac12 \mathrm{e}_0 }^{ \gamma_3( \varepsilon ) } ( \lambda - i + i \varepsilon )^{-4} \prod_{ j = 1 }^n \big ( ( \lambda - i + i \varepsilon )^{-1} - \mu_j \big )^{\nu_j} \big \langle u , \big ( H - ( \lambda + i \varepsilon ) Ê\big )^{-1} v \big \rangle d \lambda \notag \\
& = \int_{ - \frac12 \mathrm{e}_0 }^{ \infty } ( \lambda - i )^{-4} \prod_{ j = 1 }^n \big ( ( \lambda - i )^{-1} - \mu_j \big )^{\nu_j} \big \langle u , \big ( H - ( \lambda + i \varepsilon ) Ê\big )^{-1} v \big \rangle  d \lambda + \mathcal{O} ( \varepsilon^{\frac12} ) \|u \| \| v \| ,
\end{align}
for all $u , v \in \mathcal{H}$. Summing up, we have proven that
\begin{align}
& \int_{\Gamma_{3,\varepsilon}} \mu^4 \prod_{ j = 1 }^n ( \mu - \mu_j )^{\nu_j} \big \langle u , ( R - \mu Ê)^{-1} v \big \rangle d \mu \notag \\
& = - \int_{ - \frac12 \mathrm{e}_0 }^\infty ( \lambda - i )^{-5} \prod_{ j = 1 }^n \big ( ( \lambda - i )^{-1} - \mu_j \big )^{\nu_j}  d \lambda \langle u , v \rangle \notag \\
& \quad - \int_{ - \frac12 \mathrm{e}_0 }^{ \infty } ( \lambda - i )^{-4} \prod_{ j = 1 }^n \big ( ( \lambda - i )^{-1} - \mu_j \big )^{\nu_j} \big \langle u ,  \big ( H - ( \lambda + i \varepsilon ) Ê\big )^{-1} v \big \rangle d \lambda \notag \\
& \quad + \mathcal{O} ( \varepsilon^{ \frac12 } ) \| u \|Ê\|Êv \|. \label{eq:contrib1}
\end{align}

\vspace{0,2cm}

\noindent \textbf{Limit of the integral over $\Gamma_{1,\varepsilon}$.} To compute the limit of the integral over $\Gamma_{1,\varepsilon}$, we modify the argument above as follows. In the same way as for the integral over $\Gamma_{3 , \varepsilon }$ (see \eqref{eq:rjc} and \eqref{eq:ttii1}), we obtain that
\begin{align}
& \int_{\Gamma_{1,\varepsilon}} \mu^4 \prod_{ j = 1 }^n ( \mu - \mu_j )^{\nu_j} ( R - \mu Ê)^{-1} d \mu \notag \\
& =  \int_{ - \frac12 \mathrm{e}_0 }^{ \gamma_1( \varepsilon ) } ( \lambda - i - i \varepsilon )^{-4} \prod_{ j = 1 }^n \big ( ( \lambda - i - i \varepsilon )^{-1} - \mu_j \big )^{\nu_j} \big ( H - ( \lambda - i \varepsilon ) Ê\big )^{-1}  d \lambda \notag \\
& \quad + \int_{ - \frac12 \mathrm{e}_0 }^{ \infty } ( \lambda - i  )^{-5} \prod_{ j = 1 }^n \big ( ( \lambda - i )^{-1} - \mu_j \big )^{\nu_j}  d \lambda + \mathcal{O} (\varepsilon) . \label{eq:rjc4}
\end{align}
To estimate the norm of the operator $( H - ( \lambda - i \varepsilon ) )^{-1}$ for $\lambda$ large, we use twice the resolvent equation:
\begin{align}
& ( H - ( \lambda - i \varepsilon ) )^{-1} = ( H_V - ( \lambda - i \varepsilon ) )^{-1} + i ( H_V - ( \lambda - i \varepsilon ) )^{-1} C^*C ( H_V - ( \lambda - i \varepsilon ) )^{-1} \notag \\
& \quad - ( H_V - ( \lambda - i \varepsilon ) )^{-1} C^* C ( H - ( \lambda - i \varepsilon ) )^{-1} C^* C ( H_V - ( \lambda - i \varepsilon ) )^{-1} . \label{eq:twice-resolv}
\end{align}
By \eqref{eq:sup>m} in Hypothesis \ref{V3} and the fact that $H_V$ is self-adjoint, this implies that
\begin{equation}
\sup_{Ê\lambda \ge m } \big \| ( H - ( \lambda - i \varepsilon ) )^{-1} \big \| \le \mathrm{c} \varepsilon^{-2} , \label{eq:sup>m_2}
\end{equation}
for some positive constant $\mathrm{c}$. Hence, using that, for $\lambda \in [ \gamma_1( \varepsilon ) , \infty )$ and $\varepsilon > 0$ small enough, we have that $| ( \lambda - i - i \varepsilon )^{-4} | \le \varepsilon^2 ( \lambda^2 + 1)^{-1}$, we deduce that
\begin{align}
& \int_{ - \frac12 \mathrm{e}_0 }^{ \gamma_1( \varepsilon ) } ( \lambda - i - i \varepsilon )^{-4} \prod_{ j = 1 }^n \big ( ( \lambda - i - i \varepsilon )^{-1} - \mu_j \big )^{\nu_j} \big ( H - ( \lambda - i \varepsilon ) Ê\big )^{-1}  d \lambda \notag \\
& = \int_{ - \frac12 \mathrm{e}_0 }^{ \infty } ( \lambda - i - i \varepsilon )^{-4} \prod_{ j = 1 }^n \big ( ( \lambda - i - i \varepsilon )^{-1} - \mu_j \big )^{\nu_j} \big ( H - ( \lambda - i \varepsilon ) Ê\big )^{-1}  d \lambda + \mathcal{O} ( \varepsilon ) . \label{eq:lse4}
\end{align}
We introduce \eqref{eq:twice-resolv} into the right side of this equation, thus obtaining a sum of $3$ terms. For the first term, we observe that
\begin{equation*}
\int_{ - \frac12 \mathrm{e}_0 }^\infty | \lambda - i - i \varepsilon |^{-2} \big | \big \langle u ,Ê( H_V - ( \lambda - i \varepsilon ) )^{-1} v \rangle \big | d \lambda \le \mathrm{c} \varepsilon^{-\frac12} \| u \|Ê\|Êv \| ,
\end{equation*}
for all $u,v \in \mathcal{H}$ (by the Cauchy-Schwartz inequality and the same argument we used to establish \eqref{eq:abf2}). This yields
\begin{align}
& \int_{ - \frac12 \mathrm{e}_0 }^{ \infty } ( \lambda - i - i \varepsilon )^{-4} \prod_{ j = 1 }^n \big ( ( \lambda - i - i \varepsilon )^{-1} - \mu_j \big )^{\nu_j} \big \langle u , \big ( H_V - ( \lambda - i \varepsilon ) Ê\big )^{-1} v \big \rangle d \lambda \notag \\
& = \int_{ - \frac12 \mathrm{e}_0 }^{ \infty } ( \lambda - i )^{-4} \prod_{ j = 1 }^n \big ( ( \lambda - i )^{-1} - \mu_j \big )^{\nu_j} \big \langle u , \big ( H_V - ( \lambda - i \varepsilon ) Ê\big )^{-1} v \big \rangle d \lambda \notag \\
&\quad + \mathcal{O}( \varepsilon^{\frac12} ) \|Êu \|Ê\|Êv \|. \label{eq:lse7}
\end{align}
For the second term coming from the introduction of \eqref{eq:twice-resolv} into \eqref{eq:lse4}, we use \eqref{eq:ghz1}, which yields
\begin{align}
& \int_{ - \frac12 \mathrm{e}_0 }^{ \infty } ( \lambda - i - i \varepsilon )^{-4} \prod_{ j = 1 }^n \big ( ( \lambda - i - i \varepsilon )^{-1} - \mu_j \big )^{\nu_j} \notag \\
& \phantom{ \int_{ - \frac12 \mathrm{e}_0 }^{ \infty } }  \big \langle C \big ( H_V - ( \lambda + i \varepsilon ) Ê\big )^{-1}  u , C \big ( H_V - ( \lambda - i \varepsilon ) Ê\big )^{-1} v \big \rangle d \lambda \notag \\
& = \int_{ - \frac12 \mathrm{e}_0 }^{ \infty } ( \lambda - i )^{-4} \prod_{ j = 1 }^n \big ( ( \lambda - i )^{-1} - \mu_j \big )^{\nu_j} \big \langle C \big ( H_V - ( \lambda + i \varepsilon ) Ê\big )^{-1} u , \big ( H_V - ( \lambda - i \varepsilon ) Ê\big )^{-1} v \big \rangle d \lambda \notag \\
& \quad + \mathcal{O}( \varepsilon ) \|Êu \|Ê\|Êv \|. \label{eq:lse8}
\end{align}
For the last term coming from the introduction of \eqref{eq:twice-resolv} into \eqref{eq:lse4}, we note that
\begin{equation*}
( \lambda - i - i \varepsilon )^{-1} -  \mu_j = ( \lambda - i - i \varepsilon )^{-1} Ê- ( \lambda_j - i )^{-1} = ( \lambda - i - i \varepsilon )^{-1}  ( \lambda_j - i )^{-1} ( \lambda - i \varepsilon - \lambda_j ) .
\end{equation*}
Moreover, it follows from Hypothesis \ref{V3} that the map
\begin{equation*}
z \mapsto \prod_{ j = 1 }^n \frac{ ( z - \lambda_j )^{\nu_j} }{ ( z - i )^{\nu_j} } C (  H - z )^{-1} C^* \in \mathcal{L}( \mathcal{H} ) ,
\end{equation*}
is uniformly bounded in the region $\{ z \in \mathbb{C}, \mathrm{Re}(z) \ge - \frac12 \mathrm{e}_0 , - \varepsilon_0 < \mathrm{Im}(z) < 0 \}$, for $\varepsilon_0 > 0$ small enough. Combining this with \eqref{eq:ghz1} we obtain that
\begin{align}
& \int_{ - \frac12 \mathrm{e}_0 }^{ \infty } ( \lambda - i - i \varepsilon )^{-4} \prod_{ j = 1 }^n \big ( ( \lambda - i - i \varepsilon )^{-1} - \mu_j \big )^{\nu_j} \notag \\
& \qquad \big \langle C \big ( H_V - ( \lambda + i \varepsilon ) Ê\big )^{-1}  u , C \big ( H - ( \lambda - i \varepsilon ) \big )^{-1} C^* C \big ( H_V - ( \lambda - i \varepsilon ) Ê\big )^{-1} v \big \rangle d \lambda \notag \\
& = \int_0^{ \infty } ( \lambda - i )^{-4} \prod_{ j = 1 }^n \big ( ( \lambda - i )^{-1} - \mu_j \big )^{\nu_j} \notag \\
& \qquad \big \langle C \big ( H_V - ( \lambda + i \varepsilon ) Ê\big )^{-1}  u , C \big ( H - ( \lambda - i \varepsilon ) \big )^{-1} C^* C \big ( H_V - ( \lambda - i \varepsilon ) Ê\big )^{-1} v \big \rangle d \lambda  \phantom{\int} \notag \\
&\quad  + \mathcal{O}( \varepsilon ) \|Êu \|Ê\|Êv \|. \label{eq:lse9}
\end{align}
Equations \eqref{eq:rjc4}, \eqref{eq:twice-resolv}, \eqref{eq:lse4}, \eqref{eq:lse7}, \eqref{eq:lse8} and \eqref{eq:lse9} imply that
\begin{align}
& \int_{\Gamma_{1,\varepsilon}} \mu^4 \prod_{ j = 1 }^n ( \mu - \mu_j )^{\nu_j} \big \langle u , ( R - \mu Ê)^{-1} v \big \rangle d \mu \notag \\
& =  \int_{ - \frac12 \mathrm{e}_0 }^\infty ( \lambda - i )^{-5} \prod_{ j = 1 }^n \big ( ( \lambda - i )^{-1} - \mu_j \big )^{\nu_j}  d \lambda \langle u , v \rangle \notag \\
& \quad + \int_{ - \frac12 \mathrm{e}_0 }^{ \infty } ( \lambda - i )^{-4} \prod_{ j = 1 }^n \big ( ( \lambda - i )^{-1} - \mu_j \big )^{\nu_j} \big \langle u , \big ( H - ( \lambda - i \varepsilon ) Ê\big )^{-1} v \big \rangle  d \lambda  \notag \\
& \quad  + \mathcal{O} ( \varepsilon^{\frac12} ) \| u \|Ê\|Êv \| , \label{eq:contrib3}
\end{align}
for all $u,v \in \mathcal{H}$. 

 \vspace{0,2cm}

\noindent \textbf{Limit of the integral over $\Gamma_{2,\varepsilon}$.} Using \eqref{eq:resolv_R*}, we see that the integral over $\Gamma_{2,\varepsilon}$ is given by
\begin{align}
& \int_{\Gamma_{2,\varepsilon}} \mu^4 \prod_{ j = 1 }^n ( \mu - \mu_j )^{\nu_j} ( R - \mu Ê)^{-1} d \mu  = - \int_{ \theta_1( \varepsilon ) }^{ \theta_3( \varepsilon ) } \varepsilon^3 e^{ 3 i \thetaÊ} \prod_{ j = 1 }^n ( \varepsilon e^{ i \theta } - \mu_j )^{\nu_j} \varepsilon i e^{ i \thetaÊ} d \theta \notag \\
& - \int_{ \theta_1( \varepsilon ) }^{ \theta_3( \varepsilon ) } \varepsilon^2 e^{ 2 i \thetaÊ} \prod_{ j = 1 }^n ( \varepsilon e^{ i \theta } - \mu_j )^{\nu_j} \big ( H - ( \varepsilon^{-1} e^{ - i \theta } + i ) \big )^{-1} \varepsilon i e^{ i \thetaÊ} d \theta. \label{eq:contrib2}
\end{align}
Obviously the first term in the right side of \eqref{eq:contrib2} vanishes, as $\varepsilon \to 0$. To treat the second term, we decompose the integral into $3$ integrals, say over $[ \theta_1( \varepsilon ) , \theta_2 ]$, $[ \theta_2 , \theta_4 ]$ and $[ \theta_4 ,  \theta_3( \varepsilon ) ]$. The parameters $\theta_2$ and $\theta_4$ are chosen such that 
\begin{align*}
& \mathrm{Re} \big ( \varepsilon^{-1} e^{ - i \theta } + i \big ) \ge m , \quad \text{ for } \theta \in [ \theta_1( \varepsilon ) , \theta_2 ] , \\
& \mathrm{dist} \big ( \big ( \varepsilon^{-1} e^{ - i \theta } + i \big ) , \sigma( H_V ) \big ) > \| C^* C \| , \quad \text{ for } \theta \in [ \theta_2 , \theta_4 ] , \\
& \mathrm{Re} \big ( \varepsilon^{-1} e^{ - i \theta } + i \big ) \ge m , \quad \text{ for } \theta \in [ \theta_4 , \theta_3( \varepsilon ) ] .
\end{align*}
See Figure \ref{fig3}.
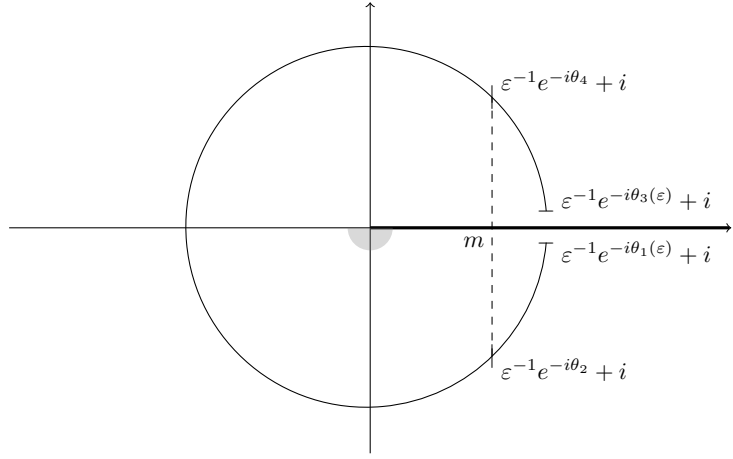
\begin{figure}[H] 
\begin{center}
\begin{tikzpicture}[scale=0.6, every node/.style={scale=0.9}]

   \draw[->](-4,2) -- (12,2);      
      \draw[->](4,-3) -- (4,7);      
  \draw[-, very thick] (4,2)--(11.95,2);
  
      \draw[-, dashed](6.7,-0.85) -- (6.7,4.9);     
      \draw (6.7,-0.85) node {\small $|$};
      \draw (6.7,4.9) node {\small $|$};
  
        \draw (7.9,2.37) arc (5:355:4cm);
      \draw (7.9,2.37) node {\small $-$};        
      \draw (7.9,1.67) node {\small $-$};        
              
         \draw (6.3,2) node[below] {\small $m$};

         \draw (9.9,1) node[above] {\small $\varepsilon^{-1} e^{ - i \theta_1( \varepsilon ) } + i $};    
         \draw (9.9,3.1) node[below] {\small $\varepsilon^{-1} e^{ - i \theta_3( \varepsilon ) } + i$};    
         \draw (8.3,-0.7) node[below] {\small $\varepsilon^{-1} e^{ - i \theta_2 } + i$};    
         \draw (8.3,4.8) node[above] {\small $\varepsilon^{-1} e^{ - i \theta_4 } + i$};    
                  
         \fill[gray, opacity=0.3] (3.5,2) arc (180:360:0.5cm);

        \end{tikzpicture}
\caption{ \footnotesize  \textbf{The set $\big \{ \varepsilon^{-1} e^{ - i \theta } + i , \theta \in [ \theta_1( \varepsilon ) , \theta_3( \varepsilon ) ] \big \}$.} The eigenvalues of $H$ and $H_V$ are contained in the grey semi-disc. The parameter $m$ satisfies \eqref{eq:sup_epsilon'}. }\label{fig3}
\end{center}
\end{figure}

Here $m>0$ satisfies
\begin{equation}\label{eq:sup_epsilon'}
\sup_{Ê\lambda \ge m } \big \| ( H - ( \lambda \pm i \varepsilon' ) )^{-1} \big \| \le \mathrm{c}_{ \varepsilon_0 } (\varepsilon')^{-2} ,
\end{equation}
for any $\varepsilon_0>0$ and $\varepsilon' \in ( 0 , \varepsilon_0 ]$, where $\mathrm{c}_{ \varepsilon_0 }$ is a positive constant depending on $\varepsilon_0$, according to Hypothesis \ref{V3} (see \eqref{eq:sup>m_2} for $( H - ( \lambda - i \varepsilon' ) )^{-1}$ and use that $- i H$ generates a semigroup of contractions for $( H - ( \lambda + i \varepsilon' ) )^{-1}$). For $\varepsilon > 0$ small enough, it is not difficult to verifiy that such choices of $\theta_2$ and $\theta_4$ are possible.

For $\theta \in [ \theta_1( \varepsilon ) , \theta_2 ]$, we have that $\mathrm{Im} ( \varepsilon^{-1} e^{ - i \theta } + i ) \le - \varepsilon $ (note that $\varepsilon^{-1} e^{ - i \theta_1( \varepsilon ) } + i = \gamma_1( \varepsilon ) - i \varepsilon$) and $\mathrm{Re} \big ( \varepsilon^{-1} e^{ - i \theta } + i \big ) \ge m$. Hence, by \eqref{eq:sup_epsilon'},
\begin{align}
 \int_{ \theta_1( \varepsilon ) }^{ \theta_2 } \varepsilon^2 e^{ 2 i \thetaÊ} \prod_{ j = 1 }^n ( \varepsilon e^{ i \theta } - \mu_j )^{\nu_j} \big ( H - ( \varepsilon^{-1} e^{ - i \theta } + i ) \big )^{-1} \varepsilon i e^{ i \theta } d \theta = \mathcal{O}( \varepsilon ), \quad \varepsilon \to 0^+. \label{eq:slcv1}
\end{align}
Similarly, for $ \theta \in [ \theta_4 , \theta_2( \varepsilon ) ]$, we have that $\mathrm{Im} ( \varepsilon^{-1} e^{ - i \theta } + i ) \ge \varepsilon $ and $\mathrm{Re} \big ( \varepsilon^{-1} e^{ - i \theta } + i \big ) \ge m$. Therefore \eqref{eq:sup_epsilon'} yields
\begin{align}
 \int_{ \theta_4 }^{ \theta_3( \varepsilon ) } \varepsilon^2 e^{ 2 i \thetaÊ} \prod_{ j = 1 }^n ( \varepsilon e^{ i \theta } - \mu_j )^{\nu_j} \big ( H - ( \varepsilon^{-1} e^{ - i \theta } + i ) \big )^{-1} \varepsilon i e^{ i \theta } d \theta = \mathcal{O}( \varepsilon ), \quad \varepsilon \to 0^+. \label{eq:slcv2}
\end{align}
For $\theta \in [ \theta_2 , \theta_4 ]$, we have $ \mathrm{dist}  ( ( \varepsilon^{-1} e^{ - i \theta } + i ) , \sigma( H_V ) ) > \| C^* C \|$ and we observe that
\begin{equation*}
\big \| ( H- z )^{-1} \big \|Ê\le 2 \, \mathrm{dist}( z , H_V )^{-1} ,
\end{equation*}
for any $z \in \mathbb{C}$ such that $\mathrm{dist}( z , \sigma ( H_V ) ) > \| C^* C \|$, as follows from the resolvent equation
\begin{equation*}
( H - z )^{-1}Ê= ( H_V - z )^{-1}Ê\big [ \mathrm{Id} + i C^* C ( H_V - z )^{-1} \big ]^{-1}.
\end{equation*}
This yields
\begin{align}
 \int_{ \theta_2 }^{ \theta_4 } \varepsilon^2 e^{ 2 i \thetaÊ} \prod_{ j = 1 }^n ( \varepsilon e^{ i \theta } - \mu_j )^{\nu_j} \big ( H - ( \varepsilon^{-1} e^{ - i \theta } + i ) \big )^{-1} \varepsilon i e^{ i \theta } d \theta = \mathcal{O}( \varepsilon^3 ), \quad \varepsilon \to 0^+. \label{eq:slcv3}
\end{align}
From \eqref{eq:contrib2}, \eqref{eq:slcv1}, \eqref{eq:slcv2} and \eqref{eq:slcv3}, we conclude that
\begin{align}
\underset{\varepsilon \downarrow 0}{\wlim} \int_{\Gamma_{2,\varepsilon}} \mu^4 \prod_{ j = 1 }^n ( \mu - \mu_j )^{\nu_j} ( R - \mu Ê)^{-1} d \mu =0. \label{eq:contrib2-conc}
\end{align}

\vspace{0,2cm}

\noindent \textbf{Limit of the integral over $\Gamma_{4,\varepsilon}$.} It follows from \eqref{eq:resolv_R*} that  the integral over $\Gamma_{4,\varepsilon}$ is given by
\begin{align}
& \int_{\Gamma_{4,\varepsilon}} \mu^4 \prod_{ j = 1 }^n ( \mu - \mu_j )^{\nu_j} ( R - \mu Ê)^{-1} d \mu \notag \\
& = \int_{ - \varepsilon }^{ \varepsilon } ( - \frac12 \mathrm{e}_0 - i + i x )^{-4} \prod_{ j = 1 }^n \big ( ( - \frac12 \mathrm{e}_0 - i + i x )^{-1} - \mu_j \big )^{\nu_j} \big ( H - ( - \frac12 \mathrm{e}_0 + i x ) Ê\big )^{-1}  d x \notag \\
& \quad + \int_{ - \varepsilon }^{ \varepsilon } ( - \frac12 \mathrm{e}_0 - i + i x )^{-5} \prod_{ j = 1 }^n \big ( ( - \frac12 \mathrm{e}_0 - i + i x )^{-1} - \mu_j \big )^{\nu_j}  d x . \label{eq:contrib4}
\end{align}
Obviously, the second term in the right side of \eqref{eq:contrib4} is of order $\mathcal{O}( \varepsilon )$, as $\varepsilon \to 0^+$. Moreover, for any $x \in [ - \varepsilon_0 , \varepsilon_0 ]$, where $\varepsilon_0$ is fixed sufficiently small (depending only on $H$), we have that $\mathrm{dist}( - \frac12 \mathrm{e}_0 + i x ; \sigma (H) ) \ge \varepsilon_0 /2$. This implies that
\begin{equation*}
\sup_{ x \in [ - \varepsilon_0 , \varepsilon_0 ] } \big \|Ê\big ( H - ( - \frac12 \mathrm{e}_0 + i x ) Ê\big )^{-1} \big \| \le \mathrm{c} ,
\end{equation*}
where $\mathrm{c}$ is a positive constant depending only on $H$. Hence the first term in the right side of \eqref{eq:contrib4} is also of order $\mathcal{O}( \varepsilon )$, as $\varepsilon \to 0^+$. Therefore,
\begin{align}
\underset{\varepsilon \downarrow 0}{\wlim} \int_{\Gamma_{4,\varepsilon}} \mu^4 \prod_{ j = 1 }^n ( \mu - \mu_j )^{\nu_j} ( R - \mu Ê)^{-1} d \mu =0. \label{eq:contrib4-conc}
\end{align}
\vspace{0,2cm}

\noindent \textbf{Conclusion.} Putting together \eqref{eq:contrib1} and \eqref{eq:contrib3} gives
\begin{align}
& \int_{\Gamma_{1,\varepsilon}} \mu^4 \prod_{ j = 1 }^n ( \mu - \mu_j )^{\nu_j} \big \langle u , ( R - \mu Ê)^{-1} v \big \rangle d \mu + \int_{\Gamma_{3,\varepsilon}} \mu^4 \prod_{ j = 1 }^n ( \mu - \mu_j )^{\nu_j} \big \langle u , ( R - \mu Ê)^{-1} v \big \rangle d \mu \notag \\
& = \int_{ - \frac12 \mathrm{e}_0 }^{ \infty } ( \lambda - i )^{-4} \prod_{ j = 1 }^n \big ( ( \lambda - i )^{-1} - \mu_j \big )^{\nu_j} \big \langle u , \big [ \big ( H - ( \lambda - i \varepsilon )Ê\big )^{-1} - \big ( H - ( \lambda + i \varepsilon )Ê\big )^{-1} \big ] v \big \rangle  d \lambda \notag \\
& \quad + \mathcal{O} ( \varepsilon^{\frac12} ) \| u \|Ê\|Êv \| , \label{eq:contrib1-3}
\end{align}
for all $u , v \in \mathcal{H}$. Arguing exactly as in the proof of Proposition \ref{prop:spectr-sing}, using, instead of \eqref{eq:sup-proj}, that 
\begin{equation*}
\sup_{Ê\lambda \in [ - \frac12 \mathrm{e}_0 ,\infty) }Ê\lim_{ \varepsilon \downarrow 0 } \prod_{ j = 1 }^n \big ( ( \lambda - i )^{-1} - \mu_j \big )^{\nu_j} \big \|ÊC \big ( H - ( \lambda - i \varepsilon ) \big )^{-1} C^* \big \|Ê,
\end{equation*}
by Hypothesis \ref{V3}, 
we then deduce that
\begin{align}
& \underset{\varepsilon \downarrow 0}{\wlim} \Big ( \int_{\Gamma_{1,\varepsilon}} \mu^4 \prod_{ j = 1 }^n ( \mu - \mu_j )^{\nu_j} ( R - \mu Ê)^{-1} d \mu + \int_{\Gamma_{3,\varepsilon}} \mu^4 \prod_{ j = 1 }^n ( \mu - \mu_j )^{\nu_j} ( R - \mu Ê)^{-1}  d \mu \Big ) \notag \\
& = \underset{\varepsilon \downarrow 0}{\wlim} \int_{ - \frac12 \mathrm{e}_0 }^{ \infty } ( \lambda - i )^{-4} \prod_{ j = 1 }^n \big ( ( \lambda - i )^{-1} - \mu_j \big )^{\nu_j} \notag \\
& \phantom{ = \underset{\varepsilon \downarrow 0}{\wlim} \int_{ - \frac12 \mathrm{e}_0 }^{ \infty } }  \big [ \big ( H - ( \lambda - i \varepsilon )Ê\big )^{-1} - \big ( H - ( \lambda + i \varepsilon )Ê\big )^{-1} \big ]  d \lambda  \label{eq:conclusion-lim}
\end{align}
exists in $\mathcal{L}( \mathcal{H} )$. Moreover, the integral over $[ - \frac12 \mathrm{e}_0 , \infty )$ can be replaced by the integral over $[ 0 , \infty )$ since, for $\lambda \in [ - \frac12 \mathrm{e}_0 , 0 )$, we have that $( H - ( \lambda \pm i 0^+ ) )^{-1} = ( H - \lambda )^{-1}$. Using in addition \eqref{eq:contrib2-conc} and \eqref{eq:contrib4-conc}, we obtain \eqref{eq:weak-Gamma}. The fact that \eqref{eq:unifbound_tildeE} holds is a consequence of the arguments above. This concludes the proof.
\end{proof}


\begin{thebibliography}{10}

\bibitem{BeMa92_01}
M.~Ben-Artzi and S.~Klainerman.
\newblock Decay and regularity for the {S}chr\"odinger equation.
\newblock {\em J. Anal. Math.}, 58:25--37, 1992.
\newblock Festschrift on the occasion of the 70th birthday of Shmuel Agmon.

\bibitem{Bo36_01}
N.~Bohr.
\newblock Neutron capture and nuclear constitution.
\newblock {\em Nature}, 137:344--348, 1936.

\bibitem{CoSa89_01}
P.~Constantin and J.-C. Saut.
\newblock Local smoothing properties of {S}chr\"odinger equations.
\newblock {\em Indiana Univ. Math. J.}, 38(3):791--810, 1989.

\bibitem{Daviesbook}
E.~B. Davies.
\newblock {\em Quantum theory of open systems}.
\newblock Academic Press [Harcourt Brace Jovanovich, Publishers], London-New
  York, 1976.

\bibitem{Davies4}
E.~B. Davies.
\newblock Two-channel {H}amiltonians and the optical model of nuclear
  scattering.
\newblock {\em Ann. Inst. H. Poincar\'e Sect. A (N.S.)}, 29(4):395--413 (1979),
  1978.

\bibitem{Davies1}
E.~B. Davies.
\newblock Nonunitary scattering and capture. {I}. {H}ilbert space theory.
\newblock {\em Comm. Math. Phys.}, 71(3):277--288, 1980.

\bibitem{Davies2}
E.~B. Davies.
\newblock Nonunitary scattering and capture. {II}. {Q}uantum dynamical
  semigroup theory.
\newblock {\em Ann. Inst. H. Poincar\'e Sect. A (N.S.)}, 32(4):361--375, 1980.

\bibitem{Da07_01}
E.~B. Davies.
\newblock {\em Linear operators and their spectra}, volume 106 of {\em
  Cambridge Studies in Advanced Mathematics}.
\newblock Cambridge University Press, Cambridge, 2007.

\bibitem{Du58_01}
N.~Dunford.
\newblock A survey of the theory of spectral operators.
\newblock {\em Bull. Amer. Math. Soc.}, 64:217--274, 1958.

\bibitem{DuSc71_01}
N.~Dunford and J.~T. Schwartz.
\newblock {\em Linear operators. {P}art {III}: {S}pectral operators}.
\newblock Interscience Publishers [John Wiley \& Sons, Inc.], New
  York-London-Sydney, 1971.
\newblock With the assistance of William G. Bade and Robert G. Bartle, Pure and
  Applied Mathematics, Vol. VII.

\bibitem{DyZw17_01}
S.~Dyatlov and M.~Zworski.
\newblock {\em Mathematical theory of scattering resonances.}
\newblock In preparation, http://math.mit.edu/dyatlov/res/res\_20170228.pdf.

\bibitem{EdEv87_01}
D.~E. Edmunds and W.~D. Evans.
\newblock {\em Spectral theory and differential operators}.
\newblock Oxford Mathematical Monographs. The Clarendon Press, Oxford
  University Press, New York, 1987.
\newblock Oxford Science Publications.

\bibitem{EiZw07_01}
T.~Eisner and H.~Zwart.
\newblock A note on polynomially growing {$C_0$}-semigroups.
\newblock {\em Semigroup Forum}, 75(2):438--445, 2007.

\bibitem{EnNa20_01}
K.-J. Engel and R.~Nagel.
\newblock {\em One-parameter semigroups for linear evolution equations}, volume
  194 of {\em Graduate Texts in Mathematics}.
\newblock Springer-Verlag, New York, 2000.

\bibitem{Ex85_01}
P.~Exner.
\newblock {\em Open quantum systems and {F}eynman integrals}.
\newblock Fundamental Theories of Physics. D. Reidel Publishing Co., Dordrecht,
  1985.

\bibitem{FaFaFrSc17_01}
M.~Falconi, J.~Faupin, J.~Fr\"ohlich, and B.~Schubnel.
\newblock Scattering {T}heory for {L}indblad {M}aster {E}quations.
\newblock {\em Comm. Math. Phys.}, 350(3):1185--1218, 2017.

\bibitem{Fe92_01}
H.~Feshbach.
\newblock {\em Theoretical Nuclear Physics, Nuclear Reactions}.
\newblock Wiley, New York, 1992.

\bibitem{FePoWe54_01}
H.~Feshbach, C.~Porter, and V.~Weisskopf.
\newblock Model for nuclear reactions with neutrons.
\newblock {\em Phys. Rev.}, 96:448--464, 1954.

\bibitem{Frank}
R.~L. Frank.
\newblock Eigenvalue bounds for {S}chr{\"o}dinger operators with complex
  potentials. {III}.
\newblock {\em arXiv:1510.03411}.

\bibitem{FrLaSa16_01}
R.~L. Frank, A.~Laptev, and O.~Safronov.
\newblock On the number of eigenvalues of {S}chr\"odinger operators with
  complex potentials.
\newblock {\em J. Lond. Math. Soc. (2)}, 94(2):377--390, 2016.

\bibitem{Go10_01}
M.~Goldberg.
\newblock A dispersive bound for three-dimensional {S}chr\"odinger operators
  with zero energy eigenvalues.
\newblock {\em Comm. Partial Differential Equations}, 35(9):1610--1634, 2010.

\bibitem{Go70_01}
C.~Goldstein.
\newblock Perturbation of non-selfadjoint operators. {I}.
\newblock {\em Arch. Rational Mech. Anal.}, 37:268--296, 1970.

\bibitem{Go71_01}
C.~Goldstein.
\newblock Perturbation of non-selfadjoint operators. {II}.
\newblock {\em Arch. Rational Mech. Anal.}, 42:380--402, 1971.

\bibitem{Ho71_01}
P.~E. Hodgson.
\newblock The nuclear optical model.
\newblock {\em Rep. Prog. Phys.}, 34(2):765--819, 1971.

\bibitem{Hu71_01}
G.~E. Huige.
\newblock Perturbation theory of some spectral operators.
\newblock {\em Comm. Pure Appl. Math.}, 24:741--757, 1971.

\bibitem{Ka02_01}
M.~Kadowaki.
\newblock Resolvent estimates and scattering states for dissipative systems.
\newblock {\em Publ. Res. Inst. Math. Sci.}, 38(1):191--209, 2002.

\bibitem{Ka03_01}
M.~Kadowaki.
\newblock On a framework of scattering for dissipative systems.
\newblock {\em Osaka J. Math.}, 40(1):245--270, 2003.

\bibitem{Kato1}
T.~Kato.
\newblock Wave operators and similarity for some non-selfadjoint operators.
\newblock {\em Math. Ann.}, 162:258--279, 1965/1966.

\bibitem{Ka66_01}
T.~Kato.
\newblock {\em Perturbation theory for linear operators}.
\newblock Die Grundlehren der mathematischen Wissenschaften, Band 132.
  Springer-Verlag New York, Inc., New York, 1966.

\bibitem{Martin}
P.~A. Martin.
\newblock Scattering theory with dissipative interactions and time delay.
\newblock {\em Nuovo Cimento B (11)}, 30(2):217--238, 1975.

\bibitem{Mo67_01}
K.~Mochizuki.
\newblock On the large perturbation by a class of non-selfadjoint operators.
\newblock {\em J. Math. Soc. Japan}, 19:123--158, 1967.

\bibitem{Mo68_01}
K.~Mochizuki.
\newblock Eigenfunction expansions associated with the {S}chr\"odinger operator
  with a complex potential and the scattering theory.
\newblock {\em Publ. Res. Inst. Math. Sci. Ser. A}, 4:419--466, 1968/69.

\bibitem{Mo76_01}
K.~Mochizuki.
\newblock Scattering theory for wave equations with dissipative terms.
\newblock {\em Publ. Res. Inst. Math. Sci.}, 12(2):383--390, 1976/77.

\bibitem{Ne85_01}
H.~Neidhardt.
\newblock A nuclear dissipative scattering theory.
\newblock {\em J. Operator Theory}, 14(1):57--66, 1985.

\bibitem{Pe89_01}
V.~Petkov.
\newblock {\em Scattering theory for hyperbolic operators}, volume~21 of {\em
  Studies in Mathematics and its Applications}.
\newblock North-Holland Publishing Co., Amsterdam, 1989.

\bibitem{ReedSimon4}
M.~Reed and B.~Simon.
\newblock {\em Methods of modern mathematical physics. {IV}. {A}nalysis of
  operators}.
\newblock Academic Press [Harcourt Brace Jovanovich, Publishers], New
  York-London, 1978.

\bibitem{RS-III}
M.~Reed and B.~Simon.
\newblock {\em Methods of modern mathematical physics. {III}. Scattering
  theory}.
\newblock Academic Press [Harcourt Brace Jovanovich, Publishers], New
  York-London, 1979.

\bibitem{Sc60_01}
J.~Schwartz.
\newblock Some non-selfadjoint operators.
\newblock {\em Comm. Pure Appl. Math.}, 13:609--639, 1960.

\bibitem{Simon2}
B.~Simon.
\newblock Phase space analysis of simple scattering systems: extensions of some
  work of {E}nss.
\newblock {\em Duke Math. J.}, 46(1):119--168, 1979.

\bibitem{St04_01}
S.~A. Stepin.
\newblock Scattering and spectral singularities for some dissipative operators
  of mathematical physics.
\newblock {\em J. Nonlinear Math. Phys.}, 11(suppl.):194--203, 2004.

\bibitem{NF}
B.~Sz.-Nagy, C.~Foias, H.~Bercovici, and L.~K{\'e}rchy.
\newblock {\em Harmonic analysis of operators on {H}ilbert space}.
\newblock Universitext. Springer, New York, second edition, 2010.

\bibitem{Wang2}
X.~P. Wang.
\newblock Number of eigenvalues for dissipative {S}chr\"odinger operators under
  perturbation.
\newblock {\em J. Math. Pures Appl. (9)}, 96(5):409--422, 2011.

\bibitem{Wang1}
X.~P. Wang.
\newblock Time-decay of semigroups generated by dissipative {S}chr\"odinger
  operators.
\newblock {\em J. Differential Equations}, 253(12):3523--3542, 2012.

\bibitem{WangZhu}
X.~P. Wang and L.~Zhu.
\newblock On the wave operator for dissipative potentials with small imaginary
  part.
\newblock {\em Asymptot. Anal.}, 86(1):49--57, 2014.

\bibitem{Ya92_01}
D.~R. Yafaev.
\newblock {\em Mathematical scattering theory}, volume 105 of {\em Translations
  of Mathematical Monographs}.
\newblock American Mathematical Society, Providence, RI, 1992.
\newblock General theory, Translated from the Russian by J. R. Schulenberger.

\end{thebibliography}
\end{document}